%% file: main.tex
\documentclass[a4paper,UKenglish,cleveref, autoref, thm-restate]{lipics-v2021}

\title{Connectivity at the crossroad of intuitionistic and classical polarizations in linear logic} 

\author{Raffaele {Di Donna}}{Université Paris Cité, France \and Università Roma Tre, Italy \and \url{https://www.irif.fr/~didonna} }{didonna@irif.fr}{}{}

\author{Giulio {Guerrieri}}{University of Sussex, UK\and \url{https://pageperso.lis-lab.fr/~giulio.guerrieri/}}{g.guerrieri@sussex.ac.uk}{}{}

\author{Lorenzo {Tortora de Falco}}{Università Roma Tre, Italy \and \url{http://logica.uniroma3.it/~tortora/}}{tortora@uniroma3.it}{}{}

\authorrunning{R.~{Di Donna}, G.~{Guerrieri} and L.~{Tortora de Falco}} 

\Copyright{Raffaele {Di Donna}, Giulio {Guerrieri} and Lorenzo {Tortora de Falco}} 

\ccsdesc[500]{Theory of computation~Linear logic}
\ccsdesc[500]{Theory of computation~Proof theory}

\keywords{linear logic, proof-net, sequentialization, bang calculus, polarization} 

\category{} 

\relatedversion{} 

\nolinenumbers 

\usepackage{amsmath}
\usepackage{amsthm}
\usepackage{amssymb}
\usepackage{mathtools}
\usepackage{anyfontsize}
\usepackage{ebproof}
\usepackage{stmaryrd}
\SetSymbolFont{stmry}{bold}{U}{stmry}{m}{n}
\usepackage{cmll}
\usepackage{graphicx}
\usepackage[table]{xcolor}
\definecolor{headcolor}{RGB}{234,236,240}
\definecolor{rowcolor}{RGB}{248,249,250}
\usepackage{tikz}
\usetikzlibrary{calc}
\usepackage{tikzit}
\def\infcstrut{\vrule width 0pt height 8.125pt depth 1.875pt}
\definecolor{myred}{RGB}{179,0,0}
\definecolor{myblue}{RGB}{0,0,179}
\definecolor{mygreen}{RGB}{0,128,0}
\definecolor{mypurple}{RGB}{222,0,174}
\definecolor{myyellow}{RGB}{230,170,0}
\definecolor{myorange}{RGB}{255,149,0}
\usepackage{hyperref}
\hypersetup{
    colorlinks,
    linkcolor={myred},
    citecolor={mygreen},
    urlcolor={myblue}
}
\usepackage{array,booktabs}
\usepackage{xspace}
\usepackage{adjustbox}

\input{style.tikzstyles}

\makeatletter
\tikzset{
    dot diameter/.store in=\dot@diameter,
    dot diameter=0.25ex,
    dot spacing/.store in=\dot@spacing,
    dot spacing=1ex,
    dots/.style={
        line width=\dot@diameter,
        line cap=round,
        dash pattern=on 0pt off \dot@spacing
    }
}
\makeatother

\tikzset{
    deduceLine/.style = {line width=1.1pt, dots}
}

\declaretheorem[sibling=theorem]{notation}

\makeatletter
\def\@textbottom{\vskip \z@ \@plus 2pt}
\let\@texttop\relax
\makeatother

\makeatletter
\g@addto@macro{\UrlBreaks}{%
    \do\/\do\a\do\b\do\c\do\d\do\e\do\f%
    \do\g\do\h\do\i\do\j\do\k\do\l\do\m%
    \do\n\do\o\do\p\do\q\do\r\do\s\do\t%
    \do\u\do\v\do\w\do\x\do\y\do\z%
    \do\A\do\B\do\C\do\D\do\E\do\F\do\G%
    \do\H\do\I\do\J\do\K\do\L\do\M\do\N%
    \do\O\do\P\do\Q\do\R\do\S\do\T\do\U%
    \do\V\do\W\do\X\do\Y\do\Z}
\makeatother

\makeatletter
\g@addto@macro\bfseries{\boldmath}
\def\thmhead@plain#1#2#3{%
    \thmname{#1}\thmnumber{\@ifnotempty{#1}{ }\@upn{#2}}%
    \thmnote{ {\the\thm@notefont\unboldmath(#3)}}}
\let\thmhead\thmhead@plain
\makeatother

\makeatletter
\let\save@mathaccent\mathaccent
\newcommand*\if@single[3]{%
    \setbox0\hbox{${\mathaccent"0362{#1}}^H$}%
    \setbox2\hbox{${\mathaccent"0362{\kern0pt#1}}^H$}%
    \ifdim\ht0=\ht2 #3\else #2\fi
}
\newcommand*\rel@kern[1]{\kern#1\dimexpr\macc@kerna}
\newcommand*\widebar[1]{\@ifnextchar^{{\wide@bar{#1}{0}}}{\wide@bar{#1}{1}}}
\newcommand*\wide@bar[2]{\if@single{#1}{\wide@bar@{#1}{#2}{1}}{\wide@bar@{#1}{#2}{2}}}
\newcommand*\wide@bar@[3]{%
    \begingroup
    \def\mathaccent##1##2{%
        \let\mathaccent\save@mathaccent
        \if#32 \let\macc@nucleus\first@char \fi
        \setbox\z@\hbox{$\macc@style{\macc@nucleus}_{}$}%
        \setbox\tw@\hbox{$\macc@style{\macc@nucleus}{}_{}$}%
        \dimen@\wd\tw@
        \advance\dimen@-\wd\z@
        \divide\dimen@ 3
        \@tempdima\wd\tw@
        \advance\@tempdima-\scriptspace
        \divide\@tempdima 10
        \advance\dimen@-\@tempdima
        \ifdim\dimen@>\z@ \dimen@0pt\fi
        \rel@kern{0.6}\kern-\dimen@
        \if#31
        \overline{\rel@kern{-0.6}\kern\dimen@\macc@nucleus\rel@kern{0.4}\kern\dimen@}%
        \advance\dimen@0.4\dimexpr\macc@kerna
        \let\final@kern#2%
        \ifdim\dimen@<\z@ \let\final@kern1\fi
        \if\final@kern1 \kern-\dimen@\fi
        \else
        \overline{\rel@kern{-0.6}\kern\dimen@#1}%
        \fi
    }%
    \macc@depth\@ne
    \let\math@bgroup\@empty \let\math@egroup\macc@set@skewchar
    \mathsurround\z@ \frozen@everymath{\mathgroup\macc@group\relax}%
    \macc@set@skewchar\relax
    \let\mathaccentV\macc@nested@a
    \if#31
    \macc@nested@a\relax111{#1}%
    \else
    \def\gobble@till@marker##1\endmarker{}%
    \futurelet\first@char\gobble@till@marker#1\endmarker
    \ifcat\noexpand\first@char A\else
    \def\first@char{}%
    \fi
    \macc@nested@a\relax111{\first@char}%
    \fi
    \endgroup
}
\makeatother

\DeclareFontFamily{U}{mathx}{}
\DeclareFontShape{U}{mathx}{m}{n}{<-> mathx10}{}
\DeclareSymbolFont{mathx}{U}{mathx}{m}{n}
\DeclareMathAccent{\widehat}{0}{mathx}{"70}
\DeclareMathAccent{\widecheck}{0}{mathx}{"71}

\def\rddots{\scriptstyle\cdot^{\scriptstyle\cdot^{\scriptstyle\cdot}}}

\newlength{\centralwd}

\newcommand{\centerwithright}[3][0.25em]{%
    \settowidth{\centralwd}{#2}%
    \makebox[0pt][c]{#2}%
    \makebox[0pt][l]{\hspace{\dimexpr .5\centralwd + #1\relax}#3}%
}

\newcommand{\centerwithleft}[3][0.25em]{%
    \settowidth{\centralwd}{#2}%
    \makebox[0pt][r]{#3\hspace{\dimexpr .5\centralwd + #1\relax}}%
    \makebox[0pt][c]{#2}%
}

\newcounter{subsubfigure}

\newcommand{\subsubcaptionbox}[2]{%
    \stepcounter{subsubfigure}%
    \begingroup
    \renewcommand{\thesubfigure}{\alph{subfigure}\arabic{subsubfigure}}%
    \subcaptionbox{#1}{#2}%
    \endgroup
    \addtocounter{subfigure}{-1}%
}

\renewcommand{\epsilon}{\varepsilon}
\newcommand{\emptysetAlt}{\varnothing}

\theoremstyle{definition}
\newtheorem*{outline}{Outline}

\EventEditors{John Q. Open and Joan R. Access}
\EventNoEds{2}
\EventLongTitle{42nd Conference on Very Important Topics (CVIT 2016)}
\EventShortTitle{CVIT 2016}
\EventAcronym{CVIT}
\EventYear{2016}
\EventDate{December 24--27, 2016}
\EventLocation{Little Whinging, United Kingdom}
\EventLogo{}
\SeriesVolume{42}
\ArticleNo{23}

 \input{macros}

\begin{document}

 \maketitle

 \begin{abstract}
  We investigate a property that extends the Danos-Regnier correctness criterion for linear logic \pss. The property applies to the correctness graphs of a \ps: it states that any such graph is acyclic and the number of its connected components is exactly one more than the number of nodes bottom or weakening. This is known to be necessary but not sufficient in multiplicative exponential linear logic (\MELL) to recover a \seqcp from a \ps. We present a geometric restriction on \pss allowing us to turn this necessary property into a sufficient one, computationally efficient: we can thus introduce the notable fragment \VMELL of \MELL for which the property is indeed a correctness criterion. The fragment \VMELL brings together the classical and intuitionistic polarizations. We translate the \bc terms into \pnst{\VMELL}, factorize the usual translations in linear logic of the \cbn and \cbv \lci, prove that \cutelim simulates bang reduction, and provide an explicit characterization of the \bc terms as \pns.
 \end{abstract}

 \section{Introduction}
  \input{Sections/Introduction}
  
 \section{Geometric preliminaries: graphs and untyped \texorpdfstring{\pss}{proof-structures}}
  \label{sec:geometric-preliminaries}
  \input{Sections/Geometric}
  
 \section{Pure polarized \texorpdfstring{\pss}{proof-structures}}
  \label{sec:pure-polarized}
  \input{Sections/Polarized}
  
 \section{The typed setting}
  \label{sec:typed-setting}
  \input{Sections/Typed}
 
 \section{\texorpdfstring{\Pns}{Proof-nets} for the \texorpdfstring{\bc}{bang calculus} and \texorpdfstring{\lci}{lambda-calculi}}
  \label{sec:bang-calculus}
  \input{Sections/Bang}

 \bibliography{Bibliography}
 
 \newpage
 
 \appendix
 
 \crefalias{subsection}{appendix}
 \crefalias{subsubsection}{appendix}
 
 \section{Technical appendix}
  \input{Sections/Appendix}
 
\end{document}

%% file: style.tikzstyles
\tikzstyle{form}=[inner sep=1.5pt, tikzit fill={rgb,255: red,255; green,224; blue,46}]
\tikzstyle{dot}=[font={$\bullet$}, tikzit fill={rgb,255: red,255; green,224; blue,46}]
\tikzstyle{ax}=[draw=black, anchor=north, trapezium, trapezium angle=110, rounded corners=2pt, inner sep=1.2pt, fill=white, font={$\scriptstyle \mathtt{ax}$}, tikzit fill={rgb,255: red,255; green,73; blue,76}, tikzit shape=rectangle]
\tikzstyle{bot}=[draw=black, anchor=north, trapezium, trapezium angle=110, rounded corners=2pt, inner sep=1.2pt, fill=white, font={$\scriptstyle \bot$}, tikzit fill={rgb,255: red,154; green,95; blue,230}]
\tikzstyle{bang}=[draw=black, anchor=north, trapezium, trapezium angle=110, rounded corners=2pt, inner sep=1.2pt, fill=white, font={$\scriptstyle !$}, tikzit fill={rgb,255: red,0; green,230; blue,205}]
\tikzstyle{cut}=[draw=black, anchor=north, trapezium, trapezium angle=110, rounded corners=2pt, inner sep=1.2pt, fill=white, font={$\scriptstyle \mathtt{cut}$}, tikzit fill={rgb,255: red,11; green,96; blue,255}, tikzit shape=rectangle]
\tikzstyle{der}=[draw=black, anchor=north, trapezium, trapezium angle=110, rounded corners=2pt, inner sep=1.2pt, fill=white, font={$\scriptstyle \mathtt{?d}$}, tikzit fill={rgb,255: red,109; green,90; blue,90}]
\tikzstyle{lab}=[draw=black, anchor=north, trapezium, trapezium angle=110, rounded corners=2pt, inner sep=1.2pt, fill=white, font={$\scriptstyle l$}, tikzit fill={rgb,255: red,0; green,180; blue,220}]
\tikzstyle{lab2}=[draw=black, anchor=north, trapezium, trapezium angle=110, rounded corners=2pt, inner sep=1.2pt, fill=white, font={$\scriptstyle\hskip 0.125em\smash{l'}\vphantom{l} \hskip -0.125em$}, tikzit fill={rgb,255: red,0; green,180; blue,220}]
\tikzstyle{one}=[draw=black, anchor=north, trapezium, trapezium angle=110, rounded corners=2pt, inner sep=1.2pt, fill=white, font={$\scriptstyle \mathtt{1}$}, tikzit fill={rgb,255: red,255; green,170; blue,30}]
\tikzstyle{par}=[draw=black, anchor=north, trapezium, trapezium angle=110, rounded corners=2pt, inner sep=1.2pt, fill=white, font={$\scriptstyle \parr$}, tikzit fill={rgb,255: red,0; green,255; blue,213}]
\tikzstyle{tens}=[draw=black, anchor=north, trapezium, trapezium angle=110, rounded corners=2pt, inner sep=1.2pt, fill=white, font={$\scriptstyle \varotimes$}, tikzit fill={rgb,255: red,12; green,255; blue,0}]
\tikzstyle{why}=[draw=black, anchor=north, trapezium, trapezium angle=110, rounded corners=2pt, inner sep=1.2pt, fill=white, font={$\scriptstyle ?$}, tikzit fill={rgb,255: red,200; green,255; blue,90}]
\tikzstyle{wtxt}=[text=white, tikzit fill={rgb,255: red,238; green,238; blue,238}]
\tikzstyle{lowL}=[yshift=-0.8425ex, tikzit fill={rgb,255: red,120; green,120; blue,120}]
\tikzstyle{lowR}=[yshift=-0.885ex, tikzit fill={rgb,255: red,80; green,80; blue,80}]
\tikzstyle{rect}=[draw=black, anchor=north, inner sep=1.5pt, shape=rectangle, tikzit fill={rgb,255: red,90; green,150; blue,255}]
\tikzstyle{both}=[draw=black, anchor=north, trapezium, trapezium angle=110, rounded corners=2pt, inner sep=1.2pt, fill=pink, font={$\scriptstyle \bot$}, tikzit fill={rgb,255: red,255; green,120; blue,180}]
\tikzstyle{parh}=[draw=black, anchor=north, trapezium, trapezium angle=110, rounded corners=2pt, inner sep=1.2pt, fill=pink, font={$\scriptstyle \parr$}, tikzit fill={rgb,255: red,255; green,120; blue,200}]
\tikzstyle{formr}=[inner sep=1.5pt, text=myred, tikzit fill={rgb,255: red,230; green,45; blue,55}]
\tikzstyle{dotr}=[font={$\textcolor{myred}{\bullet}$}, tikzit fill={rgb,255: red,230; green,45; blue,55}]
\tikzstyle{axr}=[draw=myred, anchor=north, trapezium, trapezium angle=110, rounded corners=2pt, inner sep=1.2pt, fill=white, font={$\scriptstyle \textcolor{myred}{\mathtt{ax}}$}, tikzit fill={rgb,255: red,230; green,45; blue,55}, tikzit shape=rectangle]
\tikzstyle{botr}=[draw=myred, anchor=north, trapezium, trapezium angle=110, rounded corners=2pt, inner sep=1.2pt, fill=white, font={$\scriptstyle \textcolor{myred}{\bot}$}, tikzit fill={rgb,255: red,230; green,45; blue,55}]
\tikzstyle{bangr}=[draw=myred, anchor=north, trapezium, trapezium angle=110, rounded corners=2pt, inner sep=1.2pt, fill=white, font={$\scriptstyle \textcolor{myred}{!}$}, tikzit fill={rgb,255: red,230; green,45; blue,55}]
\tikzstyle{cutr}=[draw=myred, anchor=north, trapezium, trapezium angle=110, rounded corners=2pt, inner sep=1.2pt, fill=white, font={$\scriptstyle \textcolor{myred}{\mathtt{cut}}$}, tikzit fill={rgb,255: red,230; green,45; blue,55}, tikzit shape=rectangle]
\tikzstyle{derr}=[draw=myred, anchor=north, trapezium, trapezium angle=110, rounded corners=2pt, inner sep=1.2pt, fill=white, font={$\scriptstyle \textcolor{myred}{\mathtt{?d}}$}, tikzit fill={rgb,255: red,230; green,45; blue,55}]
\tikzstyle{labr}=[draw=myred, anchor=north, trapezium, trapezium angle=110, rounded corners=2pt, inner sep=1.2pt, fill=white, font={$\scriptstyle \textcolor{myred}{l}$}, tikzit fill={rgb,255: red,230; green,45; blue,55}]
\tikzstyle{lab2r}=[draw=myred, anchor=north, trapezium, trapezium angle=110, rounded corners=2pt, inner sep=1.2pt, fill=white, font={$\scriptstyle \textcolor{myred}{\hskip 0.125em\smash{l'}\vphantom{l} \hskip -0.125em}$}, tikzit fill={rgb,255: red,230; green,45; blue,55}]
\tikzstyle{oner}=[draw=myred, anchor=north, trapezium, trapezium angle=110, rounded corners=2pt, inner sep=1.2pt, fill=white, font={$\scriptstyle \textcolor{myred}{\mathtt{1}}$}, tikzit fill={rgb,255: red,230; green,45; blue,55}]
\tikzstyle{parr}=[draw=myred, anchor=north, trapezium, trapezium angle=110, rounded corners=2pt, inner sep=1.2pt, fill=white, font={$\scriptstyle \textcolor{myred}{\parr}$}, tikzit fill={rgb,255: red,230; green,45; blue,55}]
\tikzstyle{tensr}=[draw=myred, anchor=north, trapezium, trapezium angle=110, rounded corners=2pt, inner sep=1.2pt, fill=white, font={$\scriptstyle \textcolor{myred}{\varotimes}$}, tikzit fill={rgb,255: red,230; green,45; blue,55}]
\tikzstyle{whyr}=[draw=myred, anchor=north, trapezium, trapezium angle=110, rounded corners=2pt, inner sep=1.2pt, fill=white, font={$\scriptstyle \textcolor{myred}{?}$}, tikzit fill={rgb,255: red,230; green,45; blue,55}]
\tikzstyle{wtxtr}=[text=white, text=myred, tikzit fill={rgb,255: red,230; green,45; blue,55}]
\tikzstyle{lowLr}=[yshift=-0.8425ex, text=myred, draw=myred, tikzit fill={rgb,255: red,230; green,45; blue,55}]
\tikzstyle{lowRr}=[yshift=-0.885ex, text=myred, draw=myred, tikzit fill={rgb,255: red,230; green,45; blue,55}]
\tikzstyle{rectr}=[draw=myred, text=myred, anchor=north, inner sep=1.5pt, shape=rectangle, tikzit fill={rgb,255: red,230; green,45; blue,55}]
\tikzstyle{bothr}=[draw=black, anchor=north, trapezium, trapezium angle=110, rounded corners=2pt, inner sep=1.2pt, fill=myred, font={$\scriptstyle \bot$}, tikzit fill={rgb,255: red,230; green,45; blue,55}]
\tikzstyle{parhr}=[draw=black, anchor=north, trapezium, trapezium angle=110, rounded corners=2pt, inner sep=1.2pt, fill=myred, font={$\scriptstyle \parr$}, tikzit fill={rgb,255: red,230; green,45; blue,55}]
\tikzstyle{formb}=[inner sep=1.5pt, text=myblue, tikzit fill={rgb,255: red,40; green,105; blue,230}]
\tikzstyle{dotb}=[font={$\textcolor{myblue}{\bullet}$}, tikzit fill={rgb,255: red,40; green,105; blue,230}]
\tikzstyle{axb}=[draw=myblue, anchor=north, trapezium, trapezium angle=110, rounded corners=2pt, inner sep=1.2pt, fill=white, font={$\scriptstyle \textcolor{myblue}{\mathtt{ax}}$}, tikzit fill={rgb,255: red,40; green,105; blue,230}, tikzit shape=rectangle]
\tikzstyle{botb}=[draw=myblue, anchor=north, trapezium, trapezium angle=110, rounded corners=2pt, inner sep=1.2pt, fill=white, font={$\scriptstyle \textcolor{myblue}{\bot}$}, tikzit fill={rgb,255: red,40; green,105; blue,230}]
\tikzstyle{bangb}=[draw=myblue, anchor=north, trapezium, trapezium angle=110, rounded corners=2pt, inner sep=1.2pt, fill=white, font={$\scriptstyle \textcolor{myblue}{!}$}, tikzit fill={rgb,255: red,40; green,105; blue,230}]
\tikzstyle{cutb}=[draw=myblue, anchor=north, trapezium, trapezium angle=110, rounded corners=2pt, inner sep=1.2pt, fill=white, font={$\scriptstyle \textcolor{myblue}{\mathtt{cut}}$}, tikzit fill={rgb,255: red,40; green,105; blue,230}, tikzit shape=rectangle]
\tikzstyle{derb}=[draw=myblue, anchor=north, trapezium, trapezium angle=110, rounded corners=2pt, inner sep=1.2pt, fill=white, font={$\scriptstyle \textcolor{myblue}{\mathtt{?d}}$}, tikzit fill={rgb,255: red,40; green,105; blue,230}]
\tikzstyle{labb}=[draw=myblue, anchor=north, trapezium, trapezium angle=110, rounded corners=2pt, inner sep=1.2pt, fill=white, font={$\scriptstyle \textcolor{myblue}{l}$}, tikzit fill={rgb,255: red,40; green,105; blue,230}]
\tikzstyle{lab2b}=[draw=myblue, anchor=north, trapezium, trapezium angle=110, rounded corners=2pt, inner sep=1.2pt, fill=white, font={$\scriptstyle \textcolor{myblue}{\hskip 0.125em\smash{l'}\vphantom{l} \hskip -0.125em}$}, tikzit fill={rgb,255: red,40; green,105; blue,230}]
\tikzstyle{oneb}=[draw=myblue, anchor=north, trapezium, trapezium angle=110, rounded corners=2pt, inner sep=1.2pt, fill=white, font={$\scriptstyle \textcolor{myblue}{\mathtt{1}}$}, tikzit fill={rgb,255: red,40; green,105; blue,230}]
\tikzstyle{parb}=[draw=myblue, anchor=north, trapezium, trapezium angle=110, rounded corners=2pt, inner sep=1.2pt, fill=white, font={$\scriptstyle \textcolor{myblue}{\parr}$}, tikzit fill={rgb,255: red,40; green,105; blue,230}]
\tikzstyle{tensb}=[draw=myblue, anchor=north, trapezium, trapezium angle=110, rounded corners=2pt, inner sep=1.2pt, fill=white, font={$\scriptstyle \textcolor{myblue}{\varotimes}$}, tikzit fill={rgb,255: red,40; green,105; blue,230}]
\tikzstyle{whyb}=[draw=myblue, anchor=north, trapezium, trapezium angle=110, rounded corners=2pt, inner sep=1.2pt, fill=white, font={$\scriptstyle \textcolor{myblue}{?}$}, tikzit fill={rgb,255: red,40; green,105; blue,230}]
\tikzstyle{wtxtb}=[text=white, text=myblue, tikzit fill={rgb,255: red,40; green,105; blue,230}]
\tikzstyle{lowLb}=[yshift=-0.8425ex, text=myblue, draw=myblue, tikzit fill={rgb,255: red,40; green,105; blue,230}]
\tikzstyle{lowRb}=[yshift=-0.885ex, text=myblue, draw=myblue, tikzit fill={rgb,255: red,40; green,105; blue,230}]
\tikzstyle{rectb}=[draw=myblue, text=myblue, anchor=north, inner sep=1.5pt, shape=rectangle, tikzit fill={rgb,255: red,40; green,105; blue,230}]
\tikzstyle{bothb}=[draw=black, anchor=north, trapezium, trapezium angle=110, rounded corners=2pt, inner sep=1.2pt, fill=myblue, font={$\scriptstyle \bot$}, tikzit fill={rgb,255: red,40; green,105; blue,230}]
\tikzstyle{parhb}=[draw=black, anchor=north, trapezium, trapezium angle=110, rounded corners=2pt, inner sep=1.2pt, fill=myblue, font={$\scriptstyle \parr$}, tikzit fill={rgb,255: red,40; green,105; blue,230}]
\tikzstyle{formg}=[inner sep=1.5pt, text=mygreen, tikzit fill={rgb,255: red,35; green,170; blue,85}]
\tikzstyle{dotg}=[font={$\textcolor{mygreen}{\bullet}$}, tikzit fill={rgb,255: red,35; green,170; blue,85}]
\tikzstyle{axg}=[draw=mygreen, anchor=north, trapezium, trapezium angle=110, rounded corners=2pt, inner sep=1.2pt, fill=white, font={$\scriptstyle \textcolor{mygreen}{\mathtt{ax}}$}, tikzit fill={rgb,255: red,35; green,170; blue,85}, tikzit shape=rectangle]
\tikzstyle{botg}=[draw=mygreen, anchor=north, trapezium, trapezium angle=110, rounded corners=2pt, inner sep=1.2pt, fill=white, font={$\scriptstyle \textcolor{mygreen}{\bot}$}, tikzit fill={rgb,255: red,35; green,170; blue,85}]
\tikzstyle{bangg}=[draw=mygreen, anchor=north, trapezium, trapezium angle=110, rounded corners=2pt, inner sep=1.2pt, fill=white, font={$\scriptstyle \textcolor{mygreen}{!}$}, tikzit fill={rgb,255: red,35; green,170; blue,85}]
\tikzstyle{cutg}=[draw=mygreen, anchor=north, trapezium, trapezium angle=110, rounded corners=2pt, inner sep=1.2pt, fill=white, font={$\scriptstyle \textcolor{mygreen}{\mathtt{cut}}$}, tikzit fill={rgb,255: red,35; green,170; blue,85}, tikzit shape=rectangle]
\tikzstyle{derg}=[draw=mygreen, anchor=north, trapezium, trapezium angle=110, rounded corners=2pt, inner sep=1.2pt, fill=white, font={$\scriptstyle \textcolor{mygreen}{\mathtt{?d}}$}, tikzit fill={rgb,255: red,35; green,170; blue,85}]
\tikzstyle{labg}=[draw=mygreen, anchor=north, trapezium, trapezium angle=110, rounded corners=2pt, inner sep=1.2pt, fill=white, font={$\scriptstyle \textcolor{mygreen}{l}$}, tikzit fill={rgb,255: red,35; green,170; blue,85}]
\tikzstyle{lab2g}=[draw=mygreen, anchor=north, trapezium, trapezium angle=110, rounded corners=2pt, inner sep=1.2pt, fill=white, font={$\scriptstyle \textcolor{mygreen}{\hskip 0.125em\smash{l'}\vphantom{l} \hskip -0.125em}$}, tikzit fill={rgb,255: red,35; green,170; blue,85}]
\tikzstyle{oneg}=[draw=mygreen, anchor=north, trapezium, trapezium angle=110, rounded corners=2pt, inner sep=1.2pt, fill=white, font={$\scriptstyle \textcolor{mygreen}{\mathtt{1}}$}, tikzit fill={rgb,255: red,35; green,170; blue,85}]
\tikzstyle{parg}=[draw=mygreen, anchor=north, trapezium, trapezium angle=110, rounded corners=2pt, inner sep=1.2pt, fill=white, font={$\scriptstyle \textcolor{mygreen}{\parr}$}, tikzit fill={rgb,255: red,35; green,170; blue,85}]
\tikzstyle{tensg}=[draw=mygreen, anchor=north, trapezium, trapezium angle=110, rounded corners=2pt, inner sep=1.2pt, fill=white, font={$\scriptstyle \textcolor{mygreen}{\varotimes}$}, tikzit fill={rgb,255: red,35; green,170; blue,85}]
\tikzstyle{whyg}=[draw=mygreen, anchor=north, trapezium, trapezium angle=110, rounded corners=2pt, inner sep=1.2pt, fill=white, font={$\scriptstyle \textcolor{mygreen}{?}$}, tikzit fill={rgb,255: red,35; green,170; blue,85}]
\tikzstyle{wtxtg}=[text=white, text=mygreen, tikzit fill={rgb,255: red,35; green,170; blue,85}]
\tikzstyle{lowLg}=[yshift=-0.8425ex, text=mygreen, draw=mygreen, tikzit fill={rgb,255: red,35; green,170; blue,85}]
\tikzstyle{lowRg}=[yshift=-0.885ex, text=mygreen, draw=mygreen, tikzit fill={rgb,255: red,35; green,170; blue,85}]
\tikzstyle{rectg}=[draw=mygreen, text=mygreen, anchor=north, inner sep=1.5pt, shape=rectangle, tikzit fill={rgb,255: red,35; green,170; blue,85}]
\tikzstyle{bothg}=[draw=black, anchor=north, trapezium, trapezium angle=110, rounded corners=2pt, inner sep=1.2pt, fill=mygreen, font={$\scriptstyle \bot$}, tikzit fill={rgb,255: red,35; green,170; blue,85}]
\tikzstyle{parhg}=[draw=black, anchor=north, trapezium, trapezium angle=110, rounded corners=2pt, inner sep=1.2pt, fill=mygreen, font={$\scriptstyle \parr$}, tikzit fill={rgb,255: red,35; green,170; blue,85}]
\tikzstyle{formp}=[inner sep=1.5pt, text=mypurple, tikzit fill={rgb,255: red,145; green,75; blue,210}]
\tikzstyle{dotp}=[font={$\textcolor{mypurple}{\bullet}$}, tikzit fill={rgb,255: red,145; green,75; blue,210}]
\tikzstyle{axp}=[draw=mypurple, anchor=north, trapezium, trapezium angle=110, rounded corners=2pt, inner sep=1.2pt, fill=white, font={$\scriptstyle \textcolor{mypurple}{\mathtt{ax}}$}, tikzit fill={rgb,255: red,145; green,75; blue,210}, tikzit shape=rectangle]
\tikzstyle{botp}=[draw=mypurple, anchor=north, trapezium, trapezium angle=110, rounded corners=2pt, inner sep=1.2pt, fill=white, font={$\scriptstyle \textcolor{mypurple}{\bot}$}, tikzit fill={rgb,255: red,145; green,75; blue,210}]
\tikzstyle{bangp}=[draw=mypurple, anchor=north, trapezium, trapezium angle=110, rounded corners=2pt, inner sep=1.2pt, fill=white, font={$\scriptstyle \textcolor{mypurple}{!}$}, tikzit fill={rgb,255: red,145; green,75; blue,210}]
\tikzstyle{cutp}=[draw=mypurple, anchor=north, trapezium, trapezium angle=110, rounded corners=2pt, inner sep=1.2pt, fill=white, font={$\scriptstyle \textcolor{mypurple}{\mathtt{cut}}$}, tikzit fill={rgb,255: red,145; green,75; blue,210}, tikzit shape=rectangle]
\tikzstyle{derp}=[draw=mypurple, anchor=north, trapezium, trapezium angle=110, rounded corners=2pt, inner sep=1.2pt, fill=white, font={$\scriptstyle \textcolor{mypurple}{\mathtt{?d}}$}, tikzit fill={rgb,255: red,145; green,75; blue,210}]
\tikzstyle{labp}=[draw=mypurple, anchor=north, trapezium, trapezium angle=110, rounded corners=2pt, inner sep=1.2pt, fill=white, font={$\scriptstyle \textcolor{mypurple}{l}$}, tikzit fill={rgb,255: red,145; green,75; blue,210}]
\tikzstyle{lab2p}=[draw=mypurple, anchor=north, trapezium, trapezium angle=110, rounded corners=2pt, inner sep=1.2pt, fill=white, font={$\scriptstyle \textcolor{mypurple}{\hskip 0.125em\smash{l'}\vphantom{l} \hskip -0.125em}$}, tikzit fill={rgb,255: red,145; green,75; blue,210}]
\tikzstyle{onep}=[draw=mypurple, anchor=north, trapezium, trapezium angle=110, rounded corners=2pt, inner sep=1.2pt, fill=white, font={$\scriptstyle \textcolor{mypurple}{\mathtt{1}}$}, tikzit fill={rgb,255: red,145; green,75; blue,210}]
\tikzstyle{parp}=[draw=mypurple, anchor=north, trapezium, trapezium angle=110, rounded corners=2pt, inner sep=1.2pt, fill=white, font={$\scriptstyle \textcolor{mypurple}{\parr}$}, tikzit fill={rgb,255: red,145; green,75; blue,210}]
\tikzstyle{tensp}=[draw=mypurple, anchor=north, trapezium, trapezium angle=110, rounded corners=2pt, inner sep=1.2pt, fill=white, font={$\scriptstyle \textcolor{mypurple}{\varotimes}$}, tikzit fill={rgb,255: red,145; green,75; blue,210}]
\tikzstyle{whyp}=[draw=mypurple, anchor=north, trapezium, trapezium angle=110, rounded corners=2pt, inner sep=1.2pt, fill=white, font={$\scriptstyle \textcolor{mypurple}{?}$}, tikzit fill={rgb,255: red,145; green,75; blue,210}]
\tikzstyle{wtxtp}=[text=white, text=mypurple, tikzit fill={rgb,255: red,145; green,75; blue,210}]
\tikzstyle{lowLp}=[yshift=-0.8425ex, text=mypurple, draw=mypurple, tikzit fill={rgb,255: red,145; green,75; blue,210}]
\tikzstyle{lowRp}=[yshift=-0.885ex, text=mypurple, draw=mypurple, tikzit fill={rgb,255: red,145; green,75; blue,210}]
\tikzstyle{rectp}=[draw=mypurple, text=mypurple, anchor=north, inner sep=1.5pt, shape=rectangle, tikzit fill={rgb,255: red,145; green,75; blue,210}]
\tikzstyle{bothp}=[draw=black, anchor=north, trapezium, trapezium angle=110, rounded corners=2pt, inner sep=1.2pt, fill=mypurple, font={$\scriptstyle \bot$}, tikzit fill={rgb,255: red,145; green,75; blue,210}]
\tikzstyle{parhp}=[draw=black, anchor=north, trapezium, trapezium angle=110, rounded corners=2pt, inner sep=1.2pt, fill=mypurple, font={$\scriptstyle \parr$}, tikzit fill={rgb,255: red,145; green,75; blue,210}]
\tikzstyle{formy}=[inner sep=1.5pt, text=myyellow, tikzit fill={rgb,255: red,235; green,190; blue,35}]
\tikzstyle{doty}=[font={$\textcolor{myyellow}{\bullet}$}, tikzit fill={rgb,255: red,235; green,190; blue,35}]
\tikzstyle{axy}=[draw=myyellow, anchor=north, trapezium, trapezium angle=110, rounded corners=2pt, inner sep=1.2pt, fill=white, font={$\scriptstyle \textcolor{myyellow}{\mathtt{ax}}$}, tikzit fill={rgb,255: red,235; green,190; blue,35}, tikzit shape=rectangle]
\tikzstyle{boty}=[draw=myyellow, anchor=north, trapezium, trapezium angle=110, rounded corners=2pt, inner sep=1.2pt, fill=white, font={$\scriptstyle \textcolor{myyellow}{\bot}$}, tikzit fill={rgb,255: red,235; green,190; blue,35}]
\tikzstyle{bangy}=[draw=myyellow, anchor=north, trapezium, trapezium angle=110, rounded corners=2pt, inner sep=1.2pt, fill=white, font={$\scriptstyle \textcolor{myyellow}{!}$}, tikzit fill={rgb,255: red,235; green,190; blue,35}]
\tikzstyle{cuty}=[draw=myyellow, anchor=north, trapezium, trapezium angle=110, rounded corners=2pt, inner sep=1.2pt, fill=white, font={$\scriptstyle \textcolor{myyellow}{\mathtt{cut}}$}, tikzit fill={rgb,255: red,235; green,190; blue,35}, tikzit shape=rectangle]
\tikzstyle{dery}=[draw=myyellow, anchor=north, trapezium, trapezium angle=110, rounded corners=2pt, inner sep=1.2pt, fill=white, font={$\scriptstyle \textcolor{myyellow}{\mathtt{?d}}$}, tikzit fill={rgb,255: red,235; green,190; blue,35}]
\tikzstyle{laby}=[draw=myyellow, anchor=north, trapezium, trapezium angle=110, rounded corners=2pt, inner sep=1.2pt, fill=white, font={$\scriptstyle \textcolor{myyellow}{l}$}, tikzit fill={rgb,255: red,235; green,190; blue,35}]
\tikzstyle{lab2y}=[draw=myyellow, anchor=north, trapezium, trapezium angle=110, rounded corners=2pt, inner sep=1.2pt, fill=white, font={$\scriptstyle \textcolor{myyellow}{\hskip 0.125em\smash{l'}\vphantom{l} \hskip -0.125em}$}, tikzit fill={rgb,255: red,235; green,190; blue,35}]
\tikzstyle{oney}=[draw=myyellow, anchor=north, trapezium, trapezium angle=110, rounded corners=2pt, inner sep=1.2pt, fill=white, font={$\scriptstyle \textcolor{myyellow}{\mathtt{1}}$}, tikzit fill={rgb,255: red,235; green,190; blue,35}]
\tikzstyle{pary}=[draw=myyellow, anchor=north, trapezium, trapezium angle=110, rounded corners=2pt, inner sep=1.2pt, fill=white, font={$\scriptstyle \textcolor{myyellow}{\parr}$}, tikzit fill={rgb,255: red,235; green,190; blue,35}]
\tikzstyle{tensy}=[draw=myyellow, anchor=north, trapezium, trapezium angle=110, rounded corners=2pt, inner sep=1.2pt, fill=white, font={$\scriptstyle \textcolor{myyellow}{\varotimes}$}, tikzit fill={rgb,255: red,235; green,190; blue,35}]
\tikzstyle{whyy}=[draw=myyellow, anchor=north, trapezium, trapezium angle=110, rounded corners=2pt, inner sep=1.2pt, fill=white, font={$\scriptstyle \textcolor{myyellow}{?}$}, tikzit fill={rgb,255: red,235; green,190; blue,35}]
\tikzstyle{wtxty}=[text=white, text=myyellow, tikzit fill={rgb,255: red,235; green,190; blue,35}]
\tikzstyle{lowLy}=[yshift=-0.8425ex, text=myyellow, draw=myyellow, tikzit fill={rgb,255: red,235; green,190; blue,35}]
\tikzstyle{lowRy}=[yshift=-0.885ex, text=myyellow, draw=myyellow, tikzit fill={rgb,255: red,235; green,190; blue,35}]
\tikzstyle{recty}=[draw=myyellow, text=myyellow, anchor=north, inner sep=1.5pt, shape=rectangle, tikzit fill={rgb,255: red,235; green,190; blue,35}]
\tikzstyle{bothy}=[draw=black, anchor=north, trapezium, trapezium angle=110, rounded corners=2pt, inner sep=1.2pt, fill=myyellow, font={$\scriptstyle \bot$}, tikzit fill={rgb,255: red,235; green,190; blue,35}]
\tikzstyle{parhy}=[draw=black, anchor=north, trapezium, trapezium angle=110, rounded corners=2pt, inner sep=1.2pt, fill=myyellow, font={$\scriptstyle \parr$}, tikzit fill={rgb,255: red,235; green,190; blue,35}]
\tikzstyle{formo}=[inner sep=1.5pt, text=myorange, tikzit fill={rgb,255: red,245; green,130; blue,30}]
\tikzstyle{doto}=[font={$\textcolor{myorange}{\bullet}$}, tikzit fill={rgb,255: red,245; green,130; blue,30}]
\tikzstyle{axo}=[draw=myorange, anchor=north, trapezium, trapezium angle=110, rounded corners=2pt, inner sep=1.2pt, fill=white, font={$\scriptstyle \textcolor{myorange}{\mathtt{ax}}$}, tikzit fill={rgb,255: red,245; green,130; blue,30}, tikzit shape=rectangle]
\tikzstyle{boto}=[draw=myorange, anchor=north, trapezium, trapezium angle=110, rounded corners=2pt, inner sep=1.2pt, fill=white, font={$\scriptstyle \textcolor{myorange}{\bot}$}, tikzit fill={rgb,255: red,245; green,130; blue,30}]
\tikzstyle{bango}=[draw=myorange, anchor=north, trapezium, trapezium angle=110, rounded corners=2pt, inner sep=1.2pt, fill=white, font={$\scriptstyle \textcolor{myorange}{!}$}, tikzit fill={rgb,255: red,245; green,130; blue,30}]
\tikzstyle{cuto}=[draw=myorange, anchor=north, trapezium, trapezium angle=110, rounded corners=2pt, inner sep=1.2pt, fill=white, font={$\scriptstyle \textcolor{myorange}{\mathtt{cut}}$}, tikzit fill={rgb,255: red,245; green,130; blue,30}, tikzit shape=rectangle]
\tikzstyle{dero}=[draw=myorange, anchor=north, trapezium, trapezium angle=110, rounded corners=2pt, inner sep=1.2pt, fill=white, font={$\scriptstyle \textcolor{myorange}{\mathtt{?d}}$}, tikzit fill={rgb,255: red,245; green,130; blue,30}]
\tikzstyle{labo}=[draw=myorange, anchor=north, trapezium, trapezium angle=110, rounded corners=2pt, inner sep=1.2pt, fill=white, font={$\scriptstyle \textcolor{myorange}{l}$}, tikzit fill={rgb,255: red,245; green,130; blue,30}]
\tikzstyle{lab2o}=[draw=myorange, anchor=north, trapezium, trapezium angle=110, rounded corners=2pt, inner sep=1.2pt, fill=white, font={$\scriptstyle \textcolor{myorange}{\hskip 0.125em\smash{l'}\vphantom{l} \hskip -0.125em}$}, tikzit fill={rgb,255: red,245; green,130; blue,30}]
\tikzstyle{oneo}=[draw=myorange, anchor=north, trapezium, trapezium angle=110, rounded corners=2pt, inner sep=1.2pt, fill=white, font={$\scriptstyle \textcolor{myorange}{\mathtt{1}}$}, tikzit fill={rgb,255: red,245; green,130; blue,30}]
\tikzstyle{paro}=[draw=myorange, anchor=north, trapezium, trapezium angle=110, rounded corners=2pt, inner sep=1.2pt, fill=white, font={$\scriptstyle \textcolor{myorange}{\parr}$}, tikzit fill={rgb,255: red,245; green,130; blue,30}]
\tikzstyle{tenso}=[draw=myorange, anchor=north, trapezium, trapezium angle=110, rounded corners=2pt, inner sep=1.2pt, fill=white, font={$\scriptstyle \textcolor{myorange}{\varotimes}$}, tikzit fill={rgb,255: red,245; green,130; blue,30}]
\tikzstyle{whyo}=[draw=myorange, anchor=north, trapezium, trapezium angle=110, rounded corners=2pt, inner sep=1.2pt, fill=white, font={$\scriptstyle \textcolor{myorange}{?}$}, tikzit fill={rgb,255: red,245; green,130; blue,30}]
\tikzstyle{wtxto}=[text=white, text=myorange, tikzit fill={rgb,255: red,245; green,130; blue,30}]
\tikzstyle{lowLo}=[yshift=-0.8425ex, text=myorange, draw=myorange, tikzit fill={rgb,255: red,245; green,130; blue,30}]
\tikzstyle{lowRo}=[yshift=-0.885ex, text=myorange, draw=myorange, tikzit fill={rgb,255: red,245; green,130; blue,30}]
\tikzstyle{recto}=[draw=myorange, text=myorange, anchor=north, inner sep=1.5pt, shape=rectangle, tikzit fill={rgb,255: red,245; green,130; blue,30}]
\tikzstyle{botho}=[draw=black, anchor=north, trapezium, trapezium angle=110, rounded corners=2pt, inner sep=1.2pt, fill=myorange, font={$\scriptstyle \bot$}, tikzit fill={rgb,255: red,245; green,130; blue,30}]
\tikzstyle{parho}=[draw=black, anchor=north, trapezium, trapezium angle=110, rounded corners=2pt, inner sep=1.2pt, fill=myorange, font={$\scriptstyle \parr$}, tikzit fill={rgb,255: red,245; green,130; blue,30}]
\tikzstyle{wire}=[draw=black, -, draw opacity=0, -latex, line width=0.2pt, postaction={draw, opacity=1, -, line width=1pt, shorten >=3pt, shorten <=-1pt}, tikzit draw={rgb,255: red,100; green,75; blue,200}]
\tikzstyle{wirB}=[draw=black, -, draw opacity=0, -latex, line width=0.2pt, postaction={draw, opacity=1, -, line width=1pt, shorten >=-1pt, shorten <=-1pt}, shorten >=-4pt, tikzit draw={rgb,255: red,130; green,75; blue,200}]
\tikzstyle{wEnd}=[draw=black, -, draw opacity=0, -latex, line width=0.2pt, postaction={draw, opacity=1, -, line width=1pt, shorten >=3pt}, tikzit draw={rgb,255: red,160; green,75; blue,200}]
\tikzstyle{dwir}=[draw=black, ->, draw opacity=0, latex reversed-, line width=0.2pt, postaction={draw, opacity=1, -, line width=1pt, shorten >=-5pt}, tikzit draw={rgb,255: red,100; green,75; blue,200}]
\tikzstyle{dwiB}=[draw=black, ->, draw opacity=0, latex reversed-, line width=0.2pt, postaction={draw, opacity=1, -, line width=1pt}, tikzit draw={rgb,255: red,130; green,75; blue,200}]
\tikzstyle{coL}=[out=180, in=90, looseness=0.85, -, draw opacity=0, -latex, line width=0.2pt, postaction={draw, opacity=1, -, line width=1pt, shorten >=3pt, shorten <=-1pt}, tikzit draw={rgb,255: red,100; green,75; blue,200}]
\tikzstyle{coR}=[out=00, in=90, looseness=0.85, -, draw opacity=0, -latex, line width=0.2pt, postaction={draw, opacity=1, -, line width=1pt, shorten >=3pt, shorten <=-1pt}, tikzit draw={rgb,255: red,245; green,200; blue,100}]
\tikzstyle{ciL}=[out=-90, in=180, looseness=0.85, -, draw opacity=0, -latex, line width=0.2pt, postaction={draw, opacity=1, -, line width=1pt, shorten >=3pt}, tikzit draw={rgb,255: red,100; green,75; blue,200}]
\tikzstyle{ciR}=[out=-90, in=00, looseness=0.85, -, draw opacity=0, -latex, line width=0.2pt, postaction={draw, opacity=1, -, line width=1pt, shorten >=3pt}, tikzit draw={rgb,255: red,245; green,200; blue,100}]
\tikzstyle{dciL}=[out=-90, in=180, looseness=0.85, ->, draw opacity=0, latex reversed-, line width=0.2pt, postaction={draw, opacity=1, -, line width=1pt, shorten <=1pt, shorten >=-1pt}, tikzit draw={rgb,255: red,100; green,75; blue,200}]
\tikzstyle{dciR}=[out=-90, in=00, looseness=0.85, ->, draw opacity=0, latex reversed-, line width=0.2pt, postaction={draw, opacity=1, -, line width=1pt, shorten <=1pt, shorten >=-1pt}, tikzit draw={rgb,255: red,245; green,200; blue,100}]
\tikzstyle{chR}=[out=0, in=180, looseness=0.85, ->, draw opacity=0, -latex, line width=0.2pt, postaction={draw, opacity=1, -, line width=1pt, shorten <=-1pt, shorten >=3pt}, tikzit draw={rgb,255: red,245; green,200; blue,100}]
\tikzstyle{fil0}=[-, fill={rgb,255: red,238; green,238; blue,238}, draw=black, fill opacity=0, tikzit draw={rgb,255: red,120; green,120; blue,120}]
\tikzstyle{bfrm}=[draw=black, line width=1pt, -, fill={rgb,255: red,0; green,0; blue,0}, rounded corners, tikzit draw={rgb,255: red,0; green,0; blue,0}]
\tikzstyle{chL}=[out=180, in=0, looseness=0.85, ->, draw opacity=0, -latex, line width=0.2pt, postaction={draw, opacity=1, -, line width=1pt, shorten <=-1pt, shorten >=3pt}, tikzit draw={rgb,255: red,100; green,75; blue,200}]
\tikzstyle{clL}=[out=180, in=180, looseness=0.85, ->, draw opacity=0, -latex, line width=0.2pt, postaction={draw, opacity=1, -, line width=1pt, shorten <=-1pt, shorten >=3pt}, tikzit draw={rgb,255: red,140; green,75; blue,220}]
\tikzstyle{clR}=[out=0, in=0, looseness=0.85, ->, draw opacity=0, -latex, line width=0.2pt, postaction={draw, opacity=1, -, line width=1pt, shorten <=-1pt, shorten >=3pt}, tikzit draw={rgb,255: red,255; green,170; blue,80}]
\tikzstyle{eoL}=[out=180, in=90, looseness=0.85, -, draw opacity=0, -latex, line width=0.2pt, postaction={draw, opacity=1, -, line width=1pt, shorten >=-1pt, shorten <=-1pt}, shorten >=-4pt, tikzit draw={rgb,255: red,120; green,110; blue,200}]
\tikzstyle{eoR}=[out=00, in=90, looseness=0.85, -, draw opacity=0, -latex, line width=0.2pt, postaction={draw, opacity=1, -, line width=1pt, shorten >=-1pt, shorten <=-1pt}, shorten >=-4pt, tikzit draw={rgb,255: red,220; green,190; blue,120}]
\tikzstyle{qoL}=[out=180, in=90, looseness=0.85, ->, draw opacity=0, -latex, line width=0.2pt, postaction={draw, opacity=1, -, line width=1pt, shorten >=3pt}, tikzit draw={rgb,255: red,80; green,140; blue,200}]
\tikzstyle{qoR}=[out=00, in=90, looseness=0.85, ->, draw opacity=0, -latex, line width=0.2pt, postaction={draw, opacity=1, -, line width=1pt, shorten >=3pt}, tikzit draw={rgb,255: red,220; green,180; blue,90}]
\tikzstyle{dqiL}=[out=-90, in=180, looseness=0.85, ->, draw opacity=0, latex reversed-, line width=0.2pt, postaction={draw, opacity=1, -, line width=1pt, shorten <=1pt, shorten >=0pt}, tikzit draw={rgb,255: red,90; green,160; blue,200}]
\tikzstyle{dqiR}=[out=-90, in=00, looseness=0.85, ->, draw opacity=0, latex reversed-, line width=0.2pt, postaction={draw, opacity=1, -, line width=1pt, shorten <=1pt, shorten >=0pt}, tikzit draw={rgb,255: red,220; green,170; blue,110}]
\tikzstyle{rarr}=[->, tikzit draw={rgb,255: red,90; green,90; blue,90}]
\tikzstyle{hc1L}=[out=-90, in=180, looseness=0.85, -, draw opacity=0, -latex, line width=0.3pt, postaction={draw, opacity=1, -, line width=1.5pt, shorten >=3pt}, tikzit draw={rgb,255: red,255; green,0; blue,0}]
\tikzstyle{hc1R}=[out=-90, in=00, looseness=0.85, -, draw opacity=0, -latex, line width=0.3pt, postaction={draw, opacity=1, -, line width=1.5pt, shorten >=3pt}, tikzit draw={rgb,255: red,245; green,20; blue,20}]
\tikzstyle{hu1R}=[out=0, in=270, looseness=0.85, -, draw opacity=0, -latex, line width=0.3pt, postaction={draw, opacity=1, -, line width=1.5pt, shorten >=3pt, shorten <=-0.3pt}, tikzit draw={rgb,255: red,225; green,10; blue,20}]
\tikzstyle{wirer}=[draw=myred, -, draw opacity=0, -latex, line width=0.2pt, postaction={draw, opacity=1, -, line width=1pt, shorten >=3pt, shorten <=-1pt}, tikzit draw={rgb,255: red,230; green,45; blue,55}]
\tikzstyle{wirBr}=[draw=myred, -, draw opacity=0, -latex, line width=0.2pt, postaction={draw, opacity=1, -, line width=1pt, shorten >=-1pt, shorten <=-1pt}, shorten >=-4pt, tikzit draw={rgb,255: red,230; green,45; blue,55}]
\tikzstyle{wEndr}=[draw=myred, -, draw opacity=0, -latex, line width=0.2pt, postaction={draw, opacity=1, -, line width=1pt, shorten >=3pt}, tikzit draw={rgb,255: red,230; green,45; blue,55}]
\tikzstyle{dwirr}=[draw=myred, ->, draw opacity=0, latex reversed-, line width=0.2pt, postaction={draw, opacity=1, -, line width=1pt, shorten >=-5pt}, tikzit draw={rgb,255: red,230; green,45; blue,55}]
\tikzstyle{dwiBr}=[draw=myred, ->, draw opacity=0, latex reversed-, line width=0.2pt, postaction={draw, opacity=1, -, line width=1pt}, tikzit draw={rgb,255: red,230; green,45; blue,55}]
\tikzstyle{coLr}=[draw=myred, out=180, in=90, looseness=0.85, -, draw opacity=0, -latex, line width=0.2pt, postaction={draw, opacity=1, -, line width=1pt, shorten >=3pt, shorten <=-1pt}, tikzit draw={rgb,255: red,230; green,45; blue,55}]
\tikzstyle{coRr}=[draw=myred, out=00, in=90, looseness=0.85, -, draw opacity=0, -latex, line width=0.2pt, postaction={draw, opacity=1, -, line width=1pt, shorten >=3pt, shorten <=-1pt}, tikzit draw={rgb,255: red,230; green,45; blue,55}]
\tikzstyle{ciLr}=[draw=myred, out=-90, in=180, looseness=0.85, -, draw opacity=0, -latex, line width=0.2pt, postaction={draw, opacity=1, -, line width=1pt, shorten >=3pt}, tikzit draw={rgb,255: red,230; green,45; blue,55}]
\tikzstyle{ciRr}=[draw=myred, out=-90, in=00, looseness=0.85, -, draw opacity=0, -latex, line width=0.2pt, postaction={draw, opacity=1, -, line width=1pt, shorten >=3pt}, tikzit draw={rgb,255: red,230; green,45; blue,55}]
\tikzstyle{dciLr}=[draw=myred, out=-90, in=180, looseness=0.85, ->, draw opacity=0, latex reversed-, line width=0.2pt, postaction={draw, opacity=1, -, line width=1pt, shorten <=1pt, shorten >=-1pt}, tikzit draw={rgb,255: red,230; green,45; blue,55}]
\tikzstyle{dciRr}=[draw=myred, out=-90, in=00, looseness=0.85, ->, draw opacity=0, latex reversed-, line width=0.2pt, postaction={draw, opacity=1, -, line width=1pt, shorten <=1pt, shorten >=-1pt}, tikzit draw={rgb,255: red,230; green,45; blue,55}]
\tikzstyle{chRr}=[draw=myred, out=0, in=180, looseness=0.85, ->, draw opacity=0, -latex, line width=0.2pt, postaction={draw, opacity=1, -, line width=1pt, shorten <=-1pt, shorten >=3pt}, tikzit draw={rgb,255: red,230; green,45; blue,55}]
\tikzstyle{fil0r}=[-, fill=myred, draw=myred, fill opacity=0, tikzit draw={rgb,255: red,230; green,45; blue,55}]
\tikzstyle{bfrmr}=[draw=myred, line width=1pt, -, fill=myred, rounded corners, tikzit draw={rgb,255: red,230; green,45; blue,55}]
\tikzstyle{chLr}=[draw=myred, out=180, in=0, looseness=0.85, ->, draw opacity=0, -latex, line width=0.2pt, postaction={draw, opacity=1, -, line width=1pt, shorten <=-1pt, shorten >=3pt}, tikzit draw={rgb,255: red,230; green,45; blue,55}]
\tikzstyle{clLr}=[draw=myred, out=180, in=180, looseness=0.85, ->, draw opacity=0, -latex, line width=0.2pt, postaction={draw, opacity=1, -, line width=1pt, shorten <=-1pt, shorten >=3pt}, tikzit draw={rgb,255: red,230; green,45; blue,55}]
\tikzstyle{clRr}=[draw=myred, out=0, in=0, looseness=0.85, ->, draw opacity=0, -latex, line width=0.2pt, postaction={draw, opacity=1, -, line width=1pt, shorten <=-1pt, shorten >=3pt}, tikzit draw={rgb,255: red,230; green,45; blue,55}]
\tikzstyle{eoLr}=[draw=myred, out=180, in=90, looseness=0.85, -, draw opacity=0, -latex, line width=0.2pt, postaction={draw, opacity=1, -, line width=1pt, shorten >=-1pt, shorten <=-1pt}, shorten >=-4pt, tikzit draw={rgb,255: red,230; green,45; blue,55}]
\tikzstyle{eoRr}=[draw=myred, out=00, in=90, looseness=0.85, -, draw opacity=0, -latex, line width=0.2pt, postaction={draw, opacity=1, -, line width=1pt, shorten >=-1pt, shorten <=-1pt}, shorten >=-4pt, tikzit draw={rgb,255: red,230; green,45; blue,55}]
\tikzstyle{qoLr}=[draw=myred, out=180, in=90, looseness=0.85, ->, draw opacity=0, -latex, line width=0.2pt, postaction={draw, opacity=1, -, line width=1pt, shorten >=3pt}, tikzit draw={rgb,255: red,230; green,45; blue,55}]
\tikzstyle{qoRr}=[draw=myred, out=00, in=90, looseness=0.85, ->, draw opacity=0, -latex, line width=0.2pt, postaction={draw, opacity=1, -, line width=1pt, shorten >=3pt}, tikzit draw={rgb,255: red,230; green,45; blue,55}]
\tikzstyle{dqiLr}=[draw=myred, out=-90, in=180, looseness=0.85, ->, draw opacity=0, latex reversed-, line width=0.2pt, postaction={draw, opacity=1, -, line width=1pt, shorten <=1pt, shorten >=0pt}, tikzit draw={rgb,255: red,230; green,45; blue,55}]
\tikzstyle{dqiRr}=[draw=myred, out=-90, in=00, looseness=0.85, ->, draw opacity=0, latex reversed-, line width=0.2pt, postaction={draw, opacity=1, -, line width=1pt, shorten <=1pt, shorten >=0pt}, tikzit draw={rgb,255: red,230; green,45; blue,55}]
\tikzstyle{rarrr}=[draw=myred, ->, tikzit draw={rgb,255: red,230; green,45; blue,55}]
\tikzstyle{hc1Lr}=[draw=myred, out=-90, in=180, looseness=0.85, -, draw opacity=0, -latex, line width=0.3pt, postaction={draw, opacity=1, -, line width=1.5pt, shorten >=3pt}, tikzit draw={rgb,255: red,230; green,45; blue,55}]
\tikzstyle{hc1Rr}=[draw=myred, out=-90, in=00, looseness=0.85, -, draw opacity=0, -latex, line width=0.3pt, postaction={draw, opacity=1, -, line width=1.5pt, shorten >=3pt}, tikzit draw={rgb,255: red,230; green,45; blue,55}]
\tikzstyle{hu1Rr}=[draw=myred, out=0, in=270, looseness=0.85, -, draw opacity=0, -latex, line width=0.3pt, postaction={draw, opacity=1, -, line width=1.5pt, shorten >=3pt, shorten <=-0.3pt}, tikzit draw={rgb,255: red,230; green,45; blue,55}]
\tikzstyle{wireb}=[draw=myblue, -, draw opacity=0, -latex, line width=0.2pt, postaction={draw, opacity=1, -, line width=1pt, shorten >=3pt, shorten <=-1pt}, tikzit draw={rgb,255: red,40; green,105; blue,230}]
\tikzstyle{wirBb}=[draw=myblue, -, draw opacity=0, -latex, line width=0.2pt, postaction={draw, opacity=1, -, line width=1pt, shorten >=-1pt, shorten <=-1pt}, shorten >=-4pt, tikzit draw={rgb,255: red,40; green,105; blue,230}]
\tikzstyle{wEndb}=[draw=myblue, -, draw opacity=0, -latex, line width=0.2pt, postaction={draw, opacity=1, -, line width=1pt, shorten >=3pt}, tikzit draw={rgb,255: red,40; green,105; blue,230}]
\tikzstyle{dwirb}=[draw=myblue, ->, draw opacity=0, latex reversed-, line width=0.2pt, postaction={draw, opacity=1, -, line width=1pt, shorten >=-5pt}, tikzit draw={rgb,255: red,40; green,105; blue,230}]
\tikzstyle{dwiBb}=[draw=myblue, ->, draw opacity=0, latex reversed-, line width=0.2pt, postaction={draw, opacity=1, -, line width=1pt}, tikzit draw={rgb,255: red,40; green,105; blue,230}]
\tikzstyle{coLb}=[draw=myblue, out=180, in=90, looseness=0.85, -, draw opacity=0, -latex, line width=0.2pt, postaction={draw, opacity=1, -, line width=1pt, shorten >=3pt, shorten <=-1pt}, tikzit draw={rgb,255: red,40; green,105; blue,230}]
\tikzstyle{coRb}=[draw=myblue, out=00, in=90, looseness=0.85, -, draw opacity=0, -latex, line width=0.2pt, postaction={draw, opacity=1, -, line width=1pt, shorten >=3pt, shorten <=-1pt}, tikzit draw={rgb,255: red,40; green,105; blue,230}]
\tikzstyle{ciLb}=[draw=myblue, out=-90, in=180, looseness=0.85, -, draw opacity=0, -latex, line width=0.2pt, postaction={draw, opacity=1, -, line width=1pt, shorten >=3pt}, tikzit draw={rgb,255: red,40; green,105; blue,230}]
\tikzstyle{ciRb}=[draw=myblue, out=-90, in=00, looseness=0.85, -, draw opacity=0, -latex, line width=0.2pt, postaction={draw, opacity=1, -, line width=1pt, shorten >=3pt}, tikzit draw={rgb,255: red,40; green,105; blue,230}]
\tikzstyle{dciLb}=[draw=myblue, out=-90, in=180, looseness=0.85, ->, draw opacity=0, latex reversed-, line width=0.2pt, postaction={draw, opacity=1, -, line width=1pt, shorten <=1pt, shorten >=-1pt}, tikzit draw={rgb,255: red,40; green,105; blue,230}]
\tikzstyle{dciRb}=[draw=myblue, out=-90, in=00, looseness=0.85, ->, draw opacity=0, latex reversed-, line width=0.2pt, postaction={draw, opacity=1, -, line width=1pt, shorten <=1pt, shorten >=-1pt}, tikzit draw={rgb,255: red,40; green,105; blue,230}]
\tikzstyle{chRb}=[draw=myblue, out=0, in=180, looseness=0.85, ->, draw opacity=0, -latex, line width=0.2pt, postaction={draw, opacity=1, -, line width=1pt, shorten <=-1pt, shorten >=3pt}, tikzit draw={rgb,255: red,40; green,105; blue,230}]
\tikzstyle{fil0b}=[-, fill=myblue, draw=myblue, fill opacity=0, tikzit draw={rgb,255: red,40; green,105; blue,230}]
\tikzstyle{bfrmb}=[draw=myblue, line width=1pt, -, fill=myblue, rounded corners, tikzit draw={rgb,255: red,40; green,105; blue,230}]
\tikzstyle{chLb}=[draw=myblue, out=180, in=0, looseness=0.85, ->, draw opacity=0, -latex, line width=0.2pt, postaction={draw, opacity=1, -, line width=1pt, shorten <=-1pt, shorten >=3pt}, tikzit draw={rgb,255: red,40; green,105; blue,230}]
\tikzstyle{clLb}=[draw=myblue, out=180, in=180, looseness=0.85, ->, draw opacity=0, -latex, line width=0.2pt, postaction={draw, opacity=1, -, line width=1pt, shorten <=-1pt, shorten >=3pt}, tikzit draw={rgb,255: red,40; green,105; blue,230}]
\tikzstyle{clRb}=[draw=myblue, out=0, in=0, looseness=0.85, ->, draw opacity=0, -latex, line width=0.2pt, postaction={draw, opacity=1, -, line width=1pt, shorten <=-1pt, shorten >=3pt}, tikzit draw={rgb,255: red,40; green,105; blue,230}]
\tikzstyle{eoLb}=[draw=myblue, out=180, in=90, looseness=0.85, -, draw opacity=0, -latex, line width=0.2pt, postaction={draw, opacity=1, -, line width=1pt, shorten >=-1pt, shorten <=-1pt}, shorten >=-4pt, tikzit draw={rgb,255: red,40; green,105; blue,230}]
\tikzstyle{eoRb}=[draw=myblue, out=00, in=90, looseness=0.85, -, draw opacity=0, -latex, line width=0.2pt, postaction={draw, opacity=1, -, line width=1pt, shorten >=-1pt, shorten <=-1pt}, shorten >=-4pt, tikzit draw={rgb,255: red,40; green,105; blue,230}]
\tikzstyle{qoLb}=[draw=myblue, out=180, in=90, looseness=0.85, ->, draw opacity=0, -latex, line width=0.2pt, postaction={draw, opacity=1, -, line width=1pt, shorten >=3pt}, tikzit draw={rgb,255: red,40; green,105; blue,230}]
\tikzstyle{qoRb}=[draw=myblue, out=00, in=90, looseness=0.85, ->, draw opacity=0, -latex, line width=0.2pt, postaction={draw, opacity=1, -, line width=1pt, shorten >=3pt}, tikzit draw={rgb,255: red,40; green,105; blue,230}]
\tikzstyle{dqiLb}=[draw=myblue, out=-90, in=180, looseness=0.85, ->, draw opacity=0, latex reversed-, line width=0.2pt, postaction={draw, opacity=1, -, line width=1pt, shorten <=1pt, shorten >=0pt}, tikzit draw={rgb,255: red,40; green,105; blue,230}]
\tikzstyle{dqiRb}=[draw=myblue, out=-90, in=00, looseness=0.85, ->, draw opacity=0, latex reversed-, line width=0.2pt, postaction={draw, opacity=1, -, line width=1pt, shorten <=1pt, shorten >=0pt}, tikzit draw={rgb,255: red,40; green,105; blue,230}]
\tikzstyle{rarrb}=[draw=myblue, ->, tikzit draw={rgb,255: red,40; green,105; blue,230}]
\tikzstyle{hc1Lb}=[draw=myblue, out=-90, in=180, looseness=0.85, -, draw opacity=0, -latex, line width=0.3pt, postaction={draw, opacity=1, -, line width=1.5pt, shorten >=3pt}, tikzit draw={rgb,255: red,40; green,105; blue,230}]
\tikzstyle{hc1Rb}=[draw=myblue, out=-90, in=00, looseness=0.85, -, draw opacity=0, -latex, line width=0.3pt, postaction={draw, opacity=1, -, line width=1.5pt, shorten >=3pt}, tikzit draw={rgb,255: red,40; green,105; blue,230}]
\tikzstyle{hu1Rb}=[draw=myblue, out=0, in=270, looseness=0.85, -, draw opacity=0, -latex, line width=0.3pt, postaction={draw, opacity=1, -, line width=1.5pt, shorten >=3pt, shorten <=-0.3pt}, tikzit draw={rgb,255: red,40; green,105; blue,230}]
\tikzstyle{wireg}=[draw=mygreen, -, draw opacity=0, -latex, line width=0.2pt, postaction={draw, opacity=1, -, line width=1pt, shorten >=3pt, shorten <=-1pt}, tikzit draw={rgb,255: red,35; green,170; blue,85}]
\tikzstyle{wirBg}=[draw=mygreen, -, draw opacity=0, -latex, line width=0.2pt, postaction={draw, opacity=1, -, line width=1pt, shorten >=-1pt, shorten <=-1pt}, shorten >=-4pt, tikzit draw={rgb,255: red,35; green,170; blue,85}]
\tikzstyle{wEndg}=[draw=mygreen, -, draw opacity=0, -latex, line width=0.2pt, postaction={draw, opacity=1, -, line width=1pt, shorten >=3pt}, tikzit draw={rgb,255: red,35; green,170; blue,85}]
\tikzstyle{dwirg}=[draw=mygreen, ->, draw opacity=0, latex reversed-, line width=0.2pt, postaction={draw, opacity=1, -, line width=1pt, shorten >=-5pt}, tikzit draw={rgb,255: red,35; green,170; blue,85}]
\tikzstyle{dwiBg}=[draw=mygreen, ->, draw opacity=0, latex reversed-, line width=0.2pt, postaction={draw, opacity=1, -, line width=1pt}, tikzit draw={rgb,255: red,35; green,170; blue,85}]
\tikzstyle{coLg}=[draw=mygreen, out=180, in=90, looseness=0.85, -, draw opacity=0, -latex, line width=0.2pt, postaction={draw, opacity=1, -, line width=1pt, shorten >=3pt, shorten <=-1pt}, tikzit draw={rgb,255: red,35; green,170; blue,85}]
\tikzstyle{coRg}=[draw=mygreen, out=00, in=90, looseness=0.85, -, draw opacity=0, -latex, line width=0.2pt, postaction={draw, opacity=1, -, line width=1pt, shorten >=3pt, shorten <=-1pt}, tikzit draw={rgb,255: red,35; green,170; blue,85}]
\tikzstyle{ciLg}=[draw=mygreen, out=-90, in=180, looseness=0.85, -, draw opacity=0, -latex, line width=0.2pt, postaction={draw, opacity=1, -, line width=1pt, shorten >=3pt}, tikzit draw={rgb,255: red,35; green,170; blue,85}]
\tikzstyle{ciRg}=[draw=mygreen, out=-90, in=00, looseness=0.85, -, draw opacity=0, -latex, line width=0.2pt, postaction={draw, opacity=1, -, line width=1pt, shorten >=3pt}, tikzit draw={rgb,255: red,35; green,170; blue,85}]
\tikzstyle{dciLg}=[draw=mygreen, out=-90, in=180, looseness=0.85, ->, draw opacity=0, latex reversed-, line width=0.2pt, postaction={draw, opacity=1, -, line width=1pt, shorten <=1pt, shorten >=-1pt}, tikzit draw={rgb,255: red,35; green,170; blue,85}]
\tikzstyle{dciRg}=[draw=mygreen, out=-90, in=00, looseness=0.85, ->, draw opacity=0, latex reversed-, line width=0.2pt, postaction={draw, opacity=1, -, line width=1pt, shorten <=1pt, shorten >=-1pt}, tikzit draw={rgb,255: red,35; green,170; blue,85}]
\tikzstyle{chRg}=[draw=mygreen, out=0, in=180, looseness=0.85, ->, draw opacity=0, -latex, line width=0.2pt, postaction={draw, opacity=1, -, line width=1pt, shorten <=-1pt, shorten >=3pt}, tikzit draw={rgb,255: red,35; green,170; blue,85}]
\tikzstyle{fil0g}=[-, fill=mygreen, draw=mygreen, fill opacity=0, tikzit draw={rgb,255: red,35; green,170; blue,85}]
\tikzstyle{bfrmg}=[draw=mygreen, line width=1pt, -, fill=mygreen, rounded corners, tikzit draw={rgb,255: red,35; green,170; blue,85}]
\tikzstyle{chLg}=[draw=mygreen, out=180, in=0, looseness=0.85, ->, draw opacity=0, -latex, line width=0.2pt, postaction={draw, opacity=1, -, line width=1pt, shorten <=-1pt, shorten >=3pt}, tikzit draw={rgb,255: red,35; green,170; blue,85}]
\tikzstyle{clLg}=[draw=mygreen, out=180, in=180, looseness=0.85, ->, draw opacity=0, -latex, line width=0.2pt, postaction={draw, opacity=1, -, line width=1pt, shorten <=-1pt, shorten >=3pt}, tikzit draw={rgb,255: red,35; green,170; blue,85}]
\tikzstyle{clRg}=[draw=mygreen, out=0, in=0, looseness=0.85, ->, draw opacity=0, -latex, line width=0.2pt, postaction={draw, opacity=1, -, line width=1pt, shorten <=-1pt, shorten >=3pt}, tikzit draw={rgb,255: red,35; green,170; blue,85}]
\tikzstyle{eoLg}=[draw=mygreen, out=180, in=90, looseness=0.85, -, draw opacity=0, -latex, line width=0.2pt, postaction={draw, opacity=1, -, line width=1pt, shorten >=-1pt, shorten <=-1pt}, shorten >=-4pt, tikzit draw={rgb,255: red,35; green,170; blue,85}]
\tikzstyle{eoRg}=[draw=mygreen, out=00, in=90, looseness=0.85, -, draw opacity=0, -latex, line width=0.2pt, postaction={draw, opacity=1, -, line width=1pt, shorten >=-1pt, shorten <=-1pt}, shorten >=-4pt, tikzit draw={rgb,255: red,35; green,170; blue,85}]
\tikzstyle{qoLg}=[draw=mygreen, out=180, in=90, looseness=0.85, ->, draw opacity=0, -latex, line width=0.2pt, postaction={draw, opacity=1, -, line width=1pt, shorten >=3pt}, tikzit draw={rgb,255: red,35; green,170; blue,85}]
\tikzstyle{qoRg}=[draw=mygreen, out=00, in=90, looseness=0.85, ->, draw opacity=0, -latex, line width=0.2pt, postaction={draw, opacity=1, -, line width=1pt, shorten >=3pt}, tikzit draw={rgb,255: red,35; green,170; blue,85}]
\tikzstyle{dqiLg}=[draw=mygreen, out=-90, in=180, looseness=0.85, ->, draw opacity=0, latex reversed-, line width=0.2pt, postaction={draw, opacity=1, -, line width=1pt, shorten <=1pt, shorten >=0pt}, tikzit draw={rgb,255: red,35; green,170; blue,85}]
\tikzstyle{dqiRg}=[draw=mygreen, out=-90, in=00, looseness=0.85, ->, draw opacity=0, latex reversed-, line width=0.2pt, postaction={draw, opacity=1, -, line width=1pt, shorten <=1pt, shorten >=0pt}, tikzit draw={rgb,255: red,35; green,170; blue,85}]
\tikzstyle{rarrg}=[draw=mygreen, ->, tikzit draw={rgb,255: red,35; green,170; blue,85}]
\tikzstyle{hc1Lg}=[draw=mygreen, out=-90, in=180, looseness=0.85, -, draw opacity=0, -latex, line width=0.3pt, postaction={draw, opacity=1, -, line width=1.5pt, shorten >=3pt}, tikzit draw={rgb,255: red,35; green,170; blue,85}]
\tikzstyle{hc1Rg}=[draw=mygreen, out=-90, in=00, looseness=0.85, -, draw opacity=0, -latex, line width=0.3pt, postaction={draw, opacity=1, -, line width=1.5pt, shorten >=3pt}, tikzit draw={rgb,255: red,35; green,170; blue,85}]
\tikzstyle{hu1Rg}=[draw=mygreen, out=0, in=270, looseness=0.85, -, draw opacity=0, -latex, line width=0.3pt, postaction={draw, opacity=1, -, line width=1.5pt, shorten >=3pt, shorten <=-0.3pt}, tikzit draw={rgb,255: red,35; green,170; blue,85}]
\tikzstyle{wirep}=[draw=mypurple, -, draw opacity=0, -latex, line width=0.2pt, postaction={draw, opacity=1, -, line width=1pt, shorten >=3pt, shorten <=-1pt}, tikzit draw={rgb,255: red,145; green,75; blue,210}]
\tikzstyle{wirBp}=[draw=mypurple, -, draw opacity=0, -latex, line width=0.2pt, postaction={draw, opacity=1, -, line width=1pt, shorten >=-1pt, shorten <=-1pt}, shorten >=-4pt, tikzit draw={rgb,255: red,145; green,75; blue,210}]
\tikzstyle{wEndp}=[draw=mypurple, -, draw opacity=0, -latex, line width=0.2pt, postaction={draw, opacity=1, -, line width=1pt, shorten >=3pt}, tikzit draw={rgb,255: red,145; green,75; blue,210}]
\tikzstyle{dwirp}=[draw=mypurple, ->, draw opacity=0, latex reversed-, line width=0.2pt, postaction={draw, opacity=1, -, line width=1pt, shorten >=-5pt}, tikzit draw={rgb,255: red,145; green,75; blue,210}]
\tikzstyle{dwiBp}=[draw=mypurple, ->, draw opacity=0, latex reversed-, line width=0.2pt, postaction={draw, opacity=1, -, line width=1pt}, tikzit draw={rgb,255: red,145; green,75; blue,210}]
\tikzstyle{coLp}=[draw=mypurple, out=180, in=90, looseness=0.85, -, draw opacity=0, -latex, line width=0.2pt, postaction={draw, opacity=1, -, line width=1pt, shorten >=3pt, shorten <=-1pt}, tikzit draw={rgb,255: red,145; green,75; blue,210}]
\tikzstyle{coRp}=[draw=mypurple, out=00, in=90, looseness=0.85, -, draw opacity=0, -latex, line width=0.2pt, postaction={draw, opacity=1, -, line width=1pt, shorten >=3pt, shorten <=-1pt}, tikzit draw={rgb,255: red,145; green,75; blue,210}]
\tikzstyle{ciLp}=[draw=mypurple, out=-90, in=180, looseness=0.85, -, draw opacity=0, -latex, line width=0.2pt, postaction={draw, opacity=1, -, line width=1pt, shorten >=3pt}, tikzit draw={rgb,255: red,145; green,75; blue,210}]
\tikzstyle{ciRp}=[draw=mypurple, out=-90, in=00, looseness=0.85, -, draw opacity=0, -latex, line width=0.2pt, postaction={draw, opacity=1, -, line width=1pt, shorten >=3pt}, tikzit draw={rgb,255: red,145; green,75; blue,210}]
\tikzstyle{dciLp}=[draw=mypurple, out=-90, in=180, looseness=0.85, ->, draw opacity=0, latex reversed-, line width=0.2pt, postaction={draw, opacity=1, -, line width=1pt, shorten <=1pt, shorten >=-1pt}, tikzit draw={rgb,255: red,145; green,75; blue,210}]
\tikzstyle{dciRp}=[draw=mypurple, out=-90, in=00, looseness=0.85, ->, draw opacity=0, latex reversed-, line width=0.2pt, postaction={draw, opacity=1, -, line width=1pt, shorten <=1pt, shorten >=-1pt}, tikzit draw={rgb,255: red,145; green,75; blue,210}]
\tikzstyle{chRp}=[draw=mypurple, out=0, in=180, looseness=0.85, ->, draw opacity=0, -latex, line width=0.2pt, postaction={draw, opacity=1, -, line width=1pt, shorten <=-1pt, shorten >=3pt}, tikzit draw={rgb,255: red,145; green,75; blue,210}]
\tikzstyle{fil0p}=[-, fill=mypurple, draw=mypurple, fill opacity=0, tikzit draw={rgb,255: red,145; green,75; blue,210}]
\tikzstyle{bfrmp}=[draw=mypurple, line width=1pt, -, fill=mypurple, rounded corners, tikzit draw={rgb,255: red,145; green,75; blue,210}]
\tikzstyle{chLp}=[draw=mypurple, out=180, in=0, looseness=0.85, ->, draw opacity=0, -latex, line width=0.2pt, postaction={draw, opacity=1, -, line width=1pt, shorten <=-1pt, shorten >=3pt}, tikzit draw={rgb,255: red,145; green,75; blue,210}]
\tikzstyle{clLp}=[draw=mypurple, out=180, in=180, looseness=0.85, ->, draw opacity=0, -latex, line width=0.2pt, postaction={draw, opacity=1, -, line width=1pt, shorten <=-1pt, shorten >=3pt}, tikzit draw={rgb,255: red,145; green,75; blue,210}]
\tikzstyle{clRp}=[draw=mypurple, out=0, in=0, looseness=0.85, ->, draw opacity=0, -latex, line width=0.2pt, postaction={draw, opacity=1, -, line width=1pt, shorten <=-1pt, shorten >=3pt}, tikzit draw={rgb,255: red,145; green,75; blue,210}]
\tikzstyle{eoLp}=[draw=mypurple, out=180, in=90, looseness=0.85, -, draw opacity=0, -latex, line width=0.2pt, postaction={draw, opacity=1, -, line width=1pt, shorten >=-1pt, shorten <=-1pt}, shorten >=-4pt, tikzit draw={rgb,255: red,145; green,75; blue,210}]
\tikzstyle{eoRp}=[draw=mypurple, out=00, in=90, looseness=0.85, -, draw opacity=0, -latex, line width=0.2pt, postaction={draw, opacity=1, -, line width=1pt, shorten >=-1pt, shorten <=-1pt}, shorten >=-4pt, tikzit draw={rgb,255: red,145; green,75; blue,210}]
\tikzstyle{qoLp}=[draw=mypurple, out=180, in=90, looseness=0.85, ->, draw opacity=0, -latex, line width=0.2pt, postaction={draw, opacity=1, -, line width=1pt, shorten >=3pt}, tikzit draw={rgb,255: red,145; green,75; blue,210}]
\tikzstyle{qoRp}=[draw=mypurple, out=00, in=90, looseness=0.85, ->, draw opacity=0, -latex, line width=0.2pt, postaction={draw, opacity=1, -, line width=1pt, shorten >=3pt}, tikzit draw={rgb,255: red,145; green,75; blue,210}]
\tikzstyle{dqiLp}=[draw=mypurple, out=-90, in=180, looseness=0.85, ->, draw opacity=0, latex reversed-, line width=0.2pt, postaction={draw, opacity=1, -, line width=1pt, shorten <=1pt, shorten >=0pt}, tikzit draw={rgb,255: red,145; green,75; blue,210}]
\tikzstyle{dqiRp}=[draw=mypurple, out=-90, in=00, looseness=0.85, ->, draw opacity=0, latex reversed-, line width=0.2pt, postaction={draw, opacity=1, -, line width=1pt, shorten <=1pt, shorten >=0pt}, tikzit draw={rgb,255: red,145; green,75; blue,210}]
\tikzstyle{rarrp}=[draw=mypurple, ->, tikzit draw={rgb,255: red,145; green,75; blue,210}]
\tikzstyle{hc1Lp}=[draw=mypurple, out=-90, in=180, looseness=0.85, -, draw opacity=0, -latex, line width=0.3pt, postaction={draw, opacity=1, -, line width=1.5pt, shorten >=3pt}, tikzit draw={rgb,255: red,145; green,75; blue,210}]
\tikzstyle{hc1Rp}=[draw=mypurple, out=-90, in=00, looseness=0.85, -, draw opacity=0, -latex, line width=0.3pt, postaction={draw, opacity=1, -, line width=1.5pt, shorten >=3pt}, tikzit draw={rgb,255: red,145; green,75; blue,210}]
\tikzstyle{hu1Rp}=[draw=mypurple, out=0, in=270, looseness=0.85, -, draw opacity=0, -latex, line width=0.3pt, postaction={draw, opacity=1, -, line width=1.5pt, shorten >=3pt, shorten <=-0.3pt}, tikzit draw={rgb,255: red,145; green,75; blue,210}]
\tikzstyle{wirey}=[draw=myyellow, -, draw opacity=0, -latex, line width=0.2pt, postaction={draw, opacity=1, -, line width=1pt, shorten >=3pt, shorten <=-1pt}, tikzit draw={rgb,255: red,235; green,190; blue,35}]
\tikzstyle{wirBy}=[draw=myyellow, -, draw opacity=0, -latex, line width=0.2pt, postaction={draw, opacity=1, -, line width=1pt, shorten >=-1pt, shorten <=-1pt}, shorten >=-4pt, tikzit draw={rgb,255: red,235; green,190; blue,35}]
\tikzstyle{wEndy}=[draw=myyellow, -, draw opacity=0, -latex, line width=0.2pt, postaction={draw, opacity=1, -, line width=1pt, shorten >=3pt}, tikzit draw={rgb,255: red,235; green,190; blue,35}]
\tikzstyle{dwiry}=[draw=myyellow, ->, draw opacity=0, latex reversed-, line width=0.2pt, postaction={draw, opacity=1, -, line width=1pt, shorten >=-5pt}, tikzit draw={rgb,255: red,235; green,190; blue,35}]
\tikzstyle{dwiBy}=[draw=myyellow, ->, draw opacity=0, latex reversed-, line width=0.2pt, postaction={draw, opacity=1, -, line width=1pt}, tikzit draw={rgb,255: red,235; green,190; blue,35}]
\tikzstyle{coLy}=[draw=myyellow, out=180, in=90, looseness=0.85, -, draw opacity=0, -latex, line width=0.2pt, postaction={draw, opacity=1, -, line width=1pt, shorten >=3pt, shorten <=-1pt}, tikzit draw={rgb,255: red,235; green,190; blue,35}]
\tikzstyle{coRy}=[draw=myyellow, out=00, in=90, looseness=0.85, -, draw opacity=0, -latex, line width=0.2pt, postaction={draw, opacity=1, -, line width=1pt, shorten >=3pt, shorten <=-1pt}, tikzit draw={rgb,255: red,235; green,190; blue,35}]
\tikzstyle{ciLy}=[draw=myyellow, out=-90, in=180, looseness=0.85, -, draw opacity=0, -latex, line width=0.2pt, postaction={draw, opacity=1, -, line width=1pt, shorten >=3pt}, tikzit draw={rgb,255: red,235; green,190; blue,35}]
\tikzstyle{ciRy}=[draw=myyellow, out=-90, in=00, looseness=0.85, -, draw opacity=0, -latex, line width=0.2pt, postaction={draw, opacity=1, -, line width=1pt, shorten >=3pt}, tikzit draw={rgb,255: red,235; green,190; blue,35}]
\tikzstyle{dciLy}=[draw=myyellow, out=-90, in=180, looseness=0.85, ->, draw opacity=0, latex reversed-, line width=0.2pt, postaction={draw, opacity=1, -, line width=1pt, shorten <=1pt, shorten >=-1pt}, tikzit draw={rgb,255: red,235; green,190; blue,35}]
\tikzstyle{dciRy}=[draw=myyellow, out=-90, in=00, looseness=0.85, ->, draw opacity=0, latex reversed-, line width=0.2pt, postaction={draw, opacity=1, -, line width=1pt, shorten <=1pt, shorten >=-1pt}, tikzit draw={rgb,255: red,235; green,190; blue,35}]
\tikzstyle{chRy}=[draw=myyellow, out=0, in=180, looseness=0.85, ->, draw opacity=0, -latex, line width=0.2pt, postaction={draw, opacity=1, -, line width=1pt, shorten <=-1pt, shorten >=3pt}, tikzit draw={rgb,255: red,235; green,190; blue,35}]
\tikzstyle{fil0y}=[-, fill=myyellow, draw=myyellow, fill opacity=0, tikzit draw={rgb,255: red,235; green,190; blue,35}]
\tikzstyle{bfrmy}=[draw=myyellow, line width=1pt, -, fill=myyellow, rounded corners, tikzit draw={rgb,255: red,235; green,190; blue,35}]
\tikzstyle{chLy}=[draw=myyellow, out=180, in=0, looseness=0.85, ->, draw opacity=0, -latex, line width=0.2pt, postaction={draw, opacity=1, -, line width=1pt, shorten <=-1pt, shorten >=3pt}, tikzit draw={rgb,255: red,235; green,190; blue,35}]
\tikzstyle{clLy}=[draw=myyellow, out=180, in=180, looseness=0.85, ->, draw opacity=0, -latex, line width=0.2pt, postaction={draw, opacity=1, -, line width=1pt, shorten <=-1pt, shorten >=3pt}, tikzit draw={rgb,255: red,235; green,190; blue,35}]
\tikzstyle{clRy}=[draw=myyellow, out=0, in=0, looseness=0.85, ->, draw opacity=0, -latex, line width=0.2pt, postaction={draw, opacity=1, -, line width=1pt, shorten <=-1pt, shorten >=3pt}, tikzit draw={rgb,255: red,235; green,190; blue,35}]
\tikzstyle{eoLy}=[draw=myyellow, out=180, in=90, looseness=0.85, -, draw opacity=0, -latex, line width=0.2pt, postaction={draw, opacity=1, -, line width=1pt, shorten >=-1pt, shorten <=-1pt}, shorten >=-4pt, tikzit draw={rgb,255: red,235; green,190; blue,35}]
\tikzstyle{eoRy}=[draw=myyellow, out=00, in=90, looseness=0.85, -, draw opacity=0, -latex, line width=0.2pt, postaction={draw, opacity=1, -, line width=1pt, shorten >=-1pt, shorten <=-1pt}, shorten >=-4pt, tikzit draw={rgb,255: red,235; green,190; blue,35}]
\tikzstyle{qoLy}=[draw=myyellow, out=180, in=90, looseness=0.85, ->, draw opacity=0, -latex, line width=0.2pt, postaction={draw, opacity=1, -, line width=1pt, shorten >=3pt}, tikzit draw={rgb,255: red,235; green,190; blue,35}]
\tikzstyle{qoRy}=[draw=myyellow, out=00, in=90, looseness=0.85, ->, draw opacity=0, -latex, line width=0.2pt, postaction={draw, opacity=1, -, line width=1pt, shorten >=3pt}, tikzit draw={rgb,255: red,235; green,190; blue,35}]
\tikzstyle{dqiLy}=[draw=myyellow, out=-90, in=180, looseness=0.85, ->, draw opacity=0, latex reversed-, line width=0.2pt, postaction={draw, opacity=1, -, line width=1pt, shorten <=1pt, shorten >=0pt}, tikzit draw={rgb,255: red,235; green,190; blue,35}]
\tikzstyle{dqiRy}=[draw=myyellow, out=-90, in=00, looseness=0.85, ->, draw opacity=0, latex reversed-, line width=0.2pt, postaction={draw, opacity=1, -, line width=1pt, shorten <=1pt, shorten >=0pt}, tikzit draw={rgb,255: red,235; green,190; blue,35}]
\tikzstyle{rarry}=[draw=myyellow, ->, tikzit draw={rgb,255: red,235; green,190; blue,35}]
\tikzstyle{hc1Ly}=[draw=myyellow, out=-90, in=180, looseness=0.85, -, draw opacity=0, -latex, line width=0.3pt, postaction={draw, opacity=1, -, line width=1.5pt, shorten >=3pt}, tikzit draw={rgb,255: red,235; green,190; blue,35}]
\tikzstyle{hc1Ry}=[draw=myyellow, out=-90, in=00, looseness=0.85, -, draw opacity=0, -latex, line width=0.3pt, postaction={draw, opacity=1, -, line width=1.5pt, shorten >=3pt}, tikzit draw={rgb,255: red,235; green,190; blue,35}]
\tikzstyle{hu1Ry}=[draw=myyellow, out=0, in=270, looseness=0.85, -, draw opacity=0, -latex, line width=0.3pt, postaction={draw, opacity=1, -, line width=1.5pt, shorten >=3pt, shorten <=-0.3pt}, tikzit draw={rgb,255: red,235; green,190; blue,35}]
\tikzstyle{wireo}=[draw=myorange, -, draw opacity=0, -latex, line width=0.2pt, postaction={draw, opacity=1, -, line width=1pt, shorten >=3pt, shorten <=-1pt}, tikzit draw={rgb,255: red,245; green,130; blue,30}]
\tikzstyle{wirBo}=[draw=myorange, -, draw opacity=0, -latex, line width=0.2pt, postaction={draw, opacity=1, -, line width=1pt, shorten >=-1pt, shorten <=-1pt}, shorten >=-4pt, tikzit draw={rgb,255: red,245; green,130; blue,30}]
\tikzstyle{wEndo}=[draw=myorange, -, draw opacity=0, -latex, line width=0.2pt, postaction={draw, opacity=1, -, line width=1pt, shorten >=3pt}, tikzit draw={rgb,255: red,245; green,130; blue,30}]
\tikzstyle{dwiro}=[draw=myorange, ->, draw opacity=0, latex reversed-, line width=0.2pt, postaction={draw, opacity=1, -, line width=1pt, shorten >=-5pt}, tikzit draw={rgb,255: red,245; green,130; blue,30}]
\tikzstyle{dwiBo}=[draw=myorange, ->, draw opacity=0, latex reversed-, line width=0.2pt, postaction={draw, opacity=1, -, line width=1pt}, tikzit draw={rgb,255: red,245; green,130; blue,30}]
\tikzstyle{coLo}=[draw=myorange, out=180, in=90, looseness=0.85, -, draw opacity=0, -latex, line width=0.2pt, postaction={draw, opacity=1, -, line width=1pt, shorten >=3pt, shorten <=-1pt}, tikzit draw={rgb,255: red,245; green,130; blue,30}]
\tikzstyle{coRo}=[draw=myorange, out=00, in=90, looseness=0.85, -, draw opacity=0, -latex, line width=0.2pt, postaction={draw, opacity=1, -, line width=1pt, shorten >=3pt, shorten <=-1pt}, tikzit draw={rgb,255: red,245; green,130; blue,30}]
\tikzstyle{ciLo}=[draw=myorange, out=-90, in=180, looseness=0.85, -, draw opacity=0, -latex, line width=0.2pt, postaction={draw, opacity=1, -, line width=1pt, shorten >=3pt}, tikzit draw={rgb,255: red,245; green,130; blue,30}]
\tikzstyle{ciRo}=[draw=myorange, out=-90, in=00, looseness=0.85, -, draw opacity=0, -latex, line width=0.2pt, postaction={draw, opacity=1, -, line width=1pt, shorten >=3pt}, tikzit draw={rgb,255: red,245; green,130; blue,30}]
\tikzstyle{dciLo}=[draw=myorange, out=-90, in=180, looseness=0.85, ->, draw opacity=0, latex reversed-, line width=0.2pt, postaction={draw, opacity=1, -, line width=1pt, shorten <=1pt, shorten >=-1pt}, tikzit draw={rgb,255: red,245; green,130; blue,30}]
\tikzstyle{dciRo}=[draw=myorange, out=-90, in=00, looseness=0.85, ->, draw opacity=0, latex reversed-, line width=0.2pt, postaction={draw, opacity=1, -, line width=1pt, shorten <=1pt, shorten >=-1pt}, tikzit draw={rgb,255: red,245; green,130; blue,30}]
\tikzstyle{chRo}=[draw=myorange, out=0, in=180, looseness=0.85, ->, draw opacity=0, -latex, line width=0.2pt, postaction={draw, opacity=1, -, line width=1pt, shorten <=-1pt, shorten >=3pt}, tikzit draw={rgb,255: red,245; green,130; blue,30}]
\tikzstyle{fil0o}=[-, fill=myorange, draw=myorange, fill opacity=0, tikzit draw={rgb,255: red,245; green,130; blue,30}]
\tikzstyle{bfrmo}=[draw=myorange, line width=1pt, -, fill=myorange, rounded corners, tikzit draw={rgb,255: red,245; green,130; blue,30}]
\tikzstyle{chLo}=[draw=myorange, out=180, in=0, looseness=0.85, ->, draw opacity=0, -latex, line width=0.2pt, postaction={draw, opacity=1, -, line width=1pt, shorten <=-1pt, shorten >=3pt}, tikzit draw={rgb,255: red,245; green,130; blue,30}]
\tikzstyle{clLo}=[draw=myorange, out=180, in=180, looseness=0.85, ->, draw opacity=0, -latex, line width=0.2pt, postaction={draw, opacity=1, -, line width=1pt, shorten <=-1pt, shorten >=3pt}, tikzit draw={rgb,255: red,245; green,130; blue,30}]
\tikzstyle{clRo}=[draw=myorange, out=0, in=0, looseness=0.85, ->, draw opacity=0, -latex, line width=0.2pt, postaction={draw, opacity=1, -, line width=1pt, shorten <=-1pt, shorten >=3pt}, tikzit draw={rgb,255: red,245; green,130; blue,30}]
\tikzstyle{eoLo}=[draw=myorange, out=180, in=90, looseness=0.85, -, draw opacity=0, -latex, line width=0.2pt, postaction={draw, opacity=1, -, line width=1pt, shorten >=-1pt, shorten <=-1pt}, shorten >=-4pt, tikzit draw={rgb,255: red,245; green,130; blue,30}]
\tikzstyle{eoRo}=[draw=myorange, out=00, in=90, looseness=0.85, -, draw opacity=0, -latex, line width=0.2pt, postaction={draw, opacity=1, -, line width=1pt, shorten >=-1pt, shorten <=-1pt}, shorten >=-4pt, tikzit draw={rgb,255: red,245; green,130; blue,30}]
\tikzstyle{qoLo}=[draw=myorange, out=180, in=90, looseness=0.85, ->, draw opacity=0, -latex, line width=0.2pt, postaction={draw, opacity=1, -, line width=1pt, shorten >=3pt}, tikzit draw={rgb,255: red,245; green,130; blue,30}]
\tikzstyle{qoRo}=[draw=myorange, out=00, in=90, looseness=0.85, ->, draw opacity=0, -latex, line width=0.2pt, postaction={draw, opacity=1, -, line width=1pt, shorten >=3pt}, tikzit draw={rgb,255: red,245; green,130; blue,30}]
\tikzstyle{dqiLo}=[draw=myorange, out=-90, in=180, looseness=0.85, ->, draw opacity=0, latex reversed-, line width=0.2pt, postaction={draw, opacity=1, -, line width=1pt, shorten <=1pt, shorten >=0pt}, tikzit draw={rgb,255: red,245; green,130; blue,30}]
\tikzstyle{dqiRo}=[draw=myorange, out=-90, in=00, looseness=0.85, ->, draw opacity=0, latex reversed-, line width=0.2pt, postaction={draw, opacity=1, -, line width=1pt, shorten <=1pt, shorten >=0pt}, tikzit draw={rgb,255: red,245; green,130; blue,30}]
\tikzstyle{rarro}=[draw=myorange, ->, tikzit draw={rgb,255: red,245; green,130; blue,30}]
\tikzstyle{hc1Lo}=[draw=myorange, out=-90, in=180, looseness=0.85, -, draw opacity=0, -latex, line width=0.3pt, postaction={draw, opacity=1, -, line width=1.5pt, shorten >=3pt}, tikzit draw={rgb,255: red,245; green,130; blue,30}]
\tikzstyle{hc1Ro}=[draw=myorange, out=-90, in=00, looseness=0.85, -, draw opacity=0, -latex, line width=0.3pt, postaction={draw, opacity=1, -, line width=1.5pt, shorten >=3pt}, tikzit draw={rgb,255: red,245; green,130; blue,30}]
\tikzstyle{hu1Ro}=[draw=myorange, out=0, in=270, looseness=0.85, -, draw opacity=0, -latex, line width=0.3pt, postaction={draw, opacity=1, -, line width=1.5pt, shorten >=3pt, shorten <=-0.3pt}, tikzit draw={rgb,255: red,245; green,130; blue,30}]

%% file: macros.tex
\newcommand{\dom}{\mathsf{dom}}

\newcommand{\Nat}{\mathbb{N}}
\newcommand{\card}[1]{\sharp #1}
\newcommand{\len}[1]{\lvert #1 \rvert}

\renewcommand{\emptyset}{\emptysetAlt}

\newcommand{\fragmstyle}[1]{\text{$\mathsf{#1}$}}

\newcommand{\one}{\boldsymbol{1}}
\newcommand{\zero}{\boldsymbol{0}}

\newcommand{\LL}{\fragmstyle{LL}\xspace}
\newcommand{\MLL}{\fragmstyle{M}\LL}
\newcommand{\MLLnoUnits}{\MLL}

\newcommand{\MELL}{\fragmstyle{ME}\LL}

\newcommand{\VMELL}{\fragmstyle{V}\MELL}
\newcommand{\LLpol}{\fragmstyle{MELL}_{\mathit{pol}}}

\newcommand{\IMELL}{\fragmstyle{I}\MELL}

\newcommand{\bangMELL}{\fragmstyle{\oc}\MELL}
\newcommand{\CbN}{\LL^\fragmstyle{CbN}\xspace}
\newcommand{\CbV}{\LL^\fragmstyle{CbV}\xspace}
\newcommand{\STLC}{\fragmstyle{STLC}\xspace}
\newcommand{\outfragm}[1]{#1_\out}
\newcommand{\inpfragm}[1]{#1_\inp}
\newcommand{\CbNtext}{CbN\xspace}
\newcommand{\CbVtext}{CbV\xspace}

\newcommand{\cutelim}{cut elimination\xspace}
\newcommand{\Cutelim}{Cut elimination\xspace}
\newcommand{\cf}{cut-free\xspace}

\newcommand{\formula}[1]{#1 formula}
\newcommand{\formulae}[1]{#1 formulae}
\newcommand{\formulaWithName}[2]{#2 formula #1}
\newcommand{\formulaeWithName}[2]{#2 formulae #1}

\newcommand{\sequent}[1]{sequent of #1}
\newcommand{\sequents}[1]{sequents of #1} 
\newcommand{\provable}[1]{provable in #1}

\newcommand{\seqc}{sequent calculus\xspace}
\newcommand{\Seqc}{Sequent calculus\xspace}
\newcommand{\seqcp}{\seqc proof\xspace}

\newcommand{\seqcps}{\seqc proofs\xspace}

\newcommand{\seqcpt}[1]{\seqcp in #1}

\newcommand{\seqcpst}[1]{\seqcps in #1}

\newcommand{\seqcptWithName}[2]{\seqcp #1 in #2}

\newcommand{\seqcptWithConclusion}[2]{\seqcp in #1 with conclusion #2}

\newcommand{\seqcptWithNameAndConclusion}[3]{\seqcp #1 in #2 with conclusion #3}

\newcommand{\seqcpstWithNameAndConclusion}[3]{\seqcps #1 in #2 with conclusions #3}

\newcommand{\axsc}{$\mathit{ax}$\xspace}
\newcommand{\cutsc}{$\mathit{cut}$\xspace}
\newcommand{\drsc}{$\wn\mathit{d}$\xspace}

\newcommand{\mix}{\mathit{mix}}

\newcommand{\exch}{\text{$\mathit{ex}$}\xspace}

\newcommand{\ve}{\mathsf{N}}
\newcommand{\ar}{\mathsf{A}}
\newcommand{\tail}{\mathsf{tail}}
\newcommand{\head}{\mathsf{head}}
\newcommand{\nullgraph}{G_\epsilon}

\newcommand{\cc}{\mathsf{cc}}
\newcommand{\precg}[1]{\prec_{#1}}
\newcommand{\succg}[1]{\succ_{#1}}
\newcommand{\precio}{\precg{\io}}
\newcommand{\succio}{\succg{\io}}
\newcommand{\outdeg}{\mathsf{d}^{\textup{out}}}
\newcommand{\pf}{partial functional\xspace}

\newcommand{\pfg}{\pf~graph\xspace}

\newcommand{\rn}[1]{$#1$-normal\xspace}
\newcommand{\rnf}[1]{$#1$-normal form\xspace}

\newcommand{\sg}{switching graph\xspace}

\newcommand{\sgs}{switching graphs\xspace}

\newcommand{\gs}{ground-structure\xspace}

\newcommand{\gss}{ground-structures\xspace}

\newcommand{\pol}{polarized\xspace}

\newcommand{\gst}[1]{\gs of #1}
\newcommand{\ps}{proof-structure\xspace}

\newcommand{\pss}{proof-structures\xspace}
\newcommand{\Pss}{Proof-structures\xspace}

\newcommand{\undgs}[1]{\mathcal{G}_{#1}}
\newcommand{\pst}[1]{\ps of #1}
\newcommand{\psst}[1]{\pss of #1}

\newcommand{\pnt}[1]{\pn of #1}
\newcommand{\pnst}[1]{\pns of #1}
\newcommand{\pntWithName}[2]{\pn #1 of #2}
\newcommand{\pnstWithName}[2]{\pns #1 of #2}
\newcommand{\norm}[2]{\mathsf{norm}_#1(#2)}
\newcommand{\cut}[3]{\mathsf{cut}(#1, #2, #3)}
\newcommand{\graft}[2]{\mathsf{graft}(#1, #2)}
\newcommand{\enbox}[1]{#1^\oc}
\newcommand{\share}[1]{\mathsf{share}(#1)}
\newcommand{\deradd}[1]{#1^+}
\newcommand{\derrem}[1]{#1^-}
\newcommand{\pabs}[2]{\lambda #1 \, #2}
\newcommand{\initVar}[2]{\mathsf{init}_#1(#2)}
\newcommand{\init}[1]{\mathsf{init}(#1)}

\newcommand{\pnbox}{\mathsf{box}}
\newcommand{\undbox}[1]{\pnbox_{#1}}
\newcommand{\nd}{\mathsf{N}}

\newcommand{\tout}{\otimes_{\out}}
\newcommand{\tin}{\otimes_{\mathsf{i}}}
\newcommand{\pout}{\parr_{\out}}
\newcommand{\pin}{\parr_{\mathsf{i}}}
\newcommand{\poutter}{$\pout$-terminal\xspace}
\newcommand{\ioter}{ter\-minimal\xspace}

\newcommand{\pn}{proof-net\xspace}

\newcommand{\pns}{proof-nets\xspace}
\newcommand{\Pns}{Proof-nets\xspace}
\newcommand{\axpn}{\mathtt{ax}}
\newcommand{\cutpn}{\mathtt{cut}}
\newcommand{\onepn}{\mathtt{1}}
\newcommand{\drpn}{\wn\mathtt{d}}
\newcommand{\drcl}{\mathsf{d}_{\mathsf{c}}}
\newcommand{\wkpn}{\wn\mathtt{w}}
\newcommand{\ctpn}{\wn\mathtt{c}}
\newcommand{\unpn}{unary $\wn$\xspace}

\newcommand{\Axcut}{Axiom cut\xspace}

\newcommand{\Ucut}{Unit cut\xspace}

\newcommand{\AC}{\mathsf{AC}}
\newcommand{\C}{\mathsf{C}}
\newcommand{\ACC}{\AC\C}
\newcommand{\bset}{\mathsf{N}_\oc}
\newcommand{\w}{\mathsf{w}}
\newcommand{\Cw}{\C_{\sharp \w}}
\newcommand{\ACCw}{\AC\Cw}
\newcommand{\out}{\mathsf{o}}
\newcommand{\inp}{\mathsf{i}}

\newcommand{\pport}{principal port\xspace}

\newcommand{\ap}{auxiliary port\xspace}

\newcommand{\aps}{auxiliary ports\xspace}

\newcommand{\Dcut}{Dereliction cut\xspace}

\newcommand{\Scut}{Structural cut\xspace}

\newcommand{\Ccut}{Commutative cut\xspace}

\newcommand{\io}{{\mkern 2mu \mathsf{i} \mkern -1mu / \mkern -1.5mu \out}}
\newcommand{\iograph}[1]{#1^\io}
\newcommand{\ACio}{\AC_\io}
\newcommand{\Cio}{\C_\io}
\newcommand{\ACCio}{\ACC_\io}
\newcommand{\psset}{\mathcal{R}}
\newcommand{\ppset}{\psset^\mathnormal{pol}}
\newcommand{\psf}[1]{\psset^{#1}}

\newcommand{\conv}{$\wn$ reduction\xspace}

\newcommand{\tocut}{\tor{\cutpn}}
\newcommand{\tor}[1]{\to_{#1}}
\newcommand{\ltor}[1]{{}_{#1\!\!}\leftarrow}
\newcommand{\town}{\tor{\rwn}}

\newcommand{\cutwn}{\cutpn\rwn}
\newcommand{\tocutwn}{\tor{\cutwn}}

\newcommand{\mapstor}[1]{\mapsto_{#1}}

\newcommand{\mapstocutwn}{\mapstor{\cutpn\rwn}}
\newcommand{\rwn}{\wn}
\newcommand{\rax}{\axpn}
\newcommand{\rob}{{\onepn\mkern-2mu/\mkern-2mu\bot}}
\newcommand{\rtp}{{\otimes\mkern-2mu/\mkern-2mu\parr}}
\newcommand{\rdr}{{\oc\mkern-2mu/\mkern-2mu\drpn}}
\newcommand{\rst}{{\oc\mkern-2mu/\mkern-2mu\wn}}
\newcommand{\rcm}{{\oc\mkern-2mu/\mkern-2mu\oc}}
\newcommand{\rmu}{\mathsf{m}}
\newcommand{\re}{\mathsf{e}}
\newcommand{\ree}{\re\re}

\newcommand{\rmuw}{{\rmu\rwn}}
\newcommand{\rew}{{\re\rwn}}
\newcommand{\rfl}{{\mathsf{fl}}}
\newcommand{\rer}{{\mathsf{er}}}
\newcommand{\rpu}{{\mathsf{pu}}}
\newcommand{\deseq}{*}
\newcommand{\Deseq}[1]{#1^\deseq}
\newcommand{\seq}{sequential\xspace}

\newcommand{\bpn}{bang \pn}

\newcommand{\bpns}{bang \pns}
\newcommand{\Bpns}{Bang \pns}
\newcommand{\bpnset}{\psset^\br}
\newcommand{\nam}{named\xspace}

\newcommand{\nmfun}[1]{\nu_{#1}}
\newcommand{\injnam}{injectively named\xspace}

\newcommand{\gua}{guarded\xspace}

\newcommand{\wnsafe}{$\wn$-safe\xspace}
\newcommand{\esub}[2]{[#1/#2]}

\newcommand{\lc}{$\lambda$-calculus\xspace}
\newcommand{\lci}{$\lambda$-calculi\xspace}

\newcommand{\stset}{\mathcal{T}}
\newcommand{\intuArrow}[2]{#1 \Rightarrow #2}
\newcommand{\vset}{\mathcal{X}}

\newcommand{\lt}{$\lambda$-term\xspace}
\newcommand{\lts}{$\lambda$-terms\xspace}
\newcommand{\lterm}{\lt}
\newcommand{\lterms}{\lts}
\newcommand{\lcontext}{$\lambda$-context\xspace}
\newcommand{\lcontexts}{$\lambda$-contexts\xspace}

\newcommand{\ljudgm}{$\lambda$-judgment\xspace}

\newcommand{\lderivation}{$\lambda$-derivation\xspace}

\newcommand{\labs}[3]{\lambda #1^{#2} #3}
\newcommand{\lapp}[2]{(#1)#2}
\newcommand{\lsub}[2]{\{#1/#2\}}
\newcommand{\anlset}{\Lambda}
\newcommand{\tlset}{\Lambda}
\newcommand{\oftype}[2]{#1 : #2}

\newcommand{\sctxset}{\mathsf{C}_\lambda}
\newcommand{\ltyp}{\mathsf{t}_\lambda}
\newcommand{\bc}{bang calculus\xspace}

\newcommand{\bt}{bang term\xspace}

\newcommand{\bterm}{$\oc$-term\xspace}

\newcommand{\bterms}{$\oc$-terms\xspace}

\newcommand{\bcabs}[3]{\lambda #1^{#2} #3}
\newcommand{\bcapp}[2]{\langle #1 \rangle#2}
\newcommand{\bcbang}[1]{#1^\oc}
\newcommand{\bcder}[1]{\der #1}
\newcommand{\bcsub}[2]{\{#1/#2\}}
\newcommand{\anbcset}{\oc \Lambda}
\newcommand{\tbcset}{\oc \Lambda}
\newcommand{\bctx}{bang context\xspace}

\newcommand{\bcontext}{$\oc$-context\xspace}
\newcommand{\bcontexts}{$\oc$-contexts\xspace}

\newcommand{\bjudgm}{$\oc$-judgment\xspace}
\newcommand{\bjudgms}{$\oc$-judgments\xspace}
\newcommand{\bderivation}{$\oc$-derivation\xspace}

\newcommand{\bctxset}{\mathsf{C}_\br}
\newcommand{\linArrow}[2]{#1 \multimap #2}
\newcommand{\bctyp}{\mathsf{t}_\br}
\newcommand{\tob}{\tor{\beta}}
\newcommand{\betav}{{\beta_\mathsf{v}}}
\newcommand{\tobv}{\tor{\betav}}

\DeclareMathOperator{\der}{\mathsf{der}}
\newcommand{\br}{\mathsf{b}}

\newcommand{\betab}{\beta\oc}
\newcommand{\brder}{\mathsf{d}}

\newcommand{\btl}{$\lambda$\xspace}
\newcommand{\bta}{$@$\xspace}

\newcommand{\ljud}[3]{#1 \vdash \oftype{#2}{#3}}
\newcommand{\bjud}[3]{#1 \vdash \oftype{#2}{#3}}
 
\newcommand{\bderive}[4]{#1 \rhd \bjud{#2}{#3}{#4}} 
\newcommand{\fv}[1]{\mathsf{fv}(#1)}

\newcommand{\wh}{\circ}
\newcommand{\bl}{\bullet}
\newcommand{\vmellt}{\star}
\newcommand{\VMELLt}[1]{#1^\vmellt}
\newcommand{\cbnt}{\mathtt{n}}
\newcommand{\cbvt}{\mathtt{v}}

\newcommand{\CBNt}[1]{#1^\cbnt}
\newcommand{\CBVt}[1]{#1^\cbvt}

\newcommand{\cbnT}{\mathtt{N}}
\newcommand{\cbvT}{\mathtt{V}}
\newcommand{\CBNT}[1]{#1^\cbnT}
\newcommand{\CBVT}[1]{#1^\cbvT}

\newcommand{\cbn}{call-by-name\xspace}
\newcommand{\cbv}{call-by-value\xspace}
\newcommand{\CBNp}[1]{#1^\cbnT}
\newcommand{\CBVp}[1]{#1^\cbvT}
\newcommand{\WH}[1]{#1^\wh}
\newcommand{\BL}[1]{#1^\bl}

%% file: Sections/Introduction.tex
%

One of the main contributions of Linear Logic ($\LL$~\cite{girard1987linear}) to the field of proof theory is the shift from a term-like or tree-like representation of proofs to a more general \emph{graph}-like one. Graphs representing proofs in \LL are usually called \emph{\pns}: their syntax is richer and more expressive than the one of the \lc, or the one of \seqc, or natural deduction. Contrary to the mentioned formalisms, \pns are special inhabitants of the wider land of \emph{\pss}. A \ps is any ``graph'' that can be built in the language of \pns, and it need not represent a proof. \Pss are well-behaved for performing computations: 
the procedure of \cutelim can be applied to \pss, and that they can be interpreted in denotational models thanks to the notion of experiment \cite{girard1987linear}. 

In the theory of \pns, one of the main research lines 
is the study of \emph{correctness criteria} for \pss: 
finding abstract properties allowing us to isolate, among \pss, exactly those that are indeed proofs. 
The property $\ACC$ (a.k.a.~Danos-Regnier criterion) states that every 
\sg (see our \Cref{def:switching}) 
is acyclic and connected: it is one of the most famous correctness criteria for the purely multiplicative fragment $\MLLnoUnits$ \cite{danos1989structure}. 
In the multiplicative exponential fragment \MELL (the fragment of \LL in which it is natural to represent and study the majority of logical and computational phenomena), one often considers \pss satisfying the property $\AC$, obtained from $\ACC$ by dropping connectivity. 
As $\ACC$ for $\MLLnoUnits$ \pss, $\AC$ is stable with respect to \cutelim of \MELL \pss; but $\AC$ 
characterizes those \MELL \pss that are 
proofs~only in an extended sense: the \MELL \seqcps with 
the $\mix$ rule \cite{fleury1994the,llhandbook}. 

There are several ways to present the contributions of the present paper. 
The first one is 
to better understand the logical meaning of the connectivity property. 
From a conceptual point of view, the inception of \pns and correctness criteria in proof theory is a real breakthrough. Most notably, the Danos-Regnier correctness criterion relates a purely logical property (to be a logical proof) and a purely geometric one (\sgs are trees). Since connectivity is undeniably a simple and solid geometric property, one can try to reverse the point of view and address the question of the logical counterpart of connectivity. For example, in~\cite{guerrieri_et_al:LIPIcs.FSCD.2016.20} a remarkable property (that does not hold for general \pss of \MELL) is proven for \emph{connected} \pss of \MELL: the element of order $2$ of the Taylor expansion of a \ps is enough to entirely rebuild the \ps. In the same spirit, in this paper we revisit an old property which is in between $\AC$ and $\ACC$. Connectivity fails in \MELL due to the presence of the weakening rule of logic and of the rule for the multiplicative constant bottom: it is then natural to relate the number of connected components ($\cc$) and the nodes weakening or bottom ($\w$) through the equality $\smash{\cc = \w + 1}$, 
a property called $\ACCw$ in this paper: 
it was first considered in~\cite{fleury1994the}, where 
it is shown to be a necessary condition for a \ps to represent an actual proof, but it is not sufficient due to the counterexample in \Cref{subfig:fleury-retore-intuitionistic}. 
$\ACCw$ is stable under \cutelim of \psst{\MELL} 
(\Cref{thm:accw-cut-elim}, but its idea dates back to \cite{fleury1994the}).
It is common practice \cite{TdF2000,lamarche2008proof,llhandbook} to add ``jumps'' to \pss in order to recover the property $\ACC$, 
but such a technique has many drawbacks
(there is no canonical position for jumps in general, their behaviour through \cutelim is a mess). 
As simple and solid geometric properties are likely to have a logical meaning, we wondered whether it was possible to identify interesting subsets of \pss~of \MELL for which $\ACCw$~is a correctness criterion. 
We found a geometric restriction such that, for every \ps satisfying it, $\ACCw$ is sufficient to be a proof: \Cref{thm:poutter-accio-sequential}. An equivalent reformulation~shows~that~$\ACCw$~can~be~checked~in~linear~time~relatively~to~the~size of the \ps (\Cref{rmk:complexity}).

A second 
reason for interest in our work is an (apparently) unrelated question: is it possible to provide a unifying setting for the classical and intuitionistic polarizations? The classical polarization was introduced in~\cite{girard1991lc} and one can show that classical logic can be represented in the classical polarized fragment of $\LL$ (see for example~\cite{laurent2005polarized}). 
Intuitionistic logic and the $\lambda$-calculus can be embedded into the intuitionistic polarized fragment of \LL
\cite{danos1990logique,regnier1992lambda,Bierman94,DanosRegnier95,Lamarche95,Lamarche96}.
Specific correctness criteria were introduced in the classical case (Laurent, see~\cite{laurent1999polarized,laurent2003polarized}) and in the intuitionistic one (Lamarche, see~\cite{lamarche2008proof}). 
However, up to now (to the best of our knowledge) these two settings have been considered separately. 
It is quite natural to wonder whether or not there exists a fragment of $\LL$ where the two polarizations coexist: this is one of the features of the fragment \VMELL introduced here (Definition~\ref{def:vmell}). 
It turns out that \pss of \VMELL without axiom cuts and multiplicative cuts satisfy the geometric condition of \Cref{thm:poutter-accio-sequential} and then $\ACCw$ is a correctness criterion in this setting (\Cref{cor:vmell,rmk:vmell-normal}). 
So, \VMELL corroborates the validity of our approach: connectivity naturally appears at the crossroad of classical and intuitionistic polarizations. 
The (undirected) world of $\ACCw$ \pss is computationally sound (\Cref{thm:accw-cut-elim}) and underlies the specific (directed) correctness criteria of~\cite{laurent1999polarized,laurent2003polarized} and~\cite{lamarche2008proof}: $\ACCw$ is equivalent to Laurent's criterion in the classical polarized setting (
\Cref{cor:classical-polarization}), it captures an essential feature of the intuitionistic polarization (\Cref{cor:imell}) and it is a correctness criterion for the~fragment~\VMELL,~containing both classically and intuitionistically \pol formulae.

A third 
reason for interest in our work is ``computational''.
$\LL$ inspired the introduction of the \emph{\bc} \cite{ehrhard2016the,GuerrieriManzonetto18} subsuming both the \cbn and \cbv evaluation mechanisms of the \lc thanks to Girard's two translations of intuitionistic logic into $\LL$. A natural question is to precisely identify the image in $\LL$ of the \bc and to study its static and dynamic properties. 
The resulting fragment \bangMELL is contained~in~\VMELL~and~thus $\ACCw$ is a correctness criterion for \bangMELL.

Even though this is not the focus of our work, we point out that \VMELL is also meaningful from a semantic perspective, as its intersection with intuitionistic \MELL is essentially the counterpart in \MELL of the half-polarized typing system for call-by-push-value in \cite{ehrhard19probabilistic}, and \VMELL itself inherits the semantic interpretation of that system (\Cref{rmk:vmell-semantics}).

 \begin{outline}
  In \Cref{sec:geometric-preliminaries}, we recall some elementary notions of graph theory, define \pss 
  and its properties $\AC$ and $\ACCw$, and generalize the notion of sequentializable \ps to the untyped setting (as in \cite{llhandbook}): intuitively, a \seq \ps represents a \seqcp with no formula. 
We also show that a sequential \ps always satisfies $\ACCw$ (\Cref{prop:accw-necessary}) and prove that $\ACCw$ is stable under cut elimination (\Cref{thm:accw-cut-elim}). 
  In \Cref{sec:pure-polarized}, we introduce a very weak form of typing discipline (\Cref{fig:polarized}), thus obtaining polarized \pss (\Cref{def:polarized}), and we prove our main result: a \poutter \ps satisfying $\ACCw$ is sequential (\Cref{prop:poutter-accio-sequential}, \Cref{thm:poutter-accio-sequential}). 
  The proof involves the equivalence between $\ACCw$ and the property $\ACCio$, which is computationally easy to check (\Cref{rmk:complexity}). In \Cref{sec:typed-setting}, we move to the typed setting: we recall the classical and intuitionistic polarizations $\LLpol$ and \IMELL, 
  and we introduce the new unifying fragment \VMELL (\Cref{def:vmell}). 
  Since a \pst{\VMELL} (without axiom cuts and multiplicative cuts) is \poutter, when it satisfies $\ACCw$ it is sequentializable (\Cref{cor:vmell}): $\ACCw$ is a correctness criterion for \VMELL (\Cref{rmk:vmell-normal}), 
  generalizing Laurent's correctness criterion for $\LLpol$ \cite[Definition~5]{laurent2003polarized}. 
  In \Cref{sec:bang-calculus}, we present new results explaining the precise relationship between \pns and the \bc. 
  First, we notice that the image in \LL of the \bc is the fragment \bangMELL, contained in \VMELL, and then we introduce a translation of the \bc terms into \pns that factorizes both Girard's \cbn and \cbv translations of the \lc (\Cref{thm:factorization}). We prove that every reduction step in the \bc is simulated by a finite number of \cutelim steps in \pns (\Cref{thm:simulation})~and~finally~we provide~an~explicit characterization of \pns that represent terms of the \bc~(\Cref{thm:characterization}).

 \end{outline}

%
  
%

%% file: Sections/Geometric.tex
We first recall some elementary notions of graph theory with their notations.

  A \emph{graph} is a quadruple $G = (\ve(G), \ar(G), \tail_G, \head_G)$, where $\ve(G)$ is a~finite set of \emph{nodes}, $\ar(G)$ is a finite set, disjoint from $\ve(G)$, of \emph{arcs}, and $\tail_G$, $\head_G$ are two functions~from~$\ar(G)$~to $\ve(G)$. 
The \emph{ends} of $a \in \ar(G)$ are $\tail_G(a)$ and $\head_G(a)$.
  The graph $G$ is \emph{null} if $\ve(G) = \varnothing$.

A graph $G$ is naturally directed, as $\head_G$ and $\tail_G$ induce the orientation of any arc $a$ from $\tail_G(a)$ to $\head_G(a)$. 
Nonetheless, we can also see $G$ as undirected thanks to the definition of (undirected) path. 
  A \emph{path} of $G$ (\emph{from} $n_0$ \emph{to} $n_k$) is a finite sequence $\gamma = n_0 a_1 n_1 a_2 \dots n_{k - 1} a_k n_k$ of alternating nodes and arcs of $G$ such that: for all $i \in \{1, \dots, k\}$, the ends of $a_i$ are $n_{i - 1}$ and $n_i$; and for all $i, j \in \{0, \dots, k\}$ with $i \neq j$, if $n_i = n_j$, then $\{i, j\} = \{0, k\}$; and $a_1 \neq a_k$ if $k \geq 2$. 
  A path $\gamma$ of $G$ is: \emph{directed} if $\tail_G(a_i) = n_{i - 1}$ and $\head_G(a_i) = n_i$ for all $i \in \{1, \dots, k\}$;~\emph{empty}~if $k = 0$; a \emph{cycle} if $k > 0$ and $n_0 = n_k$. 

  A graph $G$ is: (\emph{directed}) \emph{acyclic} if it has no (directed) cycle; \emph{connected} when, for every $n, m \in \ve(G)$, there exists a path of $G$ from $n$ to $m$. Every graph $G$ can be expressed in a unique way as a disjoint union of non-null connected graphs, which are called the \emph{connected components} of $G$. The set consisting of the connected components of $G$ is denoted by $\cc(G)$\index{cc(G)@$\cc(G)$}.\footnotemark
  \footnotetext{Note that the null graph is connected and it is the only graph with no connected components.}
  The number of elements of a set $S$ is denoted by $\card{S}$.



 \begin{proposition}[{\cite[Theorem 11.10.3]{lehman2010mathematics}}]
  \label{prop:connected-components}
  For any acyclic graph $G$:  $\sharp \cc(G) = \sharp \ve(G) - \sharp \ar(G)$.
 \end{proposition}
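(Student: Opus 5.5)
The plan is to reduce the statement to the classical fact that a connected acyclic graph (a tree) with $n$ nodes has exactly $n-1$ arcs, and then sum over connected components. The graphs here are multigraphs in which loops and parallel arcs are allowed. This causes no trouble: a loop $n\,a\,n$ is a cycle (take $k=1$), and two distinct parallel arcs $a \neq b$ between $n$ and $m$ give the cycle $n\,a\,m\,b\,n$ (here $k=2$ and $a_1\neq a_2$). Hence an acyclic graph has neither loops nor parallel arcs, and it behaves as a simple forest.

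First I decompose $G$ into its connected components $C_1,\dots,C_r$, where $r=\card{\cc(G)}$. Each arc has both ends in the same component, so $\ve(G)$ and $\ar(G)$ are the disjoint unions of the $\ve(C_i)$ and of the $\ar(C_i)$. Each $C_i$ is acyclic, because a cycle of $C_i$ is a cycle of $G$. It therefore suffices to prove $\card{\ar(T)}=\card{\ve(T)}-1$ for every non-null connected acyclic graph $T$. Summing this equality over the components then gives $\card{\ve(G)}-\card{\ar(G)}=r$. The case where $G$ is null is consistent, since both sides are $0$.

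For the tree lemma I argue by induction on $\card{\ve(T)}$. If $T$ has a single node, it has no arcs, since any arc would be a loop. If $T$ has at least two nodes, I take a path of maximal length in $T$; one exists because $T$ is finite and paths have no repeated nodes. I claim that its last node $n_k$ is a leaf, meaning it is incident to exactly one arc. It is incident to at least one arc because $T$ is connected and has another node. Suppose some arc $b \neq a_k$ is incident to $n_k$, with other end $m$. If $m$ is not on the path, the path can be extended by $b\,m$, contradicting maximality. If $m = n_j$ for some $j$, then $n_j \dots n_k\, b\, n_j$ is a cycle. Its arcs are distinct, and its length is at least $2$, with first arc $a_{j+1}$ different from last arc $b$; the degenerate case $m=n_k$ is a loop. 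Either way this contradicts acyclicity.

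I then remove $n_k$ and its unique arc. The result is still connected, since no path between two other nodes passes through a node of degree one, and still acyclic. By induction it has $\card{\ve(T)}-2$ arcs, so $T$ has $\card{\ve(T)}-1$ arcs. The main obstacle is the leaf-existence step, specifically checking that the cycle built there satisfies the paper's precise definition of path (no repeated nodes except at the ends, and $a_1\neq a_k$). The rest is bookkeeping.
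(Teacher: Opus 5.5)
Your proof is correct, and it is essentially the standard argument behind the textbook result the paper cites: a tree on $n$ nodes has $n-1$ arcs, proved by removing an endpoint of a longest path, and then one sums over connected components. The paper gives no argument beyond that citation. Your opening observation, that under the paper's definition of path a loop or a pair of parallel arcs is already a cycle, is the step the citation leaves implicit, since the cited result is stated for simple undirected graphs while the paper's graphs are directed multigraphs.
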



  \label{def:directed-acyclic-order}
  If $G$ is a graph, we define a binary relation $\precg{G}$ on $\ve(G)$: for all $n_1, n_2 \in \ve(G)$, we~have~that $n_1 \precg{G} n_2$ if and only if there exists a non-empty directed path of $G$ from $n_1$ to $n_2$.

 \begin{remark}
  \label{rmk:directed-acyclic-order}
  Let $G$ be a directed acyclic graph. Then $\precg{G}$ is a strict order relation.
 \end{remark}

  In a graph $G$, the \emph{outdegree} of $n \in \ve(G)$ is $\outdeg_G(n) \coloneq \card{\{a \in \ar(G)} : n = \tail_G(a)\}$. 
  We say that: $n$ is a \emph{sink} of $G$ if $\outdeg_G(n) = 0$;
  $G$ is \emph{\pf} (\emph{fun} for short) if $\outdeg_G(n) \leq 1$ for all $n \in \ve(G)$.
By basic combinatorial results, the property below holds (proof in \Cref{subsec:pfg-sink}).


 \begin{proposition}[restate = pfgSink, name = ]
  \label{prop:pfg-sink}
  A directed acyclic non-null connected fun graph has exactly one~sink.
 \end{proposition}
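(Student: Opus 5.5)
We prove existence and uniqueness of a sink separately, for a directed acyclic, non-null, connected fun graph $G$.

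\emph{Existence.} Since $G$ is directed acyclic, by \Cref{rmk:directed-acyclic-order} the relation $\precg{G}$ is a strict order on the finite non-empty set $\ve(G)$. Hence it has a maximal element $n$. If $\outdeg_G(n) \geq 1$, there is an arc $a$ with $\tail_G(a) = n$. Its head $m = \head_G(a)$ differs from $n$, since a loop would be a directed cycle. So $n a m$ is a non-empty directed path and $n \precg{G} m$, contradicting maximality. Thus $n$ is a sink. Equivalently, one can follow outgoing arcs from any node: by functionality this walk is deterministic, and by acyclicity and finiteness it must stop at a sink.

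\emph{Uniqueness.} We count. Each node has outdegree at most $1$ and every arc has exactly one tail, so $\card{\ar(G)} = \sum_n \outdeg_G(n) = \card{\ve(G)} - s$, where $s$ is the number of sinks. We need undirected acyclicity to apply \Cref{prop:connected-components}, so we first show that $G$ has no undirected cycle. Take a cycle $n_0 a_1 \dots a_k n_0$ with $k>0$. Each node has at most one outgoing arc, so each node of the cycle is the tail of at most one of its two incident cycle arcs (or of the single arc, for $k=1$).
\begin{itemize}
\item For $k = 1$, the arc is a loop, which is a directed cycle.
\item For $k \geq 2$, there are $k$ arcs, each with a tail among the $k$ distinct cycle nodes, and each node is a tail at most once. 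So every node is the tail of exactly one cycle arc. Then the orientation is consistent around the cycle, and it is a directed cycle.
\end{itemize}
Either way this contradicts directed acyclicity, so $G$ is acyclic. Now \Cref{prop:connected-components} gives $1 = \card{\cc(G)} = \card{\ve(G)} - \card{\ar(G)} = s$, using connectedness and non-nullness. This yields existence and uniqueness together.

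The main obstacle is the step from directed acyclicity to undirected acyclicity. This is exactly where the fun hypothesis is used, through the counting argument that forces consistent orientation around any cycle. Cycles of length $2$ need care: two parallel arcs between the same nodes must have tails at distinct nodes, so they form a directed cycle. Otherwise they would share a tail, violating outdegree at most $1$.
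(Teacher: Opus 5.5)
Your proof is correct, but your uniqueness argument takes a different route from the paper's. Your existence argument is the paper's: \Cref{rmk:dag} notes that in a directed acyclic graph the sinks are exactly the $\precg{G}$-maximal nodes.

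For uniqueness, the paper (\Cref{lemma:pfg-sink}) uses the same identity you derive, namely that the number of sinks equals $\card{\ve(G)} - \card{\ar(G)}$. It then bounds this by $1$ using $\card{\ar(G)} \geq \card{\ve(G)} - 1$, which holds for every connected graph (\Cref{lemma:connected-inequality}). So no acyclicity of any kind is needed: ``at most one sink'' holds for arbitrary connected fun graphs, and directed acyclicity is used only for existence.

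You instead first show that directed acyclicity plus functionality rules out undirected cycles, and then apply the equality of \Cref{prop:connected-components}. Your pigeonhole and orientation argument is sound, including the loop and two-parallel-arc cases. This gives $s = 1$ in one stroke, so your separate existence paragraph becomes redundant.

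Your route costs an extra combinatorial lemma but yields a stronger structural fact: $G$ is a tree oriented towards its unique sink. The paper's route is shorter and shows exactly which hypothesis is responsible for which half of the statement.
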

 


\subparagraph*{\texorpdfstring{Untyped \pss}{Untyped proof-structures}}

 We now present a special kind of graphs: \pss \cite{girard1987linear,llhandbook}, used to represent proofs in linear logic.
 For the moment, for the sake of generality, we stress their geometric properties, thus they are untyped (do not refer to any linear logic formulae).

 In a graph $G$, $a \in \ar(G)$ is a \emph{conclusion} of $\tail_G(a)$ and a \emph{premise} of $\head_G(a)$. In pictures, a double arrow (to a $\bullet$ node) is a shorthand for $k \geq 0$ arcs (each targeting a distinct $\bullet$ node).
 
 \begin{definition}
  \label{def:ground-structure}
  A \emph{\gs}\index{ground-structure@\gs} is a graph with node labels $\axpn$, $\cutpn$, $\onepn$, $\bot$, $\otimes$, $\parr$, $\oc$, $\drpn$, $\wn$, $\bullet$, satisfying the local constraints in \Cref{fig:nodes-local-constraints}, and with total orderings (separately) on the premises of its $\bullet$ nodes, on the conclusions of every $\oc$ node and on the premises~of~every~$\otimes$~and~$\parr$~node.
  
  \begin{figure}
   \centering
   \begin{subfigure}{0.1\textwidth}
    \centering
    \input{Figures/axiom.tikz}
    \caption{$l = \axpn$}
   \end{subfigure}
   \hskip 0.5em
   \begin{subfigure}{0.1125\textwidth}
    \centering
    \input{Figures/cut.tikz}
    \caption{$l = \cutpn$}
   \end{subfigure}
   \hskip 0.5em
   \begin{subfigure}{0.1375\textwidth}
    \centering
    \input{Figures/one-gen.tikz}
    \caption{$l \in \{\onepn,\bot\}$}
   \end{subfigure}
   \hskip 0.5em
   \begin{subfigure}{0.1375\textwidth}
    \centering
    \input{Figures/tensor-gen.tikz}
    \caption{$l \in \{\otimes,\parr\}$}
   \end{subfigure}
   \hskip 0.5em
   \begin{subfigure}{0.0875\textwidth}
    \centering
    \input{Figures/bang.tikz}
    \caption{$l = \oc$}
   \end{subfigure}
   \hskip 0.5em
   \begin{subfigure}{0.1\textwidth}
    \centering
    \input{Figures/dereliction.tikz}
    \caption{$l = \drpn$}
   \end{subfigure}
   \hskip 0.5em
   \begin{subfigure}{0.0875\textwidth}
    \centering
    \input{Figures/why-not.tikz}
    \caption{$l = \wn$}
   \end{subfigure}
   \hskip 0.5em
   \begin{subfigure}{0.0875\textwidth}
    \centering
    \input{Figures/conclusion.tikz}
    \caption{$l = \bullet$}
   \end{subfigure}
   \caption{\label{fig:nodes-local-constraints}The local constraints that a node of a \gs with label $l$ has to respect.}
  \end{figure}

  Let $\mathcal{G}$ be a \gs. 
  If $n$ is a $\oc$ (resp.~$\otimes$ or $\parr$) node of $\mathcal{G}$, the \emph{\pport} (resp.~\emph{right premise}) of $n$ is its greatest conclusion (resp.~premise). The other~conclusions (resp.~premise) of $n$ are (resp.~is) called the \emph{\aps} (resp.~\emph{left~premise})~of $n$. A $\wn$ node with $k$ premises is a $\wkpn$ node when $k = 0$, a $\ctpn$ node when $k \geq 2$. If $n$ is a $\bullet$ node, the premise of $n$ is called a \emph{conclusion} of $\mathcal{G}$. If $n$ is not a $\bullet$ node and all its conclusions are conclusions of $\mathcal{G}$, the node $n$ is \emph{terminal}. 
  We denote by $\bset(\mathcal{G})$~(resp.~$\w(\mathcal{G})$) the set~of~$\oc$~(resp.~$\bot$~or~$\wkpn$)~nodes~in~$\mathcal{G}$.
 \end{definition}
 
 \begin{definition}
  \label{def:pre-proof-structure}
	We define the set $\psset^d$ by induction on $d \in \Nat$:
	$R = (\undgs{R}, \undbox{R}) \in \psset^{d}$ if $\undgs{R}$ is a \gs and  $\undbox{R}$ is a function mapping each $n \in \bset(\undgs{R})$ to its \emph{box} $\undbox{R}(n) \in \psset^{d-1}$ 
	whose underlying \gs has the same number of conclusions as $n$.\footnotemark
	\footnotetext{So, $R = (\undgs{R}, \undbox{R}) \in \psset^{0}$ if $\undgs{R}$ is a \gs without $\oc$ nodes and $\undbox{R}$ is the empty function.}
	The \emph{depth} of $R \in \bigcup_{d \in \Nat} \psset^d$ is the smallest $d$ such that $R \in \psset^d$. 
		A \emph{conclusion} of $R$ is a conclusion of $\undgs{R}$.
	
	We set $\hat{\psset} \coloneqq \bigcup_{d \in \Nat} \psset^d$. 
	Given $R \in \hat{\psset}$, we write $R' \sqsubset R$ if $R' = \undbox{R}(n)$ for some $n \in \bset(\undgs{R})$.  
	We define\footnotemark 
	\footnotetext{\label{foot:depth}The definition (as well as \Cref{def:outp-terminal,def:ACCw,def:ground-structure-mell}) is by induction on the depth of $R$.}
	$\nd(R) \coloneq \nd(\undgs{R}) \cup \bigcup_{R' \sqsubset R} \nd(R')$ and $\ar(R) \coloneq \ar(\undgs{R}) \cup \bigcup_{R' \sqsubset R} \ar(R')$. 
	
	\label{def:proof-structure}
	A \emph{\ps}\index{proof-structure@\ps} is an $R \in \hat{\psset}$ such that:\textsuperscript{\normalfont\ref{foot:depth}} for all $n,n' \in \bset(\undgs{R})$, if $n \neq n'$ then $\nd(\undbox{R}(n)) \allowbreak\cap \nd(\undbox{R}(n')) = \emptyset = \ar(\undbox{R}(n)) \cap \ar(\undbox{R}(n'))$; and each $R' \sqsubset R$ is a \ps with $\nd(R') \cap \nd(\undgs{R}) = \emptyset = \ar(R') \cap \ar(\undgs{R})$. 
	The set of \pss is denoted by $\psset$.
 \end{definition}
 
 A \ps $R$ is depicted by replacing every $\oc$ node $n$ of $\undgs{R}$ with a rectangular~frame containing $\undbox{R}(n)$, the \pport of $n$ being under the $\oc$ symbol. 
Some \pss can be constructed inductively according to a sequential procedure (\cite{llhandbook}, see the next definition): \Cref{rmk:sequential} shows they correspond to \seqc proofs in 
\MELL and its fragments.
 
 \begin{definition}[by induction on $\card{\ar(R)}$]
  \label{def:sequential}
  A \ps $R$ is \emph{$n$-\seq} if $n \in \nd(\undgs{R})$ and $R$ has one of the shapes in \Cref{fig:desequentialization} (ignoring formulae and sequents), depending on the label of $n$, for some \seq $R_1, R_2 \in \psset$, where \emph{\seq} means $m$-\seq for some~node~$m$.
  
  \begin{figure}
  	\vspace*{-\baselineskip}
   \centering
   \begin{subfigure}{0.1125\textwidth}
    \centering
    \input{Figures/desequentialization-axiom.tikz}
    \caption{\label{subfig:desequentialization-axiom}\axsc rule.}
   \end{subfigure}
   \quad
   \begin{subfigure}{0.1875\textwidth}
    \centering
    \input{Figures/desequentialization-cut.tikz}
    \caption{\label{subfig:desequentialization-cut}\cutsc rule.}
   \end{subfigure}
   \quad
   \begin{subfigure}{0.1\textwidth}
    \centering
    \input{Figures/desequentialization-one.tikz}
    \caption{\label{subfig:desequentialization-one}$\one$ rule.}
   \end{subfigure}
   \quad
   \begin{subfigure}{0.1125\textwidth}
    \centering
    \input{Figures/desequentialization-bottom.tikz}
    \caption{\label{subfig:desequentialization-bottom}$\bot$ rule.}
   \end{subfigure}
   \vskip 1em
   \begin{subfigure}{0.1875\textwidth}
    \centering
    \input{Figures/desequentialization-tensor.tikz}
    \caption{\label{subfig:desequentialization-tensor}$\otimes$ rule.}
   \end{subfigure}
   \quad
   \begin{subfigure}{0.15\textwidth}
    \centering
    \input{Figures/desequentialization-par.tikz}
    \caption{\label{subfig:desequentialization-par}$\parr$ rule.}
   \end{subfigure}
   \quad
   \begin{subfigure}{0.1375\textwidth}
    \centering
    \input{Figures/desequentialization-bang.tikz}
    \caption{\label{subfig:desequentialization-bang}$\oc$ rule.}
   \end{subfigure}
   \quad
   \begin{subfigure}{0.1125\textwidth}
    \centering
    \input{Figures/desequentialization-dereliction.tikz}
    \caption{\label{subfig:desequentialization-dereliction}\drsc rule.}
   \end{subfigure}
   \quad
   \begin{subfigure}{0.15\textwidth}
    \centering
    \input{Figures/desequentialization-why-not.tikz}
    \caption{\label{subfig:desequentialization-why-not}$?$ rule.}
   \end{subfigure}
   \caption{\label{fig:desequentialization}Definition of \seq \ps (and desequentialization of \seqcpst{\MELL} into \psst{\MELL}). It is intended that $n$ replaces $\bullet$ nodes in $R_1$ and $R_2$.}
  \end{figure}
 \end{definition}

 We relate the inductive notion of sequential \ps to purely geometric properties.

 \begin{definition}
  \label{def:switching}
  A \emph{switching} $\varphi$ of a \gs $\mathcal{G}$ is a function mapping every $\parr$ or $\ctpn$ node $n$ of $\mathcal{G}$ to a premise of $n$. The \emph{\sg of $\mathcal{G}$ induced by $\varphi$}, denoted by $\mathcal{G}^\varphi$, is the graph we obtain by erasing, for each $\parr$ or $\ctpn$ node $n$ of $\mathcal{G}$, every premise of $n$ except~for~$\varphi(n)$.
 
 
%
%

	\label{def:correctness-conditions}
	Let $\mathcal{G}$ be a \gs, and $\varphi$ be a switching of $\mathcal{G}$. We write: 
	\begin{itemize}
		\item
		$\mathcal{G}^\varphi \models \AC$ if $\mathcal{G}^\varphi$ is acyclic;
		and $\mathcal{G}^\varphi \models \Cw$ if $\card{\cc(\mathcal{G}^\varphi)} = \card{\w(\mathcal{G})} + 1$;

	\item
	$\mathcal{G} \models \AC$ (resp.~$\mathcal{G} \models \Cw$) if $\mathcal{G}^\varphi \models \AC$ (resp.~$\mathcal{G}^\varphi \models \Cw$) for every switching $\varphi$ of $\mathcal{G}$;
	\item
	$\mathcal{G} \models \ACCw$ if $\mathcal{G} \models \AC$ and $\mathcal{G} \models \Cw$.
\end{itemize}
\end{definition}
 \begin{remark}
  \label{rmk:switching-connected-components}
  Let $\mathcal{G}$ be a \gs. By \Cref{prop:connected-components}, if $\mathcal{G} \models \AC$ then $\card{\cc(\mathcal{G}^{\varphi_1})} = \card{\cc(\mathcal{G}^{\varphi_2})}$ for every $\varphi_1, \varphi_2$ switchings of $\mathcal{G}$, since they only erase the same number of arcs.
 \end{remark}


 \begin{definition}
 	\label{def:ACCw}
  Let $R \in \psset$. We write $R \models \AC$ (resp.~$\Cw$, $\ACCw$) if $\undgs{R} \models \AC$ (resp.~$\Cw$, $\ACCw$) and $R' \models \AC$ (resp.~$\Cw$,~$\ACCw$) for every $R' \sqsubset R$.
 \end{definition}
%
%
%
%
 \begin{proposition}
  \label{prop:accw-necessary}
  If $R \in \psset$ is \seq then $R \models \ACCw$ (proof by easy induction on~$\card{\ar(R)}$).
 \end{proposition}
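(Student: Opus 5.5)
By induction on $\card{\ar(R)}$, following the cases of \Cref{def:sequential}. The target is to show $\undgs{R} \models \AC$ and $\undgs{R} \models \Cw$. For the boxes we use that a \seq $R$ has every box equal to a \seq \ps with fewer arcs, so the inductive hypothesis gives $R' \models \ACCw$ for every $R' \sqsubset R$. The main tool is \Cref{rmk:switching-connected-components}: once acyclicity is known, it suffices to count connected components for a single switching. Alternatively, I will track $\card{\cc}$ directly through a switching $\varphi$ of $\undgs{R}$, which restricts to switchings $\varphi_1,\varphi_2$ of $\undgs{R_1}$ and $\undgs{R_2}$.

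Base cases.
\begin{itemize}
\item \axsc{} rule: the ground-structure is one node with two conclusions to $\bullet$ nodes. It is a tree, so $\card{\cc} = 1 = 0 + 1$.
\item $\one$ rule: one component and no $\bot$ node, so $\card{\cc} = 1$.
\item $\bot$ rule: the new $\bot$ node with its $\bullet$ forms a new component. It is added beside $R_1$, so $\card{\cc}$ and $\card{\w}$ both grow by one. The same holds for a weakening, i.e.\ a $\wn$ node with $0$ premises in the $?$ rule.
\end{itemize}

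Inductive cases.
\begin{itemize}
\item Unary rules (\drsc{}, $?$ with one premise): we append a node with one premise and one conclusion under a conclusion of $R_1$. This preserves acyclicity and $\card{\cc}$, and leaves $\w$ unchanged.
\item $\parr$ rule and $?$ with $k \ge 2$ premises: both attach to conclusions of the same $R_1$. The switching keeps exactly one premise, so the switching graph is that of $R_1$ with one pendant node added. This gives no cycle and the same $\card{\cc}$.
\item $\otimes$ and \cutsc{} rules: the new node joins a component of $\undgs{R_1}^{\varphi_1}$ to a component of $\undgs{R_2}^{\varphi_2}$, and these are disjoint graphs. Hence no cycle is created, and
\[\card{\cc} = (\card{\w(R_1)}+1)+(\card{\w(R_2)}+1)-1 = \card{\w(R)}+1.\]
\item $\oc$ rule: the box content $R_1$ is handled by the inductive hypothesis. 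At the outer level, $\undgs{R}$ is $\undgs{R_1}$ with its interior replaced by a single $\oc$ node whose conclusions are the former conclusions of $R_1$. This is a star, hence acyclic with one component. Its $\w$ is unchanged, since the $\bot$ and weakening nodes of $R_1$ now sit inside the box and are not counted at the outer level. Any $\wn$ nodes added around the auxiliary ports are covered by the $?$ case.
\end{itemize}

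The main obstacle is being careful about which $\bullet$ nodes get replaced by $n$, and about the fact that $\w$ counts only nodes of the ground-structure and not those inside boxes. The $\oc$ case needs this most: there one must observe that the outer-level graph is a tree regardless of what $R_1$ contains.
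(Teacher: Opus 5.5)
Your proof is correct and is exactly the ``easy induction on $\card{\ar(R)}$'' the paper leaves implicit: a case analysis on the shape of an $n$-sequential $R$, with the inductive hypothesis covering $R_1, R_2$ and all boxes, and a direct component count for an arbitrary switching restricted to $R_1, R_2$. Two cosmetic points only: the $\bot$ and weakening cases use the inductive hypothesis, so they are not base cases; and in the $\oc$ case the point is simply that $\w(\undgs{R}) = \emptyset$, so the one-component star gives $1 = 0 + 1$ (``unchanged'' is misleading there).
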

 
  The converse of \Cref{prop:accw-necessary} fails, as shown in \Cref{subfig:fleury-retore-intuitionistic} (due to \cite{fleury1994the}, ignore formulae).\medskip
  \begin{figure}[t]
  	\vspace*{-1.5\baselineskip}
   \centering
   \subcaptionbox{\label{subfig:fleury-retore-intuitionistic}\textcolor{gray}{In \IMELL.}}{
    \centering
    \input{Figures/fleury-retore-intuitionistic.tikz}
   }
   \subcaptionbox{\label{subfig:fleury-retore-cut}\textcolor{gray}{In \VMELL.}}{
    \centering
    \input{Figures/fleury-retore-cut.tikz}
   } \quad
   \subcaptionbox{\label{subfig:fleury-retore-cycle}\textcolor{gray}{In \VMELL.}}{
    \centering
    \input{Figures/fleury-retore-cycle.tikz}
   }
   \caption{Three non-sequential \pss. The first two satisfy $\ACCw$, the last one does~not.}
  \end{figure}
 
\emph{\Cutelim} is the reduction $\tocut$ on $\psset$ generated by any of the root steps in \Cref{fig:cutelim} (note that they do not refer to formulae).
For each reduction $\tor{\mathsf{r}}$ defined in \Cref{fig:cutelim}, $R \in \psset$ is \emph{$\mathsf{r}$-normal} if there is no $R' \in \psset$ such that $R \tor{\mathsf{r}} R'$.
We show that $\ACCw$ is stable under~$\tocut$; the proof is in~\Cref{subsec:cut-elimination-steps}, though it is folklore and its crucial elements are in~\cite{fleury1994the}.
 \begin{figure}[!tb]
 	\centering
 	\begin{subfigure}{0.2875\textwidth}
 		\centering
 		\scalebox{.9}{\input{Figures/cutelim-axiom.tikz}}
 		\caption{\label{fig:ax-cut}\Axcut elimination step
 			($n \neq m$, possibly $n = n'$).
 		}
 	\end{subfigure}
 	\begin{subfigure}{0.2375\textwidth}
 		\centering
 		\scalebox{.9}{\input{Figures/cutelim-unit.tikz}}
 		\caption{\Ucut elimination step (it erases a connected component).}
 	\end{subfigure}
 	\begin{subfigure}{0.45\textwidth}
 		\centering
 		\scalebox{.9}{\input{Figures/cutelim-multiplicative.tikz}}
 		\caption{
 			\label{fig:tensor-par-cut}Tensor/par cut elimination step.}
 	\end{subfigure}
 	\vskip 1em
 	\begin{subfigure}{0.4\textwidth}
 		\centering
 		\scalebox{.9}{\input{Figures/cutelim-dereliction.tikz}}
 		\caption{\label{subfig:cutelim-rdr}\Dcut elimination step: the box is opened.}
 	\end{subfigure}
 	\begin{subfigure}{0.5875\textwidth}
 		\centering
		\scalebox{.9}{\input{Figures/cutelim-commutative.tikz}}
 		\caption{\label{subfig:cutelim-rcm}\Ccut elimination step ($n \neq m$).}
 	\end{subfigure} 
 	\vskip 1em
 	\begin{subfigure}{0.975\textwidth}
 		\centering
		\scalebox{.9}{\input{Figures/cutelim-structural.tikz}}
 		\caption{\label{subfig:cutelim-rst}\Scut elimination step: the box is erased if $k = 0$ ($\wkpn$ node), duplicated if $k \geq 2$ ($\ctpn$ node).}
 	\end{subfigure}
 	\caption{\label{fig:cutelim}
 		Root steps of \cutelim $\tocut \mathop{\coloneqq} \tor{\rax} \cup \tor{\rob} \cup \tor{\rtp} \cup \tor{\rdr} \cup \tor{\rcm} \cup \tor{\rst}$ on~$\psset$.
 		We set $\tor{\rmu} \mathop{\coloneqq} \tor{\rax} \cup \tor{\rtp}$ and $\tor{\re} \mathop{\coloneqq} \tor{\rcm} \cup \tor{\rst}$.
 	}
 \end{figure}

 \begin{theorem}[restate = accwCutElim, name = ]
  \label{thm:accw-cut-elim}
  Let $R, R' \in \psset$ such that $R \tocut R'$. 
  If $R \models \ACCw$, then $R' \models \ACCw$.
 \end{theorem}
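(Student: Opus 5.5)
The plan is a case analysis on the root step $R \tocut R'$, after two reductions. First, since the step may occur inside a box, I would argue by induction on depth. $\ACCw$ for $R$ is defined recursively on $\undgs{R}$ and on every $R' \sqsubset R$. If the step takes place strictly inside some box $\undbox{R}(n)$, the ground-structure $\undgs{R}$ is unchanged (the box still has the same number of conclusions), and the claim follows from the induction hypothesis applied to that box. So it suffices to treat root steps acting on $\undgs{R}$. For the steps that open, erase or duplicate boxes ($\rdr$, $\rst$, $\rcm$), the boxes of $R'$ are boxes of $R$ (possibly copied, or with one box moved inside another). Their $\ACCw$ is therefore inherited, except for the new box produced by $\rcm$, which needs the same graph-theoretic check one level down.

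Second, I would reduce $\Cw$ to a counting statement. By \Cref{rmk:switching-connected-components} and \Cref{prop:connected-components}, under $\AC$ every switching graph $\mathcal{G}^\varphi$ has $\card{\ve} - \card{\ar}$ components, and this number is independent of $\varphi$. So for each step I would do two things: (i) show acyclicity of every switching graph of $\undgs{R'}$, and (ii) compute $\card{\ve(\mathcal{G}'^{\varphi'})} - \card{\ar(\mathcal{G}'^{\varphi'})} - \card{\w(\undgs{R'})}$ for a single convenient switching. The target is to show this equals the corresponding quantity for $R$, namely $1$.

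For (i), the standard argument is that any switching $\varphi'$ of $\undgs{R'}$ comes from a switching $\varphi$ of $\undgs{R}$. A cycle in $\undgs{R'}^{\varphi'}$ would then either be a cycle of $\undgs{R}^{\varphi}$ or pass through the rewired part. In the second case it is replaced by a path in $\undgs{R}^\varphi$, giving a cycle there. This needs a step-specific argument:
\begin{itemize}
\item \rax: contracting an axiom–cut pair.
\item \rtp: a cycle through the two new cuts would, choosing the $\parr$ switching suitably, yield a cycle through the $\otimes$ in $R$. This is the classical Danos–Regnier argument.
\item \rdr: opening a box, treating the box as a single generalized node whose interior is connected under any switching by $\Cw$ with no weakenings, or otherwise path-equivalent.
\item \rst: duplication, where copies of a box hang off distinct premises of the $\ctpn$ node and are linked only through the auxiliary $\wn$ nodes, which themselves receive several premises.
\end{itemize}
For (ii) the counting is local and mostly routine:
\begin{itemize}
\item \rax and \rtp keep the difference $\card{\ve}-\card{\ar}$.
\item \rob removes a component together with a $\bot$ node.
\item \rst with $k=0$ removes the box node, its cut and the weakening, and turns the box's conclusions into new premises (or weakenings) of the auxiliary $\wn$ nodes, which balances the count.
\item \rst with $k\ge 2$ adds copies whose contribution cancels via the extra contraction premises.
\end{itemize}

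I expect the main obstacle to be the steps that manipulate boxes: \rdr (opening), and \rst with a $\wkpn$ node, where the box is erased and its auxiliary doors become premises of $\wn$ nodes. For opening, the difficulty is that the interior of the box may itself contain weakenings and $\bot$ nodes that now count in $\w(\undgs{R'})$. I would use $\Cw$ for the box, $\card{\cc} = \card{\w} + 1$, which says precisely that the interior contributes its extra components exactly as its weakenings. For \rst with $k=0$, I would need that the erased box's auxiliary conclusions all go to $\wn$ nodes. The new $\wkpn$ nodes, or lost premises of $\ctpn$ nodes, must then compensate exactly for the component count, and acyclicity of a switching that picks those premises must be checked. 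I would do this count carefully, following the argument of~\cite{fleury1994the}.
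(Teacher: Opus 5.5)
Your plan follows the paper's proof. $\AC$ is preserved; the paper simply cites this as folklore rather than re-proving it. Steps inside a box are handled by induction. For each root step, $\card{\cc} = \card{\nd} - \card{\ar}$ of the switching graphs (\Cref{prop:connected-components}) is compared with the change in $\card{\w}$, using $\Cw$ of the opened box for $\rdr$ and rechecking only the newly formed box for $\rcm$.

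One small correction concerns the $\rst$ step with a $\wkpn$ node. There, each of the $h$ auxiliary conclusions of the erased box becomes the conclusion of a fresh $\wkpn$ node. So you need no assumption that these conclusions reach $\wn$ nodes (which $\psset$ does not guarantee), and the count balances at once: nodes $+h-3$, arcs $-2$, $\card{\w}$ $+h-1$.
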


%% file: Sections/Polarized.tex
Untyped \pss are highly wild. We propose, in this section, a geometric condition on these \pss under which the notion of $\ACCw$ is equivalent to sequentiality: this is  \Cref{thm:poutter-accio-sequential}. A \ps $R$ satisfies this geometric condition when it meets two requirements (Definition~\ref{def:outp-terminal}): 1) a global one: by labelling the arcs of $R$ with dual ``pure formulae'' $\inp$ and $\out$, one can turn $R$ into a \emph{polarized} \ps; 2) a local~one:~the~chosen~polarized labelling of $R$ is \poutter.

Pure polarization is a form of very weak typing that adds constraints on how to assemble nodes in a \ps; it is still liberal enough to allow the \emph{untyped} call-by-name \cite{danos1990logique,regnier1992lambda} and call-by-value \cite{Accattoli15} $\lambda$-calculi to be embedded into our pure \pol \pss; but it is not stable wrt cut elimination (\Cref{rmk:PolNoStableCutElim}). For a \poutter (polarized) \ps, the property $\ACCw$ (which refers to \emph{all} switching graphs of a \ps) turns out to be equivalent to simpler geometric properties of the \ps itself (referring to a \emph{unique} graph) and thus (thanks to \Cref{thm:poutter-accio-sequential}) for \poutter \pss checking sequentiality is easy (\Cref{rmk:complexity}). 
Omitted proofs 
are in~\Cref{subsec:complements-untyped-polarized}.

\begin{definition}
 \label{def:polarized}
A \gs $\mathcal{G}$ is \emph{\pol} if it comes with arc labels $\inp$ (for~\emph{input}) and $\out$ (for \emph{output}) satisfying the local constraints in \Cref{fig:polarized}.
A $\drpn$ (resp.~$\bullet$) node of $\mathcal{G}$ is called \emph{classical} (resp.~\emph{output}) if its premise~is~output,~\emph{intuitionistic} (resp.~\emph{input}) otherwise. 
We denote by $\drcl(\mathcal{G})$ (resp.~$\out(\mathcal{G})$) the set of classical $\drpn$ (resp.~output $\bullet$) nodes of $\mathcal{G}$. 
A $\parr$ node of $\mathcal{G}$ is a $\pout$ (resp.~$\pin$)~node~if its conclusion is output (resp.~input). We say that $\mathcal{G}$ is \emph{\poutter} 
if the conclusion of every $\pout$ node in $\mathcal{G}$ is a premise of a $\pout$ or $\bullet$ node of $\mathcal{G}$.

 \begin{figure}[t]
 \vspace*{-\baselineskip}
  \centering
   \adjustbox{valign=c}{\input{Figures/polarized-axiom.tikz}}
  \quad
   \adjustbox{valign=c}{\input{Figures/polarized-cut.tikz}}
  \quad
   \adjustbox{valign=c}{\input{Figures/polarized-one.tikz}}
  \quad
   \adjustbox{valign=c}{\input{Figures/polarized-bottom.tikz}}
  \quad
   \adjustbox{valign=c}{\input{Figures/polarized-bang.tikz}}
  \qquad
   \adjustbox{valign=c}{\input{Figures/polarized-dereliction-classical.tikz}}
   \adjustbox{valign=c}{or }
   \adjustbox{valign=c}{\input{Figures/polarized-dereliction-intuitionistic.tikz}}
  \qquad
   \adjustbox{valign=c}{\input{Figures/polarized-why-not.tikz}}
  \vskip 0.5em
   \adjustbox{valign=c}{\input{Figures/polarized-tout.tikz}}
   \adjustbox{valign=c}{or }
   \adjustbox{valign=c}{\input{Figures/polarized-tin-left.tikz}}
   \adjustbox{valign=c}{or }
   \adjustbox{valign=c}{\input{Figures/polarized-tin-right.tikz}}
  \qquad
   \adjustbox{valign=c}{\input{Figures/polarized-pin.tikz}}
   \adjustbox{valign=c}{or }
   \adjustbox{valign=c}{\input{Figures/polarized-pout-left.tikz}}
   \adjustbox{valign=c}{or }
   \adjustbox{valign=c}{\input{Figures/polarized-pout-right.tikz}}
  \caption{\label{fig:polarized}Local constraints for the arc labels of a \pol \gs. In~the~case~of~a~$\oc$ node, the intended meaning is that the \pport is labelled by $\out$ and every \ap~is~labelled~by~$\inp$.}
 \end{figure}
\end{definition}

Polarization narrows the way we can build \gss: \textit{e.g.} in a polarized \gs, a $\otimes$ (resp.~$\parr$) node cannot have all premises as conclusions of $\bot$ (resp.~$\one$) nodes.

\begin{definition}
	\label{def:accio}
 The \emph{$\io$ graph} of a \pol \gs $\mathcal{G}$, denoted by $\iograph{\mathcal{G}}$, is obtained by erasing the input premise of every $\pout$ node, and  reversing the orientation of every input arc. 
 We write $\mathcal{G} \models \ACio$ if $\iograph{\mathcal{G}}$ is directed acyclic; 
 we write $\mathcal{G} \models \Cio$ if $\card{\drcl(\mathcal{G})} + \card{\out(\mathcal{G})} = 1$; 
 we write $\mathcal{G} \models \ACCio$ if $\mathcal{G} \models \ACio$ and~$\mathcal{G} \models \Cio$.
\end{definition}

\Cref{lemma:connected-components}.\ref{p:connected-components-io-connectivity} below shows that $\Cio$ is related to connectivity of switching graphs.

\begin{remark}
 \label{rmk:io-graph}
 Let $\mathcal{G}$ be a \pol \gs, and $n \in \nd(\mathcal{G})$. 
 It is easy to check that:
 \begin{enumerate}[i.]
  \item
   $\outdeg_{\iograph{\mathcal{G}}}(n) \geq 2$ if and only if $n$ is a $\pin$ or $\ctpn$ node; \label{itm:io-graph-outdegree}
  \item
   $n$ is a sink in $\iograph{\mathcal{G}}$ if and only if $n$ is a $\bot$, $\wkpn$, classical $\drpn$ or output $\bullet$ node. \label{itm:io-graph-sinks}
 \end{enumerate}
\end{remark}

\begin{proposition}[restate = acyclicity, name = ]
 \label{prop:acyclicity}
 Let $\mathcal{G}$ be a \pol \gs. 
 If $\mathcal{G} \models \AC$, then $\mathcal{G} \models \ACio$.
\end{proposition}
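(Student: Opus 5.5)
The plan is to argue by contraposition: assuming that $\iograph{\mathcal{G}}$ contains a directed cycle, I will build a switching $\varphi$ of $\mathcal{G}$ such that $\mathcal{G}^\varphi$ contains a cycle, contradicting $\mathcal{G} \models \AC$. The arcs of $\iograph{\mathcal{G}}$ are arcs of $\mathcal{G}$, some of them reversed, and the input premises of $\pout$ nodes are removed. So a directed cycle $\gamma = n_0 a_1 n_1 \dots a_k n_0$ of $\iograph{\mathcal{G}}$, forgetting orientations, is a path of $\mathcal{G}$ that is a cycle, with pairwise distinct nodes (except the endpoints) and distinct arcs. It therefore suffices to show that, for every $\parr$ or $\ctpn$ node $n$ crossed by $\gamma$, at most one of the two arcs of $\gamma$ incident to $n$ is a premise of $n$. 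Then I set $\varphi(n)$ to be that premise when there is one, and an arbitrary premise otherwise; every arc of $\gamma$ survives in $\mathcal{G}^\varphi$, so $\gamma$ is a cycle of $\mathcal{G}^\varphi$.

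The key step is a local orientation analysis. Since $\gamma$ is directed, at each node $n$ it crosses, one arc of $\gamma$ enters $n$ and the other leaves $n$ in $\iograph{\mathcal{G}}$. Consider a premise $a$ of $n$ that is present in $\iograph{\mathcal{G}}$. If $a$ is output, it keeps its orientation, so it enters $n$. If $a$ is input, it is reversed, so it leaves $n$. Hence, if both arcs of $\gamma$ at $n$ were premises of $n$, one would be an output premise and the other an input premise, both present in $\iograph{\mathcal{G}}$. I rule this out case by case, reading the constraints off \Cref{fig:polarized}:
\begin{itemize}
\item If $n$ is a $\pout$ node, its input premise is erased in $\iograph{\mathcal{G}}$, so only one premise is available.
\item If $n$ is a $\pin$ node, all its premises are input, so every premise leaves $n$ and two of them cannot form the in/out pair.
\item If $n$ is a $\ctpn$ node, all its premises carry the same label as the $\wn$ node requires, so again they are all oriented the same way.
\end{itemize}

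The main point to verify carefully is exactly this reading of the polarity constraints in \Cref{fig:polarized}: that a $\pin$ node has only input premises, that a $\pout$ node has exactly one input premise, and that the premises of a $\wn$ node are uniformly labelled. Once these are confirmed, the remaining bookkeeping is routine. In particular, I must check that $\gamma$ satisfies the definition of path in $\mathcal{G}^\varphi$: distinct nodes are inherited from $\gamma$, and for $k \geq 2$ we have $a_1 \neq a_k$. The degenerate case of a cycle of length $1$ or $2$ is also covered by the same argument, since it too yields an undirected cycle in $\mathcal{G}^\varphi$.
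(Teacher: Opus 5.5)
Your proposal is correct and takes essentially the same approach as the paper. The paper shows that a directed cycle of $\iograph{\mathcal{G}}$ contains at most one premise of each $\parr$ or $\ctpn$ node: for a $\pout$ node because its input premise is erased, and for $\pin$ and $\ctpn$ nodes because all their premises are input and hence all leave the node. It then invokes \Cref{rmk:switching} to obtain a switching graph containing that cycle, which is exactly your construction of $\varphi$. In fact the premises of a $\wn$ node are all input, a special case of the uniform labelling you rely on.
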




\Cref{prop:acyclicity} holds because, if $\iograph{\mathcal{G}}$ has a directed cycle $\delta$, then, for each $\parr$ or $\ctpn$ node $n$ of $\mathcal{G}$, at most one premise of $n$ is an arc of $\delta$, thus $\delta$ is a cycle of $\mathcal{G}^\varphi$ for some switching $\varphi$~of~$\mathcal{G}$.
The converse of \Cref{prop:acyclicity} is false, as shown in \Cref{subfig:fleury-retore-cycle} (ignore formulae in~gray).\medskip

%
	To build $\iograph{\mathcal{G}}$ from a \pol \gs $\mathcal{G}$, we erase the output premise of every $\pout$ node: the further erasure of a premise of every $\pin$ and $\ctpn$ node yields a switching $\varphi$ of $\mathcal{G}$. 
	\Cref{rmk:switching-connected-components,rmk:io-graph,prop:pfg-sink,prop:acyclicity} then allow us to prove the following result.



\begin{lemma}[restate = connectedComponentsSinks, name = ]
	\label{lemma:connected-components}
	Let $\mathcal{G}$ be a \pol \gs with $\mathcal{G} \models \AC$. 
	Then:
	\begin{enumerate}[i.]
		\item\label{p:connected-components-sinks} for every switching $\varphi$ of~$\mathcal{G}$,
		one has $\card{\cc(\mathcal{G}^\varphi)} = \card{\w(\mathcal{G})} + \card{\drcl(\mathcal{G})} + \card{\out(\mathcal{G})}$;
		\item\label{p:connected-components-io-connectivity} $\mathcal{G} \models \Cw$ iff $\mathcal{G} \models \Cio$ (immediate consequence of \Cref{p:connected-components-sinks}).
	\end{enumerate}
	
\end{lemma}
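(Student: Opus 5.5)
By \Cref{rmk:switching-connected-components}, since $\mathcal{G} \models \AC$ all switching graphs of $\mathcal{G}$ have the same number of connected components. It therefore suffices to prove item~\ref{p:connected-components-sinks} for one well-chosen switching $\varphi$. Item~\ref{p:connected-components-io-connectivity} then follows at once: $\card{\cc(\mathcal{G}^\varphi)} = \card{\w(\mathcal{G})} + 1$ holds iff $\card{\drcl(\mathcal{G})} + \card{\out(\mathcal{G})} = 1$.

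To choose $\varphi$, start from $\iograph{\mathcal{G}}$.
\begin{itemize}
\item At each $\pout$ node, the input premise is already erased in $\iograph{\mathcal{G}}$, so let $\varphi$ select the output premise.
\item At each $\pin$ node and each $\ctpn$ node, let $\varphi$ select an arbitrary premise and erase the other premises from $\iograph{\mathcal{G}}$.
\end{itemize}
Call the resulting graph $H$. Up to the reversal of the input arcs, $H$ is exactly $\mathcal{G}^\varphi$. Reversing arcs affects neither undirected paths nor connected components, so $\card{\cc(H)} = \card{\cc(\mathcal{G}^\varphi)}$.

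We then check three properties of $H$.
\begin{enumerate}
\item $H$ is directed acyclic. It is a subgraph of $\iograph{\mathcal{G}}$, which is directed acyclic by \Cref{prop:acyclicity}.
\item $H$ is undirected acyclic, since $\mathcal{G}^\varphi$ is.
\item $H$ is fun. By \Cref{rmk:io-graph}.\ref{itm:io-graph-outdegree}, the only nodes of outdegree $\geq 2$ in $\iograph{\mathcal{G}}$ are $\pin$ and $\ctpn$ nodes. For these we must verify that the premises are exactly the arcs leaving the node in $\iograph{\mathcal{G}}$, so that keeping one premise leaves outdegree $\leq 1$. This is a case check on \Cref{fig:polarized}: the premises of a $\pin$ node are input, hence reversed, and its conclusion is input, hence also reversed and entering the node. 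For a $\ctpn$ node, the premises and conclusion are all input, so the same reasoning applies.
\end{enumerate}

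Each connected component of $H$ is thus a non-null, connected, directed acyclic fun graph. By \Cref{prop:pfg-sink} it has exactly one sink, so $\card{\cc(H)}$ equals the number of sinks of $H$.

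It remains to identify the sinks of $H$ with the sinks of $\iograph{\mathcal{G}}$. By \Cref{rmk:io-graph}.\ref{itm:io-graph-sinks}, the latter are precisely the $\bot$, $\wkpn$, classical $\drpn$ and output $\bullet$ nodes, whose count is $\card{\w(\mathcal{G})} + \card{\drcl(\mathcal{G})} + \card{\out(\mathcal{G})}$. The arcs removed in passing from $\iograph{\mathcal{G}}$ to $H$ all leave $\pin$ or $\ctpn$ nodes, and each such node keeps exactly one outgoing arc. Removing them therefore creates no new sink and destroys none.

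The main obstacle is the local case analysis in step~3: confirming from the polarity constraints that for $\pin$ and $\ctpn$ nodes the switched premises are precisely the outgoing arcs in $\iograph{\mathcal{G}}$, and that no other node loses outgoing arcs. Everything else is bookkeeping with \Cref{prop:pfg-sink} and \Cref{rmk:switching-connected-components}.
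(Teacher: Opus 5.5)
Your proof is correct and is essentially the paper's argument. You fix a switching that selects the output premise of every $\pout$ node, view $\mathcal{G}^\varphi$ as a reorientation of a fun subgraph of $\iograph{\mathcal{G}}$ with the same sinks, and count one sink per connected component via \Cref{prop:pfg-sink} and \Cref{rmk:io-graph}. Your step~2 (undirected acyclicity of $H$) is harmless but unnecessary, since \Cref{prop:pfg-sink} only requires directed acyclicity.
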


We now relate $\ACCio$ to sequentiality under a suitable geometric condition (Prop.~\ref{prop:poutter-accio-sequential}).

\begin{definition}
 Let $\mathcal{G}$ be a \pol \gs such that $\mathcal{G} \models \ACio$. 
 A node in $\mathcal{G}$ is \emph{\ioter} if it is minimal with respect to $\precg{\iograph{\mathcal{G}}}$ (\Cref{rmk:directed-acyclic-order}) in the set of terminal nodes~of~$\mathcal{G}$.
\end{definition}

 \begin{lemma}[restate = lemmaIoter, name = ]
  \label{lemma:ioter}
  Let $\mathcal{G}$ be a \pol \gs with no $\pout$ nodes, such that $\mathcal{G} \models \ACio$; let $n$ be a 
  $\cutpn$, $\otimes$ or $\drpn$ node of $\mathcal{G}$ which is terminimal.
  For every output premise $a$ of $n$:
  \begin{enumerate}[i.]
   \item \label{itm:ioter-connected-component}
	if $C$ is the \emph{$a$-split} of $\mathcal{G}$, that is, the connected component of $G$  containing $a$, where $G$ is the graph obtained from $\mathcal{G}$ by adding a fresh $\bullet$ node $n_0$ and setting $\head_G(a) = n_0$,~then~$C$~has no $\drpn$ node, and $n_0$ is the only $\bullet$ node of $C$ whose premise is output;
   \item \label{itm:ioter-directed-path}
    if $\gamma$ is a path of $\mathcal{G}$ to $n$ and the last arc of $\gamma$ is $a$, then $\gamma$ is a directed path of $\iograph{\mathcal{G}}$.
  \end{enumerate}
 \end{lemma}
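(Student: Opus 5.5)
The plan is to prove \Cref{itm:ioter-directed-path} first and then derive \Cref{itm:ioter-connected-component} from it. Throughout I use the local polarity constraints of \Cref{fig:polarized} in the following form: in $\iograph{\mathcal{G}}$, output arcs keep their downward orientation and input arcs point upward. Since $\mathcal{G}$ has no $\pout$ nodes, $\iograph{\mathcal{G}}$ has the same arcs as $\mathcal{G}$, only with input arcs reversed. By \Cref{rmk:io-graph}.\ref{itm:io-graph-outdegree}, the only nodes with outdegree $\geq 2$ in $\iograph{\mathcal{G}}$ are the $\pin$ and $\ctpn$ nodes, and all their $\iograph{\mathcal{G}}$-outgoing arcs are their (input) premises, pointing upward.

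For \Cref{itm:ioter-directed-path}, let $\gamma = n_0 a_1 n_1 \dots a_k n_k$ with $n_k = n$ and $a_k = a$. The arc $a$ is an output premise of $n$, so it is traversed in its $\iograph{\mathcal{G}}$ direction. Suppose some arc of $\gamma$ is traversed against its $\iograph{\mathcal{G}}$ orientation, and take the last such arc $a_i$. Then the suffix $n_i a_{i+1} \dots n_k$ is directed in $\iograph{\mathcal{G}}$. Both $a_i$ and $a_{i+1}$ are $\iograph{\mathcal{G}}$-outgoing arcs of $n_i$, and they are distinct because $\gamma$ is a path. Hence $n_i$ has outdegree $\geq 2$, so it is a $\pin$ or $\ctpn$ node with $n_i \precg{\iograph{\mathcal{G}}} n$.

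Now I walk downward from $n_i$ along conclusions. The conclusion of $n_i$ is input, so in $\iograph{\mathcal{G}}$ it points from its head to $n_i$; that head therefore precedes $n_i$, and hence precedes $n$. I iterate this step. By inspecting the polarity table, every node whose premise is an input arc either again has an input conclusion (so the walk continues) or is terminal in the relevant sense: the $\bullet$ tail or a $\cutpn$ node. Note that a $\pout$ node could stop the walk, but none exists here, and $\oc$ auxiliary ports do not occur as heads of downward arcs. Acyclicity (the hypothesis $\ACio$) forces the walk to terminate, and it does so at a terminal node $t$ with $t \precg{\iograph{\mathcal{G}}} n$. If $t \neq n$ this contradicts terminimality; if $t = n$ we get a directed cycle through $n$, contradicting $\ACio$.

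The main obstacle is this exhaustive check that the downward walk along input conclusions can only stop at a terminal node strictly below $n$ in $\precg{\iograph{\mathcal{G}}}$. It relies on the absence of $\pout$ nodes and on the exact cases of \Cref{fig:polarized}, and I would go through them one by one.

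For \Cref{itm:ioter-connected-component}, take any node $m \neq n_0$ of the $a$-split $C$. A path in $G$ from $m$ to $n_0$ necessarily ends with $a$, and removing $n_0$ gives a path $\gamma$ of $\mathcal{G}$ to $n$ whose last arc is $a$ (if the path meets $n$ earlier, it must do so via another arc, which is impossible because it would revisit $n$ via $a$). By \Cref{itm:ioter-directed-path}, $\gamma$ is directed in $\iograph{\mathcal{G}}$, so $m \precg{\iograph{\mathcal{G}}} n$ whenever $m \neq n$, and in particular $m$ is not a sink. By \Cref{rmk:io-graph}.\ref{itm:io-graph-sinks}, $m$ is then not a classical $\drpn$ node, not an output $\bullet$ node, not a $\bot$ node and not a $\wkpn$ node.

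It remains to exclude intuitionistic $\drpn$ nodes, including $n$ itself when $n$ is a $\drpn$ node. For $n$: a $\drpn$ node with an output premise is classical, and its premise $a$ has been detached, so inside $G$ the node $n$ does not lie in $C$. For an intuitionistic $\drpn$ node $m \in C$: its premise is input, so $m$ points upward in $\iograph{\mathcal{G}}$, and continuing upward in $\iograph{\mathcal{G}}$ through the premise gives a maximal directed path. That path must end at a sink lying in $C$ (it stays in $C$ because its arcs do not include $a$). Any such sink is either $n_0$, which is not reachable by going upward from above it, or a forbidden sink, a contradiction.

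Finally, $n_0$ is an output $\bullet$ node, since $a$ is output, and by the above no other output $\bullet$ node lies in $C$.
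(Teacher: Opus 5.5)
Your proof of (ii) is correct, and it takes a different route from the paper's. The paper inducts on $\card{\nd(\mathcal{G})}$: it uses \Cref{lemma:input-conclusion-terminal} to show that the tail of $a$ is terminimal in $C$, then recurses through $\tout$ nodes. You instead argue directly. The last arc of $\gamma$ traversed against $\iograph{\mathcal{G}}$ forces a $\pin$ or $\ctpn$ node strictly below $n$, and walking down its input conclusions reaches a terminal node below $n$. That is a non-inductive argument, with the walk playing the role of an unfolded \Cref{lemma:input-conclusion-terminal}.

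The sink argument in (i) for $\bot$, $\wkpn$, classical $\drpn$ and output $\bullet$ nodes is also fine. However, your parenthetical about a path of $G$ meeting $n$ before $n_0$ needs its real justification. Such a path yields a cycle of $\mathcal{G}$ through $n$ whose last arc is $a$. By (ii) this is a directed cycle of $\iograph{\mathcal{G}}$, contradicting $\ACio$. The same argument shows $n \notin C$ in every case.

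The genuine gap is the exclusion of intuitionistic $\drpn$ nodes. A maximal directed path of $\iograph{\mathcal{G}}$ from such an $m$ need not avoid $a$. Your own first step of (i) shows that $m$ reaches $n$ by a directed path ending with $a$. So in the io-graph of $G$, $n_0$ \emph{is} reachable from $m$, by a path that goes up through the premise of $m$ and then down. Hence neither ``it stays in $C$'' nor ``$n_0$ is not reachable by going upward'' holds, and no contradiction arises.

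Here is a concrete example.
\begin{itemize}
\item Let $x$ be an $\axpn$ node with output conclusion $a$. Its input conclusion is the premise of a $\drpn$ node $m$, whose conclusion is a conclusion of $\mathcal{G}$.
\item Let $n$ be a $\cutpn$ node whose input premise is the input conclusion of another $\axpn$ node $y$. The output conclusion of $y$ is a conclusion of $\mathcal{G}$.
\end{itemize}
There are no $\pout$ nodes, $\ACio$ holds, (ii) holds for $a$, and the only sink lying in $C$ is $n_0$. Yet $m \in C$. What fails is terminimality of $n$: $m$ is terminal and $m \precg{\iograph{\mathcal{G}}} n$. Your step never invokes terminimality, so it cannot be valid.

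The fix is to reuse your walk from (ii).
\begin{itemize}
\item You already have $m \precg{\iograph{\mathcal{G}}} n$, and the unique conclusion of $m$ is input.
\item Walking down from $m$ along input conclusions reaches a terminal node $t$ with $t = m$ or $t \precg{\iograph{\mathcal{G}}} m$.
\item Hence $t \precg{\iograph{\mathcal{G}}} n$, contradicting terminimality.
\end{itemize}
The same walk in fact excludes from $C$, apart from $n_0$, every node except $\axpn$, $\onepn$, $\tout$, $\oc$ and input $\bullet$ nodes.
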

 
 In both \Cref{lemma:ioter}.\ref{itm:ioter-connected-component} and \Cref{lemma:ioter}.\ref{itm:ioter-directed-path}, the hypothesis that $n$ is \ioter is crucial: neither of the two holds if we take $n$ as the non-\ioter \textcolor{myred}{$\otimes$} node in \Cref{subfig:splitting}.
 
 \begin{definition}
 	\label{def:outp-terminal}
	A \ps $R$ is \emph{\pol} if its underlying \gs $\undgs{R}$ is \pol and $R'$ is a \pol \ps for all $R' \sqsubset R$.
	The set of \pol \pss is denoted by $\psset^\mathnormal{pol}$.
	Given $R \in \psset^\mathnormal{pol}$, $R$ is \emph{\poutter} (resp.~$R \models \ACCio$) if $\undgs{R}$ is \poutter (resp.~$\undgs{R} \models \ACCio$) and $R'$ is \poutter (resp.~$R' \models \ACCio$) \mbox{for each $R' \sqsubset R$}.
 \end{definition}

 \begin{proposition}[restate = poutterAccioSequential, name = ]
  \label{prop:poutter-accio-sequential}
  Let $R \in \psset^\mathnormal{pol}$ be \poutter. 
  If $R \models \ACCio$,  then $R$ is \seq.
 \end{proposition}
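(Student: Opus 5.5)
We argue by induction on $\card{\ar(R)}$, each time locating a node $n$ of $\undgs{R}$ such that $R$ is $n$-\seq in the sense of \Cref{def:sequential}. The induction hypothesis is applied to the smaller \pss obtained by removing $n$. Those pieces are again \pol, \poutter and satisfy $\ACCio$: they are sub-\pss of $R$, so $\ACio$ is inherited, and \poutter-ness is inherited because removing $n$ only turns conclusions into premises of $\bullet$ nodes. The condition $\Cio$ has to be checked by counting: each piece must contain exactly one classical $\drpn$ or output $\bullet$ node. Boxes are handled by the depth-induction built into \Cref{def:outp-terminal}, since each $\undbox{R}(n)$ is already \seq by the induction hypothesis.

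The choice of $n$ proceeds in three steps.

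\emph{Step 1.} If some terminal node $n$ is a $\pout$, $\pin$, $\bot$, $\wkpn$, $\ctpn$, $\onepn$, $\axpn$ or $\oc$ node, we remove it directly.
\begin{itemize}
\item For a $\parr$ node, $\ACio$ is preserved. For $\Cio$, removing a $\pout$ node replaces one output $\bullet$ by one output $\bullet$ (its output premise) plus an input one, so the count is still $1$.
\item For a $\oc$ node with all conclusions terminal, there is nothing left except the box, so $R$ consists of $n$ alone with its conclusions.
\item For $\bot$ and $\wkpn$, removal gives a \ps with one fewer node, and the count is unchanged because these nodes are sinks but not counted in $\Cio$.
\end{itemize}
In all these cases the rule shapes of \Cref{fig:desequentialization} apply immediately.

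\emph{Step 2.} Otherwise all terminal nodes are $\cutpn$, $\otimes$ or $\drpn$ nodes. We also want $\undgs{R}$ to have no $\pout$ nodes. By \poutter-ness, a non-terminal $\pout$ node sits below a chain of $\pout$ nodes ending at a $\bullet$ node, so the bottom of that chain is terminal and was already handled in Step 1. We may therefore assume there are no $\pout$ nodes, which puts us in the hypotheses of \Cref{lemma:ioter}.

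We then pick a \ioter node $n$ among the terminal ones; it exists because $\precg{\iograph{\mathcal{G}}}$ is a strict order by \Cref{rmk:directed-acyclic-order}.
\begin{itemize}
\item If $n$ is a $\drpn$ node, the count in $\Cio$ forces it to be the unique counted node, or else its premise is an input arc. In either case removing $n$ clearly preserves $\ACCio$.
\item If $n$ is a $\cutpn$ or $\otimes$ node, we must show it splits the structure: deleting $n$ disconnects $\undgs{R}$ into two parts $R_1, R_2$, one per premise, with no sharing. For an output premise $a$, \Cref{lemma:ioter}.\ref{itm:ioter-connected-component} says the $a$-split $C$ contains no $\drpn$ node and exactly one output $\bullet$, namely the new $n_0$, so $C$ satisfies $\Cio$. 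Polarity constraints give every $\cutpn$ and $\otimes$ node at least one output premise.
\end{itemize}

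\emph{Step 3, the main obstacle.} It remains to show that the $a$-split does not also contain the other premise $b$ of $n$, that is, that no path outside $n$ connects $a$ and $b$. Suppose such a path $\gamma$ exists; I would try a contradiction with acyclicity. Extend $\gamma$ to a path ending in $a$ at $n$. By \Cref{lemma:ioter}.\ref{itm:ioter-directed-path} it is directed in $\iograph{\mathcal{G}}$, hence passes from $b$'s side. Two cases follow.
\begin{itemize}
\item If $b$ is output, the symmetric statement gives the reversed path directed as well, which is a directed cycle in $\iograph{\mathcal{G}}$, contradicting $\ACio$.
\item If $b$ is input, the reversal of input arcs in $\iograph{\mathcal{G}}$ makes $\gamma$ followed by $b$ a directed cycle through $n$, again contradicting $\ACio$.
\end{itemize}
Once the split is established, the counting for $\Cio$ on both parts follows from \Cref{lemma:ioter}.\ref{itm:ioter-connected-component} and the global count $\card{\drcl}+\card{\out}=1$. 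The induction hypothesis then makes $R_1$ and $R_2$ \seq, and $R$ is $n$-\seq.
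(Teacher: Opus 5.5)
Your skeleton matches the paper's: induction on $\card{\ar(R)}$, peeling off terminal $\bot$, $\parr$, $\drpn$, $\wn$ nodes, reducing to the case without $\pout$ nodes, and splitting at a \ioter $\cutpn$ or $\otimes$ node with both items of \Cref{lemma:ioter}. Your Step~3 is the paper's \Cref{prop:splitting}, and the case split on the polarity of $b$ is not needed: the cycle $n\,b\cdots a\,n$ is already a path to $n$ whose last arc is $a$.

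The genuine gap is in Step~1, where terminal $\axpn$, $\onepn$ and $\oc$ nodes are ``removed directly''. By \Cref{def:sequential}, $R$ is $n$-\seq for such an $n$ only if $\undgs{R}$ consists of $n$ and its $\bullet$ nodes alone. Your claim that a terminal $\oc$ node leaves ``nothing except the box'' is false. Consider the \ps with two components:
\begin{itemize}
\item a terminal $\oc$ node whose box is a lone $\onepn$ node;
\item a $\cutpn$ node whose premises are the \pport of another such $\oc$ node and the conclusion of a $\wkpn$ node.
\end{itemize}
This \ps is \pol and \poutter and satisfies $\ACCio$: there is one output $\bullet$ and no classical $\drpn$. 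It is $\cutpn$-\seq, but it is not $\oc$-\seq for its terminal $\oc$ node. That $\oc$ node is exactly what your Step~1 picks, since the only other terminal node is the $\cutpn$. Replacing the $\oc$ nodes by $\onepn$ nodes and the $\wkpn$ by a $\bot$ node gives the same failure for $\onepn$.

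What is needed, and what the paper does, is to treat $\axpn$, $\onepn$, $\oc$ only in the residual case where \emph{every} terminal node has one of these labels. In that case \Cref{lemma:poutter-terminal-pout} gives that there is no $\pout$ node. The argument then runs as follows.
\begin{itemize}
\item Every connected component of $\undgs{R}$ contains a terminal node (\Cref{lemma:existence-terminal-node}). If some input arc is not a conclusion, follow input conclusions downwards by induction on $\precio$. Otherwise there are no $\pin$ or $\ctpn$ nodes, so $\iograph{\undgs{R}}$ is partial functional and each component has a sink.
\item A terminal $\axpn$, $\onepn$ or $\oc$ node has no premises in $\undgs{R}$, so together with its $\bullet$ nodes it is its whole component.
\item Each component therefore contributes exactly one output $\bullet$, and $\Cio$ forces $\undgs{R}$ to be that single node.
\end{itemize}

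Three further points:
\begin{itemize}
\item The same existence result is used silently in your Step~2: a \ioter node exists only if some terminal node does.
\item Once terminal $\axpn$, $\onepn$, $\oc$ nodes may survive into Step~2, an arbitrary \ioter node might be one of them, lying in another component. You must instead pick a \ioter node below a given terminal $\cutpn$ or $\otimes$ node (cf.\ \Cref{lemma:ioter-cut-tensor}). This works because a terminal $\axpn$, $\onepn$ or $\oc$ node has no outgoing arc in $\iograph{\undgs{R}}$ except towards $\bullet$ nodes.
\item Your Step~1 list omits unary $\wn$ nodes, which are neither $\wkpn$ nor $\ctpn$ nodes.
\end{itemize}
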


 \begin{proof}[Proof (details in \Cref{subsec:complements-untyped-polarized})]
  By induction on $\card{\ar(R)}$. If $\undgs{R}$ has a terminal $\bot$, $\parr$, $\drpn$ or $\wn$ node, 
we easily conclude from the induction hypothesis.
Now assume $\undgs{R}$ has no $\pout$ node (otherwise it would have a terminal $\parr$ node). 
  If every terminal node of $\undgs{R}$ is $\axpn$, $\onepn$ or $\oc$, then $\card{\out(\undgs{R})} = 1$ (as $\undgs{R} \models \Cio$), so $\undgs{R}$ consists of exactly one terminal $\axpn$, $\onepn$ or $\oc$~and~one~concludes.
  
  Otherwise $\undgs{R}$ has a terminal $\cutpn$ or $\otimes$ node.
  Since $R \models \ACio$, there is a \ioter $\cutpn$ or $\otimes$ node $n$ of $\undgs{R}$. 
  Thus $n$ has at least one output premise $a_1$. 
  From $R \models \ACio$ and \Cref{lemma:ioter}.\ref{itm:ioter-directed-path}, we deduce that $n$ appears in no cycle of $\undgs{R}$, so $R$ has one of the (untyped) shapes in \Cref{subfig:desequentialization-cut,subfig:desequentialization-tensor}, depending on the label of $n$. To apply the induction hypothesis and conclude that $R$ is \seq, we prove that $\undgs{R_i} \models \Cio$ for $i \in \{1, 2\}$. 
  Let $R_1, R_2 \in \psset$ such that $\undgs{R_1}$ is the $a$-split of $\undgs{R}$, and $\undgs{R_2}$ is obtained from $\undgs{R}$ by erasing $\undgs{R_1}$ and $n$. 
  As $a$ is output, by  \Cref{lemma:ioter}.\ref{itm:ioter-connected-component} $\drcl(\undgs{R_1}) = \varnothing$ and~$\out(\undgs{R_1}) = \{a\}$, thus $\card{\drcl(\undgs{R_2})} = \card{\drcl(\undgs{R})}$ and  $\card{\out(\undgs{R_2})} = \card{\out(\undgs{R})}$, hence
  $\card{\drcl(\undgs{R_1})} + \card{\out(\undgs{R_1})} = 0 + 1 = 1$ and  
  $\card{\drcl(\undgs{R_2})} + \card{\out(\undgs{R_2})} = \card{\drcl(\undgs{R})} + \card{\out(\undgs{R})} = 1$. 
 \end{proof}
 
  In the proof of \Cref{prop:poutter-accio-sequential}, we need a \emph{\ioter} $\cutpn$ or $\otimes$ node to ``split'' $R$: if we use the non-\ioter \textcolor{myred}{$\otimes$} node in \Cref{subfig:splitting}, then there is only one way to ``split'' (because of connectivity), and neither of the resulting \pss is \seq. 
  The \poutter hypothesis is also crucial: the (non-\seq) \ps~(satisfying $\ACCio$)~in~\Cref{subfig:fleury-retore-intuitionistic}~is ``split'' by its (\ioter) $\otimes$ node in two \pss~not~satisfying~$\ACCio$~(nor~$\ACCw$).
       
  \begin{figure}
   \centering
   \begin{subfigure}{0.5125\textwidth}
    \centering
    \begin{prooftree}
     \hypo{}
     \infer1[\scriptsize\axsc]{\infcstrut \vdash X^\perp, X}
     \hypo{}
     \infer1[\scriptsize\axsc]{\infcstrut \vdash X, X^\perp}
     \infer1[\scriptsize$\bot$]{\infcstrut \vdash X, X^\perp, \bot}
     \infer2[\scriptsize\textcolor{myred}{$\otimes$}]{\infcstrut \vdash X^\perp, X \otimes \bot, X, X^\bot}
     \infer1[\scriptsize$\parr$]{\infcstrut \vdash X, X \otimes \bot, X^\bot \parr X^\bot}
     \hypo{}
     \infer1[\scriptsize$\one$]{\infcstrut \vdash \one}
     \infer2[\scriptsize\textcolor{mygreen}{$\otimes$}]{\infcstrut \vdash X, X \otimes \bot, (X^\bot \parr X^\bot) \otimes \one}
    \end{prooftree}
    \caption{\label{subfig:sequent-calculus-proof}A \seqcp $\pi$ (the \exch rule is implicit).}
   \end{subfigure}
   \quad
   \begin{subfigure}{0.3875\textwidth}
   	\vspace*{-\baselineskip}
    \centering
    \input{Figures/necessity-io-terminal.tikz}
    \caption{\label{subfig:splitting}The desequentialization $R = \Deseq{\pi}$ of $\pi$.}
   \end{subfigure}
   \caption{A \poutter \ps having a \ioter \textcolor{mygreen}{$\otimes$} node and a non-\ioter~\textcolor{myred}{$\otimes$}~node.}
  \end{figure}
 
 \begin{theorem}
  \label{thm:poutter-accio-sequential}
  Let $R \in \ppset$ be \poutter: $R \models \ACCw$ iff $R \models \ACCio$ iff $R$ is sequential.
 \end{theorem}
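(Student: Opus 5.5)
We close the cycle of implications: $\ACCw \Rightarrow \ACCio \Rightarrow$ sequential $\Rightarrow \ACCw$. Each step is a depth-wise lifting of a result already stated for ground-structures. The only real content was \Cref{prop:poutter-accio-sequential}.

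\emph{From $\ACCw$ to $\ACCio$.} Assume $R \models \ACCw$. By \Cref{def:ACCw}, for $R$ and every box $R'$ occurring at any depth, the underlying \gs is polarized and satisfies both $\AC$ and $\Cw$. We then apply, at each depth:
\begin{itemize}
\item \Cref{prop:acyclicity}, which turns $\AC$ into $\ACio$;
\item \Cref{lemma:connected-components}.\ref{p:connected-components-io-connectivity}, which uses $\AC$ to turn $\Cw$ into $\Cio$.
\end{itemize}
This gives $\undgs{R} \models \ACCio$ and $\undgs{R'} \models \ACCio$ for every $R' \sqsubset R$. Formally this is an induction on the depth of $R$, matching the inductive definition of $R \models \ACCio$ in \Cref{def:outp-terminal}.

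\emph{From $\ACCio$ to sequential.} Since $R$ is \poutter, this is exactly \Cref{prop:poutter-accio-sequential}. Its proof is itself an induction on $\card{\ar(R)}$. For boxes, the $\oc$ case of the sequential decomposition uses that $R \models \ACCio$ and being \poutter are both inherited by the box contents.

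\emph{From sequential to $\ACCw$.} This is \Cref{prop:accw-necessary}, and it needs no polarization hypothesis.

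The main potential obstacle is a mismatch of bookkeeping between the two inductive definitions. $\ACCw$ and $\ACCio$ are defined by induction on depth, while sequentiality is defined by induction on the number of arcs. The concern is that the $\oc$-rule case of \Cref{def:sequential} must hand a sequential box back to the induction. This is already absorbed in the proofs of \Cref{prop:poutter-accio-sequential,prop:accw-necessary}, so at the level of this theorem only the depth induction in the first implication has to be checked, and it is routine.
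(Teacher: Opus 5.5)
Your proposal is correct and is essentially the paper's proof. The paper also derives the theorem directly from \Cref{prop:accw-necessary,prop:acyclicity,prop:poutter-accio-sequential} and \Cref{lemma:connected-components}.\ref{p:connected-components-io-connectivity}, which you arrange as the cycle $\ACCw \Rightarrow \ACCio \Rightarrow$ sequential $\Rightarrow \ACCw$, and you additionally spell out the routine depth induction that lifts the ground-structure results to boxes.
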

 
 \begin{proof}
  Immediate consequence of \Cref{prop:accw-necessary,prop:acyclicity,prop:poutter-accio-sequential}, and \Cref{lemma:connected-components}.\ref{p:connected-components-io-connectivity}.
 \end{proof}
 
 \begin{remark}
  \label{rmk:complexity}
Checking whether or not $R \models \ACCw$~holds is (obviously) at most exponential, as $\AC$ requires to check a number of (switching) graphs that is exponential in the number of $\parr$ and $\ctpn$ nodes. On the other hand, checking $\ACCio$ is at most linear in the number of nodes and arcs of a \pol \ps $R$ (\cite[Section 4.2]{sedgewick2011algorithms}). By \Cref{thm:poutter-accio-sequential}, for a \poutter~$R$, we have $R \models \ACCw$ if and only if $R \models \ACCio$, so checking $\ACCw$~becomes~linear.
 \end{remark}
 
 \begin{remark}
  \label{rmk:PolNoStableCutElim}
$\ppset$ is not closed wrt cut elimination (\Cref{fig:cutelim}): by applying the $\tor{\rdr}$ step in \Cref{subfig:cutelim-rdr} to $R\in\ppset$ one obtains an untyped \ps $R'$ inheriting from $R$ a labelling of its arcs that (in general) does not respect the constraints of \Cref{fig:polarized}. This is in sharp contrast with the polarization present in \cite{danos1990logique,regnier1992lambda,DanosRegnier95,Accattoli15} and it is the reason why we tend to think at polarization as a global geometric constraint allowing us to apply \Cref{thm:poutter-accio-sequential}~rather than think at $\ppset$ as a subsystem of \pss with its~own~logical~dignity.
 \end{remark}

%% file: Sections/Typed.tex
We isolate in \MELL the notable fragment \VMELL (\Cref{def:vmell}), containing both classically and intuitionistically polarized formulae. The type discipline imposed by \formulae{\VMELL} guarantees that any \pst{\VMELL} (without axiom cuts and 
$\rtp$ cuts) satisfies the hypothesis of \Cref{thm:poutter-accio-sequential} (it is \poutter), thus when it satisfies $\ACCw$ it is sequentializable (\Cref{cor:vmell}) and by \Cref{rmk:complexity} this can be checked in linear time in the size of $R$. Contrary to polarization of \Cref{sec:pure-polarized} and as usual in the typed setting, the set of \psst{\VMELL} is closed wrt \cutelim (\Cref{rmk:CutElimPrevervesTypes}) and $\ACCw$ is thus a correctness~criterion~for~\VMELL.

\subparagraph*{Formulae and \texorpdfstring{\seqc}{sequent calculus}}

Given countably infinite \emph{atomic formulae}, denoted by $X$, \emph{\formulae{\MELL}} are defined by the grammar below:
 \[
  A \Coloneqq X \mid \one \mid \bot \mid A \otimes A \mid A \parr A \mid \oc A \mid \wn A
 \]
 \emph{Linear negation} on atomic formulae is an involution $(\cdot)^\perp$ without fixpoints, extended inductively to \formulae{\MELL} by: $\one^\bot \coloneq \bot$, $(A \otimes B)^\bot \coloneq A^\bot \parr B^\bot$\!, $(\oc A)^\bot \coloneq \wn A^\bot$ and $(A^\bot)^\bot \coloneqq A$.
 
 \begin{definition}
  A \emph{fragment} $\mathcal{F}$ of \MELL is a set of \formulae{\MELL} closed under sub-formulae (for every $A \in \mathcal{F}$ and every sub-formula $B$ of $A$, $B \in \mathcal{F}$) and linear~negation (for every $A \in \mathcal{F}$, $A^\perp \in \mathcal{F}$). If $\mathcal{F}$ and $\mathcal{F}'$ are fragments of \MELL, we say that $\mathcal{F}$~is~a~\emph{fragment} of $\mathcal{F}'$ when $\mathcal{F} \subseteq \mathcal{F}'$. 
  A \emph{sequent} $\Gamma$ of $\mathcal{F}$ is a finite sequence of \formulae{$\mathcal{F}$}. 
  A \emph{\seqcp} in $\mathcal{F}$ with \emph{conclusion} $\Gamma$ is a finite tree of \sequents{$\mathcal{F}$} with root $\Gamma$ built via the~rules~in~\Cref{fig:sequent-calculus-rules}.
     
  \begin{figure}
  	\vspace*{-\baselineskip}
   \centering
   \begin{prooftree}
    \hypo{\infcstrut}
    \infer1[\scriptsize\axsc{}]{\infcstrut \vdash A,A^\bot}
   \end{prooftree}
   \quad
   \begin{prooftree}
    \hypo{\infcstrut \vdash \Gamma, A}
    \hypo{\infcstrut \vdash  A^\bot, \Delta}
    \infer2[\scriptsize\cutsc{}]{\infcstrut \vdash \Gamma, \Delta}
   \end{prooftree}
   \quad
   \begin{prooftree}
    \hypo{\infcstrut}
    \infer1[\scriptsize$\one$]{\infcstrut \vdash \one}
   \end{prooftree}
   \quad
   \begin{prooftree}
    \hypo{\infcstrut \vdash \Gamma}
    \infer1[\scriptsize$\bot$]{\infcstrut \vdash \Gamma, \bot}
   \end{prooftree}
   \quad
   \begin{prooftree}
    \hypo{\infcstrut \vdash \Gamma, A}
    \hypo{\infcstrut \vdash B, \Delta}
    \infer2[\scriptsize$\otimes$]{\infcstrut \vdash \Gamma, A \otimes B, \Delta}
   \end{prooftree}
   \vskip .5em
   \begin{prooftree}
    \hypo{\infcstrut \vdash \Gamma, A, B}
    \infer1[\scriptsize$\parr$]{\infcstrut \vdash \Gamma, A \parr B}
   \end{prooftree}
   \quad
   \begin{prooftree}
    \hypo{\infcstrut \vdash \wn\Gamma, A}
    \infer1[\scriptsize$\oc$]{\infcstrut \vdash \wn \Gamma, \oc A}
   \end{prooftree}
   \quad
   \begin{prooftree}
    \hypo{\infcstrut \vdash \Gamma, A}
    \infer1[\scriptsize\drsc{}]{\infcstrut \vdash \Gamma, \wn A}
   \end{prooftree}
   \quad
   \begin{prooftree}
    \hypo{\infcstrut \vdash \Gamma, \wn A, \dots, \wn A}
    \infer1[\scriptsize$\wn$]{\infcstrut \vdash \Gamma, \wn A}
   \end{prooftree}
   \quad
   \begin{prooftree}
    \hypo{\infcstrut \vdash \Gamma, A, B, \Delta}
    \infer1[\scriptsize\exch{}]{\infcstrut \vdash \Gamma, B, A, \Delta}
   \end{prooftree}
   \caption{\label{fig:sequent-calculus-rules}\Seqc rules of \MELL.}
  \end{figure}
 \end{definition}
 
 
 \begin{definition}
  \label{def:ground-structure-mell}
  Let $\mathcal{F}$ be a fragment of \MELL. A \emph{\gst{$\mathcal{F}$}} is a \gs with arcs labeled by \formulae{$\mathcal{F}$} (called \emph{types}) according to the local constraints~in~\Cref{fig:typed}.
  
  \begin{figure}
  	\vspace*{-\baselineskip}
   \centering
   \input{Figures/typed-axiom.tikz}
   \hskip 0.5em
   \input{Figures/typed-cut.tikz}
   \hskip 0.5em
   \input{Figures/typed-one.tikz}
   \hskip 0.5em
   \input{Figures/typed-bottom.tikz}
   \hskip 0.5em
   \input{Figures/typed-tensor.tikz}
   \hskip 0.5em
   \input{Figures/typed-par.tikz}
   \hskip 0.5em
   \input{Figures/typed-bang.tikz}
   \hskip 0.5em
   \input{Figures/typed-dereliction.tikz}
   \hskip 0.5em
   \input{Figures/typed-why-not.tikz}
   \caption{\label{fig:typed}Local typing constraints for a \gst{\MELL}.}
  \end{figure}
  

	A \emph{\pst{$\mathcal{F}$}} is a \ps $R$ such that its underlying \gs $\undgs{R}$ is a \gst{$\mathcal{F}$} and, for each $\oc$ node $n$ of $\undgs{R}$ with (ordered) conclusions of types $\wn A_1, \dots, \wn A_k, \oc A$, the \ps 
	$\undbox{R}(n)$ is a \pst{$\mathcal{F}$} with (ordered) conclusions of types $\wn A_1, \dots, \wn A_k, A$. 
	The set of \psst{$\mathcal{F}$} is denoted by $\psf{\mathcal{F}}$.
 \end{definition}
 
 \begin{remark}
  \label{rmk:CutElimPrevervesTypes}
If $R \in \psf{\mathcal{F}}$ has conclusion $\Gamma$ and $R\tocut R'$, then $R'\in\psf{\mathcal{F}}$ with conclusion $\Gamma$.
 \end{remark}

 The well-known desequentialization function maps a \seqcptWithNameAndConclusion{$\pi$}{$\mathcal{F}$}{$\Gamma = A_1, \dots, A_k$} to $\Deseq{\pi} \in \psf{\mathcal{F}}$ with (ordered) conclusions of types $A_1, \dots, A_k$. 
 The function $\pi \mapsto \Deseq{\pi}$ is defined by induction on $\pi$ (\Cref{fig:desequentialization}; \Cref{subsec:desequentialization} and \mbox{\cite{llhandbook,girard1987linear}}~for~details).
 
 \begin{definition}
 	\label{def:proof-net}
  Let $\mathcal{F}$ be a fragment of \MELL and $R \in \psf{\mathcal{F}}$. We say that $R$ is \emph{sequentializable} or a \emph{\pn} when~there exists a \seqcptWithName{$\pi$}{$\mathcal{F}$} such that $R = \Deseq{\pi}$.
 \end{definition}
 
 
 For example, $R = \Deseq{\pi}$ for the \seqcp $\pi$ and $R \in \psset$ in \Cref{subfig:sequent-calculus-proof,subfig:splitting}.
 
  \begin{remark}
 	\label{rmk:sequential}
 	An 
 	$R \in \psf{\mathcal{F}}$ is \seq (Def.~\ref{def:sequential} and Fig.~\ref{fig:desequentialization}, taking formulae into account) iff it is sequentializable.
 	The right-to-left direction is proved by induction on any 
 	\seqcptWithName{$\pi$}{$\mathcal{F}$} such that $\Deseq{\pi} = R$.
 	The converse is proven by induction on $\card{\ar{(R)}}$.
 \end{remark}
 
\subparagraph*{Some typed polarized fragments}

	 We recall the classical \cite{laurent1999polarized,laurent2003polarized} and intuitionistic \cite{Lamarche95,lamarche2008proof} polarizations 
	 of 
	 \LL without additives. 
	 To provide their standard presentations, we split the set of atomic formulae into two disjoint subsets 
	 so that $X^\perp$ has the opposite~polarity~of~$X$.
 
	The fragment $\LLpol$ of \MELL is defined by the grammar below:
	\begin{align*}
		N & \Coloneqq X \mid \bot \mid N \parr N \mid \wn P & \text{(\emph{negatives})} && P & \Coloneqq X^\bot \mid \one \mid P \otimes P \mid \oc N & \text{(\emph{positives})}
	\end{align*}
  	The fragment \IMELL of \MELL is defined by the grammar below of \emph{output} ($O$) and \emph{input} ($I$):
  	\begin{align*}
  		O & \Coloneqq X \mid \one \mid O \otimes O \mid O \parr I \mid I \parr O \mid \oc O & I & \Coloneqq X^\bot \mid \bot \mid I \parr I \mid I \otimes O \mid O \otimes I \mid \wn I
  	\end{align*}
 
 Thus, in $\LLpol$ $N^\bot$ is positive, $P^\bot$ is negative; in \IMELL $O^\bot$ is input, $I^\bot$ is output. 
 Note that $ \psf{\LLpol} \cup  \psf{\IMELL} \subsetneq \ppset \subsetneq \psset$.
 \smallskip
 
 In $\LLpol$ there is no $\pout$ node, a \ps is \poutter, and $\ACCio$ becomes the correctness criterion studied by Laurent \cite{laurent2003polarized}. 
 We can thus instantiate \Cref{thm:poutter-accio-sequential} as~below.
 
 \begin{corollary}
  \label{cor:classical-polarization}
  If $R \in \psf{\LLpol}$, then: $R \models \ACCw$ iff $R \models \ACCio$ iff $R$ is sequentializable.
 \end{corollary}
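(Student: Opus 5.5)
The plan is to reduce the statement to \Cref{thm:poutter-accio-sequential}. This requires two things: that every $R \in \psf{\LLpol}$ carries a polarized labelling making it \poutter, and that ``sequential'' coincides with ``sequentializable'' for $\LLpol$, which is \Cref{rmk:sequential}.

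The first step is to define the labelling. I label every arc of type negative $N$ by $\out$ and every arc of type positive $P$ by $\inp$; equivalently, I read $N$ as output and $P$ as input. I would then check each local constraint of \Cref{fig:polarized} against the typing rules of \Cref{fig:typed}, node by node:
\begin{itemize}
\item An axiom has conclusions $A, A^\perp$, one negative and one positive, so it gets one $\out$ and one $\inp$ conclusion. The same holds for the two premises of a cut.
\item $\onepn$ has a positive conclusion, hence $\inp$; $\bot$ has a negative conclusion, hence $\out$.
\item A $\otimes$ node has positive premises and conclusion, so all its arcs are input. This is an admissible pattern, namely the $\otimes$ pattern whose conclusion is $\inp$.
\item A $\parr$ node has negative premises and conclusion, all output. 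This is a $\parr$ with output conclusion. I need to check that \Cref{fig:polarized} allows a $\pout$ with both premises output.
\item A $\oc$ node has principal port $\oc N$, which is positive. Its box conclusion $N$ is negative, and its auxiliary ports $\wn P$ are negative.
\end{itemize}

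This is where the convention must be fixed carefully. The constraint on the $\oc$ node requires principal port $\out$ and auxiliary ports $\inp$. So the orientation is forced the other way: positive means $\out$ and negative means $\inp$. With that choice:
\begin{itemize}
\item A $\oc$ node has $\oc N$ as $\out$ and its auxiliary ports $\wn P$ as $\inp$.
\item Inside a box, the conclusion $N$ is $\inp$.
\item $\drpn$ has premise $P$ ($\out$) and conclusion $\wn P$ ($\inp$). This is the classical dereliction case.
\item $\wn$ nodes have premises and conclusion $\wn P$, all $\inp$.
\item $\otimes$ is all-output and $\parr$ is all-input, i.e.\ a $\pin$ node.
\end{itemize}
In this orientation there are no $\pout$ nodes at all, so $R$ is vacuously \poutter. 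The verification is done recursively on boxes, using the definition of a polarized \ps by induction on depth.

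The main obstacle is only bookkeeping: matching each rule to one of the patterns of \Cref{fig:polarized}, whose figures are not shown here, so some patterns must be inferred. In particular, I need to confirm two things. First, an all-output $\otimes$ node is the admissible $\tout$ pattern. Second, an all-input $\parr$ node is the $\pin$ pattern. Both are suggested by the duality of the constraints, the \IMELL grammar, and the remark that $\psf{\LLpol} \subsetneq \ppset$.

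Once these are confirmed, \Cref{thm:poutter-accio-sequential} gives that $R \models \ACCw$ iff $R \models \ACCio$ iff $R$ is sequential. Finally, \Cref{rmk:sequential} with $\mathcal{F} = \LLpol$ turns ``sequential'' into ``sequentializable'', which concludes.
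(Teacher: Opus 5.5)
Your proposal is correct and follows the paper's own route: label positive arcs $\out$ and negative arcs $\inp$, so that every $\parr$ node of an $\LLpol$ \ps is a $\pin$ node. Hence there are no $\pout$ nodes and $R$ is vacuously \poutter, recursively through boxes, and you conclude by \Cref{thm:poutter-accio-sequential} together with \Cref{rmk:sequential}. Discard your first attempted orientation (negative as $\out$), which fails at the $\oc$ node as you noticed. The corrected one matches every pattern of \Cref{fig:polarized}, including $\tout$ for $P \otimes P$ and a classical $\drpn$ whose conclusion $\wn P$ is input.
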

 
 
 In \IMELL, every provable sequent has exactly one output formula \cite{Lamarche95,lamarche2008proof} and this property is characterized geometrically by $\ACCw$ (\Cref{cor:imell} below, together 
 with \Cref{prop:accw-necessary}).

 
 \begin{remark}
  \label{rmk:imell-output-conclusion}
	If $R \in \psf{\IMELL}$, then  
	$R$ has no classical $\drpn$ nodes (because the premise of a $\drpn$ node can only be input by type constraints) and $\card{\out(\undgs{R'})} = 1$ for every $R' \sqsubset R$ (because of the type constraint on the \pport).
	From this and \Cref{lemma:connected-components}.\ref{p:connected-components-io-connectivity} we deduce \Cref{cor:imell}. 
 \end{remark}
 
 \begin{corollary}
  \label{cor:imell}
  If $R \in \psf{\IMELL}$, then: $R \models \ACCw$ if and only if $R \models \AC$ and $\card{\out(\undgs{R})} = 1$.
 \end{corollary}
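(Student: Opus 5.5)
The plan is to reduce the statement to \Cref{lemma:connected-components}.\ref{p:connected-components-io-connectivity}, applied level by level, using \Cref{rmk:imell-output-conclusion} to handle the boxes. Since $R \models \ACCw$ means $R \models \AC$ and $R \models \Cw$, it suffices to show that, under the hypothesis $R \models \AC$, the condition $R \models \Cw$ is equivalent to $\card{\out(\undgs{R})} = 1$. This reduction is immediate in the left-to-right direction, where $\AC$ is part of $\ACCw$, and it is given as a hypothesis in the right-to-left direction. Both directions use that every \pst{\IMELL} is \pol: label each arc output or input according to whether its type is an output or input formula of \IMELL. A quick check of the grammar against \Cref{fig:polarized} confirms that the local constraints are met. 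This is consistent with the inclusion $\psf{\IMELL} \subsetneq \ppset$ stated in the paper.

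Fix any $S$ that is either $R$ itself or a box nested (at any depth) in $R$. By \Cref{rmk:imell-output-conclusion}, $\drcl(\undgs{S}) = \emptyset$. By \Cref{lemma:connected-components}.\ref{p:connected-components-io-connectivity}, applicable because $\undgs{S} \models \AC$, we get that $\undgs{S} \models \Cw$ iff $\card{\out(\undgs{S})} = 1$.

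For the top level $S = R$, this gives exactly $\undgs{R} \models \Cw$ iff $\card{\out(\undgs{R})} = 1$. For a box $S = R' \sqsubset R$ (or any deeper box), \Cref{rmk:imell-output-conclusion} gives $\card{\out(\undgs{R'})} = 1$ unconditionally. The reason is that the conclusion of the box corresponding to the principal port has output type $O$, while its auxiliary conclusions have types $\wn I$, which are input. Hence $\undgs{R'} \models \Cw$ holds automatically for all boxes. Proceeding by induction on depth as in \Cref{def:ACCw}, we obtain $R \models \Cw$ iff $\card{\out(\undgs{R})} = 1$, given $R \models \AC$. Both directions then follow at once.

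The only point needing care is justifying that the box conclusions really contribute exactly one output $\bullet$ node. This reduces to the typing constraint on the $\oc$ node: $\oc A$ is output only when $A$ is output, and $\wn A_i$ is input, so every $A_i$ is input. With that checked, I expect no real obstacle, since the corollary is essentially a direct instantiation of the lemma.
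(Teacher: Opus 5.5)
Your proposal is correct and takes the same approach as the paper. The paper uses the typing constraints to get no classical $\drpn$ nodes and exactly one output conclusion in every box (\Cref{rmk:imell-output-conclusion}), then applies \Cref{lemma:connected-components}.\ref{p:connected-components-io-connectivity} at each level. You additionally spell out the polarized labelling of \IMELL arcs and the induction on depth, which the paper leaves implicit.
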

 
 
  The example in \Cref{subfig:fleury-retore-intuitionistic} shows that the converse of \Cref{prop:accw-necessary} fails even if we~restrict to \psst{\IMELL}. Hence, there is no analogue of \Cref{cor:classical-polarization}~for~\IMELL.
 
\smallskip
 We now introduce \VMELL, a new fragment of \MELL with remarkable properties. 
 
 \begin{definition}
  \label{def:vmell}
  The fragment \VMELL of \MELL is defined by the following grammar:
  \begin{align*}
   O & \Coloneqq P \mid O \parr N \mid N \parr O & \text{(\emph{outputs})} && P & \Coloneqq X \mid \one \mid P \otimes P \mid \oc O \mid \oc N & \text{(\emph{positives})} \\[-.1em]
   I & \Coloneqq N \mid I \otimes P \mid P \otimes I & \text{(\emph{inputs})} && N & \Coloneqq X \mid \bot \mid N \parr N \mid \wn I \mid \wn P & \text{(\emph{negatives})}
  \end{align*}
 \end{definition}
 
 \begin{remark}
  \label{rmk:vmell-semantics}
  Every atomic \formula{\VMELL} is both positive and negative. By 
  fixing a polarity for every atom (with $X$ and $X^\bot$ having opposite polarities), the intersection of \VMELL and \IMELL is essentially the counterpart in \MELL of Ehrhard's half-polarized typing system for call-by-push-value \cite{ehrhard16cbpv,ehrhard19probabilistic}, and \VMELL inherits the semantic interpretation of~that~system, where positive formulae carry a structure of $\oc$-coalgebra: they are ``equipped~with structural~rules''.
 \end{remark}
 
 \begin{remark}
  \label{rmk:vmell}
  $\LLpol$ is a fragment of \VMELL. Proofs of \IMELL are embedded in \VMELL~via the translation (defined via linear negation on the input formulae,~details~in~\Cref{subsec:embedding-imell-vmell}):
  \begin{align*}
   \VMELLt{X} & \coloneqq \oc X & \VMELLt{(O_1 \otimes O_2)} & \coloneqq \oc (\VMELLt{O_1} \otimes \VMELLt{O_2}) & \VMELLt{(I \parr O)} & \coloneqq \oc (\VMELLt{I} \parr \VMELLt{O}) \\[-0.1em]
   \VMELLt{\one} & \coloneqq \oc \one & \VMELLt{(O \parr I)} & \coloneqq \oc (\VMELLt{O} \parr \VMELLt{I}) & \VMELLt{(\oc O)} & \coloneqq \VMELLt{O}
  \end{align*}
 \end{remark}
 
 Note that $\psf{\VMELL} \subsetneq \ppset$.
 In the absence of axiom cuts and 
 $\rtp$ cuts, 
 every \pst{\VMELL} is \poutter. 
 So, by \Cref{thm:poutter-accio-sequential} (and \Cref{thm:accw-cut-elim} for \Cref{itm:normal-form-sequentializable}):
 
 \begin{corollary}
  \label{cor:vmell}
  Let $R \in \psf{\VMELL}$.
  \begin{enumerate}[i.]
   \item \label{itm:vmell}
    If $R$ is \rn{\rmu}, then we have that: $R \models \ACCw$ iff $R \models \ACCio$ iff $R$ is sequentializable.
   \item \label{itm:normal-form-sequentializable}
    If $R \models \ACCw$, then the\footnotemark
    \footnotetext{
    	In $\psset$, reduction $\tor{\rmu}$ is strongly normalizing and confluent \cite{llhandbook}, hence the $\rmu$-normal form is unique.}  
    \rnf{\rmu} of $R$ is sequentializable.
  \end{enumerate}
 \end{corollary}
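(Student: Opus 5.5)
The plan is to reduce both items to \Cref{thm:poutter-accio-sequential} by exhibiting, for every $R \in \psf{\VMELL}$ that is \rn{\rmu}, a polarized labelling making $R$ a \poutter \ps. I would define the labelling on types. Every arc typed by an output or positive formula ($O$ or $P$) is labelled $\out$. Every arc typed by an input or negative formula ($I$ or $N$) is labelled $\inp$. Atoms are the only ambiguous case, being both positive and negative. For these I would use the $\axpn$ node: each axiom has conclusions $X, X^\perp$, so I label one $\out$ and the other $\inp$, and then propagate through cuts and the node in which each atom arc ends.

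The first step is to check the local constraints of \Cref{fig:polarized} node by node against the grammar of \Cref{def:vmell}.
\begin{itemize}
\item A $\otimes$ of two positives is a $\tout$ node; $I \otimes P$ and $P \otimes I$ give the two $\tin$ shapes.
\item A $\parr$ of two negatives is a $\pin$ node; $O \parr N$ and $N \parr O$ give the two $\pout$ shapes.
\item A $\oc$ node has principal port $\oc O$ or $\oc N$, which is positive, hence output. Its auxiliary ports are $\wn$-formulae, hence negative, hence input. Its box conclusion $O$ or $N$ is labelled accordingly inside the box.
\item A $\drpn$ node whose premise is $I$ is intuitionistic; one whose premise is $P$ is classical.
\item $\wn$ nodes have input premises and an input conclusion.
\item The constants $\one$ and $\bot$ are immediate.
\end{itemize}
The compatibility requirement at cuts, one premise input and one output, follows because $O^\perp$ is an input and $P^\perp$ is negative. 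To see this I would show by mutual induction on the grammar that linear negation swaps $O \leftrightarrow I$ and $P \leftrightarrow N$. Since boxes are \psst{\VMELL}, the labelling extends recursively to all depths.

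The second step is \poutter-ness. Consider a $\pout$ node with conclusion of type $O \parr N$ or $N \parr O$. This formula is an output but not a positive, so it can only appear as a premise of:
\begin{itemize}
\item another $\parr$ node building an output, which is then a $\pout$ node;
\item a $\bullet$ node;
\item a $\cutpn$ node;
\item an $\axpn$ node, impossible since only atoms are axiom conclusions in the typed setting, or rather the axiom is the tail, not the head.
\end{itemize}
It cannot be the premise of a $\otimes$ node, since both $\otimes$ shapes require $P$ or $I$ arguments. It cannot be the premise of a $\drpn$ node, since the dereliction premise is $I$ or $P$. It cannot be a box conclusion, since there it would be the premise of the box's $\bullet$ node, which is allowed.

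So the only problematic case is a cut. If the cut is against an axiom, it is an axiom cut. Otherwise the dual of $O \parr N$ is $O^\perp \otimes N^\perp$, which is the conclusion of a $\otimes$ node, making the cut a $\rtp$ cut. Both kinds are excluded by \rn{\rmu}-normality. This case analysis is where I expect the main care to be needed, since it must be carried out inside every box as well. A related concern is how the ambiguous atom labelling interacts with $\pout$ conclusions. That concern does not arise, because a $\pout$ conclusion is never atomic.

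The last step is to conclude both items.
\begin{itemize}
\item Item \ref{itm:vmell} follows from \Cref{thm:poutter-accio-sequential} and \Cref{rmk:sequential}, which identifies sequential with sequentializable.
\item For item \ref{itm:normal-form-sequentializable}, let $R'$ be the \rnf{\rmu} of $R$, which exists by the footnote. It lies in $\psf{\VMELL}$ by \Cref{rmk:CutElimPrevervesTypes}, and $R' \models \ACCw$ by iterating \Cref{thm:accw-cut-elim}, since $\tor{\rmu} \subseteq \tocut$. Item \ref{itm:vmell} then applies to $R'$.
\end{itemize}
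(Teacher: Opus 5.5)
Your route is the paper's own. There the corollary rests only on the preceding sentence: every \pst{\VMELL} is polarized, and it is \poutter in the absence of axiom and $\rtp$ cuts. The paper then applies \Cref{thm:poutter-accio-sequential}, and for item ii uses \Cref{thm:accw-cut-elim}, \Cref{rmk:CutElimPrevervesTypes} and uniqueness of the $\rmu$-normal form, exactly as you do.

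\textbf{The gap: your first step fails.} You treat it as routine, as does the paper's bare assertion $\psf{\VMELL}\subsetneq\ppset$. In \Cref{def:vmell} every atom is both positive and negative, so atoms are \emph{not} the only ambiguous types: $X\otimes Y\in P\cap I$ and $X\parr Y\in N\cap O$.

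Worse, a consistent labelling need not exist. Take the \VMELL proof of $\vdash \one,\bot\otimes X, X^\perp\otimes Y^\perp,\bot\otimes Y,\one$. It is built as a $\otimes$ of $\vdash\one,\bot$ with $\vdash X,X^\perp$, symmetrically for $Y$, then a final $\otimes$ on $X^\perp,Y^\perp$. Its desequentialization $R$ is cut-free and sequential, hence $R\models\ACCw$. Now follow the constraints of \Cref{fig:polarized}:
\begin{itemize}
\item the two $\bot$ conclusions must be input;
\item so the $\otimes$ nodes they are premises of force $X$ and $Y$ to be output;
\item so the axioms force $X^\perp$ and $Y^\perp$ to be input;
\item so the last $\otimes$ node would have two input premises, which is forbidden.
\end{itemize}
Thus your propagation of axiom labels gets stuck, $R\notin\ppset$, and ``$R\models\ACCio$'' is not even defined.

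Your \poutter analysis has the same flaw. It relies on a $\pout$ conclusion having a type in $O$ but not in $P\cup I$. Yet in the cut-free \pst{\VMELL} with conclusion $(X\parr X^\perp)\otimes\one$ (an $I\otimes P$ formula), the axiom forces the $\parr$ node to be a $\pout$ node whose conclusion is a premise of a $\otimes$ node.

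\textbf{Repair.} Fix a polarity on atoms, as \Cref{rmk:vmell-semantics} suggests: let the atom clause of $P$ range over positive atoms and that of $N$ over their duals. A straightforward induction then gives $P\cap I=\emptyset=O\cap N$, hence $O\cap I=\emptyset$. So labelling an arc $\out$ iff its type lies in $O$ is well-defined, and your node-by-node check against \Cref{fig:polarized} goes through.

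Your \poutter case analysis also becomes correct. The conclusion type of a $\pout$ node lies in $O\setminus P$. This set is disjoint from $P\cup I$ (the premise types of $\otimes$ and $\drpn$ nodes) and from $N$. So such a conclusion can only be a premise of a $\pout$ node, a $\bullet$ node, or a cut whose other premise comes from an axiom or a $\otimes$ node. Both kinds of cut are excluded by $\rmu$-normality. With this convention the rest of your argument, including item ii, goes through.

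A minor slip: axioms are not restricted to atoms. As you note, this is irrelevant, since axiom nodes have no premises.
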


  The \rn{\rmu} hypothesis is crucial in \Cref{cor:vmell}.\ref{itm:vmell}: omitting it~makes the converse of \Cref{prop:accw-necessary} fail (\Cref{subfig:fleury-retore-cut}). 
  Also, the \rnf{\rmu} of $R \in \psf{\VMELL}$ might be non-sequentializable if $\ACCw$ is replaced by $\ACCio$ in \Cref{cor:vmell}.\ref{itm:normal-form-sequentializable} (\Cref{subfig:fleury-retore-cycle}). Indeed, the~same example shows that $\ACio$ (and thus $\ACCio$) is not stable under \cutelim (in \VMELL). Hence, contrary to what happens in $\LLpol$ (\Cref{cor:classical-polarization}), the properties~$\ACCw$~and~$\ACCio$ are not equivalent in \VMELL in the presence of axiom cuts or 
  $\rtp$ cuts.
 
 \begin{remark}
 	\label{rmk:vmell-normal}
 By \Cref{cor:vmell}.\ref{itm:vmell}, $\ACCw$ is a correctness criterion for \VMELL \cf \pss, and 
 this is already an interesting result in the theory of \pns, similar to \cite{DBLP:journals/apal/AbrusciR99} in the non-commutative case and to \cite{laurent2004slicing} in presence of the additives. 
 \Cref{cor:vmell}.\ref{itm:vmell} is a stronger result because it applies to \rn{\rmu} \pss, possibly with cuts. Furthermore, by \Cref{rmk:complexity}, for every \rn{\rmu} $R \in \psf{\VMELL}$, checking if $R$ is sequentializable is linear.

 \end{remark}
 

%% file: Sections/Bang.tex
%
%
%
	

We now introduce the fragments $\bangMELL$, $\CbN$, $\CbV$, of computational interest as they correspond to the \bc and the call-by-name and call-by-value \lci, respectively.
%
	The three fragments of \MELL are defined by the grammars below.
	\begin{align*}
		\bangMELL &&&& O & \Coloneqq X \mid \wn I \parr O \mid \oc O & \text{(\emph{outputs})} &&&& I & \Coloneqq X^\bot \mid \oc O \otimes I \mid \wn I & \text{(\emph{inputs})}
		\\[-,2em]
		\CbN &&&& O & \Coloneqq X \mid \wn I \parr O & \text{(\emph{outputs})} &&&& I & \Coloneqq X^\bot \mid \oc O \otimes I  & \text{(\emph{inputs})}
		\\[-,2em]
		\CbV &&&& O & \Coloneqq X \mid \wn I \parr \oc O & \text{(\emph{outputs})} &&&& I & \Coloneqq X^\bot \mid \oc O \otimes \wn I & \text{(\emph{inputs})}
	\end{align*}
	
	We set $\linArrow{O}{O'} \coloneq O^\bot \parr O'$ for every output \formulae{\bangMELL} $O, O'$.

One has $\CbN \cup \CbV \!\subsetneq \bangMELL \subsetneq \VMELL \cap \IMELL$, with $\CbN \!\subsetneq \LLpol \cap \IMELL$ (in that an output in $\CbN$ is both output in \IMELL and negative in $\LLpol$, and dually on inputs/positives), but $\CbV \!\not\subseteq \LLpol$. 
A provable \sequent{\bangMELL} (resp. $\CbN$\!, $\CbV$) has exactly one output \formula{\bangMELL (resp. $\CbN$\!, $\CbV$)}.
The negation of an output 
is input, and vice versa, for each fragment. 
We shall see how to relate those~fragments~to~their~calculi.

We also define the set \STLC of \emph{simple types} by the grammar: $\sigma, \tau \Coloneqq X \mid \tau \Rightarrow \sigma$.
	
	\subparagraph*{The simply typed bang and \lci}
	Given a countably infinite set $\vset$ of variables ranging
	over $x,y,z,\dots$, the set $\tbcset$ of \emph{\bterms} of the simply typed \bc and the set
	$\tlset$ of \emph{\lterms} of the simply typed $\lambda$-calculus (in Church-style presentation) are defined by:
	\begin{align*}
	\tbcset &&&\ni& s,t &\Coloneqq x \mid \bcabs{x}{\oc O}{t} \mid \bcapp{s}{t} \mid \bcder{t} \mid \bcbang{t} & \textup{(where $O$ is any \formula{\bangMELL})}\\
	\tlset &&&\ni & M, N &\Coloneqq x \mid \labs{x}{\tau}{M} \mid \lapp{M}{N}  & \textup{(where $\tau$ is any \formula{\STLC})}
	\end{align*}
	A \emph{term} is either a \bterm or a \lt.
	The only binder is $\lambda x$, and terms are identified up to $\alpha$-equivalence. 
	We write $\fv{t}$ for the set of free variables of the term $t$, and  $t\bcsub{s}{x}$ for the term obtained from $t$ by the capture-avoiding substitution of $s$ for the free occurrences~of~$x$~in~$t$. 
	
	Reduction in the \emph{simply typed \bc} is
	$\mathop{\tor{\br}} \coloneqq \mathop{\tor{\betab}} \cup \tor{\brder}$
	where $\tor{\betab}$ (resp.~$\tor{\brder}$) is the contextual closure of the rule $\bcapp{\bcabs{x}{\oc O}{t}}{\bcbang{s}} \mapstor{\betab} t \bcsub{s}{x}$ (resp.~$\bcder{\bcbang{t}} \mapstor{\brder} t$).
	The $\oc$ construction on \bterms reflects the $\oc$ modality in linear logic
	denoting replicable resources in the syntax:~in~the $\betab$-rule, the
	argument of the function must be replicable to allow the $\betab$-redex to~be~fired.
	
	Reduction in the \emph{call-by-name} (\emph{\CbNtext}) \emph{simply typed $\lambda$-calculus} is
	$\mathop{\tob}$, the contextual closure of the rule $\lapp{\labs{x}{\tau}{M}}{N} \mapstor{\beta} M \lsub{N}{x}$.
	Reduction in the \emph{call-by-value} (\emph{\CbVtext}) \emph{simply typed $\lambda$-calculus} is
	$\mathop{\tobv}$, the contextual closure of the rule $\lapp{\labs{x}{\tau}{M}}{V} \mapstor{\betav} M \lsub{V}{x}$, where $V$ is a \emph{value}, i.e.,  variable or abstraction.
	We refer to \CbNtext and \CbVtext not as evaluation~strategies~but~as calculi, with their own equational theories generated by $\tob$ and $\tobv$ respectively.
	
	In the simply typed \bc, a \emph{\bcontext}\index{context} is a partial function with finite domain from variables to \formulaeWithName{of the form $\oc O$}{\bangMELL}. 
	A \bcontext $\Gamma$ is denoted by $\oftype{x_1\!}{\!O_1}, \dots, \oftype{x_n\!}{\!O_n}$ if the domain of $\Gamma$ is $\dom(\Gamma) = \{x_1, \dots, x_n\}$ and $x_i \neq x_j$ and $\Gamma(x_i) = O_i$ for all $i, j \in \{1, \dots, n\}$ with $i \neq j$. 
	A \emph{\bjudgm} is a triple $\ljud{\Gamma}{t}{O}$, where $\Gamma$ is a \bcontext, $t \in \tbcset$ and $O$ is an output \formula{\bangMELL}. 
	A \emph{\bderivation} 
	with \emph{conclusion} $\bjud{\Gamma}{t}{O}$, denoted by $\bderive{\pi}{\Gamma}{t}{O}$,~is~a finite tree of \bjudgms with root $\bjud{\Gamma}{t}{O}$ built up from the rules in \Cref{fig:derivations-bang-judgements} (on top).
	A \bcontext $\Gamma$ and $t \in \tbcset$ form a \emph{typed \bterm} $(\Gamma, t)$ if $\bderive{\pi}{\Gamma}{t}{O}$ for some $\pi$, $O$.

	\begin{remark}\label{rmk:unique-type}
	\begin{enumerate}[i.]
		\item \label{p:unique-type}
		(Uniqueness) If $\bderive{\pi}{\Gamma}{t}{O}$ and $\bderive{\pi'}{\Gamma}{t}{O'}$, then $O = O'$ and $\pi = \pi'$\!.
		
		\item (Free variables) If $\bderive{\pi}{\Gamma}{t}{O}$ then $\fv{t} \subseteq \dom{(\Gamma)}$.
		
		\item (Subject reduction) If $t \tor{\br} s$ and $\bderive{\pi}{\Gamma}{t}{O}$, then there is $\bderive{\pi'}{\Gamma}{s}{O}$.
	\end{enumerate}
	\end{remark}
	
	\begin{figure}[!tb]
		\vspace*{-\baselineskip}
		\small
		\centering
		\begin{prooftree}[label separation=0.2em]
			\hypo{\infcstrut}
			\infer1[\scriptsize$\vset$]{\infcstrut \Gamma, x : \oc O \vdash x : O}
		\end{prooftree} \
		\begin{prooftree}[label separation=0.2em]
			\hypo
			{\infcstrut \Gamma, x : \oc O \vdash t : O'}
			\infer1[\scriptsize\btl{}]{\infcstrut \Gamma \vdash \bcabs{x}{\oc O}{t} : \oc O \multimap O'}
		\end{prooftree} \
		\begin{prooftree}[label separation=0.2em, separation=.8em]
			\hypo
			{\infcstrut \Gamma \vdash t : \oc O \multimap O'}
			\hypo
			{\infcstrut\Gamma \vdash s : \oc O}
			\infer2[\scriptsize\bta{}]{\infcstrut \Gamma \vdash \langle t \rangle s : O'}
		\end{prooftree} \
		\begin{prooftree}[label separation=0.2em]
			\hypo
			{\infcstrut \Gamma \vdash t : O}
			\infer1[\scriptsize$\oc$]{\infcstrut \Gamma \vdash t^\oc : \oc O}
		\end{prooftree} \
		\begin{prooftree}[label separation=0.2em]
			\hypo
			{\infcstrut \Gamma \vdash t : \oc O}
			\infer1[\scriptsize$\brder$]{\infcstrut \Gamma \vdash \bcder{t} : O}
		\end{prooftree}
	\\[0.4em]
	\begin{prooftree}[label separation=0.2em]
		\hypo{\infcstrut}
		\infer1[\scriptsize$\vset$]{\infcstrut \Gamma, x : \tau \vdash x : \tau}
	\end{prooftree} \quad
	\begin{prooftree}[label separation=0.2em]
		\hypo
		{\infcstrut \Gamma, x : \tau \vdash M : \sigma}
		\infer1[\scriptsize\btl{}]{\infcstrut \Gamma \vdash \bcabs{x}{\tau}{M} : \tau \Rightarrow \sigma}
	\end{prooftree} \quad
	\begin{prooftree}[label separation=0.2em, separation=.8em]
		\hypo
		{\infcstrut \Gamma \vdash M : \tau \Rightarrow \sigma}
		\hypo
		{\infcstrut\Gamma \vdash N : \tau}
		\infer2[\scriptsize\bta{}]{\infcstrut \Gamma \vdash \lapp{M}{N} : \sigma}
	\end{prooftree}
		\caption{\label{fig:derivations-bang-judgements}
			\label{fig:derivations-simple-judgements}
			Rules of the simply typed \bc (on top) and simply typed $\lambda$-calculus (bottom).}
	\end{figure}
	
The definitions of \emph{\lcontext}, \emph{\ljudgm}, \emph{\lderivation}, \emph{typed \lterm} for the simply typed \lci are easily adapted from the corresponding ones for the simply typed \bc: 
the image of a \lcontext are \formulae{\STLC}; a \ljudgm is made of a \lcontext, a \lterm and a \formula{STLC}; a \lderivation is built from the rules in \Cref{fig:derivations-simple-judgements} (bottom). 
The analogue of \Cref{rmk:unique-type} holds in the simply typed \lci \cite{barendregt1992lambda}.
Note that the \CbNtext and~\CbVtext~simply~typed \lci only differ in their reductions, not in their terms or typing~system.
	
\subparagraph*{\Bpns}
We translate typed \bterms into particular \pnst{\bangMELL}, called~\bpns (\Cref{def:bang-pns}).
First, we refine the way we handle \pss.

Akin to \cite{DiCosmoKesner97,Tranquilli11}, 
$\town$ is the reduction on $\psset$ generated by any of the root steps in \Cref{fig:conv} (note that they do not refer to formulae).
It is easy to check that $\town$ is strongly normalizing and confluent: 
{the} $\wn$-normal form of a \ps is the canonical form with respect to associativity of $\ctpn$ nodes, neutrality of $\wkpn$ nodes 
and their positions relatively to box borders.

	\begin{figure}
		\vspace*{-\baselineskip}
		\centering
		\begin{subfigure}{0.2375\textwidth}
			\centering
			\input{Figures/why-not-flattening.tikz}
			\caption{Flattening consecutive $\wn$ nodes.}
		\end{subfigure}
        \quad
		\begin{subfigure}{0.1375\textwidth}
			\centering
			\input{Figures/why-not-unary.tikz}
			\caption{\label{fig:unary-conversion}Erasing \unpn node (possibly $n = n'$).}
		\end{subfigure}
        \quad
		\begin{subfigure}{0.525\textwidth}
			\centering
			\input{Figures/why-not-box.tikz}
%
			\caption{\label{subfig:pull}Pulling a $\wn$ node out of a box.}
		\end{subfigure}
	\caption{\label{fig:conv}Root steps of 
		reduction $\town \mathop{\coloneqq} \tor{\rfl} \cup \tor{\rer} \cup \tor{\rpu}$ on $\psset$.
		We define 
		$\tor{\rmuw} \mathop{\coloneq} \tor{\rmu} \cup \town$,
		${\tor{\rew}} \mathop{\coloneq} {\tor{\re}} \cup {\town}$, 
		$\tor{\ree} \mathop{\coloneqq} \tor{\re} \cup \tor{\rdr}$,
		${\tocutwn} \mathop{\coloneqq} {\tocut} \cup {\town}$ 
		($
		\tor{\rmu}, \tor{\re}, \tor{\rdr}, \tor{\cutpn}$ are defined in~Fig.~\ref{fig:cutelim}).
		}
	\end{figure}

\begin{definition}
	A \pntWithName{$R$}{\bangMELL} (see \Cref{def:proof-net}) is:
	\begin{itemize}
		\item \emph{\nam}\index{named@\nam} (resp.~\emph{\injnam}\index{injectively named@\injnam}) if it is equipped with a function (resp.~injective function) $\nmfun{R}$\index{νR@$\nmfun{R}$} from the set of input conclusions of ${R}$ to $\vset$;
		\item \emph{\gua}\index{guarded@\gua} if every input premise of a $\parr$, $\wn$ or $\bullet$ node 
		is a conclusion of a $\oc$, $\drpn$ or~$\wn$~node;
		\item \emph{bang}\index{bang proof-net@\bpn} if it is \gua and \injnam;
		the set of \bpns is denoted by $\bpnset$\index{Rb@$\bpnset$}.
	\end{itemize} 
\end{definition}

In figures, the variable $\nmfun{R}(a)$ is omitted or written below the input conclusion $a$.
We also assume that the set $\vset$ of variables is totally ordered; 
the order of the input conclusions of a \nam \pn $R$ 
is induced by their image via $\nmfun{R}$.
It is easy to check that guardedness~and (injective) naming are preserved under~$\tocutwn$.

\begin{remark}
	\label{rmk:bangMELL-output-conclusion}
	In  a \pntWithName{$R$}{\bangMELL} 
	$\card{\out(\undgs{R})} = 1$ by Cor.~\ref{cor:imell}, Prop.~\ref{prop:accw-necessary} and Rmk.~\ref{rmk:sequential} as $\bangMELL \subseteq \IMELL$.
	We assume the output conclusion of $R$ is maximal in the total order of the conclusions of $R$.
	If 
	$R$ is \gua, then the type of any of its input conclusions has shape~$\wn I$.
\end{remark}

By Remarks \ref{rmk:unique-type}.\ref{p:unique-type} and \ref{rmk:bangMELL-output-conclusion}, it is natural to translate \emph{typed} \bterms into \emph{bang} \pns.

 \begin{definition}
	\label{def:bang-pns}
	With every typed \bterm $(\Gamma, t)$, where $\Gamma = \oftype{x_1}{\oc O_1}, \dots, \oftype{x_k}{\oc O_k}$, we associate $\WH{(\Gamma, t)} \in \bpnset$ with exactly $k + 1$ conclusions $a_1, \dots, a_{k+1}$ according to \Cref{fig:bang-pns}, such that:
	\begin{itemize}
		\item $a_{k+1}$ is the greatest conclusion, its type is the only derivable output formula $O$ for $(\Gamma, t)$;
		\item for every $i \in \{1, \dots, k\}$, the type of $a_i$ is $\wn O_i^\bot$ and $\nmfun{R}(a_i) = x_i$.
	\end{itemize} 
\end{definition}

The definition of $\WH{(\Gamma, t)}$ is by induction on $t \in \tbcset$: it is easy to check that $\WH{(\Gamma, t)}$ is actually a \bpn (and so there is a \seqcptWithNameAndConclusion{$\pi$}{\bangMELL}{$\wn O_1^\bot, \dots, \wn O_k^\bot, O$}); 
the type $O$ of its output conclusion is uniquely determined by~\Cref{rmk:unique-type}.

\begin{figure}[!th]
\vspace*{-1.5\baselineskip}
 \centering
 \subcaptionbox{\label{fig:bang-pns-app}$\WH{(\Gamma, \bcapp{t_1}{t_2})}$ is the \rnf{\rwn} of the \pn in the picture.}[25.125em][c]{
  \centering
  \input{Figures/bang-pn-application.tikz}
 }
 \quad
 \subcaptionbox{$\WH{(\Gamma, \bcbang{s})}$ is the \rnf{\wn} of the \pn in the picture.}[12em][c]{
 	\centering
 	\input{Figures/bang-pn-bang.tikz}
 }
 \vskip -.7em
 \subcaptionbox{\label{fig:bang-pns-var}$\WH{((\Gamma, \oftype{x}{\oc O}), x)}$.}{
 	\centering
 	\input{Figures/bang-pn-variable.tikz}
 }
 \qquad
 \subcaptionbox{$\WH{(\Gamma, \bcabs{x}{\oc O}{s})}$.}{
 	\centering
 	\input{Figures/bang-pn-abstraction.tikz}
 }
 \qquad
 \subcaptionbox{$\WH{(\Gamma, \bcder{s})}$.}{
 	\centering
 	\input{Figures/bang-pn-dereliction.tikz}
 }
 \caption{\label{fig:bang-pns}The translation $\WH{(\Gamma,t)}$ of the typed \bterm $(\Gamma,t)$ of the \bc into \pns.}
\end{figure}

\begin{remark}
	\label{rmk:wn-normal}
	For every typed \bterm $(\Gamma,t)$, its translation $\WH{(\Gamma,t)}$ is \rn{\rew}.
	Moreover, $t$ is \rn{\betab} (resp.~\rn{\brder}) iff $\WH{(\Gamma,t)}$ is \rn{\rtp} (resp.~\rn{\rdr}).
	However, $\br$-normality (in $\tbcset$) does not imply $\cutwn$-normality (in $\bpnset$) because of $\tor{\rax}$: take $t = \bcapp{y}{x}$.
\end{remark}

From a \emph{dynamic} point of view, reduction $\tor{\br}$ in the \bc can be simulated by reduction $\tor{\cutwn}$ in the bang \pnst{\bangMELL}, via the translation $\WH{(\cdot)}$ (\Cref{thm:simulation}). 



\begin{lemma}
	\label{lemma:substitution}
	Let $((\Gamma, \oftype{x}{\oc O}),t), (\Gamma, \bcbang{s})$ be typed \bterms. 
	If $\WH{((\Gamma, \oftype{x}{\oc O}),t)}\esub{\WH{(\Gamma,\bcbang{s})}}{x}$ is the bang \pnt{\bangMELL} as in \Cref{fig:substitution},
	$\WH{((\Gamma, \oftype{x}{\oc O}),t)}\esub{\WH{(\Gamma, \bcbang{s})}}{x} \tor{\ree}^+ \tor{\rax}^* \tor{\rwn}^* \WH{(\Gamma,t\bcsub{s}{x})}$.
    
    \begin{figure}[!th]
    \vspace*{-.5\baselineskip}
     \centering
     \input{Figures/substitution.tikz}
	     \caption{\label{fig:substitution}The \bpn involved in \Cref{lemma:substitution}, with $\Gamma = \oftype{x_1}{\oc O_1}, \dots, \oftype{x_k}{\oc O_k}$.}
    \end{figure}
\end{lemma}

\begin{proof}[Proof (sketch)]
	By induction on $t \in \tbcset$. 
	Some $\ltor{\rwn} \cup \town$ steps are needed to accommodate the configurations of $\wn$ nodes in $\WH{((\Gamma, \oftype{x}{\oc O}),t)}\esub{\WH{(\Gamma, \bcbang{s})}}{x}$ and apply the induction hypothesis to some sub-\bpns. Such $\ltor{\rwn} \cup \town$ steps can then be postponed to the end of the reduction sequence, and eventually transformed into $\town$ steps: it is crucial here that $\WH{(\Gamma,t\bcsub{s}{x})}$~is \rn{\rwn} (\Cref{rmk:wn-normal}) and $\town$ is confluent.
\end{proof}

\begin{theorem}
	\label{thm:simulation}
	Let $(\Gamma,t), (\Gamma, s)$ be typed \bterms.
	If $t \tor{\br} s$ then $\WH{(\Gamma,t)} \tocutwn^+ \WH{(\Gamma, s)}$.
	Indeed:
if $t \tor{\betab} s$ then $\WH{(\Gamma, t)} \tor{\rtp} \tor{\ree}^+ \tor{\rax}^* \tor{\rwn}^* \WH{(\Gamma, s)}$;
		and if $t \tor{\brder} s$ then $\WH{(\Gamma, t)} \tor{\rdr} \tor{\rax} \WH{(\Gamma, s)}$.
\end{theorem}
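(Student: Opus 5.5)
We will prove \Cref{thm:simulation} by a case analysis on the position of the redex in $t$, i.e.\ by induction on the context closing the root step. The general step is a compositionality property of the translation $\WH{(\cdot)}$: if $t = C\langle u\rangle$ and $u \tor{\br} u'$ with $(\Gamma',u),(\Gamma',u')$ typed, then $\WH{(\Gamma,t)}$ is obtained from $\WH{(\Gamma',u)}$ by plugging it into the same fixed surrounding \pn (possibly inside boxes, when the hole lies under a $\bcbang{\cdot}$), followed by $\town$-normalization (the \rnf{\rwn} in the application and bang cases of \Cref{fig:bang-pns}). Since $\tocut$ and $\town$ steps are local and can be performed inside boxes, and since $\town$ is confluent and strongly normalizing with $\WH{(\cdot)}$ always \rn{\rew} (\Cref{rmk:wn-normal}), a reduction $\WH{(\Gamma',u)} \tor{\mathsf{r}} \cdots \WH{(\Gamma',u')}$ lifts to $\WH{(\Gamma,t)} \tor{\mathsf{r}} \cdots \tor{\rwn}^* \WH{(\Gamma,s)}$. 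Here we reorder $\town$ steps to the end, exactly as in the proof sketch of \Cref{lemma:substitution}, and use that the $\town$-normal form is unique. By subject reduction (\Cref{rmk:unique-type}) the contexts and types match, so this reduces everything to the root cases.

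For a root $\tor{\brder}$ step $\bcder{\bcbang{u}} \mapstor{\brder} u$, the translation is (up to the final $\town$-normalization) a box containing $\WH{(\Gamma,u)}$ whose \pport is cut against a $\drpn$ node, with an axiom linking the dereliction's conclusion type to the output. One $\tor{\rdr}$ step opens the box. The $\wn$ nodes that collected the auxiliary ports now directly collect the contexts' conclusions, which is still $\town$-normal by construction of the bang translation. One $\tor{\rax}$ step then removes the axiom introduced by the dereliction translation, yielding $\WH{(\Gamma,u)}$. I will check this against \Cref{fig:bang-pns} to confirm that exactly one axiom is involved.

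For a root $\tor{\betab}$ step $\bcapp{\bcabs{x}{\oc O}{u}}{\bcbang{v}} \mapstor{\betab} u\bcsub{v}{x}$, the application translation cuts the $\parr$ node produced by the abstraction against a $\otimes$ node whose left premise is the box of $\bcbang{v}$. One $\tor{\rtp}$ step yields two cuts. The first is between the $\wn$ conclusion named $x$ in $\WH{((\Gamma,x{:}\oc O),u)}$ and the \pport of the box of $\WH{(\Gamma,\bcbang{v})}$. The second is between the output of $\WH{(\Gamma,u)}$ and an axiom, disposable by an $\tor{\rax}$ step that can be absorbed into the $\tor{\rax}^*$ part. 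After sharing the contexts through $\ctpn$ nodes (one must verify that the \rnf{\rwn} produced by the application translation already presents this configuration), the result is precisely $\WH{((\Gamma,x{:}\oc O),u)}\esub{\WH{(\Gamma,\bcbang{v})}}{x}$ of \Cref{fig:substitution}, possibly up to $\town$ conversions. \Cref{lemma:substitution} then gives $\tor{\ree}^+\tor{\rax}^*\tor{\rwn}^*\WH{(\Gamma,u\bcsub{v}{x})}$.

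The main obstacle is this last matching. The \pn after the $\tor{\rtp}$ step is only $\town$-convertible, not identical, to the configuration of \Cref{fig:substitution}. The leftover axiom step also has to be commuted past the $\tor{\ree}$ steps into the $\tor{\rax}^*$ block. I would handle both by postponement arguments: $\tor{\rax}$ steps on axioms not involved in exponential redexes commute with $\tor{\ree}$, and backward $\town$ steps are postponed and then eliminated by confluence, as in \Cref{lemma:substitution}.
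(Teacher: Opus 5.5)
Your proposal is correct and follows essentially the paper's proof. Both work with the root steps plus closure under contexts. For $\brder$ both use $\tor{\rdr}\tor{\rax}$. For $\betab$ both use a $\tor{\rtp}$ step and an $\tor{\rax}$ step that land on the net of \Cref{fig:substitution}, then apply \Cref{lemma:substitution}, and move that first $\tor{\rax}$ step into the $\tor{\rax}^*$ block because it overlaps no other step. The matching you single out as the main obstacle is in fact an identity, not just a $\town$-conversion. The $\tor{\rtp}$ and $\tor{\rax}$ steps leave untouched the $\wn$ nodes of the already $\rwn$-normal sharing produced by the application translation, so the paper applies \Cref{lemma:substitution} directly, with no backward $\town$ steps at this point.
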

\begin{proof}
	We prove it for the root steps of $\tor{\br}$, the other cases follow immediately from the inductive hypothesis.
	If $t = \bcapp{\bcabs{x}{\oc O}{t_1}}{\bcbang{t_2}} \mapsto_{\betab} t_1\lsub{t_2}{x} = s$ then $\WH{(\Gamma, \bcapp{\bcabs{x}{\oc O}{t_1}}{\bcbang{t_2}})} \tor{\rtp} \tor{\rax} \WH{((\Gamma, \oftype{x}{\oc O}),t_1)}\esub{\WH{(\Gamma,\bcbang{t_2})}}{x} \tor{\ree}^+ \tor{\rax}^* \tor{\rwn}^* \WH{(\Gamma,t_1\bcsub{t_2}{x})} = \WH{(\Gamma, s)}$ by \Cref{lemma:substitution}, and the $\tor{\rax}$ step after $\tor{\rtp}$ can be postponed because it does not overlap with the other steps. 
	If $t = \bcder{\bcbang{s}} \mapsto_{\brder} s$ then it is immediate to check that $\WH{(\Gamma, \bcder{\bcbang{s}})} \tor{\rdr} \tor{\rax} \WH{(\Gamma, s)}$.
\end{proof}

Akin to \cite{DiCosmoKesner97,Tranquilli11} and differently from \cite{danos1990logique,DiCosmoGuerrini99,DiCosmoKesnerPolonowski03,laurent2003polarized,Guerrini04}, dealing with canonical forms (which requires $n$-ary $\wn$ nodes and reduction $\town$) is crucial to make \Cref{lemma:substitution,thm:simulation} hold \emph{without} invoking modulo equivalence. 
As rewriting modulo some equivalence is often convoluted, our simulation result is a main achievement showing that our \bpns are a fine-tuned (logical) setting to study the properties of the \bc.
If we had dropped the $\wn$-normal condition from \Cref{fig:bang-pns-app} in the definition of translation $\WH{(\cdot)}$, the order in which different occurrences of the same variable are contracted via $\wn$ nodes would matter in $\WH{(\cdot)}$, thus \Cref{lemma:substitution} (and so \Cref{thm:simulation}) would fail because $\WH{((\Gamma, \oftype{z}{\oc X}), \bcapp{x}{\bcapp{x}{z}})}\esub{\WH{(\Gamma,\bcbang{(\bcapp{x}{y})})}}{z} \not\tocutwn^* \WH{(\Gamma, \bcapp{x}{\bcapp{x}{\bcapp{x}{y}}})}$ for any 
\bcontext $\Gamma$.


From a \emph{static} point of view, we aim to characterize the \bpns corresponding to the translation of some typed \bterm. 
In the presence of axiom cuts, one easily finds a \bpn $R$ such that, for every typed \bterm $(\Gamma, t)$, $R \neq \WH{(\Gamma, t)}$: take the \bpn in \Cref{fig:bang-pns-var}, and add an axiom cut under its output conclusion. 
As $\bangMELL \subseteq \VMELL$, \Cref{cor:vmell} holds 
in $\psf{\bangMELL}$ and so it is natural to consider $\rmu$-normal \bpns (and $\rwn$-normal for canonicity). 
We then introduce a new translation of typed \bterms into \bpns: 
for every typed \bterm $(\Gamma, t)$, $\BL{(\Gamma, t)}$ is the \rnf{\rmuw} of $\WH{(\Gamma,t)}$. 
This is well defined because $\tor{\rmuw}$ is strongly normalizing and confluent. 
Detailed proofs~are~in~\Cref{subsec:characterization}.

\begin{theorem}[restate = characterization, name = Characterization of the \bc in \pns]
	\label{thm:characterization}
	Suppose that $R \in \bpnset$~has input conclusions $c_1, \dots, c_k$ of types $\wn O_1^\bot, \dots, \wn O_k^\bot$ respectively. 
	Then, $R$ is \rn{\rmuw} iff there is a typed \bterm $(\Gamma,t)$ such that $R = \BL{(\Gamma, t)}$, where $\Gamma = \oftype{\nmfun{R}(c_1)}{\oc O_1}, \dots, \oftype{\nmfun{R}(c_k)}{\oc O_k}$.
\end{theorem}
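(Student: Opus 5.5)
The left-to-right direction is the substantial one; the right-to-left direction is immediate. Indeed, if $R = \BL{(\Gamma,t)}$, then $R$ is by definition the \rnf{\rmuw} of $\WH{(\Gamma,t)}$, so it is \rn{\rmuw}. The naming condition on the input conclusions follows from \Cref{def:bang-pns}, because guardedness and injective naming are preserved under $\tocutwn$.

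For the converse, I would argue by induction on $\card{\ar(R)}$, reading off a term from the bottom of $R$. Since $R$ is \rn{\rmu} and belongs to $\psf{\bangMELL} \subseteq \psf{\VMELL}$, it is sequential by \Cref{cor:vmell}. By \Cref{rmk:bangMELL-output-conclusion} it has exactly one output conclusion $a$ of some type $O$, and I would case on the node $n$ whose conclusion is $a$.
\begin{itemize}
\item If $n$ is an $\axpn$ node, its other conclusion is input of type $O^\bot$. By guardedness that conclusion cannot be a conclusion of $R$ directly, so it is the premise of a $\drpn$ node. That $\drpn$ node feeds a $\wn$ node, or is itself terminal up to $\wn$-canonicity. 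This gives $R = \BL{((\Gamma,x:\oc O),x)}$, possibly with extra $\wkpn$ nodes for the remaining variables.
\item If $n$ is a $\parr$ node, then $O = \wn I \parr O'$. Its input premise is a $\wn$ node or a $\wkpn$ node (by guardedness and \rn{\rwn}-ness). Removing $n$ and exposing that $\wn$ conclusion with a fresh name $x$ gives a smaller \rn{\rmuw} \bpn. Induction provides $s$ with this net equal to $\BL{((\Gamma,x:\oc O''),s)}$, and then $R=\BL{(\Gamma,\bcabs{x}{\oc O''}{s})}$.
\item If $n$ is a $\oc$ node, its box is a \bpn after pulling the $\wn$ nodes out. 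Induction on the box gives $s$, and $R = \BL{(\Gamma,\bcbang{s})}$.
\end{itemize}

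The remaining shapes, where the output is not produced by a terminal node of this kind, arise from applications and derelictions. In $\BL{(\cdot)}$ these appear as the cuts of a $\rdr$ or $\rtp$ redex that survived, or, after $\rax$-elimination, as a $\otimes$ node or $\drpn$ node hanging off an axiom. Concretely, I would follow the output conclusion $a$ upward through its axiom. Its dual end lies on the left premise of a $\otimes$ node (an application head), or is the premise of a $\drpn$ node (a $\bcder{}$ or a variable). Sequentiality, via the \ioter splitting used in the proof of \Cref{prop:poutter-accio-sequential}, lets me split $R$ at that $\otimes$ node or $\drpn$ node into sub-nets satisfying $\ACCw$. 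These sub-nets correspond to the function part (whose output is the $\linArrow{}{}$ formula) and the argument part (a $\oc$ box or another net of type $\oc O$). I would check that each sub-net is again a \rn{\rmuw} \bpn once the shared $\wn$ nodes are split. Induction then gives terms $t_1,t_2$, and I would verify that $R=\BL{(\Gamma,\bcapp{t_1}{t_2})}$ (or $\BL{(\Gamma,\bcder{t_1})}$). This verification amounts to observing that \rmuw-normalization of $\WH{}$ commutes with the corresponding construction: it only fuses the axiom linking the sub-nets, and recontracts $\wn$ nodes canonically, since \town is confluent.

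I expect the main obstacle to be this application/dereliction case. The difficulty is identifying a canonical place to split when the output traverses a chain of axioms, $\otimes$ nodes and $\drpn$ nodes. For instance, $\bcapp{\bcapp{y}{x}}{z}$ produces an iterated spine whose head is a variable's $\drpn$ node. The splitting must also respect the fact that contraction $\wn$ nodes are shared between the two parts. My first attempt would be a direct spine decomposition: follow input-to-output arcs from the $\drpn$ node at the head, and prove via \Cref{lemma:ioter} that each $\otimes$ node along the spine is \ioter, so that the split is legitimate. Only if the inductive invariant on $\wn$ nodes becomes unmanageable would I fall back on the general sequentialization argument.
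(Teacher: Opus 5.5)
Your right-to-left direction and your abstraction case are fine and agree with the paper. The rest of the case analysis has a genuine gap: it is driven by the node carrying the output conclusion, and a \rn{\rmuw} \bpn is in general not determined by what lies along its output spine. Normalizing with $\tor{\rmuw}$ erases every $\rtp$ cut, so, contrary to what you write, none survives. It does, however, keep all exponential cuts ($\rdr$, $\rcm$, $\rst$). The structure these cuts encode, namely subterms $\bcapp{\bcabs{x}{\oc O}{t_0}}{t_1}$ whose $\lambda$ has been dissolved, need not be reachable from the output. Three concrete failures:
\begin{itemize}
\item $\BL{(\Gamma, \bcapp{\bcabs{x}{\oc O}{y}}{\bcbang{s}})}$ has its output on the axiom of $y$ under a $\drpn$ node. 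Yet it also contains a $\wkpn$ node cut against the box of $s$, possibly in a connected component not containing the output. So it is not the net of a variable plus weakenings, which contradicts your axiom case.
\item $\BL{(\Gamma, \bcapp{\bcabs{x}{\oc O}{\bcbang{x}}}{\bcbang{s}})}$ has its output on the principal port of a box whose auxiliary port is cut against the box of $s$. So it is not $\BL{(\Gamma, \bcbang{u})}$ for any $u$, which contradicts your $\oc$ case.
\item The $\drpn$ node at the other end of the output axiom may be an occurrence of a bound variable. It can be contracted by a $\wn$ node that is itself cut against a box, or that is the input premise (the right one, not the left) of a $\otimes$ node. It is then neither a free variable nor the head of some $\bcder{s}$.
\end{itemize}

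What is missing is a canonical decomposition point that is not searched for from the output. After the abstraction case, the paper proceeds as follows.
\begin{itemize}
\item It discards the terminal $\drpn$ nodes that do not sit on a $\oc$, $\drpn$ or $\wn$ node, passing to $\derrem{R}$.
\item It picks a $\drpn$ or $\wn$ node $n_0$ that is minimal for $\precio$.
\item It uses \Cref{lemma:bang-minimal} to split $R$ along the chain $n_0, n_1, \dots, n_k$ of $\tin$ nodes ending in a $\cutpn$ node or at a conclusion. This is read as $\bcapp{\bcabs{x}{\oc O}{t_0}}{t_1}$ (with $k = 1$ by normality) or as $\bcapp{\bcabs{x}{\oc O}{t_0}}{\bcapp{\cdots\bcapp{y}{t_k}\cdots}{t_1}}$.
\item A terminal contraction is undone by variable renaming (\Cref{lemma:why-not}).
\end{itemize}
Minimality is exactly what guarantees that the argument sub-nets contain no $\drpn$ or $\wn$ node at depth $0$, up to removable terminal derelictions. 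This is how the shared-contraction problem you anticipate is resolved. Only when $\derrem{R}$ has no such node left does the paper decompose from the output (\Cref{lemma:bang-initial}). Even then, a box on the output whose auxiliary port is cut or enters a $\otimes$ node again yields a term $\bcapp{\bcabs{x}{\oc O}{t_0}}{\cdots}$.

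Finally, your plan to show via \Cref{lemma:ioter} that the spine $\otimes$ nodes are \ioter cannot work as stated. Those nodes are not terminal, since their conclusions enter further $\otimes$ or $\drpn$ nodes. The \ioter splitting of \Cref{prop:poutter-accio-sequential} only becomes available once all terminal $\drpn$ and $\wn$ nodes have been peeled off, including the contractions shared by the function and its arguments.
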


\subparagraph*{\Bpns vs.~simply typed \lci}
Typed \lterms can be translated to \bpns via Girard's two translations of intuitionistic minimal logic into \MELL \cite{girard1987linear}: $\CBNp{(\cdot)}$ for \CbNtext and $\CBVp{(\cdot)}$ for \CbVtext.
We recall these translations from 
\formulae{\STLC} to \formulae{\MELL}:
\begin{align*}
	\CBNT{X} &\Coloneqq  X & (\CBNT{\sigma \Rightarrow \tau)}& \Coloneqq \oc \CBNT{\sigma} \multimap \CBNT{\tau}
	&&&
	\CBVT{X} &\Coloneqq  X & (\CBVT{\sigma \Rightarrow \tau)}& \Coloneqq \oc \CBVT{\sigma} \multimap \oc \CBVT{\tau}
\end{align*}

The image of $\CBNT{(\cdot)}$ (resp.~$\CBVT{(\cdot)}$) is actually $\CbN$ (resp.~$\CbV$).
We extend these definitions to \lcontexts: if $\Gamma = x_1 : \tau_1, \dots, x_k : \tau_k$ then $\CBNT{\Gamma} = {x_1 : \oc \CBNT{\tau_1}, \dots, x_k : \oc \CBNT{\tau_k}}$ and similarly for $\CBVT{\Gamma}$.
With every typed \lterm $(\Gamma, M)$ we can now associate the \bpns $\CBNp{(\Gamma, M)} \in \psf{\CbN}$ and $\CBVp{(\Gamma,M)} \in \psf{\CbV}$ according to the well-studied definitions in~\cite{danos1990logique,regnier1992lambda,laurent2003polarized,Guerrini04,Tranquilli11} for~$\CBNp{(\cdot)}$~and in~\cite{Accattoli15} for $\CBVp{(\cdot)}$; we recall these definitions in \Cref{subsec:factorization}, see \Cref{fig:cbn-pns,fig:cbv-pns}.

A \lterm can also be translated to a \bterm (in the \bc) according to the two translations studied in \cite{ehrhard2016the,GuerrieriManzonetto18} (inspired by Girard's ones), $\CBNt{(\cdot)}$ for \CbNtext and $\CBVt{(\cdot)}$ for \CbVtext: the former wraps any application argument with $\oc$ (because $\tob$ can fire any $\beta$-redex), the latter only wraps values with $\oc$ (because only values can be erased or duplicated according to $\tobv$).
\begin{align*}
	\CBNt{x} &\Coloneqq x
	&
	\CBNt{(\labs{x}{\tau}{M})} &\Coloneqq \bcabs{x}{(\CBNT{\tau})}{\CBNt{M}}
	&
	\CBNt{(\lapp{M}{N})} &\Coloneqq \bcapp{\CBNt{M}}{\bcbang{(\CBNt{N})}}
	\\[-0.35em]
	\CBVt{x} &\Coloneqq \bcbang{x}
	&
	\CBVt{(\labs{x}{\tau}{M})} &\Coloneqq \bcbang{(\bcabs{x}{(\CBVT{\tau})}{\CBVt{M}})}
	&
	\CBVt{(\lapp{M}{N})} &\Coloneqq \bcapp{\bcder{\CBVt{M}}}{\CBVt{N}}
\end{align*}

Thus, there are two ways to embed typed \lterms into \bpns via \CbNtext (resp.~\CbVtext) translations: either directly via $\CBNp{(\cdot)}$ (resp.~$\CBVp{(\cdot)}$), or by composing $\CBNt{(\cdot)}$ (resp.~$\CBVt{(\cdot)}$) with the translation $\WH{(\cdot)}$ from \bterms to \bpns. 
It turns out that the two ways coincide. 

 \begin{restatable}{theorem}{factorization}
  \label{thm:factorization}
  For any typed \lterm $(\Gamma\!, M)$,  
 	$\WH{(\CBNT{\Gamma}\!\!, \CBNt{M})} = \CBNp{(\Gamma\!, M)}$ and $\WH{(\CBVT{\Gamma}\!\!, \CBVt{M})} \tor{\rax}^* \CBVp{(\Gamma\!, M)}$.
 \end{restatable}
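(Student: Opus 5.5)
The plan is a structural induction on the typed $\lambda$-term $M$, carried out simultaneously with an analysis of how $\WH{(\cdot)}$ acts on the images of $\CBNt{(\cdot)}$ and $\CBVt{(\cdot)}$. For each case we compare, figure by figure, the \pn obtained from \Cref{fig:bang-pns} with the corresponding clause of $\CBNp{(\cdot)}$ (resp.~$\CBVp{(\cdot)}$) recalled in \Cref{subsec:factorization}. Throughout we use two facts: both sides are \rn{\rwn} bang \pns with the same conclusions and naming, and $\town$ is confluent. So whenever a clause of $\WH{(\cdot)}$ or of $\CBNp{(\cdot)}$ is defined as a $\wn$-normal form, it suffices to exhibit a common pre-normal \pn reducing to both.

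\textbf{Call-by-name.} There are three cases. For a variable, $\CBNt{x}=x$, and $\WH{((\CBNT{\Gamma},x:\oc\CBNT{\tau}),x)}$ of \Cref{fig:bang-pns-var} is an axiom with a dereliction on the input side and weakenings for $\Gamma$, which is exactly the \CbNtext variable net. For an abstraction, both translations add a $\parr$ node, and a $\wkpn$ if $x$ is not free; the induction hypothesis then closes the case. For an application, $\CBNt{(MN)}=\bcapp{\CBNt{M}}{\bcbang{(\CBNt{N})}}$. Here the clause of \Cref{fig:bang-pns-app} puts $\WH{(\cdot,\bcbang{(\CBNt N)})}$, that is, a box around $\WH{(\CBNt N)}$, against a $\otimes$ node cut with an axiom/tensor configuration attached to $\WH{(\CBNt M)}$. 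I expect the \CbNtext application net to be literally this shape: a cut between the output $\oc O\multimap O'$ and a $\otimes$ whose left premise is the box of $N$. Contraction of shared variables is then by the same $\wn$-normal form. Hence equality holds on the nose, with no axiom steps. The point to check carefully is that the application clause of \Cref{fig:bang-pns-app} introduces no axiom cut when the function part is not a variable. I would verify this by reading off the figure.

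\textbf{Call-by-value.} The same induction applies, but now $\CBVt{(MN)}=\bcapp{\bcder{\CBVt{M}}}{\CBVt{N}}$ and $\CBVt{x}=\bcbang{x}$. This produces configurations not present in Accattoli's nets. First, a variable becomes a box around the net of \Cref{fig:bang-pns-var}, i.e.~a box containing an axiom whose input premise is under a dereliction. Second, an application places a dereliction node above the output of $\WH{(\CBVt M)}$, and the application clause then creates an axiom whose cut against that dereliction's conclusion is an axiom cut. In each case I will show that a finite number of $\tor{\rax}$ steps (\Cref{fig:ax-cut}) erase exactly these auxiliary axiom links, yielding the \CbVtext clause. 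Since $\tor{\rax}$ steps do not create $\wn$-redexes, the resulting net stays \rn{\rwn}. Since axiom steps are local and commute with box formation and with the $\wn$-normalization in the application clause, the induction hypothesis $\WH{(\CBVt{M_i})}\tor{\rax}^*\CBVp{(M_i)}$ can be lifted inside contexts. This uses closure of $\tor{\rax}$ under boxes and the fact that the $\wn$-normal form commutes with axiom steps.

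\textbf{Main obstacle.} The hardest step will be this last commutation: the application clause takes a $\wn$-normal form after gluing, so I must show that $\mathrm{norm}_\wn(\text{glue}(R_1,R_2))\tor{\rax}^*\mathrm{norm}_\wn(\text{glue}(R_1',R_2'))$ whenever $R_i\tor{\rax}^*R_i'$. I would prove a small lemma that $\tor{\rax}$ and $\town$ strongly commute: an axiom step never touches $\wn$ nodes, except possibly by renaming the arc entering one, which preserves the $\wn$-redex structure. The claim then follows from termination of $\town$.
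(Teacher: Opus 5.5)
Your skeleton is the paper's: induction on $M$, clause-by-clause comparison of the translations, one extra axiom step in each \CbVtext application, and lifting the induction hypothesis through the $\wn$-normal forms built into $\enbox{(\cdot)}$ and $\share{\cdot}$. The gap is the lemma you propose for that lifting. On arbitrary \pss, $\tor{\rax}$ and $\town$ do not strongly commute, and axiom steps \emph{can} create $\wn$-redexes. An axiom step plugs the conclusion of the node $n$ facing the axiom in the cut into the place where the other conclusion of the axiom went. So the arc you describe as merely ``renamed'' may now leave a $\wn$ node. If $n$ is a $\wn$ node, this creates a flattening redex when that arc enters another $\wn$ node, and a pulling redex when it is an auxiliary conclusion of the enclosing box. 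The paper exhibits exactly such an $R$, with $\norm{\wn}{R} = R \tor{\rax} R' \neq \norm{\wn}{R'}$ (\Cref{fig:counterexample-ax-wn}). Hence your commutation lemma is false as stated, and so is your claim that nets ``stay $\wn$-normal'' after axiom steps. As a result, the \CbVtext abstraction and application cases are not established.

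The missing ingredient is an invariant forbidding cuts on $\wn$ nodes. The paper uses $\wn$-safety: no premise of a $\cutpn$ node is the conclusion of a $\wn$ node, hereditarily in boxes. This is preserved by $\tor{\rax}$, and under it $\norm{\wn}{R} \tor{\rax} \norm{\wn}{R'}$ whenever $R \tor{\rax} R'$ (\Cref{lemma:ax-wn}). You then have to check that the nets fed to $\enbox{(\cdot)}$ and $\share{\cdot}$ are $\wn$-safe. This holds because in \bangMELL a $\wn$ conclusion is always input, while the input premise of every cut created by the translations is the conclusion of a $\otimes$ or $\drpn$ node.

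Two smaller corrections. First, in the \CbNtext application the cut is also an axiom cut when the function part is an application, since its translation outputs an axiom conclusion. So the check you single out is false. It is also unnecessary: $\CBNp{(\cdot)}$ contains the very same axiom cuts, and equality follows clause by clause once the types agree, i.e.\ $\bctyp(\CBNT{\Gamma},\CBNt{M}) = \CBNT{\ltyp(\Gamma,M)}$ and $\bctyp(\CBVT{\Gamma},\CBVt{M}) = \oc\CBVT{\ltyp(\Gamma,M)}$. Second, the \CbVtext variable case is a plain equality with no cut at all. The only new axiom step in each application eliminates the cut between the $\otimes$ and the axiom introduced by $\bcder{\CBVt{M_1}}$ (\Cref{rmk:cbv-translation-reduction}).
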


\emph{Factorization} (\Cref{thm:factorization}) was conjectured in \cite{ehrhard2016the,GuerrieriManzonetto18} but its proof is not trivial (see \Cref{subsec:factorization}).
It shows again that \bpns with the embedding $\WH{(\cdot)}$ are a fine-tuned setting for studying \CbNtext and \CbVtext \lci uniformly, mediated by~the~\bc.

\subparagraph*{Conclusions}
We studied the role of connectivity in the theory of \pnst{\MELL} by identifying \VMELL as a new fragment of \MELL where the geometric property $\ACCw$ is a correctness criterion. 
Such a fragment reconciles intuitionistic and classical polarizations~of~\MELL, and contains the \bc (in turn subsuming \CbNtext and \CbVtext \lci).

 We plan to extend \VMELL to the additives in future work. We can already observe that the fragment defined by adding $P \oplus P$ and $\zero$ to the positive formulae (and dually $N \with N$~and $\top$ to the negatives) contains Girard's two translations of intuitionistic logic into \LL \cite{girard1987linear}.
 We conjecture that \pnst{\VMELL} with additives are handy and well-behaved.
 
 
 We plan to investigate the quotient on \bterms induced by the translation $\BL{(\cdot)}$ to \bpns. 
 We can easily prove that it contains $\sigma$-equivalence \cite{ehrhard2016the} (an equivalence on \bterms that permutes some constructs).
 We conjecture that they coincide. 
 One of the interests of studying the quotient induced by $\BL{(\cdot)}$ 
 is that it is validated by any denotational~model~of~the~\bc based on \LL.
 We also aim to study how \bpns embed the \bc~with explicit substitutions~\cite{BucciarelliKesnerRiosViso23,ArrialGuerrieriKesner23,KesnerArrialGuerrieri24,ArrialGuerrieriKesner24}.

%% file: Sections/Appendix.tex
\subsection{Complements of \texorpdfstring{\Cref{sec:geometric-preliminaries}}{Section 2}}
 \label{subsec:complements-geometric-preliminaries}
 
 \begin{definition}
  Let $G$ be a graph, and let $\gamma = n_0 a_1 n_1 a_2 \dots n_{k - 1} a_k n_k$ be a path of $G$.
  \begin{itemize}
   \item
    The non-negative integer $k$ is called the \emph{length} of $\gamma$ and is denoted by $\len{\gamma}$;
   \item
    For each $i, j \in \{0, \dots k\}$, if $i \leq j$ and the subsequence $n_i a_{i + 1} \dots n_{j - 1} a_j n_j$ is not a cycle,~we denote it by $n_i \gamma \, n_j$. We write $n_i \gamma \coloneq n_i \gamma \, n_k$ and $\gamma \, n_j \coloneq n_0 \gamma \, n_j$.
  \end{itemize}
 \end{definition}
  
 \begin{proposition}
  \label{prop:pfg-directed-path}
  Let $G$ be a \pfg, let $n \in \ve(G)$, let $\gamma$ be a non-empty~path of $G$ to $n$, and let $a$ be the last arc of $\gamma$. If $n = \head_G(a)$, then $\gamma$ is a directed path.
 \end{proposition}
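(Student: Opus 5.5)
The plan is an induction on the length $k = \len{\gamma} \geq 1$ of $\gamma = n_0 a_1 n_1 \dots n_{k-1} a_k n_k$, proving the stronger claim that for every $i$ the arc $a_i$ satisfies $\tail_G(a_i) = n_{i-1}$ and $\head_G(a_i) = n_i$. We walk backwards from the end of the path. The only tool needed is the defining property of a \pfg: every node has outdegree at most $1$, so every node is the tail of at most one arc.

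The base case is immediate. By hypothesis $\head_G(a_k) = n_k = n$, and since the ends of $a_k$ are $n_{k-1}$ and $n_k$, we get $\tail_G(a_k) = n_{k-1}$. This includes the degenerate case $n_{k-1} = n_k$ (a loop), where both ends coincide anyway.

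For the backward step, suppose $a_{i+1}$ is already known to be directed, i.e.\ $\tail_G(a_{i+1}) = n_i$, and we want the same for $a_i$ with $1 \leq i < k$. Assume instead that $\tail_G(a_i) = n_i$ with $\head_G(a_i) = n_{i-1} \neq n_i$. Then $n_i$ is the tail of both $a_i$ and $a_{i+1}$. These arcs must be distinct, which is the step needing care.
\begin{itemize}
\item If $k = 2$, the definition of path requires $a_1 \neq a_k$.
\item Otherwise, $a_i = a_{i+1}$ would force $\{n_{i-1}, n_i\} = \{n_i, n_{i+1}\}$, hence $n_{i-1} = n_{i+1}$ with $i-1 \neq i+1$. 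The no-repetition condition on paths then forces $\{i-1, i+1\} = \{0, k\}$, i.e.\ $k = 2$, which is the previous case.
\end{itemize}
So $a_i \neq a_{i+1}$, and $\outdeg_G(n_i) \geq 2$, contradicting that $G$ is fun. Hence $\tail_G(a_i) = n_{i-1}$ and $\head_G(a_i) = n_i$.

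The possibility that $a_i$ is a loop at $n_i$ also needs handling. In that case $n_{i-1} = n_i$, which the path definition forbids unless $\{i-1, i\} = \{0, k\}$, i.e.\ $k = 1$. That case is already covered by the base case.

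The main obstacle is just this bookkeeping of distinctness of consecutive arcs from the path definition; everything else is the outdegree argument.
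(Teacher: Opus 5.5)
Your proof is correct and follows essentially the same argument as the paper. Both orient the last arc using the hypothesis $n = \head_G(a)$, then move backwards along $\gamma$, using $\outdeg_G \leq 1$ at each intermediate node (already the tail of the following arc) to force the preceding arc to point into it. The paper phrases this as an induction on $\len{\gamma}$ applied to the prefix $\gamma\, n'$, rather than as a backward induction on the index. Your explicit check that consecutive arcs $a_i, a_{i+1}$ are distinct, and that loops cannot occur, fills in a detail the paper uses without comment when it asserts $a' \neq a$.
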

  
 \begin{proof}
  By induction on $\len{\gamma}$. Let $n' \coloneq \tail_G(a)$. Then $\gamma = \gamma \, n' a \, n$. If $\gamma \, n'$ is empty, then we are done because $\gamma = n' a \, n$ is a directed path. Otherwise, we consider the last arc $a'$ of $\gamma \, n'$. Since $G$ is \pf, we know that $\outdeg_G(n') \leq 1$. And since $n' = \tail_G(a)$ and $a' \neq a$, we have $n' = \head_G(a')$. Finally, since $\len{\gamma \, n'} < \len{\gamma \, n'} + 1 = \len{\gamma}$, we can apply the induction hypothesis on $\gamma \, n'$, from which we deduce that $\gamma \, n'$ is a directed path. Since~$\gamma = \gamma \, n' a \, n$,~we~can conclude that $\gamma$ is a directed path.
 \end{proof}
  
  %

  \begin{remark}
   The null graph $\nullgraph$ is (vacuously) a \gs and, following \Cref{def:correctness-conditions}, we do \emph{not} have $\nullgraph \models \Cw$ even though $\nullgraph$ is connected. In particular, by \Cref{def:ACCw}, for~the \ps $R$ such that $\undgs{R} = \nullgraph$ we do \emph{not} have $R \models \Cw$.
  \end{remark}

 \subsubsection{Proof of \texorpdfstring{\Cref{prop:pfg-sink}}{Proposition 3}}
  \label{subsec:pfg-sink}
 
  \begin{lemma}[{\cite[Lemma 9.1.2]{lehman2010mathematics}}]
   \label{lemma:handshaking}
   Let $G$ be a graph. Then:
   \[
    \sum_{n \in \ve(G)} \outdeg_G(n) = \card{\ar(G)}
   \]
  \end{lemma}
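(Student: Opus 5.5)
The statement is the handshaking lemma, $\sum_{n \in \ve(G)} \outdeg_G(n) = \card{\ar(G)}$. I would prove it by double counting the set $\ar(G)$ according to tails. The function $\tail_G \colon \ar(G) \to \ve(G)$ is total, so the fibres $\tail_G^{-1}(n) = \{a \in \ar(G) : \tail_G(a) = n\}$, for $n \in \ve(G)$, are pairwise disjoint. Their union is all of $\ar(G)$: every arc lies in exactly one fibre, namely that of its tail. Hence $\ar(G)$ is the disjoint union $\biguplus_{n \in \ve(G)} \tail_G^{-1}(n)$.

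Both sets are finite by the definition of graph, so cardinalities add over a finite disjoint union. This gives $\card{\ar(G)} = \sum_{n \in \ve(G)} \card{\tail_G^{-1}(n)}$. By definition, $\card{\tail_G^{-1}(n)}$ is exactly $\outdeg_G(n)$, which yields the claim.

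The null graph case is consistent with this. If $\ve(G) = \emptyset$, then $\ar(G) = \emptyset$, since $\tail_G$ maps into $\ve(G)$. Both sides are then $0$.

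There is no real obstacle. The only point to state explicitly is that loops (arcs with $\tail_G(a) = \head_G(a)$) and parallel arcs are allowed. Each such arc is still counted once, in the fibre of its tail, so the argument is unaffected.
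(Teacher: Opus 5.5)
Your proof is correct and is the standard argument. Partitioning $\ar(G)$ into the fibres of the total function $\tail_G$ and summing their finite cardinalities, each of which is $\outdeg_G(n)$ by definition, is the usual double count; the paper gives no proof of its own, only a citation of this textbook lemma (Lehman et al.), whose proof is this same count.
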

 
 
  \begin{lemma}[{\cite[Corollary 11.9.8]{lehman2010mathematics}}]
   \label{lemma:connected-inequality}
   Let $G$ be a connected graph. Then $\card{\ar(G)} \geq \card{\ve(G)} - 1$.
  \end{lemma}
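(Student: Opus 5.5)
Since this is a cited standard fact (\cite[Corollary 11.9.8]{lehman2010mathematics}), we give a short self-contained argument by induction on $\card{\ve(G)}$, or equivalently by a spanning-structure argument. If $G$ is null, then $\card{\ve(G)} - 1 = -1 \leq 0 = \card{\ar(G)}$, so there is nothing to prove. Otherwise we fix a node $r \in \ve(G)$ and, for every other node $n$, pick an arc witnessing that $n$ is reached. The cleanest way to do this is to grow a set of reached nodes.

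Concretely, we order the nodes by a breadth-first exploration from $r$: we set $S_0 = \{r\}$ and, as long as $S_i \neq \ve(G)$, we argue that there is an arc $a_{i+1}$ with one end in $S_i$ and the other end $n_{i+1} \notin S_i$. To find such an arc, we take any $m \notin S_i$. By connectivity there is a path from $r$ to $m$, and along this path there is a first node not in $S_i$. The arc entering that node along the path has its other end in $S_i$, and we take it as $a_{i+1}$. We then set $S_{i+1} = S_i \cup \{n_{i+1}\}$.

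The procedure stops after exactly $\card{\ve(G)} - 1$ steps and produces arcs $a_1, \dots, a_{\card{\ve(G)}-1}$. These arcs are pairwise distinct: $a_{i}$ has an end $n_i \notin S_{i-1}$, whereas for $j < i$ both ends of $a_j$ lie in $S_j \subseteq S_{i-1}$. Hence $\card{\ar(G)} \geq \card{\ve(G)} - 1$.

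The only delicate point is the existence of the crossing arc, which rests on the paper's notion of path. A path may be traversed regardless of arc orientation, since the ends of $a_i$ are just required to be $n_{i-1}$ and $n_i$. So the crossing argument works in the undirected sense, and loops or multiple arcs cause no issue.
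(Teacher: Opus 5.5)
Your proof is correct, and it takes a different route from the one the paper relies on. The paper gives no argument of its own. It imports the bound from \cite[Corollary 11.9.8]{lehman2010mathematics}, where it comes from the more general inequality $\card{\cc(G)} \geq \card{\ve(G)} - \card{\ar(G)}$ for an arbitrary graph. That inequality is proved by induction on the number of edges (adding back an edge merges at most two components) and then specialized to a single component.

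You instead grow a spanning tree from a root $r$ and exhibit $\card{\ve(G)} - 1$ pairwise distinct arcs, i.e.\ an injection of $\ve(G) \setminus \{r\}$ into $\ar(G)$. Both key steps are exactly right: the extraction of a crossing arc from a path, and the distinctness check (both ends of $a_j$ lie in $S_j \subseteq S_{i-1}$, while $a_i$ has the end $n_i \notin S_{i-1}$).

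Your route works verbatim with the paper's definitions: directed multigraphs with possible loops, orientation-free paths, and the null graph, which the paper declares connected. The textbook result is stated for simple undirected graphs, so invoking it here strictly needs a small transfer step: pass to the underlying simple graph, which has the same nodes and connectivity and no more edges. The cited route buys generality instead. It bounds the number of components of possibly disconnected graphs, the inequality counterpart of \Cref{prop:connected-components}, although only the connected case is used (in \Cref{lemma:pfg-sink}).

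Two cosmetic points: your exploration need not be breadth-first, and what you carry out is an iteration rather than the announced induction on $\card{\ve(G)}$.
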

 
  \begin{lemma}
   \label{lemma:pfg-sink}
   Every connected \pfg has at most one sink.
  \end{lemma}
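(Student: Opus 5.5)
We argue by counting, using \Cref{lemma:handshaking} and \Cref{lemma:connected-inequality}. Let $G$ be a connected \pfg and let $s$ be the number of its sinks. If $G$ is null there is nothing to prove, so assume $\card{\ve(G)} \geq 1$.

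First, by \Cref{lemma:handshaking} and the fact that $G$ is \pf, every non-sink node has outdegree exactly $1$ and every sink has outdegree $0$. Hence
\[
\card{\ar(G)} = \sum_{n \in \ve(G)} \outdeg_G(n) = \card{\ve(G)} - s .
\]
Next, since $G$ is connected, \Cref{lemma:connected-inequality} gives $\card{\ar(G)} \geq \card{\ve(G)} - 1$. Combining the two relations yields $\card{\ve(G)} - s \geq \card{\ve(G)} - 1$, that is, $s \leq 1$. This is exactly the claim.

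No step is a real obstacle: the only point needing care is that outdegree at most one forces outdegree exactly one on non-sinks, which is immediate from the definition of sink. This lemma then feeds into \Cref{prop:pfg-sink}: acyclicity of a non-null directed acyclic graph provides at least one sink (follow outgoing arcs, which must terminate by finiteness and acyclicity), and the present lemma gives uniqueness.
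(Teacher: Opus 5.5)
Your proof is correct and matches the paper's argument: you count sinks as $\card{\ve(G)}$ minus the number of nodes of outdegree $1$, identify that number with $\card{\ar(G)}$ using partial functionality and \Cref{lemma:handshaking}, and conclude with \Cref{lemma:connected-inequality}. Setting aside the null graph is harmless but unnecessary, since there $\card{\ar(G)} \geq \card{\ve(G)} - 1$ holds trivially.
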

 
  We thank Cléophée Robin for suggesting the following proof of \Cref{lemma:pfg-sink}.
  
  \begin{proof}
   If $G$ is a connected \pfg and $S$ is the set of sinks of $G$, then~we~have:
   \begin{align*}
    \card{S} & = \card{\ve(G)} - \card{\{n \in \ve(G) : \outdeg_G(n) \neq 0\}} & \text{(definitions of sink and $\outdeg_G$)} \\
    & = \card{\ve(G)} - \sum_{n \in \ve(G)} \outdeg_G(n) & \text{(by partial functionality)} \\
    & = \card{\ve(G)} - \card{\ar(G)} & \text{(\Cref{lemma:handshaking})} \\
    & \leq 1 & \text{(\Cref{lemma:connected-inequality})} & \qedhere
   \end{align*}
  \end{proof}
  
  \begin{remark}
   \label{rmk:dag}
   If $G$ is a directed acyclic graph and $n \in \ve(G)$, then $n$ is a sink of $G$ if and only if $n$ is maximal with respect to $\precg{G}$. In particular, if $G$ is non-null, then there exists~at~least~one sink in $G$ (by~\Cref{rmk:directed-acyclic-order}).
  \end{remark}
 
  \pfgSink*
 
  \begin{proof}
   Straightforward consequence of \Cref{lemma:pfg-sink,rmk:dag}.
  \end{proof}
  
 \subsubsection{Stability of \texorpdfstring{$\ACCw$}{ACCw}}
 \label{subsec:cut-elimination-steps}
  
  We recall the following well-known result, expressing the stability of $\AC$ with respect~to~the~\cutelim steps (see~\cite{danos1990logique,llhandbook}).
  
  \begin{lemma}
   \label{lemma:acyclicity-preserved}
   Let $R,R' \in \psset$ such that $R \tocut R'$. If $R \models \AC$, then $R' \models \AC$.
  \end{lemma}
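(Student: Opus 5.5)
The plan is to argue by induction on the depth of $R$, and at each level to check the ground-structure $\undgs{R}$ step by step. Since $\AC$ for a \ps is defined level by level (\Cref{def:ACCw}), it suffices to show two things. First, for each root step $\tocut$ on $\undgs{R}$, if every switching graph of $\undgs{R}$ is acyclic then so is every switching graph of $\undgs{R'}$. Second, boxes that are created, opened or duplicated are boxes of $R$ (or copies of them), so they satisfy $\AC$ by hypothesis.

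For the first point, I would argue by contraposition. Given a switching $\varphi'$ of $\undgs{R'}$ and a cycle $\gamma'$ in $\undgs{R'}^{\varphi'}$, I would build a switching $\varphi$ of $\undgs{R}$ and a cycle $\gamma$ in $\undgs{R}^{\varphi}$ by ``lifting'' $\gamma'$. I would treat each step of \Cref{fig:cutelim} in turn.
\begin{itemize}
\item \Axcut: an arc of $\gamma'$ that merges the two sides is replaced by the path through the axiom and the cut.
\item \Ucut: a connected component disappears, so no cycle can be created.
\item $\rtp$: the two new cuts are simulated. A cycle in $\undgs{R'}$ crossing both new cuts lifts to a path through the $\otimes$, the cut and the $\parr$; I choose $\varphi$ on the $\parr$ to select the premise used. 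A cycle crossing both new cuts would enter the $\parr$ twice, and that is forbidden only in $\undgs{R}$. Here I would use the standard argument: in $\undgs{R}^{\varphi}$ with the $\parr$ switched to either side, the $\otimes$ connects both premises, so such a cycle yields a cycle in $\undgs{R}$ for one of the two switchings.
\item $\rdr$: the box content is exposed at depth $0$. I use that the box and $\undgs{R}$ are both acyclic, together with the fact that in $\undgs{R}$ the box is collapsed to a single $\oc$ node connected to all its ports. A cycle of $\undgs{R'}$ through the opened box either stays inside the box content, giving a cycle in the box, or exits through at least two ports; in the latter case it gives a cycle in $\undgs{R}$ through the $\oc$ node. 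The $\wn$ nodes at the auxiliary ports must be handled by choosing the switching on contractions appropriately.
\item $\rcm$: two $\oc$ nodes merge into one node inside a box, which is acyclicity-preserving by a similar collapsing argument.
\item $\rst$: duplication with $k\ge2$ is the delicate case. Copies of a box connect to a $\ctpn$ node receiving the auxiliary ports. A cycle in $\undgs{R'}$ through two copies passes through two distinct premises of the new $\ctpn$ nodes, so it is broken by the switching. A cycle through one copy lifts to $\undgs{R}$ by mapping the copy to the original $\oc$ node and setting the switching of the original $\ctpn$ (the one cut with the $\oc$) accordingly. For $k=0$, erasure with weakenings only removes connections.
\end{itemize}

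The main obstacle is the bookkeeping in the $\rst$ and $\rdr$ cases: choosing $\varphi$ consistently on the contraction nodes that receive multiple copies of the auxiliary conclusions. I would handle it by insisting that the lifted cycle uses each original node at most once. This follows because distinct copies contribute premises to the same $\ctpn$ node, and a switching selects only one premise. Finally, steps occurring inside a box reduce to the induction hypothesis on that box.
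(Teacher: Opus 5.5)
Your strategy is the standard one behind this folklore lemma: check acyclicity level by level, and lift a switching cycle of $\undgs{R'}$ back to one of $\undgs{R}$, case by case. The paper recalls the lemma as well known and gives no proof.

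\textbf{The gap is in the $\rst$ case with $k\geq 2$}, the case you flag as delicate. It rests on a false claim. A switching cycle through two copies $n_1,n_2$ need not use two premises of the same new contraction. For instance, it can
\begin{itemize}
\item reach $n_1$ through $b_1$, the new cut and the principal port;
\item leave $n_1$ through its $j$-th auxiliary port into the new contraction $w_j$;
\item travel outside and come back through the conclusion of $w_{j'}$, with $j'\neq j$;
\item enter $n_2$ through its $j'$-th auxiliary port and leave through its principal port, its cut and $b_2$.
\end{itemize}
A switching selecting exactly these premises of $w_j$ and $w_{j'}$ keeps this cycle. Concretely: let the $j$-th and $j'$-th auxiliary ports of the box be the premises of a $\otimes$ node, and let $b_1,b_2$ be conclusions of two axioms whose other conclusions are premises of another $\otimes$ node.

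So the cycle is not broken by the switching. Mapping both copies to the original $\oc$ node gives a closed walk that visits that node twice and the original $\ctpn$ through both $b_1$ and $b_2$. That is not a cycle of any $\undgs{R}^{\varphi}$. Your justification (``distinct copies contribute premises to the same $\ctpn$ node'') fails exactly here.

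Such cycles are excluded only by $R\models\AC$. To see it, lift one excursion rather than the whole cycle, as you already do for $\rdr$:
\begin{itemize}
\item A copy has a single principal port, so a switching cycle through it uses one of its auxiliary ports. Hence the cycle leaves some $w_j$ through its conclusion.
\item Follow the cycle from there until it first meets a node created by the step. It arrives either at the conclusion of some $w_{j'}$ with $j'\neq j$, or at some $b_i$.
\item This segment lies in the part common to $\undgs{R}$ and $\undgs{R'}$. It closes up in $\undgs{R}^{\varphi}$ into a cycle through the original $\oc$ node: via its ports $j$ and $j'$ in the first case, or via port $j$, the principal port, the cut and the original $\ctpn$ switched on $b_i$ in the second.
\end{itemize}
This argument also covers your one-copy case.

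\textbf{Two smaller points.}

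In the $\rcm$ case, your claim that every box of $R'$ is a box of $R$ or a copy of one is false. The box of $m$ in $R'$ is $\undbox{R}(m)$ enlarged with the $\oc$ node $n$ and a new cut, so its ground-structure must also be checked. This is easy: $n$, the cut and the new $\bullet$ nodes form a tree attached to the old content by a single arc.

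In the $\rtp$ case, your sentence about a cycle ``crossing both new cuts'' lifting through the $\otimes$, the cut and the $\parr$ really applies to a cycle crossing one new cut. When both cuts are crossed, deleting them splits the cycle into two disjoint paths, and you must split on how these pair the four premises:
\begin{itemize}
\item if the two $\otimes$ premises are joined, you get a cycle through the $\otimes$ node alone;
\item if a $\otimes$ premise is joined to a $\parr$ premise, you get a cycle through the $\otimes$, the cut and the $\parr$ switched on that premise.
\end{itemize}
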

  
  It is quite easy to check that $\AC$ is also preserved under \conv steps.
  
  \begin{remark}
   \label{rmk:acyclicity-preserved}
   Let $R, R' \in \psset$ such that $R \town R'$. If $R \models \AC$, then $R' \models \AC$.
  \end{remark}
  
  Under the $\AC$ hypothesis, the property $\Cw$ is preserved by \cutelim and \conv.
  
  \begin{proposition}
   \label{prop:accw-cut-elim}
   Let $R, R' \in \psset$ such that $R \tocutwn R'$. If $R \models \ACCw$, then $R' \models \ACCw$.
  \end{proposition}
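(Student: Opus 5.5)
The plan is to reduce to the ground level and then check each root step separately. By \Cref{lemma:acyclicity-preserved} and \Cref{rmk:acyclicity-preserved}, $R' \models \AC$, so only $\Cw$ remains to be shown for $R'$ and every box occurring in it.

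\Cref{def:ACCw} is inductive on depth. A step inside a box reduces to the statement for that box, except in the following cases.
\begin{itemize}
\item $\tor{\rcm}$ and $\tor{\rpu}$ move material across a box border.
\item $\tor{\rdr}$ opens a box.
\item $\tor{\rst}$ copies or erases a box.
\end{itemize}
Boxes that are copied or erased are untouched, so they keep $\Cw$. So I only need to check $\Cw$ for the ground-structure where the redex sits, and, for $\tor{\rcm}$ and $\tor{\rpu}$, also for the modified box.

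For each such graph, I fix a switching $\varphi'$ of the reduct. I lift it to a switching $\varphi$ of the redex by keeping the choices on untouched nodes. For a $\ctpn$ node created by merging, or a $\parr$ node that is erased, I make the choice that corresponds naturally. Then I compare $\card{\cc}$ with $\card{\w}$ on both sides. By \Cref{rmk:switching-connected-components}, $\card{\cc}$ does not depend on the switching once $\AC$ holds. So for each step it is enough to show that $\card{\cc} - \card{\w}$ is unchanged. Since acyclicity holds on both sides, \Cref{prop:connected-components} turns this into counting nodes minus arcs, adjusted for the arcs erased by switchings.

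The step-by-step accounting goes as follows.
\begin{itemize}
\item \textbf{$\tor{\rax}$.} We lose two nodes (the $\axpn$ and the $\cutpn$) and two arcs. Both $\card{\cc}$ and $\card{\w}$ are unchanged.
\item \textbf{$\tor{\rob}$.} It deletes a whole component made of $\onepn$, $\cutpn$ and $\bot$. Here $\card{\cc}$ and $\card{\w}$ each drop by one.
\item \textbf{$\tor{\rtp}$.} Under $\varphi$, the $\parr$ node keeps one premise. The reduct with two cuts has the same nodes-minus-arcs balance, because the switching of the $\parr$ was erasing one arc.
\item \textbf{$\tor{\rdr}$.} The box's contents are plugged in. By $\Cw$ of the box, its graph has $\card{\w(\text{box})}+1$ components, and by acyclicity it is attached through exactly one $\cutpn$ on its principal port. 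The contexts on the auxiliary ports were already linked through the $\oc$ node. So $\card{\cc}$ grows exactly by $\card{\w(\text{box})}$, matching the growth of $\card{\w}$.
\item \textbf{$\tor{\rst}$ with $k=0$.} The box and the $\wkpn$ are erased and replaced by $\wkpn$ nodes on the auxiliary conclusions. Each new $\wkpn$ is its own component, or a weakening merged into an existing $\wn$ node. The count balances thanks to the box's own $\Cw$ and acyclicity.
\item \textbf{$\tor{\rst}$ with $k\geq 2$.} The box is duplicated, and new $\ctpn$ nodes merge the copies' auxiliary conclusions. Under a switching, each new $\ctpn$ selects one copy. The other copy's auxiliary arc is cut, but that copy remains attached through its own cut, so it stays connected via the $\oc$ node as a single node.
\item \textbf{$\tor{\rcm}$ and the $\town$ steps.} These only rearrange arcs near box borders and $\wn$ nodes. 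Inside the receiving box, the inserted box behaves as a single $\oc$ node linked by a cut, so connectivity is unchanged. For $\tor{\rer}$ and $\tor{\rfl}$, a unary $\wn$ is just an arc subdivision, and nested $\ctpn$ under switchings collapse equivalently.
\end{itemize}

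I expect the main obstacle to be the $\tor{\rst}$ case with $k \geq 2$ combined with the auxiliary ports. A $\ctpn$ node created or enlarged by the step can receive arcs from both copies. Different switchings then disconnect different parts, and one has to make sure no switching of the reduct yields an extra component. My first attempt is to avoid direct counting. Using \Cref{rmk:switching-connected-components}, I would pick one convenient switching of $R'$: all new $\ctpn$ choices point into the first copy. I would then show that the second copy, together with its cut, forms a tree hanging from a single node. Hence it adds no components beyond its own $\card{\w}$ contributions, which are duplicated exactly as the $\bot$ and $\wkpn$ nodes inside the box are duplicated at the box level. Since boxes are counted separately in \Cref{def:ACCw}, the duplicated box nodes at the ground level are single $\oc$ nodes, and no $\w$ count changes there. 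This makes the ground-level check routine.
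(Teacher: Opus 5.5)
Your strategy is the paper's. You get $\AC$ for $R'$ from \Cref{lemma:acyclicity-preserved} and \Cref{rmk:acyclicity-preserved}, treat steps inside a box by induction, and for root steps show that $\card{\cc}-\card{\w}$ is invariant by computing $\card{\cc}$ as nodes minus arcs of acyclic switching graphs. The paper simply compares an arbitrary switching of $\undgs{R}$ with an arbitrary one of $\undgs{R'}$, which is legitimate because both counts are switching-independent. Your convenient-switching argument for $\tor{\rst}$ with $k \geq 2$ is a valid alternative to that raw count: each extra copy and its $\cutpn$ hang as a pendant tree from the tail of the corresponding premise of the former $\ctpn$ node.

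Several of your case narrations are false as stated, though, and have to be replaced by the actual node/arc/$\w$ bookkeeping.
\begin{itemize}
\item The $\town$ steps do not all preserve connectivity. Flattening a $\wkpn$ node into a $\ctpn$ node removes one weakening and one component, since the $\wkpn$ is isolated in every switching not selecting its conclusion.
\item Pulling a $\wkpn$ node out of a box adds a weakening and a component at the ground level, while removing one of each inside the box. The check of the modified box for $\tor{\rpu}$, which you announce, is never carried out.
\item In $\tor{\rst}$ with $k=0$, the box's own $\Cw$ plays no role, because the box is erased. The fresh $\wkpn$ nodes are also generally not isolated. The balance comes from deleting the $\oc$ node with its pendant $\cutpn$--$\wkpn$ branch and attaching a fresh $\wkpn$ to each auxiliary arc. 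By acyclicity this turns the component of the $\oc$ node into exactly $h$ components, so $\card{\cc}$ grows by $h-1$, as $\card{\w}$ does.
\item The unary case $k=1$ of $\tor{\rst}$ is missing, though your $k\geq 2$ argument covers it verbatim.
\end{itemize}
None of this needs a new idea, since your invariant and counting method handle every case. As written, however, these cases rest on incorrect intermediate claims.
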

  
  \begin{proof}
   By \Cref{lemma:acyclicity-preserved,rmk:acyclicity-preserved}, we have $R' \models \AC$, so it suffices to prove that $R' \models \Cw$. We reason by induction on $R \tocutwn R'$:
   \begin{itemize}
    \item
     \emph{Base case (root):} we have $R \mapstocutwn R'$. By the hypothesis that $R \models \AC$ and by \Cref{lemma:acyclicity-preserved}, we know that $R' \models \AC$. First of all, we prove that $\undgs{R'} \models \Cw$. To do so, we fix an~arbitrary switching $\varphi'$ of $\undgs{R'}$, and we prove that $\smash{\undgs{R'}^{\varphi'} \models \Cw}$. Let $\varphi$ be any switching of $\undgs{R}$. Since~we have $\undgs{R}^\varphi \models \Cw$ and $\smash{\undgs{R'}^{\varphi'} \models \AC}$, we can apply systematically \Cref{prop:connected-components}. We distinguish seven possible cases:
     \begin{itemize}
      \item
       \emph{$R \mapstor{\rmu} R'$, or $R \mapstor{\rcm} R'$}. We have:
       \begin{align}
        \card{\nd(\undgs{R'}^{\varphi'})} & = \card{\nd(\undgs{R}^\varphi)} - 2 \notag 
        \\
        \card{\ar(\undgs{R'}^{\varphi'})} & = \card{\ar(\undgs{R}^\varphi)} - 2 \notag 
        \\
        \card{\w(\undgs{R'})} & = \card{\w(\undgs{R})} \notag 
        \vphantom{\undgs{R'}^{\varphi'}} \\
        \intertext{Therefore:}
        \card{\cc(\undgs{R'}^{\varphi'})} & = \card{\nd(\undgs{R'}^{\varphi'})} - \card{\ar(\undgs{R'}^{\varphi'})} \notag 
        \\
        & = \card{\nd(\undgs{R}^\varphi)} - \card{\ar(\undgs{R}^\varphi)} \notag 
        \\
        & = \card{\cc(\undgs{R}^\varphi)} \notag \\
        & = \card{\w(\undgs{R})} + 1 \notag 
        \\
        & = \card{\w(\undgs{R'})} + 1 \label{eqn:accw-multiplicative} 
       \end{align}
      \item
       $R \mapstor{\rob} R'$. We have:
       \begin{align}
        \card{\nd(\undgs{R'}^{\varphi'})} & = \card{\nd(\undgs{R}^\varphi)} - 3 \notag 
        \\
        \card{\ar(\undgs{R'}^{\varphi'})} & = \card{\ar(\undgs{R}^\varphi)} - 2 \notag 
        \\
        \card{\w(\undgs{R'})} & = \card{\w(\undgs{R})} - 1 \notag 
        \vphantom{\undgs{R'}^{\varphi'}} \\
        \intertext{Therefore:}
        \card{\cc(\undgs{R'}^{\varphi'})} & = \card{\nd(\undgs{R'}^{\varphi'})} - \card{\ar(\undgs{R'}^{\varphi'})} \notag 
        \\
        & = \card{\nd(\undgs{R}^\varphi)} - \card{\ar(\undgs{R}^\varphi)} - 1 \notag 
        \\
        & = \card{\cc(\undgs{R}^\varphi)} - 1 \notag \\
        & = \card{\w(\undgs{R})} \notag 
        \\
        & = \card{\w(\undgs{R'})} + 1 
        \label{eqn:accw-units}
       \end{align}
      \item
       $R \mapstor{\rdr} R'$             . Let $R_1 \in \psset$ as in \Cref{subfig:cutelim-rdr}, let $k$ be the number of conclusions of $R_1$,~and let $\varphi_1$ be the restriction of $\varphi'$ to a switching of $\undgs{R_1}$. We have:
       \begin{align*}
        \card{\nd(\undgs{R'}^{\varphi'})} & = \card{\nd(\undgs{R}^\varphi)} + \card{\nd(\undgs{R_1}^{\varphi_1})} - k - 2 
        \\
        \card{\ar(\undgs{R'}^{\varphi'})} & = \card{\ar(\undgs{R}^\varphi)} + \card{\ar(\undgs{R_1}^{\varphi_1})} - k - 1 
        \\
        \card{\w(\undgs{R'})} & = \card{\w(\undgs{R})} + \card{\w(\undgs{R_1})} 
        \vphantom{\undgs{R'}^{\varphi'}} \\
        \intertext{Observe that we have $\undgs{R_1} \models \ACCw$, because $R \models \ACCw$ and $R_1 \sqsubset R$. We deduce~that:}
        \card{\cc(\undgs{R'}^{\varphi'})} & = \card{\nd(\undgs{R'}^{\varphi'})} - \card{\ar(\undgs{R'}^{\varphi'})} \vphantom{\undgs{R_1}^{\varphi_1}} 
        \\
        & = \card{\nd(\undgs{R}^\varphi)} - \card{\ar(\undgs{R}^\varphi)} + \card{\nd(\undgs{R_1}^{\varphi_1})} - \card{\ar(\undgs{R_1}^{\varphi_1})} - 1 \\
        & = \card{\cc(\undgs{R}^\varphi)} + \card{\cc(\undgs{R_1}^{\varphi_1})} - 1 \\
        & = \card{\w(\undgs{R})} + \card{\w(\undgs{R_1})} + 1 \vphantom{\undgs{R_1}^{\varphi_1}} 
        \\
        & = \card{\w(\undgs{R'})} + 1 
       \end{align*}
      \item
       $R \mapstor{\rst} R'$. Let $R_1 \in \psset$ and $k, h \geq 0$ as in \Cref{subfig:cutelim-rst}. We consider two further subcases:
       \begin{itemize}
        \item
         $k = 0$. We have:
         \begin{align*}
          \card{\nd(\undgs{R'}^{\varphi'})} & = \card{\nd(\undgs{R}^\varphi)} + h - 3 
          \\
          \card{\ar(\undgs{R'}^{\varphi'})} & = \card{\ar(\undgs{R}^\varphi)} - 2 
          \\
          \card{\w(\undgs{R'})} & = \card{\w(\undgs{R})} + h - 1 
          \vphantom{\undgs{R'}^{\varphi'}} \\
          \intertext{Therefore:}
          \card{\cc(\undgs{R'}^{\varphi'})} & = \card{\nd(\undgs{R'}^{\varphi'})} - \card{\ar(\undgs{R'}^{\varphi'})} 
          \\
          & = \card{\nd(\undgs{R}^\varphi)} - \card{\ar(\undgs{R}^\varphi)} + h - 1 
          \\
          & = \card{\cc(\undgs{R}^\varphi)} + h - 1 \\
          & = \card{\w(\undgs{R})} + h 
          \\
          & = \card{\w(\undgs{R'})} + 1 
         \end{align*}
        \item
         $k \geq 1$. We have:
         \begin{align*}
          \card{\nd(\undgs{R'}^{\varphi'})} & = \card{\nd(\undgs{R}^\varphi)} + (h + 1)k - 2 
          \\
          \card{\ar(\undgs{R'}^{\varphi'})} & = \card{\ar(\undgs{R}^\varphi)} + (h + 1)k - 2 
          \\
          \card{\w(\undgs{R'})} & = \card{\w(\undgs{R})} 
          \vphantom{\undgs{R'}^{\varphi'}}
         \end{align*}
         Therefore, \Cref{eqn:accw-multiplicative} holds;
       \end{itemize}
      \item
       \emph{$R \mapstor{\rfl} R'$, flattening a $\wkpn$ node.} We have:
       \begin{align*}
        \card{\nd(\undgs{R'}^{\varphi'})} & = \card{\nd(\undgs{R}^\varphi)} - 2 
        \\
        \card{\ar(\undgs{R'}^{\varphi'})} & = \card{\ar(\undgs{R}^\varphi)} - 1 
        \\
        \card{\w(\undgs{R'})} & = \card{\w(\undgs{R})} - 1 
        \vphantom{\undgs{R'}^{\varphi'}}
       \end{align*}
       Therefore, \Cref{eqn:accw-units} holds;
      \item
       \emph{$R \mapstor{\rfl} R'$, flattening a $\wn$ node with at least one premise, or $R \mapstor{\rer} R'$.} We have:
       \begin{align*}
        \card{\nd(\undgs{R'}^{\varphi'})} & = \card{\nd(\undgs{R}^\varphi)} - 1 
        \\
        \card{\ar(\undgs{R'}^{\varphi'})} & = \card{\ar(\undgs{R}^\varphi)} - 1 
        \\
        \card{\w(\undgs{R'})} & = \card{\w(\undgs{R})} 
        \vphantom{\undgs{R'}^{\varphi'}}
       \end{align*}
       Therefore, \Cref{eqn:accw-multiplicative} holds;
      \item
       $R \mapstor{\rpu} R'$. If $k$ is the number of premises of the pulled $\wn$ node, we consider the subcases:
       \begin{itemize}
        \item
         $k = 0$. We have:
         \begin{align*}
          \card{\nd(\undgs{R'}^{\varphi'})} & = \card{\nd(\undgs{R}^\varphi)} + 1 
          \\
          \card{\ar(\undgs{R'}^{\varphi'})} & = \card{\ar(\undgs{R}^\varphi)} 
          \\
          \card{\w(\undgs{R'})} & = \card{\w(\undgs{R})} + 1 
          \vphantom{\undgs{R'}^{\varphi'}} \\
          \intertext{Therefore:}
          \card{\cc(\undgs{R'}^{\varphi'})} & = \card{\nd(\undgs{R'}^{\varphi'})} - \card{\ar(\undgs{R'}^{\varphi'})} 
          \\
          & = \card{\nd(\undgs{R}^\varphi)} - \card{\ar(\undgs{R}^\varphi)} + 1 
          \\
          & = \card{\cc(\undgs{R}^\varphi)} + 1 \\
          & = \card{\w(\undgs{R})} + 2 
          \\
          & = \card{\w(\undgs{R'})} + 1 
         \end{align*}
        \item
         $k \geq 1$. We have:
         \begin{align*}
          \card{\nd(\undgs{R'}^{\varphi'})} & = \card{\nd(\undgs{R}^\varphi)} + k 
          \\
          \card{\ar(\undgs{R'}^{\varphi'})} & = \card{\ar(\undgs{R}^\varphi)} + k 
          \\
          \card{\w(\undgs{R'})} & = \card{\w(\undgs{R})} 
          \vphantom{\undgs{R'}^{\varphi'}}
         \end{align*}
         Therefore, \Cref{eqn:accw-multiplicative} holds.
       \end{itemize}
     \end{itemize}
     We now show that $R_0' \models \Cw$ for all $R_0' \sqsubset R'$. In every case except for $R \mapstor{\rdr} R'$,~$R \mapstor{\rcm} R'$ and $R \mapstor{\rpu} R'$, this is obvious because, for every $R_0' \sqsubset R'$, we actually have $R_0' \sqsubset R$, and~by hypothesis we know that $R \models \Cw$. We thus consider the three non-trivial cases:
     \begin{itemize}
      \item
       $R \mapstor{\rdr} R'$. Let $R_1 \in \psset$ as in \Cref{subfig:cutelim-rdr}. Since $R_1 \sqsubset R$, we know that $R_1 \models \Cw$. Now, if $R_0' \sqsubset R'$, then either $R_0' \sqsubset R$ or $R_0' \sqsubset R_1$, and in both cases we deduce~that~$R_0' \models \Cw$;
      \item
       $R \mapstor{\rcm} R'$. Let $R_1 \coloneq \undbox{R}(n)$, $R_2 \coloneq \undbox{R}(m)$, $R_2' \coloneq \undbox{R'}(m)$ as in \Cref{subfig:cutelim-rcm}, and~let $R_0' \sqsubset R'$. We have either $R_0' \sqsubset R$, and in this case we are done, or $R_0' = R_2'$. Let $k$ be the number of conclusions of $R_1$, and let $\varphi_2$ be a switching of $\undgs{R_2'}$ (or, equivalently,~of~$\undgs{R_2}$). Then:
       \begin{align*}
        \card{\nd(\undgs{R_2'}^{\varphi_2})} & = \card{\nd(\undgs{R_2}^{\varphi_2})} + k \\
        \card{\ar(\undgs{R_2'}^{\varphi_2})} & = \card{\ar(\undgs{R_2}^{\varphi_2})} + k \\
        \card{\w(\undgs{R_2'})} & = \card{\w(\undgs{R_2})}
        \intertext{Observe that we have $\undgs{R_2} \models \ACCw$, because $R \models \ACCw$ and $R_2 \sqsubset R$. We deduce~that:}
        \card{\cc(\undgs{R_2'}^{\varphi_2})} & = \card{\nd(\undgs{R_2'}^{\varphi_2})} - \card{\ar(\undgs{R_2'}^{\varphi_2})} \\
        & = \card{\nd(\undgs{R_2}^{\varphi_2})} - \card{\ar(\undgs{R_2}^{\varphi_2})} \vphantom{\undgs{R_2'}^{\varphi_2}} \\
        & = \card{\cc(\undgs{R_2}^{\varphi_2})} \vphantom{\undgs{R_2'}^{\varphi_2}} \\
        & = \card{\w(\undgs{R_2})} + 1 \vphantom{\undgs{R_2'}^{\varphi_2}} \\
        & = \card{\w(\undgs{R_2'})} + 1
       \end{align*}
       We thus obtain $\undgs{R_2'}^{\varphi_2} \models \Cw$. Since our choice of the switching $\varphi_2$ of $\undgs{R_2'}$ was arbitrary,~we have $\undgs{R_2'} \models \Cw$. On the other hand, for each $i \in \{1, 2\}$, we have that $R_i \models \Cw$, because $R_i \sqsubset R$ and $R \models \Cw$. For every $R_0'' \sqsubset R_2'$, we have either $R_0'' = R_1$ or $R_0'' \sqsubset R_2$, and~in both cases it follows straightforwardly that $R_0'' \models \Cw$, thus $R_2' \models \Cw$;
      \item
       $R \mapstor{\rpu} R'$. Let $R_1, R_1' \in \psset$ as in \Cref{subfig:pull}, and let $R_0' \sqsubset R'$. Then either $R_0' \sqsubset R$,~and in this case we are done, or $R_0' = R_1'$. We fix a switching $\varphi_1$ of $\undgs{R_1}$, a switching $\varphi_1'$~of~$\undgs{R_1'}$, and we prove that $\smash{\undgs{R_1'}^{\varphi_1'} \models \Cw}$. We observe that $\undgs{R_1} \models \ACCw$, because $R \models \ACCw$~and $R_1 \sqsubset R$. Let $k$ be the number of premises of the pulled $\wn$ node. Consider two subcases:
       \begin{itemize}
        \item
         $k = 0$. We have:
         \begin{align*}
          \card{\nd(\undgs{R_1'}^{\varphi_1'})} & = \card{\nd(\undgs{R_1}^{\varphi_1})} - 2 \\
          \card{\ar(\undgs{R_1'}^{\varphi_1'})} & = \card{\ar(\undgs{R_1}^{\varphi_1})} - 1 \\
          \card{\w(\undgs{R_1'})} & = \card{\w(\undgs{R_1})} - 1 \vphantom{\undgs{R_1'}^{\varphi_1'}}
          \intertext{Therefore:}
          \card{\cc(\undgs{R_1'}^{\varphi_1'})} & = \card{\nd(\undgs{R_1'}^{\varphi_1'})} - \card{\ar(\undgs{R_1'}^{\varphi_1'})} \\
          & = \card{\nd(\undgs{R_1}^{\varphi_1})} - \card{\ar(\undgs{R_1}^{\varphi_1})} - 1 \vphantom{\undgs{R_1'}^{\varphi_1'}} \\
          & = \card{\cc(\undgs{R_1}^{\varphi_1})} - 1 \vphantom{\undgs{R_1'}^{\varphi_1'}} \\
          & = \card{\w(\undgs{R_1})} \vphantom{\undgs{R_1'}^{\varphi_1'}} \\
          & = \card{\w(\undgs{R_1'})} + 1 \vphantom{\undgs{R_1'}^{\varphi_1'}}
         \end{align*}
        \item
         $k \geq 1$. We have:
         \begin{align*}
          \card{\nd(\undgs{R_1'}^{\varphi_1'})} & = \card{\nd(\undgs{R_1}^{\varphi_1})} - 1 \\
          \card{\ar(\undgs{R_1'}^{\varphi_1'})} & = \card{\ar(\undgs{R_1}^{\varphi_1})} - 1 \\
          \card{\w(\undgs{R_1'})} & = \card{\w(\undgs{R_1})} \vphantom{\undgs{R_1'}^{\varphi_1'}}
          \intertext{Therefore:}
          \card{\cc(\undgs{R_1'}^{\varphi_1'})} & = \card{\nd(\undgs{R_1'}^{\varphi_1'})} - \card{\ar(\undgs{R_1'}^{\varphi_1'})} \\
          & = \card{\nd(\undgs{R_1}^{\varphi_1})} - \card{\ar(\undgs{R_1}^{\varphi_1})} \vphantom{\undgs{R_1'}^{\varphi_1'}} \\
          & = \card{\cc(\undgs{R_1}^{\varphi_1})} \vphantom{\undgs{R_1'}^{\varphi_1'}} \\
          & = \card{\w(\undgs{R_1})} \vphantom{\undgs{R_1'}^{\varphi_1'}} + 1 \\
          & = \card{\w(\undgs{R_1'})} + 1 \vphantom{\undgs{R_1'}^{\varphi_1'}}
         \end{align*}
       \end{itemize}
       Since our choice of the switching $\varphi_1'$ of $\undgs{R_1'}$ was arbitrary, we obtain $\undgs{R_1'} \models \Cw$. We now observe that $R_1 \models \Cw$, because $R_1 \sqsubset R$ and $R \models \Cw$. For every $R_0'' \sqsubset R_1'$, we~have $R_0'' \sqsubset R_1$, and thus $R_0'' \models \Cw$. We can then conclude that $R_1' \models \Cw$.
     \end{itemize}
    \item
     \emph{Induction step (box):} in this case we have $\undgs{R} = \undgs{R'}$, there exists a $\oc$ node $n$ of $\undgs{R}$ such that $\undbox{R}(n) \tocutwn \undbox{R'}(n)$ and, for every $\oc$ node $m \neq n$ of $\undgs{R}$, we have $\undbox{R}(m) = \undbox{R'}(m)$. Since $R \models \Cw$, from $\undgs{R} = \undgs{R'}$ it follows that $\undgs{R'} \models \Cw$ and, for every $\oc$ node $m \neq n$,~from $\undbox{R}(m) = \undbox{R'}(m)$ we deduce that $\undbox{R'}(m) \models \Cw$. Moreover, since $\undbox{R}(n) \sqsubset R$ and $R \models \ACCw$, we have $\undbox{R}(n) \models \ACCw$. We can then apply the induction hypothesis~and obtain that $\undbox{R'}(n) \models \Cw$, thus $R' \models \Cw$. \qedhere
   \end{itemize}
  \end{proof}
  
%
  
  \accwCutElim*
  
  \begin{proof}
   Special case of \Cref{prop:accw-cut-elim}.
  \end{proof}
 
\subsection{Complements of \texorpdfstring{\Cref{sec:pure-polarized}}{Section 3}}
 \label{subsec:complements-untyped-polarized}
 
 We provide full details on the results of \Cref{sec:pure-polarized}.
 
 \begin{remark}
 	\label{rmk:switching}
 	If $\gamma$ is a path of a \gs $\mathcal{G}$, there exists a switching $\varphi$ of $\mathcal{G}$ such that $\gamma$ is a path of $\mathcal{G}^\varphi$ if and only if, for every $\parr$ or $\ctpn$ node $n$ in $\mathcal{G}$, at most one premise~of~$n$~is~in~$\gamma$.
 	Indeed, if more than one premise of some $\parr$ or $\ctpn$ node $n$ is in $\gamma$, then for every switching $\varphi$ of $\mathcal{G}$, at least one of those premises is erased from $\mathcal{G}^\varphi$ and so $\gamma$ is not a path of $\mathcal{G}^\varphi$.
 	And conversely, if for every $\parr$ or $\ctpn$ node $n$ in $\mathcal{G}$ at most one premise of $n$ is in $\gamma$, then just define a switching $\varphi$ of $\mathcal{G}$ so that $\varphi(n)$ is the premise of $n$ in $\gamma$, for every $\parr$ or $\ctpn$ node $n$~in~$\mathcal{G}$~such~that one of its premises is in $\gamma$; thus $\gamma$ is a path of $\mathcal{G}^\varphi$.
 \end{remark}
 
 \acyclicity*
 
 \begin{proof}
  Suppose $\iograph{\mathcal{G}}$ has a directed cycle $\delta$. We claim that there exists a switching $\varphi$ of $\mathcal{G}$ such that $\delta$ is a cycle of $\mathcal{G}^\varphi$. 
  Note that $\delta$ is a cycle in $\mathcal{G}$.
  By \Cref{rmk:switching}, it is enough to prove that, for every $\parr$ or $\ctpn$ node $n$ of $\mathcal{G}$, at most one premise of $n$ is an arc of $\delta$. This is obvious if $n$ is a $\pout$ node, because the input premise of $n$ is not an arc of $\iograph{\mathcal{G}}$. On the other hand, if $n$ is a $\pin$ or $\ctpn$ node, every premise of $n$ is input~(and~so~outgoing~from~$n$),~hence~at~most~one~premise~of~$n$~is~an	 arc~of~$\delta$,~because~$\delta$~is~a~directed~cycle~of~$\iograph{\mathcal{G}}$.
 \end{proof}
 
 \connectedComponentsSinks*
 
 \begin{proof}
 	\begin{enumerate}
 		\item  
		By \Cref{rmk:switching-connected-components} (as $\mathcal{G} \models \AC$), the number of connected component is invariant for any switching. 
		Thus, without loss of generality, we can take a switching $\varphi$ of $\mathcal{G}$~such~that~$\varphi(n)$~is the output premise of $n$ for every $\pout$ node $n$ of $\mathcal{G}$.\footnote{Such a switching is called \emph{intuitionistic} in~\cite[Appendix A.4]{didonna2026roleconnectivitylinearlogic}.} 
 		
 		Let $G$ be the graph we obtain from $\iograph{\mathcal{G}}$ by erasing, for each $\pin$ or $\ctpn$ node $n$ of $\mathcal{G}$, every premise of $n$ except for $\varphi(n)$. Then $\mathcal{G}^\varphi$ is obtained from $G$ by reversing the orientation of every input arc. In particular, we have $\card{\cc(\mathcal{G}^\varphi)} = \card{\cc(G)}$.
 		
 		We now observe that, for every $n \in \nd(\mathcal{G})$, if $n$ is a $\pin$ or $\ctpn$ node, then we have $\smash{\outdeg_{\iograph{\mathcal{G}}}(n) \geq 2}$ and $\outdeg_G(n) = 1$, otherwise $\smash{\outdeg_{\iograph{\mathcal{G}}}(n) = \outdeg_G(n)}$. Therefore, the sinks of $G$ are the same as the~sinks of $\iograph{\mathcal{G}}$ and, by \Cref{itm:io-graph-outdegree} of \Cref{rmk:io-graph}, we know that $G$ is a \pfg. Moreover,~by \Cref{prop:acyclicity}, we have $\mathcal{G} \models \ACio$, meaning that $\iograph{\mathcal{G}}$ is directed acyclic. We deduce that every $C \in \cc(G)$ is a directed acyclic, connected and \pfg: by \Cref{prop:pfg-sink},~it has exactly one sink. By summing on the connected components of $G$, we deduce that $\card{\cc(G)}$ is the number of sinks of $G$ or, equivalently, of $\iograph{\mathcal{G}}$. By~\Cref{itm:io-graph-sinks}~of \Cref{rmk:io-graph}, we conclude that $\card{\cc(\mathcal{G}^\varphi)} = \card{\cc(G)} = \card{\w(\mathcal{G})} + \card{\drcl(\mathcal{G})} + \card{\out(\mathcal{G})}$.
 		
 		\item As $\mathcal{G} \models \AC$, the following equivalences hold:
 		\begin{align*}
 			& &&\mathcal{G} \models \Cw
 			\\
 			 & \iff \textup{(\Cref{def:correctness-conditions})} && \card{\cc(\mathcal{G}^\varphi)} = \card{\w(\mathcal{G})} + 1 \textup{ for every switching $\varphi$ of $\mathcal{G}$}
 			\\
 			& \iff \textup{(\Cref{lemma:connected-components}.\ref{p:connected-components-sinks})} && \card{\drcl(\mathcal{G})} + \card{\out(\mathcal{G})} =  1
 			\\
 			& \iff \textup{(\Cref{def:accio})} && \mathcal{G} \models \Cio \qedhere
 		\end{align*}
 	\end{enumerate}
 \end{proof}
 
 \begin{notation}
  \label{not:io-order}
  Let $\mathcal{G}$ be a \pol \gs. When there is no ambiguity, we use the notation ${\precio} \coloneq {\precg{\iograph{\mathcal{G}}}}$ (recall \Cref{def:directed-acyclic-order}). We also write $\succio$ for the converse~relation~of~$\precio$: for every $n_1, n_2 \in \nd(\mathcal{G})$, we have $n_1 \succio n_2$ if and only if $n_2 \precio n_1$.
 \end{notation}
 
 \begin{remark}
  \label{rmk:io-order}
  Let $\mathcal{G}$ be a \pol \gs such that $\mathcal{G} \models \ACio$. Then, by \Cref{rmk:directed-acyclic-order}, the binary relation $\precio$ is a strict order relation. Thus, since $\nd(\mathcal{G})$ is finite, both $\precio$~and~$\succio$ are well-founded.
 \end{remark}

 \begin{lemma}
  \label{lemma:poutter-terminal-pout}
  Let $\mathcal{G}$ be a \poutter \gs such that $\mathcal{G} \models \ACio$. If there exists~a $\pout$ node of $\mathcal{G}$, then there exists a terminal $\pout$ node of $\mathcal{G}$.
 \end{lemma}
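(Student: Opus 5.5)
The plan is to take a $\pout$ node that is maximal for $\precio$ among all $\pout$ nodes, and show that it is terminal. Since $\mathcal{G} \models \ACio$, \Cref{rmk:io-order} says that $\precio$ is a strict order on the finite set $\nd(\mathcal{G})$. The set of $\pout$ nodes is non-empty by hypothesis, so it has at least one $\succio$-maximal element; call it $n$.

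Let $c$ be the conclusion of $n$. Because $n$ is a $\pout$ node, $c$ is output, so it is not reversed in $\iograph{\mathcal{G}}$ and is still oriented from $n$ to $\head(c)$. The \poutter hypothesis says that $\head(c)$ is either a $\pout$ node or a $\bullet$ node.

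We rule out the first case. Suppose $\head(c) = m$ is a $\pout$ node. Then $c$ is a premise of $m$, and it is output. In $\iograph{\mathcal{G}}$ only the input premise of $m$ is erased, so the output premise $c$ survives, and it keeps its orientation because it is output. This gives a non-empty directed path $n\,c\,m$ in $\iograph{\mathcal{G}}$, so $n \precio m$. That contradicts the maximality of $n$ among $\pout$ nodes.

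Hence $\head(c)$ is a $\bullet$ node. So $c$ is a conclusion of $\mathcal{G}$, and since a $\parr$ node has only one conclusion, $n$ is terminal. The one point that needs care is checking that an output premise of a $\pout$ node is never erased or reversed when building $\iograph{\mathcal{G}}$. This follows directly from \Cref{def:accio}, since the construction erases only input premises of $\pout$ nodes and reverses only input arcs.
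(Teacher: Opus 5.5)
Your proof is correct and is essentially the paper's argument: the paper does well-founded induction on $\succio$, following the conclusion of a non-terminal $\pout$ node to a strictly $\precio$-greater $\pout$ node until a terminal one is reached, which amounts to your choice of a $\precio$-maximal $\pout$ node. One slip: in your first paragraph you write ``$\succio$-maximal'', but you mean $\precio$-maximal, as you state in your opening sentence and as your contradiction uses.
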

 
 \begin{proof}
  By induction on $\succio$ (\Cref{rmk:io-order}). By hypothesis, we can consider a $\pout$ node $n$ of $\mathcal{G}$. If $n$ is terminal, we are done. Otherwise, by the hypothesis that $\mathcal{G}$ is \poutter,~the~conclusion of $n$ is a premise of a $\pout$ node $n'$ of $\mathcal{G}$. Since $\mathcal{G} \models \ACio$, we have $n' \neq n$, hence $n' \succio n$. We can then apply the induction hypothesis on $n'$ to conclude that $\mathcal{G}$ has a terminal~$\pout$~node.
 \end{proof}
 
 We define the analogous notion of $\pout$ node and $\pin$ node for $\otimes$ nodes.
  
 \begin{definition}
  Let $\mathcal{G}$ be a \pol \gs. A $\otimes$ node of $\mathcal{G}$ is a $\tout$~(resp.~$\tin$)~node if its conclusion is output (resp.~input).
 \end{definition}
 
 \begin{lemma}
  \label{lemma:input-conclusion-terminal}
  Let $\mathcal{G}$ be a \pol \gs with no $\pout$ node and such that $\mathcal{G} \models \ACio$, and let $n \in \nd(\mathcal{G})$. Then at least one of the following holds:
  \begin{itemize}
   \item
    There exists a terminal node $n_0 \precio n$;
   \item
    Every input conclusion of $n$ is a conclusion of $\mathcal{G}$.
  \end{itemize}
 \end{lemma}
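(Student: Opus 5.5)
The proof is by well-founded induction on $\succio$ over $\nd(\mathcal{G})$ (\Cref{rmk:io-order}), i.e.\ we assume the statement for every node strictly $\precio$-below $n$. If every input conclusion of $n$ is a conclusion of $\mathcal{G}$ we are done, so suppose $n$ has an input conclusion $a$ whose head $m \coloneq \head_{\mathcal{G}}(a)$ is not a $\bullet$ node. Since $\mathcal{G}$ has no $\pout$ node, building $\iograph{\mathcal{G}}$ erases no arc; $a$ is input, so in $\iograph{\mathcal{G}}$ it is reversed and goes from $m$ to $n$. Hence $m \precio n$, and the induction hypothesis applies to $m$.

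If the first alternative holds for $m$, there is a terminal $n_0 \precio m$. Transitivity of $\precio$ (\Cref{rmk:directed-acyclic-order}, using $\mathcal{G} \models \ACio$) gives $n_0 \precio n$, and we are done.

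Otherwise every input conclusion of $m$ is a conclusion of $\mathcal{G}$. The key step is the observation that $m$ then has no output conclusion at all, so that $m$ is terminal; with $m \precio n$ this gives the first alternative with $n_0 \coloneq m$. To prove the observation, note that $m$ has an input premise, namely $a$, and go through the local constraints of \Cref{fig:polarized} for the node labels that admit an input premise:
\begin{itemize}
\item a $\cutpn$ node has no conclusion;
\item a $\tin$ node, a $\pin$ node, an intuitionistic $\drpn$ node and a $\wn$ node all have an input conclusion only.
\end{itemize}
The only node with an input premise and an output conclusion is a $\pout$ node, which is excluded by hypothesis. The remaining labels cannot receive $a$: a $\tout$ node has only output premises, a classical $\drpn$ node has an output premise, and $\axpn$, $\onepn$, $\bot$, $\oc$ nodes have no premises. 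So all conclusions of $m$ are input, hence all are conclusions of $\mathcal{G}$, and $m$, not being a $\bullet$ node, is terminal. Note that $m$ may be a $\cutpn$ node; it is then terminal vacuously, having no conclusion at all.

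I expect the main obstacle to be this case check against \Cref{fig:polarized}, in particular confirming exactly which premises of $\tin$, $\pin$ and $\wn$ nodes are forced to be input. The induction itself is routine once well-foundedness and transitivity of $\precio$ are in place.
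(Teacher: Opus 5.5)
Your proof is correct and is essentially the paper's argument. Both use the same well-founded induction along $\precio$ and the same key observation: the head of a non-conclusion input arc is a $\cutpn$, $\tin$, $\pin$, intuitionistic $\drpn$ or $\wn$ node, and all of its conclusions are input. The only difference is the order of the case split. The paper asks whether that head is terminal, and if not applies the induction hypothesis to its unique non-conclusion input conclusion. You apply the induction hypothesis first and split on which alternative it gives.
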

  
 \begin{proof}
  By induction on $\precio$ (\Cref{rmk:io-order}). We suppose that there exists an input conclusion $a$ of $n$ which is not a conclusion of $\mathcal{G}$, and we prove that there exists a terminal node~$n_0 \precio n$. Since $a$ is input and not a conclusion of $\mathcal{G}$, we have that $a$ is a premise of a $\cutpn$, $\tin$, $\pin$, $\drpn$ or $\wn$ node $n'$ of $\mathcal{G}$. And since $\mathcal{G} \models \ACio$, we know that $n' \neq n$. We thus obtain that $n' \precio n$. Now, if $n'$ is terminal, then we are done by picking $n_0 \coloneq n'$. Otherwise, we have that $n'$ is not a $\cutpn$ node (because $\cutpn$ nodes are always terminal), so $n'$ has exactly~one~input~conclusion~$a'$,~and~$a'$ is not a conclusion of $\mathcal{G}$. Since $n' \precio n$, we can~apply~the induction hypothesis on $n'$ to~deduce the existence of a terminal node $n_0 \precio n'$. Since $\precio$ is a strict order relation~(\Cref{rmk:io-order}),~we can conclude that $n_0 \precio n$.
 \end{proof}
  
 \begin{lemma}
  \label{lemma:existence-terminal-node}
  Let $\mathcal{G}$ be a \pol \gs with no $\pout$ node and such that $\mathcal{G} \models \ACio$. Then there exists a terminal node of $\mathcal{G}$.
 \end{lemma}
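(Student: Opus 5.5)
We argue by choosing a suitable extremal node and then applying \Cref{lemma:input-conclusion-terminal}. The idea is that in $\iograph{\mathcal{G}}$ output arcs keep their orientation and input arcs are reversed. So an output conclusion that is not a conclusion of $\mathcal{G}$ leads $\precio$-upwards, and \Cref{lemma:input-conclusion-terminal} takes care of input conclusions.

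We assume $\mathcal{G}$ is non-null; for the null \gs the statement is vacuous or false, and it is only used for non-null \gss. Every $\bullet$ node has a premise whose tail is a non-$\bullet$ node, so the set $S$ of non-$\bullet$ nodes of $\mathcal{G}$ is non-empty. Since $\mathcal{G} \models \ACio$, \Cref{rmk:io-order} tells us that $\precio$ is a strict order on the finite set $\nd(\mathcal{G})$. We pick $n \in S$ maximal with respect to $\precio$ within $S$.

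\emph{Step 1: every output conclusion of $n$ is a conclusion of $\mathcal{G}$.} Suppose an output conclusion $a$ of $n$ has head $n' \neq \bullet$. We must check that $a$ is an arc of $\iograph{\mathcal{G}}$. The only arcs erased when building $\iograph{\mathcal{G}}$ are input premises of $\pout$ nodes, and $\mathcal{G}$ has no $\pout$ node, so $a$ survives. Since $a$ is output, its orientation is kept, so $n \precio n'$ with $n' \in S$; note $n' \neq n$ because $\iograph{\mathcal{G}}$ is directed acyclic. This contradicts the maximality of $n$.

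\emph{Step 2: conclude with \Cref{lemma:input-conclusion-terminal}.} Applying the lemma to $n$ gives two cases. Either there is a terminal node $n_0 \precio n$, and we are done. Or every input conclusion of $n$ is a conclusion of $\mathcal{G}$. Together with Step 1, all conclusions of $n$ are then conclusions of $\mathcal{G}$, and $n$ is not a $\bullet$ node, so $n$ itself is terminal. This includes the case of a $\cutpn$ node, which has no conclusions and is therefore trivially terminal.

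The only delicate point is Step 1. We must make sure the chosen order really sees output arcs forward, which is exactly where the absence of $\pout$ nodes is used. We must also choose maximality rather than minimality, so that the output side is handled by the choice of $n$ and the input side by the lemma.
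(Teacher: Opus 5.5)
Your proof is correct, but it follows a different route from the paper's. The paper splits into two cases:
\begin{itemize}
\item if some input arc $a$ is not a conclusion of $\mathcal{G}$, it applies \Cref{lemma:input-conclusion-terminal} (the lemma you use in Step 2) to the tail of $a$;
\item otherwise $\mathcal{G}$ has no $\pin$ or $\ctpn$ node, so $\iograph{\mathcal{G}}$ is a partial functional graph by \Cref{rmk:io-graph}. Each of its connected components then has a unique sink by \Cref{prop:pfg-sink}. That sink is a $\bot$, $\wkpn$, $\drpn$ or output $\bullet$ node, and either it or the tail of its premise is terminal.
\end{itemize}

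You replace this with a single extremal choice: a $\precio$-maximal non-$\bullet$ node. Maximality disposes of its output conclusions, and \Cref{lemma:input-conclusion-terminal} handles its input ones. This removes the case split and the partial-functional/sink machinery; it only needs that $\precio$ is a strict order on a finite set. The paper's version instead reuses the sink analysis it already needs for \Cref{lemma:connected-components}.

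Two small remarks.
\begin{itemize}
\item Your non-nullness caveat is right. The statement is actually false, not vacuous, for the null \gs, which the paper explicitly counts as a \gs. The paper's proof tacitly assumes non-nullness as well, since its second case yields a node only if $\iograph{\mathcal{G}}$ has a connected component. All uses of the lemma are on non-null graphs.
\item Your closing comment misplaces where the absence of $\pout$ nodes matters. Output arcs are never erased from $\iograph{\mathcal{G}}$, so Step 1 holds regardless. The hypothesis is really used inside \Cref{lemma:input-conclusion-terminal}: an input conclusion that is the input premise of a $\pout$ node would be erased in $\iograph{\mathcal{G}}$ and give no $\precio$ link.
\end{itemize}
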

  
 \begin{proof}
  If there exists an input arc of $\mathcal{G}$ that is not a conclusion of $\mathcal{G}$, then we are done by taking $n$ as the node of $\mathcal{G}$ having $a$ among its conclusions in \Cref{lemma:input-conclusion-terminal}. On the other hand, if every input arc of $\mathcal{G}$ is a conclusion~of~$\mathcal{G}$, then $\mathcal{G}$ has no $\pin$ or $\ctpn$ node: the premises of a $\pin$ or $\ctpn$ node of $\mathcal{G}$ would be input arcs of $\mathcal{G}$ and not conclusions of $\mathcal{G}$. By \Cref{itm:io-graph-outdegree} of \Cref{rmk:io-graph}, we obtain that $\iograph{\mathcal{G}}$ is a \pfg. In particular,~by~the hypothesis $\mathcal{G} \models \ACio$, any $C \in \cc(\iograph{\mathcal{G}})$ is a directed acyclic and connected \pfg. From \Cref{prop:pfg-sink} we deduce that $C$ has exactly one sink $n$, and we know, by \Cref{itm:io-graph-sinks} of \Cref{rmk:io-graph}, that $n$ is a $\bot$, $\wkpn$, $\drpn$ or output $\bullet$ node of $\mathcal{G}$. If $n$ is a $\bot$, $\wkpn$ or $\drpn$ node, then $n$ is terminal, because we are supposing that every input arc of $\mathcal{G}$ is a conclusion of $\mathcal{G}$. Otherwise, the premise of $n$ is an $\axpn$, $\onepn$, $\tout$ or $\oc$ node $n_0$ of $\mathcal{G}$, and~then~$n_0$~is~terminal,~again~by~the~assumption~that~every~input~arc~of~$\mathcal{G}$~is~a conclusion of $\mathcal{G}$.
 \end{proof}
 
 \lemmaIoter*
 
 \begin{proof}
  By induction on $\card{\nd(\mathcal{G})}$. Let $a$ be an output premise of $n$. Since $n$ is a terminal node of $\mathcal{G}$ with an output premise, and since $\mathcal{G}$ has no $\pout$ node, the node $n$ is a $\cutpn$, $\tout$ or $\drpn$ node. We consider the node $n'$ of $\mathcal{G}$ which has $a$ among its conclusions, and we~equip $C$ with an arbitrary total ordering on its conclusions, so that $C$ becomes a \gs. From the hypothesis $\mathcal{G} \models \ACio$ we deduce that $n' \neq n$, thus $n' \precg{\iograph{\mathcal{G}}} n$, and that $C \models \ACio$, because a directed cycle of $\iograph{C}$ would also be a directed cycle of $\iograph{\mathcal{G}}$. Since $n'$ has an~output~conclusion~(the~arc $a$), and since $\mathcal{G}$ has no $\pout$ node, we know that $n'$ is an $\axpn$, $\onepn$, $\tout$ or $\oc$ node.
  
  We claim that every node $n''$ of $C$ such that $n'' \precg{\iograph{C}} n'$ is not a terminal node of $C$. Indeed, if that were not the case, then there would be a terminal node $n''$ of $C$ such that $n'' \precg{\iograph{C}} n'$. In particular, since $\precg{\iograph{C}}$ is a strict order relation (\Cref{rmk:io-order}), we would have $n'' \neq n'$. Since $n''$ is terminal, and since $\bullet$ nodes are never terminal, we would have $n'' \neq n_0$, thus $n''$ would be a terminal node of $\mathcal{G}$. On the other hand, by definition of $\precg{\iograph{C}}$, there would exist a directed path $\delta$ of $\iograph{C}$ from $n''$ to $n'$. Since $n' \neq n_0 \neq n''$, the node $n_0$ would not be a node of $\delta$, thus $\delta$ would also be a directed path of $\iograph{\mathcal{G}}$. Then $n'' \precg{\iograph{\mathcal{G}}} n'$. Finally, since $\precg{\iograph{\mathcal{G}}}$ is a strict order relation, we would obtain $n'' \precg{\iograph{\mathcal{G}}} n$, contradicting~the~hypothesis~that~$n$~is~a~\ioter~node of $\mathcal{G}$. We have thus proven our claim.
       
  We now apply \Cref{lemma:input-conclusion-terminal}: since every node $n''$ of $C$ such that $n'' \precg{\iograph{C}} n'$ is not a terminal node of $C$, every input conclusion of $n'$ is a conclusion of $C$. It follows that $n'$ is a terminal node of $C$. We then observe that $n'$ is actually a \ioter node of~$C$,~because~every~node~$n''$~of~$C$ such that $n'' \precg{\iograph{C}} n'$ is not a terminal node of $C$. And since $n$ is a $\cutpn$, $\tout$ or $\drpn$ node, we have that $\card{\nd(C)} < \card{\nd(C)} + 1 \leq \card{\nd(\mathcal{G})}$. We can then apply the induction hypothesis~on~$C$~and~$n'$~in the rest of the proof. We prove \Cref{itm:ioter-connected-component,itm:ioter-directed-path} for $\mathcal{G}$, $n$ and $a$:
  \begin{enumerate}[i.]
   \item
    If $n'$ is an $\axpn$, $\onepn$ or $\oc$ node, we are done because $n'$ is a terminal node of $C$. 
    Otherwise $n'$ is a $\tout$ node: let $a_1$ and $a_2$ be the two premises of $n'$. For each $i \in \{1, 2\}$, let $G_i$ be the graph obtained from $C$ by adding a fresh $\bullet$ node $n_i$ and by setting the head of $a_i$ to be $n_i$, and let $C_i$ be the connected component of $G_i$ that contains $n_i$. By \Cref{itm:ioter-connected-component} for $C$, $n'$ and $a_i$, we know that $C_i$ has no $\drpn$ node, and that $n_i$ is the only $\bullet$ node of $C_i$~whose premise is output. We notice that, for every $n'' \in \nd(C)$, either $n'' = n'$, or $n'' = n_0$, or there exists $i \in \{1, 2\}$ such that $n'' \in \nd(C_i)$. Since $n'$ is a $\tout$ node and $n_0$ is a $\bullet$ node, we conclude~that~$C$~has~no $\drpn$ node, and~that~$n_0$~is~the~only~$\bullet$~node~of~$C$~whose~premise~is~output;
   \item
    Let $\gamma$ be a path of $\mathcal{G}$ to $n$, and suppose that the last arc of $\gamma$ is $a$. Then $\gamma = \gamma \, n' a \, n$. If $\gamma \, n'$ is empty then, since $a$ is output, we can immediately conclude that $\gamma$ is a directed path of $\iograph{\mathcal{G}}$. We then suppose $\gamma \, n'$ is non-empty. Then $n'$ is an $\axpn$, $\tout$ or $\oc$ node. If $n'$ is an $\axpn$ or $\oc$ node, then a conclusion $a' \neq a$ of $n'$ (hence, an input arc) is the only arc of $\gamma \, n'$, because $n'$ is a terminal node of $C$, so we can conclude that $\gamma$ is a directed path of $\iograph{\mathcal{G}}$. Finally, if $n'$ is a $\tout$ node, the last arc of $\gamma \, n'$ is a premise $a'$ of $n'$. Since $a$ is not an arc of $\gamma \, n'$, we know that $\gamma \, n'$ is also a path of $C$. Thus, by \Cref{itm:ioter-directed-path} for $C$, $n'$ and $a'$, we obtain that $\gamma \, n'$ is a directed path of $\iograph{C}$. In particular,~the~path~$\gamma \, n'$~is~a~directed~path~of~$\iograph{\mathcal{G}}$,~from~which we~immediately~deduce~that $\gamma$ is a directed path of $\iograph{\mathcal{G}}$. \qedhere
  \end{enumerate}
 \end{proof}
 
 \begin{lemma}[Splitting]
  \label{prop:splitting}
  Let $\mathcal{G}$ be a \pol \gs with no $\pout$ node and such that $\mathcal{G} \models \ACio$, and let $n$ be a \ioter $\cutpn$ or $\otimes$ node of $\mathcal{G}$. For every cycle $\gamma$ of $\mathcal{G}$,~we~have~that $n$ is not a node of $\gamma$.
 \end{lemma}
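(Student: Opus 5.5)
Suppose, for contradiction, that $\gamma$ is a cycle of $\mathcal{G}$ through $n$. Since $n$ is a $\cutpn$ or $\otimes$ node, and $n$ is terminal (its conclusion, if any, goes to a $\bullet$ node, which has outdegree $0$ and a single premise, so it cannot lie on a cycle), $\gamma$ must enter and leave $n$ through its two premises. Rotating $\gamma$, I write it as a cycle starting and ending at $n$ whose first arc is one premise $b$ of $n$ and whose last arc is the other premise $b'$. Because $n$ is a $\cutpn$ node or a $\otimes$ node, at least one premise of $n$ is output: for $\cutpn$ exactly one premise is output, and for $\otimes$ at least one is, by \Cref{fig:polarized}. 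Reversing $\gamma$ if necessary, I may assume the last arc $a \coloneq b'$ of $\gamma$ is an output premise of $n$.

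Apply \Cref{lemma:ioter}.\ref{itm:ioter-directed-path} to $\gamma$, $n$ and $a$. It yields that $\gamma$ is a directed path of $\iograph{\mathcal{G}}$. Here I must check that the lemma accepts a cycle as a ``path to $n$''. The paper's definition of path allows $n_0 = n_k$, and the inductive proof of the lemma only inspects the final segment $\gamma\,n'$, which is a genuine non-cyclic path. So this should be fine; if not, I apply the lemma to the non-cyclic path $\gamma\,n'\,a\,n$ obtained by dropping nothing, which has the same arcs.

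Now look at the first arc $b$ of $\gamma$. It leaves $n$ in the direction of traversal, so in a directed path of $\iograph{\mathcal{G}}$ we need $\tail(b) = n$ in $\iograph{\mathcal{G}}$. But $b$ is a premise of $n$, so in $\mathcal{G}$ its head is $n$. In $\iograph{\mathcal{G}}$ its orientation is reversed exactly when $b$ is input. So we need $b$ input, and then $b$ does go out of $n$. At this point $\gamma$ is a directed cycle of $\iograph{\mathcal{G}}$, contradicting $\mathcal{G} \models \ACio$. If instead $b$ is output, $\gamma$ cannot be directed at its first arc, which contradicts the lemma directly. The case of both premises output (a $\tout$ node) falls into this second branch. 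In either case we reach a contradiction.

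The main obstacle is the technical applicability of \Cref{lemma:ioter}.\ref{itm:ioter-directed-path} to a cycle, and ensuring that the $\bullet$ node under $n$ and the absence of $\pout$ nodes (so that no arcs are erased in $\iograph{\mathcal{G}}$ other than none) make $\gamma$ a path of $\iograph{\mathcal{G}}$ at all. Since there are no $\pout$ nodes, $\iograph{\mathcal{G}}$ has the same arcs as $\mathcal{G}$ up to orientation, so this is immediate.
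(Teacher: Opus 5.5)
Your proof is correct and is essentially the paper's. Terminality of $n$ forces any cycle through $n$ to use an output premise $a$ of $n$; you rotate or reverse the cycle so that $a$ is its last arc; then \Cref{lemma:ioter}.\ref{itm:ioter-directed-path}, which applies as stated because the paper's paths include cycles, makes it a directed cycle of $\iograph{\mathcal{G}}$, contradicting $\mathcal{G} \models \ACio$. Your fallback about the ``non-cyclic path $\gamma\,n'\,a\,n$'' is confused (that is just $\gamma$ again), and your case split on the first arc $b$ is redundant, but neither affects the argument.
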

  
 \begin{proof}
  Since $n$ is a $\cutpn$ or $\otimes$ node of $\mathcal{G}$, there exists at least one output premise $a$ of $n$. If there were a cycle $\gamma$ of $\mathcal{G}$ having $n$ among its nodes, then $\gamma$ would have $a$ among its arcs, because $n$ is a terminal node of $\mathcal{G}$. We could then suppose, without loss of generality, that the last arc of $\gamma$ is $a$ and, by \Cref{itm:ioter-directed-path} of \Cref{lemma:ioter}, we would obtain that~$\gamma$~is~a~directed~cycle~of~$\iograph{\mathcal{G}}$,~contradicting the hypothesis $\mathcal{G} \models \ACio$.
 \end{proof}
  
 \begin{remark}
  \label{rmk:acio-connected-component}
  Let $\mathcal{G}$ be a \pol \gs, and let $C \in \cc(\mathcal{G})$. Then $C$ inherits from $\mathcal{G}$ a total ordering on the premises of its $\bullet$ nodes, so $C$ is a \pol \gs. We also notice that, if $\mathcal{G} \models \ACio$, then $C \models \ACio$ (any directed cycle of $\iograph{C}$ is a directed~cycle~of~$\iograph{\mathcal{G}}$).
 \end{remark}
  
 \begin{lemma}
  \label{lemma:ioter-connected-component}
  Let $\mathcal{G}$ be a \pol \gs with no $\pout$ node, such that $\mathcal{G} \models \ACio$,~and let $C \in \cc(\mathcal{G})$. Then every \ioter node of $C$ is a \ioter node of $\mathcal{G}$.
 \end{lemma}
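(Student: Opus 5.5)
To prove \Cref{lemma:ioter-connected-component}, take a \ioter node $n$ of $C$. We must show two things: that $n$ is a terminal node of $\mathcal{G}$, and that it is minimal with respect to $\precg{\iograph{\mathcal{G}}}$ among terminal nodes of $\mathcal{G}$. First note that $C \models \ACio$ by \Cref{rmk:acio-connected-component}, so the notion of \ioter node of $C$ is well defined.

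Terminality transfers directly. The conclusions of $n$ in $C$ are exactly its conclusions in $\mathcal{G}$, since $C$ is a full connected component and every arc incident to $n$ belongs to $C$. Likewise, a $\bullet$ node of $C$ is a $\bullet$ node of $\mathcal{G}$. Hence every conclusion of $n$ is a conclusion of $\mathcal{G}$, and $n$, which is not a $\bullet$ node, is terminal in $\mathcal{G}$.

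For minimality, I argue by contradiction. Suppose some terminal node $m$ of $\mathcal{G}$ satisfies $m \precg{\iograph{\mathcal{G}}} n$. Then there is a non-empty directed path $\delta$ of $\iograph{\mathcal{G}}$ from $m$ to $n$. The key observation is that every arc of $\iograph{\mathcal{G}}$ is an arc of $\mathcal{G}$, possibly reversed, with the same ends. So $\delta$ is in particular a path of $\mathcal{G}$ ending at $n$. Since $C$ is the connected component containing $n$, all nodes and arcs of $\delta$ lie in $C$, and in particular $m \in \nd(C)$.

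Next I check that $\iograph{C}$ is exactly the restriction of $\iograph{\mathcal{G}}$ to $C$. This holds because the construction (erasing input premises of $\pout$ nodes and reversing input arcs) is local to each node and arc. Therefore $\delta$ is a directed path of $\iograph{C}$, so $m \precg{\iograph{C}} n$. Moreover $m$ is terminal in $C$, by the same conclusion argument as above applied to $m \in \nd(C)$. This contradicts the fact that $n$ is \ioter in $C$.

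The hypothesis that $\mathcal{G}$ has no $\pout$ node is not really needed here beyond the setting. The main point needing care is the identification of $\iograph{C}$ with the restriction of $\iograph{\mathcal{G}}$ and of terminality across $C$ and $\mathcal{G}$. Both are routine, so I expect no serious obstacle.
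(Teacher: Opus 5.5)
Your proof is correct and follows essentially the same argument as the paper's. Both argue by contradiction: the directed path in $\iograph{\mathcal{G}}$ witnessing $m \precg{\iograph{\mathcal{G}}} n$ stays inside the component $C$, so it is a directed path of $\iograph{C}$ and $m$ is terminal in $C$, which contradicts that $n$ is \ioter in $C$. Your explicit check that $n$ itself is terminal in $\mathcal{G}$, which the paper leaves implicit, is a welcome addition, and your remark that the absence of $\pout$ nodes is not actually needed for this lemma is also right.
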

  
 \begin{proof}
  Let $n$ be a \ioter node of $C$. If $n$ were not a \ioter node of $\mathcal{G}$, then~we~would have a terminal node $n'$ of $\mathcal{G}$ such that $n' \precg{\iograph{\mathcal{G}}} n$ and so, by definition of $\precg{\iograph{\mathcal{G}}}$, we would have a directed path $\delta$ of $\iograph{\mathcal{G}}$ from $n'$ to $n$. But then $n'$ would be a terminal node of $C$ and $\delta$ would be a directed path of $\iograph{C}$, because $C$ is the connected component of $\mathcal{G}$ which contains $n$. Hence, we would have $n' \precg{\iograph{C}} n$, contradicting the fact that $n$ is a \ioter~node~of~$C$.
 \end{proof}
  
 \begin{lemma}
  \label{lemma:ioter-cut-tensor}
  Let $\mathcal{G}$ be a \poutter \gs having no terminal $\bot$, $\parr$, $\drpn$ or $\wn$ node, and such that $\mathcal{G} \models \ACio$. If $\mathcal{G}$ has a terminal $\cutpn$ or $\otimes$ node, then $\mathcal{G}$ has~a~\ioter~$\cutpn$~or $\otimes$ node.
 \end{lemma}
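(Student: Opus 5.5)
We take a terminal $\cutpn$ or $\otimes$ node $n_1$ of $\mathcal{G}$ that is minimal with respect to $\precio$ among all terminal $\cutpn$ and $\otimes$ nodes; such a node exists because $\precio$ is well-founded (\Cref{rmk:io-order}). We claim that $n_1$ is \ioter, i.e.\ minimal among \emph{all} terminal nodes. Suppose not. Then there is a terminal node $m \precio n_1$, and by minimality of $n_1$ this $m$ is not a $\cutpn$ or $\otimes$ node. The hypotheses exclude terminal $\bot$, $\parr$, $\drpn$ and $\wn$ nodes, so $m$ is a terminal $\axpn$, $\onepn$ or $\oc$ node. The task is therefore to show that no such node can satisfy $m \precio n_1$.

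First note that $\mathcal{G}$ has no $\pout$ node. Otherwise, by \Cref{lemma:poutter-terminal-pout} (which uses the \poutter hypothesis and $\ACio$), $\mathcal{G}$ would have a terminal $\pout$ node, contradicting the absence of terminal $\parr$ nodes.

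Now examine a terminal $\axpn$, $\onepn$ or $\oc$ node $m$ in $\iograph{\mathcal{G}}$. The polarized constraints (\Cref{fig:polarized}) give the following. An $\axpn$ node has one input and one output conclusion. A $\onepn$ node has one output conclusion. A $\oc$ node has an output \pport and input \aps. Since $m$ is terminal, all its conclusions are premises of $\bullet$ nodes. In $\iograph{\mathcal{G}}$:
\begin{itemize}
\item each output conclusion of $m$ is oriented from $m$ to an output $\bullet$ node, which is a sink;
\item each input conclusion is reversed, so it is oriented from a $\bullet$ node into $m$.
\end{itemize}
A $\bullet$ node has no other incident arcs, so the input $\bullet$ nodes have in-degree $0$. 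Hence the only out-neighbours of $m$ are sinks that are $\bullet$ nodes. Every non-empty directed path starting at $m$ therefore has length one and ends at a $\bullet$ node. Since $n_1$ is not a $\bullet$ node, we get $m \not\precio n_1$, which is a contradiction.

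The only delicate point is to check that reversing input arcs does not create further outgoing arcs from $m$, and that the $\bullet$ targets are genuinely sinks, i.e.\ have no conclusions. Both follow directly from the local constraints of \Cref{fig:nodes-local-constraints}. The case split above covers every possibility. Thus $n_1$ is a \ioter $\cutpn$ or $\otimes$ node.
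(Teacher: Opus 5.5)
Your proof is correct, and it reaches the conclusion by a different and more direct route than the paper.

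The paper works component-wise:
\begin{itemize}
\item it starts from a terminal $\cutpn$ or $\otimes$ node $n$ and restricts to its connected component $C$;
\item it picks a terminimal node $n_0$ of $C$, via \Cref{lemma:existence-terminal-node} and \Cref{rmk:acio-connected-component};
\item it lifts the terminimality of $n_0$ to $\mathcal{G}$ with \Cref{lemma:ioter-connected-component};
\item it excludes that $n_0$ is an $\axpn$, $\onepn$ or $\oc$ node by an undirected observation: such a terminal node, together with the $\bullet$ nodes below it, forms an entire connected component, so it could not share $C$ with $n$.
\end{itemize}

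You instead minimize directly over the terminal $\cutpn$ and $\otimes$ nodes. You then show that the only remaining competitors, namely terminal $\axpn$, $\onepn$ and $\oc$ nodes, lie $\precio$-below output $\bullet$ nodes only. This is the directed counterpart of the same isolation fact.

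What your route buys is that it dispenses with all the component-wise lemmas. It also never actually uses the absence of $\pout$ nodes. The paragraph deriving it from \Cref{lemma:poutter-terminal-pout} can simply be deleted, so your argument shows that the $\pout$-terminal hypothesis is not needed for this lemma. In the paper, that fact enters only through the hypotheses of the auxiliary lemmas it invokes. In exchange, the paper's route places the terminimal $\cutpn$ or $\otimes$ node in the connected component of any prescribed terminal one.

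One small point to make explicit in your write-up concerns the claim that the out-neighbours of $m$ in $\iograph{\mathcal{G}}$ are only output $\bullet$ nodes. This relies on $\axpn$, $\onepn$ and $\oc$ nodes having no premises in the ground-structure, so that their conclusions are their only incident arcs.
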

  
 \begin{proof}
  Let $n$ be a terminal $\cutpn$ or $\otimes$ node of $\mathcal{G}$. We consider the connected component $C$ of $\mathcal{G}$ which contains $n$. Since $\mathcal{G}$ is \poutter and has no terminal $\pout$ node, by \Cref{lemma:poutter-terminal-pout}~we~know that $\mathcal{G}$ (and in particular $C$) has no $\pout$ node. By \Cref{lemma:existence-terminal-node} (and by \Cref{rmk:acio-connected-component}),~there~exists~a terminal node of $C$, so we can consider a \ioter node $n_0$ of $C$. By \Cref{lemma:ioter-connected-component}, we~know~that $n_0$ is also a \ioter node of $\mathcal{G}$. We prove that $n_0$ is a $\cutpn$ or $\otimes$ node. By the hypothesis~that $\mathcal{G}$ has no terminal $\bot$, $\parr$, $\drpn$ or $\wn$ node, we know that $n_0$ is an $\axpn$, $\cutpn$, $\onepn$, $\otimes$ or $\oc$ node. It~suffices to exclude that $n_0$ is an $\axpn$, $\onepn$ or $\oc$ node: if that were the case then, besides $n_0$, every node~of~$C$ would be a $\bullet$ node, and thus the $\cutpn$ or $\otimes$ node $n$ would not be a node of $C$, contradiction.
 \end{proof}
 
 \begin{remark}
  \label{rmk:acio-cio}
  Let $R \in \psset$, let $n$ be a $\cutpn$, $\bot$, $\otimes$, $\parr$, $\drpn$ or $\wn$ node of $\undgs{R}$, let $R_1, R_2 \in \psset$ such that $R$ has one of the shapes in \Cref{fig:sequential}, depending on the label of $n$, and let $k \coloneq 2$ if $n$ is a $\cutpn$ or $\otimes$ node, $k \coloneq 1$ otherwise. For every $R' \in \psset$, we have $R' \sqsubset R$ if and only if $R' \sqsubset R_i$~for~some $i \in \{1, \dots, k\}$. Thus, if $R$ is \pol,~the~following~conditions~are~equivalent:
  \begin{itemize}
   \item
    For every $R' \sqsubset R$, we have $R' \models \ACio$ (resp.~$\Cio$);
   \item
    For every $i \in \{1, \dots, k\}$ and for every $R' \sqsubset R_i$, we have $R' \models \ACio$ (resp.~$\Cio$).
  \end{itemize}
  
  \begin{figure}
   \centering
   \subcaptionbox{$l = \axpn$}{
    \centering
    \input{Figures/sequential-axiom.tikz}
   }
   \hskip 0.5em
   \subcaptionbox{\label{subfig:sequential-cut}$l = \cutpn$}{
    \centering
    \input{Figures/sequential-cut.tikz}
   }
   \hskip 0.5em
   \subcaptionbox{$l = \onepn$}[3.25em][c]{
    \centering
    \input{Figures/sequential-one.tikz}
   }
   \hskip 0.5em
   \subcaptionbox{$l = \bot$}{
    \centering
    \input{Figures/sequential-bottom.tikz}
   }
   \vskip 1em
   \subcaptionbox{\label{subfig:sequential-tensor}$l = \otimes$}{
    \centering
    \input{Figures/sequential-tensor.tikz}
   }
   \hskip 0.5em
   \subcaptionbox{\label{subfig:sequential-par}$l = \parr$}{
    \centering
    \input{Figures/sequential-par.tikz}
   }
   \hskip 0.5em
   \subcaptionbox{\label{subfig:sequential-bang}$l = \oc$}{
    \centering
    \input{Figures/sequential-bang.tikz}
   }
   \hskip 0.5em
   \subcaptionbox{$l = \drpn$}{
    \centering
    \input{Figures/sequential-dereliction.tikz}
   }
   \hskip 0.5em
   \subcaptionbox{$l = \wn$}{
    \centering
    \input{Figures/sequential-why-not.tikz}
   }
   \caption{\label{fig:sequential}Possible shapes of an $n$-\seq \ps, according to the label $l$ of $n$.}
  \end{figure}
 \end{remark}
 
 \begin{lemma}
  \label{lemma:ground-acio-cio}
  Let $R \in \psset$, let $n$ be a $\cutpn$, $\bot$, $\otimes$, $\parr$, $\drpn$ or $\wn$ node of $\undgs{R}$, let $R_1, R_2 \in \psset$ such that $R$ has one of the shapes in \Cref{fig:sequential}, depending on the label of $n$, and let $k \coloneq 2$ if $n$~is~a $\cutpn$ or $\otimes$ node, $k \coloneq 1$ otherwise. If $R$ is \pol, then the following statements hold:
  \begin{enumerate}[(i)]
   \item \label{itm:ground-acio}
    We have $\undgs{R} \models \ACio$ if and only if, for every $i \in \{1, \dots, k\}$, we have $\undgs{R_i} \models \ACio$;
   \item \label{itm:ground-cio}
    If $k = 1$, then $\undgs{R} \models \Cio$ if and only if $\undgs{R_1} \models \Cio$.
  \end{enumerate}
 \end{lemma}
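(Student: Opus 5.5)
The plan is to compare the $\io$ graph of $R$ with the $\io$ graphs of the $R_i$ directly. By \Cref{fig:sequential}, $\undgs{R}$ is obtained from the disjoint union of $\undgs{R_1}$ (and $\undgs{R_2}$ if $k = 2$) as follows. The $\bullet$ nodes of the $R_i$ that receive the premises of $n$ are identified into the single node $n$, and $n$ gets a fresh conclusion ending in a fresh $\bullet$ node (no conclusion when $n$ is a $\cutpn$). Since the arc labels of $R$ restrict to labellings of the $R_i$ respecting \Cref{fig:polarized}, each $R_i$ is \pol. Moreover, $\iograph{\undgs{R}}$ restricted to the arcs of $R_i$ coincides with $\iograph{\undgs{R_i}}$, up to this renaming of the $\bullet$ nodes.

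For (i), I first argue that no directed cycle of $\iograph{\undgs{R_i}}$ passes through a $\bullet$ node, since such a node has exactly one incident arc. Hence every directed cycle of $\iograph{\undgs{R_i}}$ avoids the renamed nodes and is also a directed cycle of $\iograph{\undgs{R}}$; this gives the left-to-right direction. Conversely, let $\delta$ be a directed cycle of $\iograph{\undgs{R}}$. It suffices to show that $\delta$ does not pass through $n$, because then $\delta$ lies inside a single $\iograph{\undgs{R_i}}$. The conclusion of $n$ ends in a $\bullet$ node of degree one, so if $\delta$ passed through $n$ it would use two premises of $n$. When $k = 2$ this is impossible: the two premises lie in the disjoint parts $R_1$ and $R_2$, which are connected only through $n$, and a cycle cannot revisit $n$. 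When $k = 1$, only a $\parr$ or a $\wn$ node can have two premises. If $n$ is a $\pout$ node, its input premise is erased in $\iograph{\undgs{R}}$, so only one premise survives. If $n$ is a $\pin$ or $\wn$ node, all its premises are input arcs, so after reversal both are outgoing from $n$, and $\delta$ cannot both enter and leave $n$ through them. In every case $\delta$ avoids $n$.

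For (ii), I just count classical $\drpn$ nodes and output $\bullet$ nodes, case by case on the label of $n$, using the local constraints of \Cref{fig:polarized}. If $n$ is a $\bot$ node, nothing changes. If $n$ is a $\pin$ or $\wn$ node, its premises and conclusion are all input, so both counts are unchanged. If $n$ is a $\pout$ node, $R_1$ has one output $\bullet$ node (for the output premise) where $R$ has one output $\bullet$ node (for the output conclusion). If $n$ is an intuitionistic $\drpn$ node, its premise is input, so both counts coincide. If $n$ is a classical $\drpn$ node, $R$ has one more classical $\drpn$ node while $R_1$ has one more output $\bullet$ node. In every case $\card{\drcl(\undgs{R})} + \card{\out(\undgs{R})} = \card{\drcl(\undgs{R_1})} + \card{\out(\undgs{R_1})}$, which gives (ii).

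The main obstacle is the cycle analysis through $n$ in (i). It requires carefully tracking, for each polarity pattern of \Cref{fig:polarized}, which premises of $n$ survive in the $\io$ graph and with which orientation; the rest is bookkeeping.
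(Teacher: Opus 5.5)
Your proof is correct and follows the paper's argument. For (i), the paper only asserts that directed cycles of $\iograph{\undgs{R}}$ are exactly those of the graphs $\iograph{\undgs{R_i}}$; you justify this by showing that no directed cycle can pass through $n$ or through a $\bullet$ node. For (ii), you give the same case-by-case count of classical $\drpn$ nodes and output $\bullet$ nodes as the paper. One small imprecision: your claim that the two $\io$ graphs coincide on the arcs of $R_i$ is slightly loose when $n$ is a $\pout$ node, since its input premise is erased only in $\iograph{\undgs{R}}$. This is harmless, because that arc cannot lie on any cycle.
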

  
 \begin{proof}
  \Cref{itm:ground-acio} is an immediate consequence of the fact that, for every directed cycle $\delta$, we have that $\delta$ is in $\iograph{\undgs{R}}$ if and only if there exists $i \in \{1, \dots, k\}$ such that $\delta$ is in $\iograph{\undgs{R_i}}$. As~for~\Cref{itm:ground-cio},~it is enough to prove the equality:
  \[
   \card{\drcl(\undgs{R})} + \card{\out(\undgs{R})} = \card{\drcl(\undgs{R_1})} + \card{\out(\undgs{R_1})}
  \]
  We consider two cases for $n$:
  \begin{itemize}
   \item
    \emph{$\bot$, $\parr$, intuitionistic $\drpn$ or $\wn$ node.} Then we know that $\drcl(\undgs{R}) = \drcl(\undgs{R_1})$. We observe that, if $n$ is a $\pout$ node, then $n$ has exactly one output conclusion and exactly one output premise, otherwise $n$ has~no~output conclusion and no output premise. Since $n$ has the same number of output conclusions and output premises, we have $\card{\out(\undgs{R})} = \card{\out(\undgs{R_1})}$,~and~so~we~are~done;
   \item
    \emph{Classical $\drpn$ node.} Then $\card{\drcl(\undgs{R})} = \card{\drcl(\undgs{R_1})} + 1$, because we have $\drcl(\undgs{R}) = \drcl(\undgs{R_1}) \cup \{n\}$ and $n \notin \drcl(\undgs{R_1})$. By definition of classical $\drpn$ node (\Cref{def:polarized}), we know that $n$ has no output conclusion and exactly one output premise, and thus $\card{\out(\undgs{R})} = \card{\out(\undgs{R_1})} - 1$. \qedhere
  \end{itemize}
 \end{proof}
  
 \begin{lemma}
  \label{lemma:acio-cio}
  Let $R \in \psset$, let $n$ be a $\cutpn$, $\bot$, $\otimes$, $\parr$, $\oc$, $\drpn$ or $\wn$ node of $\undgs{R}$, let $R_1, R_2 \in \psset$ such that $R$ has one of the shapes in \Cref{fig:sequential}, depending on the label of $n$, and let $k \coloneq 2$~if~$n$~is~a $\cutpn$ or $\otimes$ node, $k \coloneq 1$ otherwise. If $R$ is \pol, then the following statements hold:
  \begin{enumerate}[(i)]
   \item
    We have $R \models \ACio$ if and only if, for every $i \in \{1, \dots, k\}$, we have $R_i \models \ACio$;
   \item
    If $k = 1$, then $R \models \Cio$ if and only if $R_1 \models \Cio$.
  \end{enumerate}
 \end{lemma}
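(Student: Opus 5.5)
The plan is to reduce the statement to \Cref{lemma:ground-acio-cio} together with \Cref{rmk:acio-cio}. The ground-level conditions are handled by \Cref{lemma:ground-acio-cio} and the box-level ones by \Cref{rmk:acio-cio}. The one genuinely new case is $n$ a $\oc$ node, which \Cref{lemma:ground-acio-cio} does not cover. By \Cref{def:outp-terminal}, $R \models \ACio$ means $\undgs{R} \models \ACio$ and $R' \models \ACio$ for all $R' \sqsubset R$, and likewise for $\Cio$.

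\textbf{Case $n$ not a $\oc$ node} ($\cutpn$, $\bot$, $\otimes$, $\parr$, $\drpn$ or $\wn$). For (i), \Cref{lemma:ground-acio-cio}.(i) gives $\undgs{R} \models \ACio$ iff $\undgs{R_i} \models \ACio$ for all $i \leq k$. \Cref{rmk:acio-cio} gives that all boxes of $R$ satisfy $\ACio$ iff all boxes of every $R_i$ do. Since $\ACio$ for a box $R' \sqsubset R$ is defined recursively and the set of such boxes is the union of the boxes of the $R_i$, the two equivalences combine to $R \models \ACio$ iff $R_i \models \ACio$ for all $i$. Item (ii) follows in the same way from \Cref{lemma:ground-acio-cio}.(ii) and the $\Cio$ half of \Cref{rmk:acio-cio}.

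\textbf{Case $n$ a $\oc$ node.} Here $k = 1$, and by \Cref{fig:sequential} $\undgs{R}$ consists of $n$ alone with its conclusions going to $\bullet$ nodes, while $R_1 = \undbox{R}(n)$ is the unique box of $R$. So the set of $R' \sqsubset R$ is $\{R_1\}$, and $R \models \ACio$ iff $\undgs{R} \models \ACio$ and $R_1 \models \ACio$. It then remains to check directly that $\undgs{R}$ satisfies both conditions. For $\ACio$: $\iograph{\undgs{R}}$ is a star with no cycles, so $\undgs{R} \models \ACio$ trivially. For $\Cio$: $\undgs{R}$ has no $\drpn$ node, and its only output conclusion is the \pport (the auxiliary ports being input), so $\card{\drcl} + \card{\out} = 1$ and $\undgs{R} \models \Cio$. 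Hence both items reduce to the corresponding property of $R_1$, as required.

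The main obstacle is the bookkeeping in the $\oc$ case, where the box of $R$ is $R_1$ itself rather than boxes of $R_1$. One must make sure that the statement's $R_1 \models \ACio$ refers to the \ps $R_1$ including its ground structure, which is exactly the condition on the unique element of $\{R' \sqsubset R\}$. Everything else is direct unfolding of the definitions.
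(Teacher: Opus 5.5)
Your proposal is correct and follows the paper's proof. For a non-$\oc$ node you combine \Cref{lemma:ground-acio-cio} with \Cref{rmk:acio-cio}, and for a $\oc$ node you use that $R_1$ is the only box of $R$ and that $\undgs{R} \models \ACCio$. Your direct check of the latter ($\iograph{\undgs{R}}$ is an acyclic star, and the \pport gives the unique output $\bullet$ node with no $\drpn$ nodes present) spells out what the paper asserts without comment.
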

  
 \begin{proof}
  If $n$ is a $\oc$ node of $\undgs{R}$, then $k = 1$ and we are done, because $\undgs{R} \models \ACCio$ and the only $R' \in \psset$ such that $R' \sqsubset R$ is $R_1$. Otherwise, it suffices to apply \Cref{rmk:acio-cio}~and~\Cref{lemma:ground-acio-cio}.
 \end{proof}
 
 \poutterAccioSequential*
 
 \begin{proof}
  By induction on $\card{\ar(R)}$. First of all, suppose there exists a terminal $\bot$, $\parr$, $\drpn$ or $\wn$ node $n$ of  $\undgs{R}$. Then there exists $R_1 \in \psset$ such that $R$ has one of the shapes in \Cref{fig:sequential}, depending on the label of $n$. Clearly, the \ps $R_1$ is \poutter and, by \Cref{lemma:acio-cio}, we have $R_1 \models \ACCio$. Since $\card{\ar(R_1)} < \card{\ar(R_1)} + 1 \leq \card{\ar(R)}$, we can apply the induction hypothesis on $R_1$, from which we deduce that $R_1$ is \seq. We can then conclude that $R$ is $n$-\seq.
       
  We now suppose that $\undgs{R}$ has no terminal $\bot$, $\parr$, $\drpn$ or $\wn$ node, and we consider two cases:
  \begin{itemize}
   \item
    \emph{Every terminal node of $\undgs{R}$ is an $\axpn$, $\onepn$ or $\oc$ node.} By \Cref{lemma:existence-terminal-node} (and by \Cref{rmk:acio-connected-component}), every $C \in \cc(\undgs{R})$ is made up of exactly one terminal $\axpn$, $\onepn$ or $\oc$ node. In particular, we obtain that $\card{\drcl(\undgs{R})} = 0$. Since $R \models \ACCio$, we have $\card{\out(\undgs{R})} = \card{\drcl(\undgs{R})} + \card{\out(\undgs{R})} = 1$. We deduce that $\undgs{R}$ is a connected graph made up of exactly one terminal $\axpn$, $\onepn$ or $\oc$ node $n$. If $n$ is an $\axpn$ or $\onepn$ node, then we are done because $R$ is $n$-\seq. Otherwise, there exists $R_1 \in \psset$ such that $R$ has the shape in \Cref{subfig:sequential-bang}. Since $R$ is \poutter~and~$R_1 \sqsubset R$,~we~know~that~$R_1$~is \poutter and that $R_1 \models \ACCio$. Since $\card{\ar(R_1)} < \card{\ar(R_1)} + 1 \leq \card{\ar(R)}$, we can apply~the induction hypothesis on $R_1$. We conclude that $R_1$ is \seq, and thus $R$ is $n$-\seq;
   \item
    \emph{There exists a terminal $\cutpn$ or $\otimes$ node of $\undgs{R}$.} By \Cref{lemma:ioter-cut-tensor}, there exists a \ioter $\cutpn$ or $\otimes$ node $n$ of $\undgs{R}$. Since $n$ is a $\cutpn$ or $\otimes$ node, it has at least one output premise $a_1$. Let $a_2$ be the other premise of $n$, and let $G$ be the graph obtained from $\undgs{R}$ by adding two fresh $\bullet$ nodes $n_1$ and $n_2$, by setting the head of $a_i$ to be $n_i$ for each $i \in \{1, 2\}$,~and~by~erasing the connected component which contains $n$ from the resulting graph. Let $G_1$ be the~connected component of $G$ that contains $a_1$, and let $G_2$ be the graph obtained from $G$ by erasing $G_1$. We equip $G_1$ and $G_2$ with arbitrary total orderings on the premises of their $\bullet$ nodes, so that they become \gss. For each $i \in \{1, 2\}$, we consider the \ps $R_i$ obtained by taking $\undgs{R_i} \coloneq G_i$ and by defining $\undbox{R_i}$ as the restriction of $\undbox{R}$ to~$\bset(G_i)$. Since, by \Cref{prop:splitting}, no cycle of $\undgs{R}$ has $n$ among its nodes, the \ps $R$ has one of the shapes in \Cref{subfig:sequential-cut,subfig:sequential-tensor}, depending on the label of $n$. We obviously~have~that $R_1$ and $R_2$ are \poutter.
             
    We claim that $R_i \models \ACCio$ for each $i \in \{1, 2\}$. By \Cref{rmk:acio-cio} and \Cref{lemma:acio-cio}, it is~sufficient to prove that $G_i \models \Cio$. Since $a_1$ is output, by \Cref{itm:ioter-connected-component} of \Cref{lemma:ioter} the set $\drcl(G_1)$ is~empty and $\out(G_1) = \{a_1\}$, thus:
    \[
     \card{\drcl(G_1)} + \card{\out(G_1)} = 0 + 1 = 1
    \]
    This proves that $G_1 \models \Cio$. We now observe that, for every $n' \in \nd(\undgs{R})$, either $n' = n$, or $n'$ is the $\bullet$ node having the conclusion of $n$ as its premise, or there exists $i \in \{1, 2\}$ such that $n' \in \nd(G_i)$. It follows that $\drcl(G_2) = \drcl(\undgs{R})$. Also notice that $a_2$ is output if and only if $n$ is a $\tout$ node, so $\card{\out(G_2)} = \card{\out(\undgs{R})}$. From $\undgs{R} \models \Cio$ we then deduce~that~$G_2 \models \Cio$,~because:
    \[
     \card{\drcl(G_2)} + \card{\out(G_2)} = \card{\drcl(\undgs{R})} + \card{\out(\undgs{R})} = 1
    \]
             
    Finally, since $\card{\ar(R_i)} < \card{\ar(R_i)} + 1 \leq \card{\ar(R)}$ for each $i \in \{1, 2\}$, we can apply the induction hypothesis on $R_i$, from which we deduce that $R_i$ is \seq, thus $R$ is $n$-\seq. \qedhere
  \end{itemize}
 \end{proof}
     
\subsection{Complements of \texorpdfstring{\Cref{sec:typed-setting}}{Section 4}}
 \label{subsec:complements-typed-setting}
 
 \subsubsection{Definition of the desequentialization function}
  \label{subsec:desequentialization}
 
  \begin{definition}
   Let $\Sigma = A_1, \dots, A_k$ be a \sequent{\MELL}. For every \seqcptWithNameAndConclusion{$\pi$}{\MELL}{$\Sigma$}, we define $\Deseq{\pi} \in \psf{\MELL}$ with conclusions of types $A_1, \dots, A_k$,~by induction on $\pi$. Consider the last rule of $\pi$. Cases:
   \begin{itemize}
    \item
     \axsc. Then $\pi$ is only made of an \axsc rule 
     with $\Sigma = A, A^\bot$. We define $\Deseq{\pi}$ as in \Cref{subfig:desequentialization-axiom}; 
    \item
     \cutsc. Then $\Sigma = \Gamma, \Delta$ and $\pi$ is obtained by applying a \cutsc rule to \seqcpstWithNameAndConclusion{$\pi_1$ and $\pi_2$}{\MELL}{$\Gamma, A$ and $A^\bot, \Delta$} respectively. By induction hypothesis on $\pi_1$ and $\pi_2$, we have that $\Deseq{\pi_1}, \Deseq{\pi_2} \in \psf{\MELL}$ are defined. We define $\Deseq{\pi}$ as depicted in~\Cref{subfig:desequentialization-cut} with $R_i \coloneq \Deseq{\pi_i}$ for each $i \in \{1, 2\}$;
    \item
     $\one$. Then $\pi$ is only made of a $\one$ rule 
     with $\Sigma = \one$. We define $\Deseq{\pi}$ as in \Cref{subfig:desequentialization-one};
    \item
     $\bot$. Then $\Sigma = \Gamma, \bot$ and $\pi$ is obtained by applying a $\bot$ rule to a \seqcptWithNameAndConclusion{$\pi_1$}{\MELL}{$\Gamma$}. By induction hypothesis on $\pi_1$, we obtain that $\Deseq{\pi_1} \in \psf{\MELL}$~is defined. We define $\Deseq{\pi}$ as in \Cref{subfig:desequentialization-bottom} with $R_1 \coloneq \Deseq{\pi_1}$;
    \item
     $\otimes$. Then $\Sigma = \Gamma, A \otimes B, \Delta$ and $\pi$ is obtained by applying a $\otimes$ rule to \seqcpstWithNameAndConclusion{$\pi_1$ and $\pi_2$}{\MELL}{$\Gamma, A$ and $B, \Delta$} respectively. By induction~hypothesis on $\pi_1$ and on $\pi_2$, we obtain that $\Deseq{\pi_1}, \Deseq{\pi_2} \in \psf{\MELL}$ are defined. We define $\Deseq{\pi}$ as in \Cref{subfig:desequentialization-tensor} with $R_i \coloneq \Deseq{\pi_i}$ for each $i \in \{1, 2\}$;
    \item
     $\parr$. Then $\Sigma = \Gamma, A \parr B$ and $\pi$ is obtained by applying a $\parr$ rule to a \seqcptWithNameAndConclusion{$\pi_1$}{\MELL}{$\Gamma, A, B$}. We can apply the induction hypothesis on~$\pi_1$,~from~which it follows that $\Deseq{\pi_1} \in \psf{\MELL}$ is defined. We define~$\Deseq{\pi}$ as depicted in \Cref{subfig:desequentialization-par} with $R_1 \coloneq \Deseq{\pi_1}$;
    \item
     $\oc$. Then $\Sigma = \wn \Gamma, \oc A$ and $\pi$ is obtained by applying a $\oc$ rule to a \seqcptWithNameAndConclusion{$\pi_1$}{\MELL}{$\wn \Gamma, A$}. By induction hypothesis on $\pi_1$, we know that $\Deseq{\pi_1} \in \psf{\MELL}$~is defined. We define $\Deseq{\pi}$ as in \Cref{subfig:desequentialization-bang} with $R_1 \coloneq \Deseq{\pi_1}$;
    \item
     \emph{\drsc (resp.~$\wn$)}. Then $\Sigma = \Gamma, \wn A$ and $\pi$ is obtained by applying a \drsc (resp.~$\wn$) rule to a \seqcptWithName{$\pi_1$}{\MELL} with conclusion $\Gamma, A$ (resp.~$\Gamma, \wn A, \dots, \wn A$). We~can~apply the induction hypothesis on $\pi_1$, from which we deduce that $\Deseq{\pi_1} \in \psf{\MELL}$ is defined. We~define~$\Deseq{\pi}$ as in \Cref{subfig:desequentialization-dereliction} (resp.~\ref{subfig:desequentialization-why-not}) with $R_1 \coloneq \Deseq{\pi_1}$;
    \item
     \exch. Then $\Sigma = \Gamma, B, A, \Delta$ and $\pi$ is obtained by applying an \exch rule to a \seqcptWithNameAndConclusion{$\pi_1$}{\MELL}{$\Gamma, A, B, \Delta$}. By induction hypothesis on $\pi_1$, we obtain that $\Deseq{\pi_1} \in \psf{\MELL}$ is defined, with ordered conclusions $c_1, \dots, c_h, a, b, d_1, \dots, d_l$ such that $\Gamma$ is the sequence of the types of $c_1, \dots, c_h$, $A$ and $B$ are the types of $a$ and $b$ respectively, and $\Delta$ is the sequence of the types of $d_1, \dots, d_l$. We define $\Deseq{\pi}$ as the \pst{\MELL} obtained from $\Deseq{\pi_1}$ by just changing the total ordering of its conclusions in the~following~way: $c_1, \dots, c_h, b, a, d_1, \dots, d_l$.
   \end{itemize}
  \end{definition}
  
 \subsubsection{Embedding \texorpdfstring{\IMELL}{IMELL} into \texorpdfstring{\VMELL}{VMELL}}
  \label{subsec:embedding-imell-vmell}
  
  The Venn diagram in \Cref{fig:venn-diagram} graphically represents the set relations between the fragments $\LLpol$, \IMELL and \VMELL. Notice that \IMELL is not a fragment of \VMELL. In~this section, we present an embedding of \seqcpst{\IMELL} into \VMELL. In~particular,~the translation of the formulae preserves provability (\Cref{prop:IMELLtoVMELL}).
  
  \begin{figure}
   \centering
   \input{Figures/venn-diagram.tikz}
   \caption{\label{fig:venn-diagram}Venn diagram of some fragments of \MELL we considered.}
  \end{figure}
  
  \begin{notation}
   We denote by $\outfragm{\IMELL}$ (resp.~$\inpfragm{\IMELL}$) the set of output (resp.~input) \formulae{\IMELL}, and by $\outfragm{\VMELL}$ (resp.~$\inpfragm{\VMELL}$) the set of output (resp.~input) \formulae{\VMELL}.
  \end{notation}
  
  \begin{definition}
   The function $\VMELLt{(\cdot)}$ from \IMELL to \MELL is defined inductively on $A \in \IMELL$:
   \begin{itemize}
    \item
     \emph{$A = X$.} Then $\VMELLt{A} \coloneq \oc X$;
    \item
     \emph{$A = X^\bot$.} Then $\VMELLt{A} \coloneq \wn X^\bot$;
    \item
     \emph{$A = \one$.} Then $\VMELLt{A} \coloneq \oc \one$;
    \item
     \emph{$A = \bot$.} Then $\VMELLt{A} \coloneq \wn \bot$;
    \item
     \emph{$A = O_1 \otimes O_2$ for some $O_1, O_2 \in \outfragm{\IMELL}$.} Then $\VMELLt{A} \coloneq \oc (\VMELLt{O_1} \otimes \VMELLt{O_2})$;
    \item
     \emph{$A = I \otimes O$ for some $I \in \inpfragm{\IMELL}$ and $O \in \outfragm{\IMELL}$.} Then $\VMELLt{A} \coloneq \wn (\VMELLt{I} \otimes \VMELLt{O})$;
    \item
     \emph{$A = O \otimes I$ for some $O \in \outfragm{\IMELL}$ and $I \in \inpfragm{\IMELL}$.} Then $\VMELLt{A} \coloneq \wn (\VMELLt{O} \otimes \VMELLt{I})$;
    \item
     \emph{$A = I_1 \parr I_2$ for some $I_1, I_2 \in \inpfragm{\IMELL}$.} Then $\VMELLt{A} \coloneq \wn (\VMELLt{I_1} \parr \VMELLt{I_2})$;
    \item
     \emph{$A = O \parr I$ for some $O \in \outfragm{\IMELL}$ and $I \in \inpfragm{\IMELL}$.} Then $\VMELLt{A} \coloneq \oc (\VMELLt{O} \parr \VMELLt{I})$;
    \item
     \emph{$A = I \parr O$ for some $I \in \inpfragm{\IMELL}$ and $O \in \outfragm{\IMELL}$.} Then $\VMELLt{A} \coloneq \oc (\VMELLt{I} \parr \VMELLt{O})$;
    \item
     \emph{$A = \oc O$ for some $O \in \outfragm{\IMELL}$.} Then $\VMELLt{A} \coloneq \VMELLt{O}$;
    \item
     \emph{$A = \wn I$ for some $I \in \inpfragm{\IMELL}$.} Then $\VMELLt{A} \coloneq \VMELLt{I}$.
   \end{itemize}
   We extend $\VMELLt{(\cdot)}$ to \sequents{\IMELL} by defining $\VMELLt{\Gamma} \coloneqq \VMELLt{A_1}, \dots, \VMELLt{A_k}$ for every $\Gamma = A_1, \dots, A_k$.
  \end{definition}
  
  The following remarks express simple properties of the translation that can be checked~by~a straightforward proof by induction on $A$.
  
  \begin{remark}
   \label{rmk:vmell-pos-neg}
   Let $A \in \IMELL$. Then:
   \begin{enumerate}[(i)]
    \item 
     If $A$ is output, we have $\VMELLt{A} = \oc O$ for some $O \in \outfragm{\VMELL}$;
    \item
     If $A$ is input, we have $\VMELLt{A} = \wn I$ for some $I \in \inpfragm{\VMELL}$.
   \end{enumerate}
   In particular, we have $\VMELLt{A} \in \VMELL$.
  \end{remark}
  
  \begin{remark}
   \label{rmk:vmell-not}
   For every formula $A \in \IMELL$, we have $\VMELLt{(A^\bot)} = (\VMELLt{A})^\bot$.
  \end{remark}
  
  We recall the following result about provability in \IMELL.
  
  \begin{proposition}[\cite{lamarche2008proof}]
   \label{prop:imell-one-output}
   Let $\Gamma$ be a \sequent{\IMELL}. If $\Gamma$ is provable in \IMELL, then~exactly one formula of $\Gamma$ is output.
  \end{proposition}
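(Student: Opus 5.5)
The plan is to derive the statement from the correctness criterion \Cref{cor:imell} together with soundness of \seqc. Given a \seqcptWithName{$\pi$}{\IMELL} with conclusion $\Gamma$, desequentialize it to $R = \Deseq{\pi} \in \psf{\IMELL}$, whose conclusions have exactly the types of $\Gamma$. By \Cref{rmk:sequential} $R$ is \seq, so \Cref{prop:accw-necessary} gives $R \models \ACCw$. Then \Cref{cor:imell} yields $\card{\out(\undgs{R})} = 1$. Since a conclusion of $R$ is output exactly when its type is an output \formula{\IMELL}, exactly one formula of $\Gamma$ is output.

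The step to check is that the arc labelling induced by \IMELL types, where output formulae give label $\out$ and input formulae give label $\inp$, really makes $R$ a \pol \ps, so that \Cref{cor:imell} and \Cref{lemma:connected-components} apply. This is the inclusion $\psf{\IMELL} \subsetneq \ppset$ already stated in the paper. It amounts to comparing the \IMELL grammar with \Cref{fig:polarized} node by node:
\begin{itemize}
\item $O \otimes O$ is output and matches $\tout$, while $I \otimes O$ and $O \otimes I$ are input and match the two $\tin$ shapes;
\item $I \parr I$ is input and matches $\pin$, while $O \parr I$ and $I \parr O$ are output and match the two $\pout$ shapes;
\item $\oc O$ is output and $\wn I$ is input, which matches the $\oc$, $\wn$ and intuitionistic $\drpn$ constraints;
\item axioms and cuts pair $A$ with $A^\bot$, which have opposite polarity.
\end{itemize}

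If one prefers to avoid the \pn machinery, a direct induction on $\pi$ also works, following the rules in \Cref{fig:sequent-calculus-rules}:
\begin{itemize}
\item \axsc gives $O, O^\bot$, with one output.
\item $\one$ is output.
\item $\bot$ and \drsc add or replace with an input formula; in \IMELL \drsc applies only to $I$, giving $\wn I$.
\item $\wn$ merges input formulae into one input.
\item $\parr$ combines two formulae into one, and the grammar shows $A \parr B$ is output iff exactly one of $A, B$ is.
\item $\oc$ keeps the context $\wn\Gamma$ all input and the single output $\oc O$.
\item $\otimes$ sums the output counts of its premises, $1+1$, minus one when the result $I\otimes O$ or $O \otimes I$ is input.
\item For \cutsc, $A$ and $A^\bot$ contribute exactly one output between them, so the count is $1+1-1$.
\end{itemize}
The only real care is in the $\otimes$ and $\parr$ cases. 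There the induction hypothesis forces the typing, since both premises of a $\otimes$ must have their outputs accounted for, and $O\otimes O$ occurs only when both active formulae are the outputs of their premises. I expect this bookkeeping to be the main, though mild, obstacle. I would present the short proof via \Cref{cor:imell} and mention the induction as a sanity check.
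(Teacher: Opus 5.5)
Your proof is correct, and both of your routes work. The paper does not prove this proposition itself: it imports it from \cite{lamarche2008proof}, where the standard argument is a direct induction on the derivation, which is essentially your second route.

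Your primary route is genuinely different. It derives the one-output property from the paper's geometric machinery: desequentialize, apply \Cref{prop:accw-necessary}, then \Cref{cor:imell}. This is exactly the geometric reading the paper points out in \Cref{sec:typed-setting}, where it says the property ``is characterized geometrically by $\ACCw$''. It is not circular. \Cref{cor:imell} rests only on \Cref{lemma:connected-components} and on the principal-port typing in \Cref{rmk:imell-output-conclusion}, never on \Cref{prop:imell-one-output}.

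What this route buys is an explanation: the unique output conclusion is the unique remaining sink of the $\io$ graph forced by $\card{\cc} = \card{\w} + 1$. What it costs is that a one-line count is routed through \Cref{prop:pfg-sink}, \Cref{prop:acyclicity}, \Cref{lemma:connected-components} and the inclusion $\psf{\IMELL} \subseteq \ppset$. Moreover, \Cref{prop:accw-necessary} is itself an induction on the sequential structure, so you are really trading an induction that counts outputs for one that counts connected components.

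Where the proposition is actually used, in the sequent-level embedding of \IMELL into \VMELL, the direct induction is the more natural and self-contained proof.

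One small imprecision in that induction concerns the $\otimes$ case. The net effect is $1+1-1$ in all three subcases, including $O \otimes O$, where the two active outputs merge into a single output. Your phrasing ``minus one when the result is input'' suggests no subtraction in the $O \otimes O$ case, though your following sentence does handle it correctly.
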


  \begin{definition}
   If $\pi$ is a \seqcptWithConclusion{\IMELL}{$\Sigma$}, then we define a \seqcptWithNameAndConclusion{$\VMELLt{\pi}$}{\VMELL}{$\VMELLt{\Sigma}$}, by induction on $\pi$. Consider~the~last rule of $\pi$. Cases:
   \begin{itemize}
    \item
     \axsc. Then $\pi$ is only made of an \axsc rule, and we have that $\Sigma = A, A^\bot$. We define $\VMELLt{\pi}$ as the \seqcp made of an \axsc rule with conclusion $\VMELLt{\Sigma}$ (\Cref{rmk:vmell-not}), in \Cref{subfig:vmell-axiom};
    \item
     \cutsc. Then $\Sigma = \Gamma, \Delta$ and $\pi$ is obtained by applying a \cutsc rule to two \seqcpstWithNameAndConclusion{$\pi_1$ and $\pi_2$}{\IMELL}{$\Gamma, A$ and $A^\bot, \Delta$ respectively}. By induction hypothesis on $\pi_1$ and on $\pi_2$, the \seqcpstWithNameAndConclusion{$\VMELLt{\pi_1}$ and $\VMELLt{\pi_2}$}{\VMELL}{$\VMELLt{\Gamma}, \VMELLt{A}$ and $(\VMELLt{A})^\bot, \VMELLt{\Delta}$ (\Cref{rmk:vmell-not})} respectively are defined. We define $\VMELLt{\pi}$~by~applying a \cutsc~rule~to $\VMELLt{\pi_1}$ and $\VMELLt{\pi_2}$, as shown in \Cref{subfig:vmell-cut};
    \item
     $\one$. Then $\pi$ is only made of a $\one$ rule, and we have that $\Sigma = \one$. We define $\VMELLt{\pi}$ as the \seqcp made of a $\one$ rule and a $\oc$ rule, in \Cref{subfig:vmell-one};
    \item
     $\bot$. Then $\Sigma = \Gamma, \bot$ and $\pi$ is obtained by applying a $\bot$ rule to a \seqcptWithNameAndConclusion{$\pi_1$}{\IMELL}{$\Gamma$}. By induction hypothesis on $\pi_1$, the \seqcptWithNameAndConclusion{$\VMELLt{\pi_1}$}{\VMELL}{$\VMELLt{\Gamma}$} is defined. We define $\VMELLt{\pi}$ by applying a $\bot$ rule and~a~\drsc~rule~to $\VMELLt{\pi_1}$, as shown in \Cref{subfig:vmell-bottom};
    \item
     $\otimes$. Then $\Sigma = \Gamma, A \otimes B, \Delta$ and $\pi$ is obtained by applying a $\otimes$ rule to \seqcpstWithNameAndConclusion{$\pi_1$ and $\pi_2$}{\IMELL}{$\Gamma, A$ and $B, \Delta$ respectively}. By induction hypothesis on $\pi_1$ and on $\pi_2$, the \seqcpstWithNameAndConclusion{$\VMELLt{\pi_1}$ and $\VMELLt{\pi_2}$}{\VMELL}{$\VMELLt{\Gamma}, \VMELLt{A}$ and $\VMELLt{B}, \VMELLt{\Delta}$ respectively} are defined. Since $A \otimes B \in \IMELL$, at least one of the formulae $A$ and $B$ is output. We can suppose $A$ is output without loss of generality. By \Cref{prop:imell-one-output}, every formula of $\Gamma$ is input. By \Cref{rmk:vmell-pos-neg}, every formula of $\VMELLt{\Gamma}$ has the shape $\wn I$ for some $I \in \inpfragm{\VMELL}$. Now suppose that $\Delta = A_1, \dots, A_k$ and~distinguish~the~two~following~subcases:
     \begin{itemize}
      \item
       \emph{$B$ is output.} Then, for every $i \in \{1, \dots, k\}$, the formula $A_i$ is input, thus, by \Cref{rmk:vmell-pos-neg}, there exists $I_i \in \inpfragm{\VMELL}$ such that $\VMELLt{A_i} = \wn I_i$. We define $\VMELLt{\pi}$ by applying a $\otimes$ rule to $\VMELLt{\pi_1}$ and $\VMELLt{\pi_2}$, then an $\exch$ rule $k$ times and a $\oc$ rule, as shown in \Cref{subfig:vmell-tensor-output};
      \item
       \emph{$B$ is input.} We define $\VMELLt{\pi}$ by applying a $\otimes$ rule to $\VMELLt{\pi_1}$ and $\VMELLt{\pi_2}$, then an $\exch$ rule $k$~times~and a \drsc rule, as shown in \Cref{subfig:vmell-tensor-input};
     \end{itemize}
    \item
     $\parr$. Then $\Sigma = \Gamma, A \parr B$ and $\pi$ is obtained by applying a $\parr$ rule to a \seqcptWithNameAndConclusion{$\pi_1$}{\IMELL}{$\Gamma, A, B$}. By induction hypothesis on $\pi_1$, the \seqcptWithNameAndConclusion{$\VMELLt{\pi_1}$}{\VMELL}{$\VMELLt{\Gamma}, \VMELLt{A}, \VMELLt{B}$} is defined. Since $A \parr B \in \IMELL$, at least one of the formulae $A$ and $B$ is input. We can suppose $A$ is input without loss~of~generality. We distinguish the two following subcases:
     \begin{itemize}
      \item
       \emph{$B$ is output.} By \Cref{prop:imell-one-output}, every formula of $\Gamma$ is input. Thus, by \Cref{rmk:vmell-pos-neg}, every formula of $\VMELLt{\Gamma}$ has the shape $\wn I$ for some $I \in \inpfragm{\VMELL}$. We define $\VMELLt{\pi}$ by applying~a~$\parr$~rule and a $\oc$ rule to $\VMELLt{\pi_1}$, as shown in \Cref{subfig:vmell-par-output};
      \item
       \emph{$B$ is input.} We define $\VMELLt{\pi}$ by applying a $\parr$ rule and a \drsc rule to $\VMELLt{\pi_1}$, as in \Cref{subfig:vmell-par-input};
     \end{itemize}
    \item
     $\oc$. Then $\Sigma = \wn \Gamma, \oc A$ and $\pi$ is obtained by applying a $\oc$ rule to a \seqcptWithNameAndConclusion{$\pi_1$}{\IMELL}{$\wn \Gamma, A$}. By induction hypothesis on $\pi_1$, the \seqcptWithNameAndConclusion{$\VMELLt{\pi_1}$}{\VMELL}{$\VMELLt{( \wn \Gamma)}, \VMELLt{A}$} is defined. Since $\VMELLt{(\oc A)} = \VMELLt{A}$ (and $\VMELLt{(\wn \Gamma)} = \VMELLt{\Gamma}$),~we just define $\VMELLt{\pi} \coloneq \VMELLt{\pi_1}$, as depicted in \Cref{subfig:vmell-bang-dereliction};
    \item
     \emph{\drsc (resp.~$\wn$).} Then $\Sigma = \Gamma, \wn A$ and $\pi$ is obtained by applying a \drsc (resp.~$\wn$) rule to a \seqcptWithName{$\pi_1$}{\IMELL} with conclusion $\Gamma, A$ (resp.~$\Gamma, \wn A, \dots, \wn A$). By~induction hypothesis on $\pi_1$, we can consider the \seqcptWithName{$\VMELLt{\pi_1}$}{\VMELL} with conclusion $\VMELLt{\Gamma}, \VMELLt{A}$ (resp.~$\VMELLt{\Gamma}, \VMELLt{(\wn A)}, \dots, \VMELLt{(\wn A)}$). We have that $\VMELLt{(\wn A)} = \VMELLt{A}$ and, since $\wn A \in \IMELL$,~the formula $A$ is input, thus $\VMELLt{A} = \wn I$ for some $I \in \inpfragm{\VMELL}$, by \Cref{rmk:vmell-pos-neg}. We define $\VMELLt{\pi}$~as~in \Cref{subfig:vmell-bang-dereliction} (resp.~\ref{subfig:vmell-why-not});
    \item
     \exch. Then $\Sigma = \Gamma, B, A, \Delta$ and $\pi$ is obtained by applying an \exch rule to a \seqcptWithNameAndConclusion{$\pi_1$}{\IMELL}{$\Gamma, A, B, \Delta$}. By induction hypothesis on $\pi_1$, the \seqcptWithNameAndConclusion{$\VMELLt{\pi_1}$}{\VMELL}{$\VMELLt{\Gamma}, \VMELLt{A}, \VMELLt{B}, \VMELLt{\Delta}$} is defined. We define $\VMELLt{\pi}$ by applying an \exch rule to $\VMELLt{\pi_1}$, as shown in \Cref{subfig:vmell-exchange}.
   \end{itemize}
      
   \begin{figure}
    \centering
    \subcaptionbox{\label{subfig:vmell-axiom}\axsc rule.}{
     \centering
     \begin{prooftree}
      \hypo{\infcstrut}
      \infer1[\scriptsize\axsc{}]{\infcstrut \vdash \VMELLt{A},(\VMELLt{A})^\bot}
     \end{prooftree}
    }
    \quad
    \subcaptionbox{\label{subfig:vmell-cut}\cutsc rule.}{
     \centering
     \begin{prooftree}
      \hypo{}
      \ellipsis{$\VMELLt{\pi_1}$}{\infcstrut \vdash \VMELLt{\Gamma}, \VMELLt{A}}
      \hypo{}
      \ellipsis{$\VMELLt{\pi_2}$}{\infcstrut \vdash (\VMELLt{A})^\bot, \VMELLt{\Delta}}
      \infer2[\scriptsize\cutsc{}]{\infcstrut \vdash \VMELLt{\Gamma}, \VMELLt{\Delta}}
     \end{prooftree}
    }
    \quad
    \subcaptionbox{\label{subfig:vmell-exchange}\exch rule.}{
     \centering
     \begin{prooftree}
      \hypo{}
      \ellipsis{$\VMELLt{\pi_1}$}{\infcstrut \vdash \VMELLt{\Gamma}, \VMELLt{A}, \VMELLt{B}, \VMELLt{\Delta}}
      \infer1[\scriptsize\exch{}]{\infcstrut \vdash \VMELLt{\Gamma}, \VMELLt{B}, \VMELLt{A}, \VMELLt{\Delta}}
     \end{prooftree}
    }
    \vskip 1em
    \subcaptionbox{\label{subfig:vmell-one}$\one$ rule.}[3.875em][c]{
     \centering
     \begin{prooftree}
      \hypo{}
      \infer1[\scriptsize$\one$]{\infcstrut \vdash \one}
      \infer1[\scriptsize$\oc$]{\infcstrut \vdash \oc \one}
     \end{prooftree}
    }
    \quad
    \subcaptionbox{\label{subfig:vmell-bottom}$\bot$ rule.}{
     \centering
     \begin{prooftree}
      \hypo{}
      \ellipsis{$\VMELLt{\pi_1}$}{\infcstrut \vdash \VMELLt{\Gamma}}
      \infer1[\scriptsize$\bot$]{\infcstrut \vdash \VMELLt{\Gamma}, \bot}
      \infer1[\scriptsize\drsc{}]{\infcstrut \vdash \VMELLt{\Gamma}, \wn \bot}
     \end{prooftree}
    }
    \quad
    \subcaptionbox{$\otimes$ rule, with $A$ output.}{
     \setcounter{subsubfigure}{0}
     \subsubcaptionbox{\label{subfig:vmell-tensor-output}$B$ output.}{
      \begin{prooftree}
       \hypo{}
       \ellipsis{$\VMELLt{\pi_1}$}{\infcstrut \vdash \VMELLt{\Gamma}, \VMELLt{A}}
       \hypo{}
       \ellipsis{$\VMELLt{\pi_2}$}{\infcstrut \vdash \VMELLt{B}, \VMELLt{\Delta}}
       \infer2[\scriptsize$\otimes$]{\infcstrut \vdash \VMELLt{\Gamma}, \VMELLt{A} \otimes \VMELLt{B}, \VMELLt{\Delta}}
       \infer[double]1[\scriptsize \exch]{\infcstrut \vdash \VMELLt{\Gamma}, \VMELLt{\Delta}, \VMELLt{A} \otimes \VMELLt{B}}
       \infer1[\scriptsize$\oc$]{\vdash \VMELLt{\Gamma}, \VMELLt{\Delta}, \oc (\VMELLt{A} \otimes \VMELLt{B})}
      \end{prooftree}
     }
     \hskip 0.5em
     \subsubcaptionbox{\label{subfig:vmell-tensor-input}$B$ input.}{
      \begin{prooftree}
       \hypo{}
       \ellipsis{$\VMELLt{\pi_1}$}{\infcstrut \vdash \VMELLt{\Gamma}, \VMELLt{A}}
       \hypo{}
       \ellipsis{$\VMELLt{\pi_2}$}{\infcstrut \vdash \VMELLt{B}, \VMELLt{\Delta}}
       \infer2[\scriptsize$\otimes$]{\infcstrut \vdash \VMELLt{\Gamma}, \VMELLt{A} \otimes \VMELLt{B}, \VMELLt{\Delta}}
       \infer[double]1[\scriptsize \exch]{\infcstrut \vdash \VMELLt{\Gamma}, \VMELLt{\Delta}, \VMELLt{A} \otimes \VMELLt{B}}
       \infer1[\scriptsize\drsc{}]{\vdash \VMELLt{\Gamma}, \VMELLt{\Delta}, \wn (\VMELLt{A} \otimes \VMELLt{B})}
      \end{prooftree}
     }
    }
    \vskip 1em
    \subcaptionbox{$\parr$ rule, with $A$ input.}{
     \setcounter{subsubfigure}{0}
     \subsubcaptionbox{\label{subfig:vmell-par-output}$B$ output.}{
      \begin{prooftree}
       \hypo{}
       \ellipsis{$\VMELLt{\pi_1}$}{\infcstrut \vdash \VMELLt{\Gamma}, \VMELLt{A}, \VMELLt{B}}
       \infer1[\scriptsize$\parr$]{\infcstrut \vdash \VMELLt{\Gamma}, \VMELLt{A} \parr \VMELLt{B}}
       \infer1[\scriptsize$\oc$]{\infcstrut \vdash \VMELLt{\Gamma}, \oc (\VMELLt{A} \parr \VMELLt{B})}
      \end{prooftree}
     }
     \hskip 0.5em
     \subsubcaptionbox{\label{subfig:vmell-par-input}$B$ input.}{
      \begin{prooftree}
       \hypo{}
       \ellipsis{$\VMELLt{\pi_1}$}{\infcstrut \vdash \VMELLt{\Gamma}, \VMELLt{A}, \VMELLt{B}}
       \infer1[\scriptsize$\parr$]{\infcstrut \vdash \VMELLt{\Gamma}, \VMELLt{A} \parr \VMELLt{B}}
       \infer1[\scriptsize\drsc{}]{\infcstrut \vdash \VMELLt{\Gamma}, \wn (\VMELLt{A} \parr \VMELLt{B})}
      \end{prooftree}
     }
    }
    \quad
    \subcaptionbox{\label{subfig:vmell-bang-dereliction}$\oc$ and \drsc rules.}[6.75em][c]{
     \centering
     \begin{prooftree}
      \hypo{}
      \ellipsis{$\VMELLt{\pi_1}$}{\infcstrut \vdash \VMELLt{\Gamma}, \VMELLt{A}}
     \end{prooftree}
    }
    \quad
    \subcaptionbox{\label{subfig:vmell-why-not}$\wn$ rule.}{
     \centering
     \begin{prooftree}
      \hypo{}
      \ellipsis{$\VMELLt{\pi_1}$}{\infcstrut \vdash \VMELLt{\Gamma}, \VMELLt{A}, \dots, \VMELLt{A}}
      \infer1[\scriptsize$\wn$]{\infcstrut \vdash \VMELLt{\Gamma}, \VMELLt{A}}
     \end{prooftree}
    }
    \caption{Translation $\VMELLt{(\cdot)}$ of \seqcpst{\IMELL} into \VMELL.}
   \end{figure}
  \end{definition}
  
  The following result expresses a straightforward consequence of the previous definition.
  
  \begin{proposition}
   \label{prop:IMELLtoVMELL}
   Let $\Gamma$ be a \sequent{\IMELL}. If $\Gamma$ is \provable{\IMELL}, then $\VMELLt{\Gamma}$~is~\provable{\VMELL}.
  \end{proposition}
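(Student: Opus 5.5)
The statement follows directly from the sequent-calculus translation $\pi \mapsto \VMELLt{\pi}$ just defined. Assume $\Gamma$ is \provable{\IMELL} and fix a \seqcptWithConclusion{\IMELL}{$\Gamma$} $\pi$. We then check that $\VMELLt{\pi}$ is a well-formed \seqcptWithConclusion{\VMELL}{$\VMELLt{\Gamma}$}. This is done by induction on $\pi$, following the cases of the definition, and in each case we verify two things. First, every sequent occurring in the constructed derivation consists of \formulae{\VMELL}. By \Cref{rmk:vmell-pos-neg} every $\VMELLt{A}$ lies in \VMELL, and the only new formulae introduced are $\VMELLt{A}\otimes\VMELLt{B}$ and $\VMELLt{A}\parr\VMELLt{B}$. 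These are sub-formulae of $\VMELLt{(A\otimes B)}$ or $\VMELLt{(A\parr B)}$, which lie in \VMELL, and \VMELL is closed under sub-formulae. Second, every rule application is a legal instance of the rules in \Cref{fig:sequent-calculus-rules}.

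The second point is immediate for the \axsc and \cutsc cases, by \Cref{rmk:vmell-not}. It is also immediate for the $\one$, $\bot$, \drsc, \exch and $\wn$ cases. In the $\oc$ and \drsc cases the translation is the identity on proofs, because $\VMELLt{(\oc A)}=\VMELLt{A}$ and $\VMELLt{(\wn A)}=\VMELLt{A}$.

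The only delicate steps are the promotions ($\oc$ rules) introduced in the $\otimes$ case with both premises output and in the $\parr$ case with one output premise. There the context must consist entirely of $\wn$-formulae. To guarantee this, apply \Cref{prop:imell-one-output} to the conclusions of the premises $\pi_1,\pi_2$, which are provable \sequents{\IMELL}. Each premise sequent already contains an output formula ($A$, respectively $B$), so all the other formulae, those in $\Gamma$ and $\Delta$, are input. By \Cref{rmk:vmell-pos-neg} their translations then have the form $\wn I$, so the $\oc$ rule is applicable.

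In the remaining subcases the last rule is \drsc, and no context condition is needed. We must still check that the resulting formula $\wn(\VMELLt{A}\otimes\VMELLt{B})$ or $\wn(\VMELLt{A}\parr\VMELLt{B})$ is in \VMELL. It is, since $\VMELLt{A}\otimes\VMELLt{B}$ (one side a $\oc$-formula, i.e.\ positive, the other a $\wn$-formula, i.e.\ an input) is of the form $P\otimes I$ or $I\otimes P$, and $\wn I$ is negative. Similarly, $\VMELLt{A}\parr\VMELLt{B}$ with both sides $\wn$-formulae is negative, so $\wn$ of it is in \VMELL as $\wn I$ with $I\Coloneqq N$.

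I expect the main obstacle to be exactly this bookkeeping: confirming that every $\oc$ and $\otimes$ formula generated is generated by the \VMELL grammar, for example that $\oc(\VMELLt{O_1}\otimes\VMELLt{O_2})$ is $\oc O$ with $O\Coloneqq P\otimes P$. Once that is in place, $\VMELLt{\pi}$ witnesses that $\VMELLt{\Gamma}$ is \provable{\VMELL}.
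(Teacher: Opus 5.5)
Your proof is correct and essentially the paper's own: the proposition is an immediate consequence of the inductive translation $\pi \mapsto \VMELLt{\pi}$. You spell out the well-formedness checks that the paper folds into that definition, namely \Cref{prop:imell-one-output} with \Cref{rmk:vmell-pos-neg} to justify the promotion contexts, and membership in \VMELL of the intermediate $\otimes$/$\parr$ formulae (and, though you don't list them, $\one$ and $\bot$, covered by the same subformula-closure argument). The $\wn$ case, which you call immediate, needs one word of justification: the rule is legal because $\wn A \in \IMELL$ forces $A$ to be input, so $\VMELLt{A}$ has the shape $\wn I$.
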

  
  \begin{remark}
   The converse of \Cref{prop:IMELLtoVMELL} is false. Indeed, we may consider, for example, the sequent $\Gamma = \one, X^\bot$, which is not \provable{\IMELL} even though $\VMELLt{\Gamma} = \oc \one, \wn X^\bot$ is \provable{\VMELL}: \Cref{fig:counterexample-converse-provability} shows a \seqcpt{\VMELL} with conclusion $\VMELLt{\Gamma}$.
   
   \begin{figure}
    \centering
    \begin{prooftree}
     \hypo{}
     \infer1[\scriptsize$\one$]{\infcstrut \vdash \one}
     \infer1[\scriptsize$\oc$]{\infcstrut \vdash \oc \one}
     \infer1[\scriptsize$\wn$]{\infcstrut \vdash \oc \one, \wn X^\bot}
    \end{prooftree}
    \caption{\label{fig:counterexample-converse-provability}A proof in \VMELL with conclusion a sequent $\VMELLt{\Gamma}$ such that $\Gamma$ is not \provable{\IMELL}.}
   \end{figure}
  \end{remark}
  
\subsection{Complements of \texorpdfstring{\Cref{sec:bang-calculus}}{Section 5}}

 \subsubsection{Proof of \texorpdfstring{\Cref{thm:factorization}}{Theorem 45} (Factorization)}
 \label{subsec:factorization}
 
  \begin{notation}
   The set of \lcontexts (resp.~\bcontexts) is denoted by $\sctxset$ (resp.~$\bctxset$).
  \end{notation}
 
  \begin{definition}
   We define a partial function $\ltyp$ from $\sctxset \times \anlset$ to $\STLC$ by induction on~$M \in \anlset$. For every $\Gamma \in \sctxset$:
   \begin{itemize}
    \item
     \emph{Case $M = x$ for some $x \in \vset$.}
          
     If $x \in \dom(\Gamma)$, then $\ltyp(\Gamma, M) \coloneq \Gamma(x)$, otherwise $(\Gamma, M) \notin \dom(\ltyp)$;
    \item
     \emph{Case $M = \labs{x}{\tau}{N}$ for some $x \in \vset$, $\tau \in \stset$ and $N \in \anlset$.\footnote{\label{foot:variable-convention}We can suppose $x \notin \dom(\Gamma)$ by Barendregt's variable convention (\cite[Convention~2.1.13]{barendregt1984lambda}).}}
          
     If $((\Gamma, \oftype{x}{\tau}), N) \in \dom(\ltyp)$ then $\ltyp(\Gamma, M) \coloneq \intuArrow{\tau}{\ltyp((\Gamma, \oftype{x}{\tau}), N)}$, else $(\Gamma, M) \notin \dom(\ltyp)$;
    \item
     \emph{Case $M = \lapp{M_1}{M_2}$ for some $M_1, M_2 \in \anlset$.}
          
     If $(\Gamma, M_1), (\Gamma, M_2) \in \dom(\ltyp)$ and $\ltyp(\Gamma, M_1) = \intuArrow{\ltyp(\Gamma, M_2)}{\tau}$, then we take $\ltyp(\Gamma, M) \coloneq \tau$, otherwise $(\Gamma, M) \notin \dom(\ltyp)$.
   \end{itemize}
  \end{definition}
  
  \begin{definition}
   We define a partial function $\bctyp$ from $\bctxset \times \anbcset$ to $\bangMELL$ by induction~on~$t \in \anbcset$. For every $\Gamma \in \bctxset$:
   \begin{itemize}
    \item
     \emph{Case $t = x$ for some $x \in \vset$.}
          
     If $x \in \dom(\Gamma)$ and $\Gamma(x) = \oc O$, then $\bctyp(\Gamma, t) \coloneq O$, otherwise $(\Gamma, t) \notin \dom(\bctyp)$;
    \item
     \emph{Case $t = \bcabs{x}{\oc O}{s}$ for some $x \in \vset$, $O$ output \formula{\bangMELL} and $s \in \anbcset$.\textsuperscript{\normalfont\ref{foot:variable-convention}}}
          
     If $((\Gamma, \oftype{x}{\oc O}), s) \in \dom(\bctyp)$, we set $\bctyp(\Gamma, t) \coloneq \linArrow{\oc O}{\bctyp((\Gamma, \oftype{x}{\oc O}), s)}$, else $(\Gamma, t) \notin \dom(\bctyp)$;
    \item
     \emph{Case $t = \bcapp{t_1}{t_2}$ for some $t_1, t_2 \in \anbcset$.}
          
     If $(\Gamma, t_1), (\Gamma, t_2) \in \dom(\bctyp)$ and $\bctyp(\Gamma, t_1) = \linArrow{\bctyp(\Gamma, t_2)}{O}$, we take $\bctyp(\Gamma, t) \coloneq O$, otherwise $(\Gamma, t) \notin \dom(\bctyp)$;
    \item
     \emph{Case $t = \bcder{s}$ for some $s \in \anbcset$.}
          
     If $(\Gamma, s) \in \dom(\bctyp)$ and $\bctyp(\Gamma, s) = \oc O$, then $\bctyp(\Gamma, t) \coloneq O$, otherwise $(\Gamma, t) \notin \dom(\bctyp)$;
    \item
     \emph{Case $t = \bcbang{s}$ for some $s \in \anbcset$.}
          
     If $(\Gamma, s) \in \dom(\bctyp)$, then $\bctyp(\Gamma, t) \coloneq \oc \bctyp(\Gamma, s)$, otherwise $(\Gamma, t) \notin \dom(\bctyp)$.
   \end{itemize}
  \end{definition}
 
  The following result expresses the compatibility of our Church-style typing functions with~the \cbn and \cbv translations. Its proof is a very easy induction on $M$.
  
  \begin{proposition}
   \label{prop:girard-translations}
   Let $M \in \anlset$. For every $\Gamma \in \sctxset$, the following are equivalent:
   \begin{enumerate}[(a)]
    \item \label{itm:lambda-typed}
     $(\Gamma, M) \in \dom(\ltyp)$;
    \item \label{itm:cbn-typed}
     $(\CBNT{\Gamma}, \CBNt{M}) \in \dom(\bctyp)$;
    \item \label{itm:cbv-typed}
     $(\CBVT{\Gamma}, \CBVt{M}) \in \dom(\bctyp)$.
   \end{enumerate}
   If these equivalent conditions hold, then $\bctyp(\CBNT{\Gamma}, \CBNt{M}) = \CBNT{\ltyp(\Gamma, M)}$ and $\bctyp(\CBVT{\Gamma}, \CBVt{M}) = \oc \CBVT{\ltyp(\Gamma, M)}$.
  \end{proposition}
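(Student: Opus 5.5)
We prove the three-way equivalence by induction on $M \in \anlset$, simultaneously for all $\Gamma \in \sctxset$. The two output equalities, $\bctyp(\CBNT{\Gamma}, \CBNt{M}) = \CBNT{\ltyp(\Gamma, M)}$ and $\bctyp(\CBVT{\Gamma}, \CBVt{M}) = \oc \CBVT{\ltyp(\Gamma, M)}$, are carried along as part of the induction hypothesis. In each case we unfold the definitions of $\ltyp$, $\bctyp$, $\CBNt{(\cdot)}$, $\CBVt{(\cdot)}$, $\CBNT{(\cdot)}$ and $\CBVT{(\cdot)}$. Two facts are used throughout. First, $\CBNT{(\cdot)}$ and $\CBVT{(\cdot)}$ are injective on \formulae{\STLC}, by a trivial induction on types. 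Second, the linear arrow satisfies $\linArrow{\oc O}{O'} = \linArrow{\oc Q}{Q'}$ iff $O = Q$ and $O' = Q'$. These two facts let us transport the type-matching conditions of the application clauses back and forth.

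\emph{Variables.} For $M = x$, both sides are defined iff $x \in \dom(\Gamma)$. In the \CbNtext case, $\CBNT{\Gamma}(x) = \oc \CBNT{\Gamma(x)}$, so $\bctyp$ returns $\CBNT{\Gamma(x)}$. In the \CbVtext case, $\CBVt{x} = \bcbang{x}$ and $\CBVT{\Gamma}(x) = \oc\CBVT{\Gamma(x)}$, so $\bctyp$ returns $\oc \CBVT{\Gamma(x)}$, as required.

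\emph{Abstractions.} For $M = \labs{x}{\tau}{N}$, we take $x \notin \dom(\Gamma)$. We apply the induction hypothesis to $((\Gamma, \oftype{x}{\tau}), N)$, using $\CBNT{(\Gamma, \oftype{x}{\tau})} = \CBNT{\Gamma}, \oftype{x}{\oc\CBNT{\tau}}$. We read the annotation $\bcabs{x}{(\CBNT{\tau})}{\cdot}$ in $\CBNt{(\cdot)}$ as binding $x$ at type $\oc\CBNT{\tau}$, matching the rule \btl{}. We then get $\bctyp = \linArrow{\oc \CBNT{\tau}}{\CBNT{\sigma}} = \CBNT{(\intuArrow{\tau}{\sigma})}$. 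In the \CbVtext case the outer $\bcbang{(\cdot)}$ adds a $\oc$, giving $\oc(\linArrow{\oc\CBVT{\tau}}{\oc\CBVT{\sigma}}) = \oc\CBVT{(\intuArrow{\tau}{\sigma})}$.

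\emph{Applications.} This is the main case and the only one with real content. For $M = \lapp{M_1}{M_2}$ in \CbNtext, we have $\CBNt{M} = \bcapp{\CBNt{M_1}}{\bcbang{(\CBNt{M_2})}}$. By the induction hypothesis, $\bctyp(\CBNT{\Gamma}, \bcbang{(\CBNt{M_2})}) = \oc \CBNT{\ltyp(\Gamma, M_2)}$. The condition $\bctyp(\CBNT{\Gamma}, \CBNt{M_1}) = \linArrow{\oc\CBNT{\ltyp(\Gamma,M_2)}}{O}$ then holds iff $\CBNT{\ltyp(\Gamma,M_1)}$ has this shape. By injectivity of $\CBNT{(\cdot)}$, and because its image is never atomic on an arrow, this happens iff $\ltyp(\Gamma,M_1) = \intuArrow{\ltyp(\Gamma,M_2)}{\tau}$ with $O = \CBNT{\tau}$.

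For \CbVtext, we have $\CBVt{M} = \bcapp{\bcder{\CBVt{M_1}}}{\CBVt{M_2}}$. By the induction hypothesis, $\bctyp(\CBVT{\Gamma},\CBVt{M_1}) = \oc \CBVT{\ltyp(\Gamma,M_1)}$, so the $\bcder{\cdot}$ clause is always defined and strips the $\oc$. The matching condition becomes $\CBVT{\ltyp(\Gamma,M_1)} = \linArrow{\oc\CBVT{\ltyp(\Gamma,M_2)}}{O}$. This holds iff $\ltyp(\Gamma,M_1) = \intuArrow{\ltyp(\Gamma,M_2)}{\tau}$, in which case $O = \oc\CBVT{\tau}$, as required.

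The only delicate point is this backward direction of the matching conditions. It is exactly the injectivity and shape argument above, so the proof remains a routine induction.
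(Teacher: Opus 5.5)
Your proof is correct and follows the paper's approach: a routine induction on $M$, generalized over all $\Gamma$, whose only substantive step is the application case, settled by injectivity of $\CBNT{(\cdot)}$, of $\CBVT{(\cdot)}$ and of $\linArrow{\oc(\cdot)}{(\cdot)}$, together with the observation that an atom is never an arrow. Reading the abstraction annotations as $\oc\CBNT{\tau}$ and $\oc\CBVT{\tau}$ is also right, and it matches how the paper itself uses these translations in the proof of \Cref{thm:factorization}.
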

  
  To give a rigorous proof of \Cref{thm:factorization}, we redefine the translations into \pns~by~using modular graph transformations as in \cite{danos1990logique}.
  


  \begin{definition}
   Let $R$ be a \nam \pn.
   \begin{itemize}
    \item
     If $a$ is an input conclusion of $R$, we denote by $\pabs{a}{R}$ the \nam \pn in \Cref{subfig:sequential-par},~where the premises of the $\parr$ node $n$ are $a$ and the output conclusion of $R$. If $R$~is~injectively~named, we may write $\pabs{\nmfun{R}(a)}{R} \coloneq \pabs{a}{R}$;
    \item
     We denote by $\deradd{R}$ the \nam \pn obtained by adding, for every input conclusion $a$ of $R$ that is not a conclusion of a $\oc$, $\drpn$ or $\wn$ node, a fresh terminal $\drpn$ node having~$a$~as~its~premise, and by erasing the $\bullet$ node which previously had $a$ as its premise.
   \end{itemize}
  \end{definition}

  \begin{definition}
   Let $R_1, R_2$ be \nam \pns, and let $O$ be the type of~the~output~conclusion $a_0$ of $R_2$.
   \begin{itemize}
    \item
     For every input conclusion $a$ of $R_1$ having type $O^\bot$, we denote by $\cut{a}{R_1}{R_2}$ the~\nam \pn in \Cref{subfig:sequential-cut}, where the premises of the $\cutpn$ node $n$ are $a$ and $a_0$. Moreover,~if~$R_1$ is \injnam, we may write $\cut{\nmfun{R_1}(a)}{R_1}{R_2} \coloneq \cut{a}{R_1}{R_2}$;
    \item
     If the output conclusion of $R_1$ is a conclusion of an $\axpn$ node of type $O^\bot \parr O_0$, we denote~by $\graft{R_1}{R_2}$ the \nam \pn in \Cref{subfig:named-graft}.
   \end{itemize}
      
   \begin{figure}
    \centering
    \subcaptionbox{$R_1$}{
     \centering
     \input{Figures/named-graft-1.tikz}
    }    
    \quad
    \subcaptionbox{$R_2$}{
     \centering
     \input{Figures/named-graft-2.tikz}
    }
    \quad
    \subcaptionbox{\label{subfig:named-graft}$\graft{R_1}{R_2}$}{
     \centering
     \input{Figures/named-graft.tikz}
    }
    \caption{A graph transformation of \nam \pns.}
   \end{figure}
  \end{definition}
%
  \begin{remark}
   \label{rmk:gua}
   Let $R_1$ be a \gua \pn. By \Cref{rmk:bangMELL-output-conclusion}, there exists \emph{exactly} one~\pntWithName{$R$}{\bangMELL} having the shape in \Cref{subfig:sequential-bang}, and $R$ is \gua too.
  \end{remark}
  
  \begin{notation}
   Let $R \in \psset$. We denote by $\norm{\wn}{R}$ the \rnf{\wn} of $R$.
  \end{notation}
  
  \begin{definition}
   Let $R_1$ be a \gua \pn. If $R$ is the \gua \pn in \Cref{subfig:sequential-bang},~we write $\enbox{R_1} \coloneq \norm{\wn}{R}$.
  \end{definition}

  \begin{remark}
   Let $R$ and $R'$ be \pnst{\bangMELL}. If $R \town R'$ and $R$ is \gua,~then~so~is~$R'$.
  \end{remark}

  \begin{remark}
   Let $R$ be a \gua \pn. Then $\enbox{R}$ is \gua. Moreover, if $R$ is~(injectively) \nam, then $\enbox{R}$ inherits from $R$ a structure of (injectively) \nam \pn.
  \end{remark}
 
  \begin{definition}
   Let $R$ be a \nam \gua \pn. We set $\share{R} \coloneq \norm{\rwn}{R'}$,~where $R'$ is the \nam \pn that is obtained from $R$ by applying the following local~transformation to $\undgs{R}$: for every $x \in \vset$, if the preimage $\nmfun{R}^{-1}(x)$ is non-empty, we add a fresh terminal $\wn$ node $n$ having the arcs in $\nmfun{R}^{-1}(x)$ as premises, we erase the $\bullet$ nodes which previously had~the~arcs~in $\nmfun{R}^{-1}(x)$ as premises and, if $a$ is the conclusion of $n$, we define $\nmfun{R'}(a) \coloneq x$.
  \end{definition}

  \begin{remark}
   Let $R$ be a \nam \gua \pn. Then $\share{R} \in \bpnset$.
  \end{remark}

  We now associate with every \bctx a simple \pn that is used as an elementary building block in the translations of the term calculi.

  \begin{definition}
   Let $\Gamma \coloneq \oftype{x_1}{\oc O_1}, \dots, \oftype{x_k}{\oc O_k}, \oftype{x}{\oc O}$ be a \bctx. We write $\initVar{x}{\Gamma}$ for the \injnam \pn depicted in \Cref{fig:init}. If $k = 0$, we can write $\init{\Gamma} \coloneq \initVar{x}{\Gamma}$.
      
   \begin{figure}
    \centering
    \input{Figures/init.tikz}
    \caption{\label{fig:init}The elementary building block of the translations.}
   \end{figure}
  \end{definition}

  \begin{remark}
   Let $\Gamma \in \bctxset$, and let $x \in \dom(\Gamma)$. Then $\deradd{\initVar{x}{\Gamma}} \in \bpnset$.
  \end{remark}

  The following definition expresses in this language the translations of the \cbn~and \cbv \lci into \pns (\Cref{fig:cbn-pns,fig:cbv-pns}).
  
  \begin{figure}
   \centering
   \subcaptionbox{$\CBNp{((\Gamma, \oftype{x}{\tau}), x)}$ with~$O \coloneq \CBNT{\tau}$.}[12em][c]{
    \centering
    \input{Figures/cbn-pn-variable.tikz}
   }
   \quad
   \subcaptionbox{$\CBNp{(\Gamma, \labs{x}{\tau}{N})}$ with $O \coloneq \CBNT{\tau}$.}[11em][c]{
    \centering
    \input{Figures/cbn-pn-abstraction.tikz}
   }
   \vskip 1em
   \subcaptionbox{$\CBNp{(\Gamma, \lapp{M_1}{M_2})}$ is the \rnf{\rwn} of the \pn in the picture.}[26em][c]{
    \centering
    \input{Figures/cbn-pn-application.tikz}
   }
   \caption{\label{fig:cbn-pns}The translation $\CBNp{(\cdot)}$ of the \cbn \lc into \pns.}
  \end{figure}
  
  \begin{figure}
   \centering
   \subcaptionbox{$\CBVp{((\Gamma, \oftype{x}{\tau}), x)}$ with $O \coloneq \CBVT{\tau}$.}{
    \centering
    \input{Figures/cbv-pn-variable.tikz}
   }
   \quad
   \subcaptionbox{$\CBVp{(\Gamma, \labs{x}{\tau}{N})}$ is the \rnf{\wn} of the \pn in the picture with $O \coloneq \CBVT{\tau}$.}[15.5em][c]{
    \centering
    \input{Figures/cbv-pn-abstraction.tikz}
   }
   \vskip 1em
   \subcaptionbox{$\CBVp{(\Gamma, \lapp{M_1}{M_2})}$ is the \rnf{\rwn} of the \pn in the picture.}[26em][c]{
    \centering
    \input{Figures/cbv-pn-application.tikz}
   }
   \caption{\label{fig:cbv-pns}The translation $\CBVp{(\cdot)}$ of the \cbv \lc into \pns.}
  \end{figure}
  
  \begin{definition}
   We define two functions $\CBNp{(\cdot)}$ and $\CBVp{(\cdot)}$ from $\dom(\ltyp)$ to $\bpnset$ such that, for~every $(\Gamma, M) \in \dom(\ltyp)$ with $\Gamma = \oftype{x_1}{\tau_1}, \dots, \oftype{x_k}{\tau_k}$, we have that $\CBNp{(\Gamma, M)}$ (resp.~$\CBVp{(\Gamma, M)}$) has~$k + 1$ conclusions $a_0, \dots, a_k$, the type of $a_0$ is $\CBNT{\ltyp(\Gamma, M)}$ (resp.~$\oc \CBVT{\ltyp(\Gamma, M)}$) and, for all~$i \in \{1, \dots, k\}$, the type of $a_i$ is $\wn O_i^\bot$ with $O_i \coloneq \CBNT{\tau_i}$ (resp.~$O_i \coloneq \CBVT{\tau_i}$), and we have $\nmfun{R}(a_i) = x_i$. The definition~is by induction on $M \in \anlset$. For every $\Gamma \in \sctxset$ such that $(\Gamma, M) \in \dom(\ltyp)$:
   \begin{itemize}
    \item
     \emph{Case $M = x$ for some $x \in \vset$.} We define:
     \[
      \CBNp{(\Gamma, M)} \coloneq \deradd{\initVar{x}{\CBNT{\Gamma}}} \qquad \CBVp{(\Gamma, M)} \coloneq \enbox{(\deradd{\initVar{x}{\CBVT{\Gamma}}})}
     \]
    \item
     \emph{Case $M = \labs{x}{\tau}{N}$ for some $x \in \vset$, $\tau \in \stset$ and $N \in \anlset$.\textsuperscript{\normalfont\ref{foot:variable-convention}}} We define:
     \[
      \CBNp{(\Gamma, M)} \coloneq \pabs{x}{\CBNp{((\Gamma, \oftype{x}{\tau}), N)}} \qquad \CBVp{(\Gamma, M)} \coloneq \enbox{(\pabs{x}{\CBVp{((\Gamma, \oftype{x}{\tau}), N)}})}
     \]
    \item
     \emph{Case $M = \lapp{M_1}{M_2}$ for some $M_1, M_2 \in \anlset$.} We take $x \in \vset \setminus \dom(\Gamma)$, and we define:
     \begin{gather*}
      \CBNp{(\Gamma, M)} \coloneq \share{\cut{x}{\graft{\init{\oftype{x}{\oc \CBNT{\ltyp(\Gamma, M_1)}}}}{\enbox{(\CBNp{(\Gamma, M_2)})}}}{\CBNp{(\Gamma, M_1)}}} \\
      \CBVp{(\Gamma, M)} \coloneq \share{\cut{x}{\graft{\deradd{\init{\oftype{x}{\oc \CBVT{\ltyp(\Gamma, M_1)}}}}}{\CBVp{(\Gamma, M_2)}}}{\CBVp{(\Gamma, M_1)}}}
     \end{gather*}
   \end{itemize}
  \end{definition}

  We give an analogous definition to express the translation of the typed \bc into \pns by graph transformations (\Cref{fig:bang-pns}).
  
  \begin{definition}
   We give a function $\WH{(\cdot)}$ from $\dom(\bctyp)$ to $\bpnset$ such that, for all $(\Gamma, t) \in \dom(\bctyp)$, if $\Gamma = \oftype{x_1}{\oc O_1}, \dots, \oftype{x_k}{\oc O_k}$, then $\WH{(\Gamma, t)}$ has exactly $k + 1$ conclusions $a_0, \dots, a_k$, the type of $a_0$ is $\bctyp(\Gamma, t)$ and, for all $i \in \{1, \dots, k\}$, the type of $a_i$ is $\wn O_i^\bot$, and we have $\nmfun{R}(a_i) = x_i$. The definition is by induction on $t \in \anbcset$. For every $\Gamma \in \bctxset$ such that $(\Gamma, t) \in \dom(\bctyp)$:
   \begin{itemize}
    \item
     \emph{Case $t = x$ for some $x \in \vset$.} We define:
     \[
      \WH{(\Gamma, t)} \coloneq \deradd{\initVar{x}{\Gamma}}
     \]
    \item
     \emph{Case $t = \bcabs{x}{\oc O}{s}$ for some $x \in \vset$, $O$ output \formula{\bangMELL} and $s \in \anbcset$.\textsuperscript{\normalfont\ref{foot:variable-convention}}} We define:
     \[
      \WH{(\Gamma, t)} \coloneq \pabs{x}{\WH{((\Gamma, \oftype{x}{\oc O}), s)}}
     \]
    \item
     \emph{Case $t = \bcapp{t_1}{t_2}$ for some $t_1, t_2 \in \anbcset$.} We take $x \in \vset \setminus \dom(\Gamma)$, and we define:
     \[
      \WH{(\Gamma, t)} \coloneq \share{\cut{x}{\graft{\init{\oftype{x}{\oc \bctyp(\Gamma, t_1)}}}{\WH{(\Gamma, t_2)}}}{\WH{(\Gamma, t_1)}}}
     \]
    \item
     \emph{Case $t = \bcder{s}$ for some $s \in \anbcset$.} We take $x \in \vset \setminus \dom(\Gamma)$, and we define:
     \[
      \WH{(\Gamma, t)} \coloneq \cut{x}{\deradd{\init{\oftype{x}{\bctyp(\Gamma, s)}}}}{\WH{(\Gamma, s)}}
     \]
    \item
     \emph{Case $t = \bcbang{s}$ for some $s \in \anbcset$.} We define:
     \[
      \WH{(\Gamma, t)} \coloneq \enbox{(\WH{(\Gamma, s)})}
     \]
   \end{itemize}
  \end{definition}

  \begin{remark}
   \label{rmk:cbv-translation-reduction}
   Let $R_1$ and $R_2$ be \injnam \pns, and let $O$ be the type of the output conclusion of $R_2$. If the output conclusion of $R_1$ has type $\oc (\linArrow{O}{O_0})$ then,~for~every~fresh $x, y \in \vset$, we have the following reduction (\Cref{fig:cbv-translation-reduction}):
   \begin{gather*}
    \cut{x}{\graft{\init{\oftype{x}{\oc (\linArrow{O}{O_0})}}}{R_2}}{\cut{y}{\deradd{\init{\oftype{y}{\oc (\linArrow{O}{O_0})}}}}{R_1}} \\
    \mapstor{\rax} \cut{x}{\graft{\deradd{\init{\oftype{x}{\oc (\linArrow{O}{O_0})}}}}{R_2}}{R_1}
   \end{gather*}
      
   \begin{figure}
    \centering
    \input{Figures/cbv-translation-reduction-1.tikz}
    \caption{\label{fig:cbv-translation-reduction}The key step in the proof of \Cref{thm:factorization}.}
   \end{figure}
  \end{remark}
  
  We now prove a technical result about the interaction between the reductions $\tor{\rax}$~and~$\town$.
  
  \begin{definition}[by induction on $R$]
   Let $R \in \psset$. We say that $R$ is \emph{\wnsafe} if no premise of a $\cutpn$ node of $\undgs{R}$ is the conclusion of a $\wn$ node and, for every $R' \sqsubset R$, we have that~$R'$~is~\wnsafe.
  \end{definition}
  
  \begin{lemma}
   \label{lemma:ax-wn}
   Let $R, R' \in \psset$ such that $R$ is \wnsafe and $R \tor{\rax} R'$. Then:
   \begin{enumerate}[(i)]
    \item
     We have that $R'$ is \wnsafe;
    \item
     For every $R_0' \in \psset$, if $R' \town R_0'$, there exists $R_0 \in \psset$ such that $R \town R_0 \tor{\rax} R_0'$;\label{itm:ax-wn-swap}
    \item
     We have that $\norm{\rwn}{R} \tor{\rax} \norm{\rwn}{R'}$.\label{itm:ax-wn-normal}
   \end{enumerate}
  \end{lemma}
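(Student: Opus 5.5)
The plan is to prove the three items in order, with (i) and (ii) handled by inspecting the local shape of a $\tor{\rax}$ root step (\Cref{fig:ax-cut}). Item (iii) is then derived from (ii) together with strong normalization and confluence of $\town$. Throughout, I reason by induction on the context of the step: either the step happens in $\undgs{R}$, or it happens inside some box $\undbox{R}(n)$, in which case it follows from the induction hypothesis applied to that box.

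For (i), consider a root $\tor{\rax}$ step. It erases an $\axpn$ node $m$ and a $\cutpn$ node $c$. The other conclusion of $m$ is the arc $b$, whose head becomes the head of the arc $a$ that was the other premise of $c$; symmetrically, one may view $a$ as being redirected. The only cuts of $R'$ that can be new or changed are those having this reconnected arc as a premise. In that case the premise was already the premise $a$ of a cut in $R$ (coming from the same node $n$ whose conclusion $a$ is). Its tail is unchanged, so if it is not a $\wn$ conclusion in $R$, it is not one in $R'$ either. Every other cut, and every box, is untouched. Hence $R'$ is \wnsafe.

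For (ii), I will do a case analysis on the relative position of the $\town$ redex in $R'$ (a $\rfl$, $\rer$ or $\rpu$ step) and the arcs that the $\rax$ step created. If the $\wn$ redex does not involve the reconnected arc, it already exists in $R$, and the two steps commute: I take $R_0$ to be $R$ after that $\wn$ step. The delicate case is when the $\rax$ step creates a $\wn$ redex. This happens when the arc $b$ is a premise of a $\wn$ node (or crosses a box border) and the node $n$ feeding the cut is itself a $\wn$ node, so that after the step two consecutive $\wn$ nodes become adjacent. Here \wnsafe is exactly what rules the situation out: in $R$ the arc coming from $n$ was a premise of a $\cutpn$ node, so $n$ cannot be a $\wn$ node.

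The other potentially dangerous shape is the one involving the axiom arc: an $\axpn$ conclusion entering a $\wn$ node, where the redirected arc comes from a $\wn$ node. This too is excluded by \wnsafe. The arcs adjacent to box borders are also unchanged, so no $\rpu$ redex is created. Thus every $\town$ redex of $R'$ is a residual of one in $R$, and the square closes with a single $\town$ step followed by a single $\tor{\rax}$ step. I expect this exhaustive check of created redexes to be the main obstacle; it is also the only place where \wnsafe is genuinely used.

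For (iii), I iterate (ii) along the $\town$-normalizing sequence $R' \town^* \norm{\rwn}{R'}$. This needs (i) at every stage, and \wnsafe is also preserved by $\town$ on the left, since $\town$ steps never place a $\wn$ node above a cut. The result is a sequence $R \town^* R_0 \tor{\rax} \norm{\rwn}{R'}$. Since $\norm{\rwn}{R'}$ is $\wn$-normal and $\tor{\rax}$ creates no $\wn$ redex (by the analysis above), $R_0$ has no $\wn$ redex outside those that $\tor{\rax}$ would erase. An $\rax$ step erases no $\wn$ node, so $R_0$ is $\wn$-normal. By confluence, $R_0 = \norm{\rwn}{R}$, which gives the claim.
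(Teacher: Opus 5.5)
Your items (i) and (ii) are fine. With \wnsafe, the tail $n$ of the arc cut against the axiom is not a $\wn$ node, so every $\town$ redex of $R'$ is a residual of one in $R$, and the square closes with one step on each side. Your closing argument for (iii) is also sound: $\tor{\rax}$ never destroys a $\wn$ redex, so $R_0$ is $\wn$-normal, and confluence gives $R_0 = \norm{\rwn}{R}$.

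The gap is your claim that $\town$ preserves \wnsafe. A $\tor{\rpu}$ step does place a $\wn$ node above a cut whenever the auxiliary port through which that node is pulled is a premise of a $\cutpn$ node, and \wnsafe does not exclude this configuration. This is not only a hole in your argument: (iii) is false in that case, even though the paper states the lemma without proof. Build $R$ as follows:
\begin{itemize}
\item take a box whose content is a $\onepn$ node plus a $\wkpn$ node $w$ on the auxiliary conclusion;
\item cut the corresponding auxiliary port $a$ against an $\axpn$ node whose other conclusion $c$ is a premise of a binary $\wn$ node $n'$;
\item let the other premise of $n'$ be the conclusion of a $\drpn$ node.
\end{itemize}
Tensoring the principal port with the remaining conclusion of the axiom feeding that $\drpn$ node even makes $R$ a \MELL proof-net.

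This $R$ is \wnsafe. $\norm{\rwn}{R}$ is reached by a single $\tor{\rpu}$ step, which leaves $w$ outside the box with conclusion $a$, still cut against the axiom. The unique $\tor{\rax}$-reduct of $\norm{\rwn}{R}$ has $w$ as a premise of $n'$, which is a $\tor{\rfl}$ redex. In $\norm{\rwn}{R'}$, by contrast, the weakening has been flattened and the then-unary $n'$ erased. So $\norm{\rwn}{R} \tor{\rax} \norm{\rwn}{R'}$ fails. Your iteration breaks exactly here: after the pull step the left-hand structure is no longer \wnsafe, and (ii) fails for the next (flattening) step, since that structure has no $\wn$ redex at all.

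So (iii) needs a stronger hypothesis, not a different proof. A convenient invariant is to require in addition, at every depth, that no auxiliary port of a $\oc$ node is a premise of a $\cutpn$ node. It implies \wnsafe and is preserved by both reductions:
\begin{itemize}
\item By $\tor{\rax}$: the only arc that becomes a new cut premise was already one, with the same tail.
\item By $\town$: the $\wn$ node created by $\tor{\rpu}$ receives as conclusion the old auxiliary port, which is not cut. When the invariant holds, $\tor{\rfl}$ and $\tor{\rer}$ cannot change the tail of a cut premise.
\end{itemize}
It also holds for the nets to which the lemma is applied in the proof of \Cref{thm:factorization}, where cuts always join an axiom, $\parr$ or principal-port conclusion with a $\otimes$ or $\drpn$ conclusion. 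Under this invariant, your iteration of (ii) along $R' \town^* \norm{\rwn}{R'}$, followed by your normality-and-confluence argument, goes through unchanged.
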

  
  
  \begin{remark}
   A counterexample to \Cref{itm:ax-wn-swap,itm:ax-wn-normal} of \Cref{lemma:ax-wn} in the absence of the~\wnsafe hypothesis is illustrated in \Cref{fig:counterexample-ax-wn}.
   
   \begin{figure}
    \centering
    \input{Figures/counterexample-ax-wn.tikz}
    \caption{\label{fig:counterexample-ax-wn}For the \ps $R$ on the left, we have that $\norm{\rwn}{R} = R \mapstor{\rax} R' \neq \norm{\rwn}{R'}$.}
   \end{figure}
  \end{remark}
  
  \factorization*
  
  \begin{proof}
   By induction on $M$. We consider the three possible cases for $M$:
   \begin{itemize}
    \item
     \emph{Case $M = x$ for some $x \in \vset$.} We have:
     \begin{gather*}
      \WH{(\CBNT{\Gamma}, \CBNt{M})} = \WH{(\CBNT{\Gamma}, x)} = \deradd{\initVar{x}{\CBNT{\Gamma}}} = \CBNp{(\Gamma, M)} \\ \WH{(\CBVT{\Gamma}, \CBVt{M})} = \WH{(\CBVT{\Gamma}, \bcbang{x})} = \enbox{(\WH{(\CBVT{\Gamma}, x)})} = \enbox{(\deradd{\initVar{x}{\CBVT{\Gamma}}})} = \CBVp{(\Gamma, M)}
     \end{gather*}
    \item
     \emph{Case $M = \labs{x}{\tau}{N}$ for some $x \in \vset$, $\tau \in \stset$ and $N \in \anlset$.\textsuperscript{\normalfont\ref{foot:variable-convention}}} By induction hypothesis on $N$,~and by \Cref{itm:ax-wn-normal} of \Cref{lemma:ax-wn}, we have:
     \begin{align*}
      \WH{(\CBNT{\Gamma}, \CBNt{M})} & = \WH{(\CBNT{\Gamma}, \bcabs{x}{\oc \CBNT{\tau}}{\CBNt{N}})} & \WH{(\CBVT{\Gamma}, \CBVt{M})} & = \WH{(\CBVT{\Gamma}, \bcbang{(\bcabs{x}{\oc \CBVT{\tau}}{\CBVt{N}})})} \\
      & = \pabs{x}{\WH{((\CBNT{\Gamma}, \oftype{x}{\oc \CBNT{\tau}}), \CBNt{N})}} && = \enbox{(\WH{(\CBVT{\Gamma}, \smash{\bcabs{x}{\oc \CBVT{\tau}}{\CBVt{N}}})})} \\
      & = \pabs{x}{\CBNp{((\Gamma, \oftype{x}{\tau}), N)}} && = \enbox{(\pabs{x}{\WH{((\CBVT{\Gamma}, \oftype{x}{\oc \CBVT{\tau}}), \CBVt{N})}})} \\
      & = \CBNp{(\Gamma, M)} && \tor{\rax}^* \enbox{(\pabs{x}{\CBVp{((\Gamma, \oftype{x}{\tau}), N)}})} \\
      &&& = \CBVp{(\Gamma, M)}
     \end{align*}
    \item
     \emph{Case $M = \lapp{M_1}{M_2}$ for some $M_1, M_2 \in \anlset$.} Let $x \in \vset \setminus \dom(\Gamma)$. By \Cref{prop:girard-translations},~and~by induction hypothesis on $M_1$ and on $M_2$, we have:
     \begin{align*}
      \WH{(\CBNT{\Gamma}, \CBNt{M})} & = \WH{(\CBNT{\Gamma}, \bcapp{\CBNt{M_1}}{\bcbang{(\CBNt{M_2})}})} \\
      & = \share{\cut{x}{\graft{\init{\oftype{x}{\oc \bctyp(\CBNT{\Gamma}, \CBNt{M_1})}}}{\WH{(\CBNT{\Gamma}, \bcbang{(\CBNt{M_2})})}}}{\WH{(\CBNT{\Gamma}, \CBNt{M_1})}}} \\
      & = \share{\cut{x}{\graft{\init{\oftype{x}{\oc \bctyp(\CBNT{\Gamma}, \CBNt{M_1})}}}{\enbox{(\WH{(\CBNT{\Gamma}, \CBNt{M_2})})}}}{\WH{(\CBNT{\Gamma}, \CBNt{M_1})}}} \\
      & = \share{\cut{x}{\graft{\init{\oftype{x}{\oc \CBNT{\ltyp(\Gamma, M_1)}}}}{\enbox{(\CBNp{(\Gamma, M_2)})}}}{\CBNp{(\Gamma, M_1)}}} \\
      & = \CBNp{(\Gamma, M)}
     \end{align*}
     Now let $y \in \vset \setminus \dom(\Gamma)$ with $y \neq x$. By \Cref{prop:girard-translations}, \Cref{rmk:cbv-translation-reduction} (with $O \coloneq \bctyp(\CBVT{\Gamma}, \CBVt{M_2})$, $O_0 \coloneq \bctyp(\CBVT{\Gamma}, \CBVt{M})$ and $R_i \coloneq \WH{(\CBVT{\Gamma}, \CBVt{M_i})}$ for each $i \in \{1, 2\}$), and by induction hypothesis~on $M_1$ and on $M_2$, we obtain:
     \begin{align*}
      & \hspace{-0.0625em} \cut{x}{\graft{\init{\oftype{x}{\oc \bctyp(\CBVT{\Gamma}, \bcder{\CBVt{M_1}})}}}{\WH{(\CBVT{\Gamma}, \CBVt{M_2})}}}{\WH{(\CBVT{\Gamma}, \bcder{\CBVt{M_1}})}} \\
      & \hspace{-0.0625em} = \cut{x}{\graft{\init{\oftype{x}{\bctyp(\CBVT{\Gamma}, \CBVt{M_1})}}}{\WH{(\CBVT{\Gamma}, \CBVt{M_2})}}}{\cut{y}{\deradd{\init{\oftype{y}{\bctyp(\CBVT{\Gamma}, \CBVt{M_1})}}}}{\WH{(\CBVT{\Gamma}, \CBVt{M_1})}}} \\
      & \hspace{-0.0625em} \mapstor{\rax} \cut{x}{\graft{\deradd{\init{\oftype{x}{\bctyp(\CBVT{\Gamma}, \CBVt{M_1})}}}}{\WH{(\CBVT{\Gamma}, \CBVt{M_2})}}}{\WH{(\CBVT{\Gamma}, \CBVt{M_1})}} \\
      & \hspace{-0.0625em} \tor{\rax}^* \cut{x}{\graft{\deradd{\init{\oftype{x}{\oc \CBVT{\ltyp(\Gamma, M_1)}}}}}{\CBVp{(\Gamma, M_2)}}}{\CBVp{(\Gamma, M_1)}}
     \end{align*}
     To conclude, we apply \Cref{itm:ax-wn-normal} of \Cref{lemma:ax-wn}:
     \begin{align*}
      \WH{(\CBVT{\Gamma}, \CBVt{M})} & = \WH{(\CBVT{\Gamma}, \bcapp{\bcder{\CBVt{M_1}}}{\CBVt{M_2}})} \\
      & = \share{\cut{x}{\graft{\init{\oftype{x}{\oc \bctyp(\CBVT{\Gamma}, \bcder{\CBVt{M_1}})}}}{\WH{(\CBVT{\Gamma}, \CBVt{M_2})}}}{\WH{(\CBVT{\Gamma}, \bcder{\CBVt{M_1}})}}} \\
      & \tor{\rax}^* \share{\cut{x}{\graft{\deradd{\init{\oftype{x}{\oc \CBVT{\ltyp(\Gamma, M_1)}}}}}{\CBVp{(\Gamma, M_2)}}}{\CBVp{(\Gamma, M_1)}}} \\
      & = \CBVp{(\Gamma, M)} \qedhere
     \end{align*}
   \end{itemize}
  \end{proof}
  
%
%
%
%

 \subsubsection{Proof of \texorpdfstring{\Cref{thm:characterization}}{Theorem 44} (Characterization of the \texorpdfstring{\bc}{bang calculus})}
 \label{subsec:characterization}
 
  We start this section by proving some purely geometric properties. The following key result~is essentially a variant of~\cite[Proposition 2(1)]{maieli2022proof}.
  
  \begin{lemma}
   \label{lemma:sequential}
   Let $R \in \psset$ be $n$-\seq with $n$ a $\cutpn$, $\bot$, $\otimes$, $\parr$, $\drpn$ or $\wn$ node. Let $k \coloneq 2$ if $n$ is a $\cutpn$ or $\otimes$ node, $k \coloneq 1$ otherwise, and let $R_1, \dots, R_k \in \psset$ such that $R$ has one of the shapes in \Cref{fig:sequential}, depending on the label of $n$. Let $m$ be a terminal node of $\undgs{R}$ and a node of $\undgs{R_i}$ for some $i \in \{1, \dots, k\}$. Finally, assume that either $n$ is~not~a~$\bot$~or~$\wkpn$~node,~or~$m$~is~not~an~$\axpn$,~$\onepn$ or $\oc$ node, and suppose either $n$ is not a $\parr$ or $\ctpn$ node, or $m$ is not a $\cutpn$ or $\otimes$ node. If $R_i$ is $m$-\seq, then $R$ is $m$-\seq.
  \end{lemma}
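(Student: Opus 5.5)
This is a rule-permutation argument: we push the last rule of $R_i$ (the one introducing $m$) below the rule introducing $n$. The proof goes by cases on the label of $m$, using the fact that $R_i$ is $m$-\seq. Write $R_i$ as one of the shapes of \Cref{fig:sequential} around $m$, with immediate sub-\pss $S_1$ (and possibly $S_2$), which are \seq. Since $m$ is terminal in $\undgs{R}$, none of the conclusions of $m$ is a premise of $n$. So in $R$ the node $n$ acts only on conclusions of $R_i$ that do not belong to $m$; these are conclusions of some $S_j$, or else they are the auxiliary conclusions of the box of $m$ when $m$ is a $\oc$ node.

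We then reassemble. Let $R'$ be obtained by applying the construction of $n$ (with $R_{3-i}$ when $k=2$) to the appropriate $S_j$ instead of $R_i$. Then $R'$ is $n$-\seq, because $S_j$ and $R_{3-i}$ are \seq. Applying the construction of $m$ to $R'$ (and to the other $S_{j'}$, if any) yields exactly $R$, so $R$ is $m$-\seq. This needs a small induction-free check that the graphs coincide, which is immediate from the definitions of the shapes.

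The cases to verify are these. If $m$ is a $\bot$, $\parr$, $\drpn$ or $\wn$ node (one premise-side sub-\ps $S_1$), then $n$ applies to conclusions of $S_1$ and the permutation is direct. If $m$ is a $\cutpn$ or $\otimes$ node, then $n$ applies to conclusions of $S_1$, of $S_2$, or of both. When $n$ has one premise, that premise lies in a single $S_j$, and we rebuild there. When $n$ has several premises spread across $S_1$ and $S_2$ (the case where $n$ is a $\parr$ or $\ctpn$), the permutation fails, and this is exactly what the hypothesis excludes. When $n$ is a $\cutpn$ or $\otimes$ joining $R_i$ with $R_{3-i}$, the other component attaches to whichever $S_j$ contains the relevant premise. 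If $m$ is an $\axpn$ or $\onepn$ node, then $R_i$ is just $m$ and the $\bullet$ nodes. For a $\cutpn$, $\otimes$, $\parr$, $\drpn$ or non-weakening $\wn$ node $n$, a premise of $n$ would then be a conclusion of $m$, contradicting terminality. For a $\bot$ or $\wkpn$ node $n$, the hypothesis rules the configuration out: such an $n$ cannot be pushed inside a rule with no premise. The case where $m$ is a $\oc$ node is similar: $n$ could only act on auxiliary ports of $m$, which would make $m$ non-terminal, except when $n$ is a $\bot$ or $\wkpn$ node, and that pairing is excluded.

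The main obstacle is the $\cutpn$/$\otimes$ $m$ with a $\parr$/$\ctpn$ $n$, and more generally the careful check that when $n$ has several premises, all of them lie in one $S_j$. For $\ctpn$ nodes with premises drawn from both sides the permutation is genuinely impossible, and I would rely on the stated exclusion to avoid it. A second point of care is ensuring that the boxes function $\undbox{}$ and the orderings on the $\bullet$ premises are inherited correctly, so that the rebuilt structure is literally $R$ and not merely isomorphic to it. This follows the variant of \cite[Proposition 2(1)]{maieli2022proof}.
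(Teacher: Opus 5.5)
Your proposal is correct and matches the paper's proof. The paper first rules out $m$ being an $\axpn$, $\onepn$ or $\oc$ node. It then decomposes $R_i$ around $m$ and uses the hypothesis excluding $\parr$/$\ctpn$ nodes $n$ against $\cutpn$/$\otimes$ nodes $m$ to see that all premises of $n$ in $R_i$ lie in one immediate sub-proof-structure. From that sub-structure (and $R_{3-i}$) it rebuilds an $n$-sequential $R'$, and finally recovers $R$ by applying the rule of $m$. The one case you leave implicit is a $\bot$ or $\wkpn$ node $n$ with a $\cutpn$ or $\otimes$ node $m$, and it is trivial: $n$ has no premises, so it can be attached to either sub-proof-structure.
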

  
  \begin{proof}
   \begin{figure}
    \centering
    \begin{subfigure}{0.9875\textwidth}
     \centering
     \input{Figures/sequential-proof-bot.tikz}
     \caption{$n$ is a $\bot$ node and $m$ is a $\cutpn$ node.}
    \end{subfigure}
    \vskip 1em
    \begin{subfigure}{0.9875\textwidth}
     \centering
     \input{Figures/sequential-proof-par.tikz}
     \caption{$n$ and $m$ are $\parr$ nodes.}
    \end{subfigure}
    \vskip 1em
    \begin{subfigure}{0.9875\textwidth}
     \centering
     \input{Figures/sequential-proof-cut.tikz}
     \caption{$n$ and $m$ are $\cutpn$ nodes.}
    \end{subfigure}
    \caption{\label{fig:sequential-proof}A few particular and interesting cases of the proof of \Cref{lemma:sequential}.}
   \end{figure}
      
   We can suppose $i \coloneq 1$ without loss of generality. Since $m$ is a terminal node of~$\undgs{R}$,~we know that $m$ is not a $\bullet$ node, and thus $m$ has no premise of $n$ among its conclusions.
      
   We claim that $m$ is not an $\axpn$, $\onepn$ or $\oc$ node. If $n$ is a $\bot$ or $\wkpn$ node, this is~known~by~hypothesis. Otherwise, we have that $n$ is a $\cutpn$, $\otimes$, $\parr$, $\drpn$ or $\wn$ node with at least one premise, and~then~$\undgs{R_1}$ contains not only $m$, but also a node having a premise of $n$ among its conclusions. Since~$R_1$~is $m$-\seq, we conclude that $m$ cannot be an $\axpn$, $\onepn$ or $\oc$ node.
      
   We have then established that $m$ is a $\cutpn$, $\bot$, $\otimes$, $\parr$, $\drpn$ or $\wn$ node. Let $h \coloneq 2$ if $m$ is a $\cutpn$ or $\otimes$ node, $h \coloneq 1$ otherwise. Since $R_1$ is $m$-\seq, there exist sequential $R_1', \dots R_h' \in \psset$ such that $R_1$ has one of the shapes in \Cref{fig:sequential} (in the figure, $R_j'$ replaces~$R_j$~for~each~$j \in \{1, \dots, h\}$, and $m$ replaces $n$), depending on the label of $m$. Since no premise of $n$ is a conclusion of $m$, we deduce that, for every premise $a$ of $n$ in $\undgs{R_1}$, there exists $j \in \{1, \dots, h\}$ such that~$a$~is~in~$\undgs{R_j'}$.
      
   We claim that we can swap these quantifiers: there exists $j \in \{1, \dots, h\}$ such that every premise of $n$ in $\undgs{R_1}$ is in $\undgs{R_j'}$. This is trivial if $n$ is a $\bot$ or $\wkpn$ node, or if $h = 1$. We then assume that $n$ is a $\cutpn$, $\otimes$, $\parr$, $\drpn$ or $\wn$ node with at least one premise, and that $h = 2$. Then $m$ is a $\cutpn$ or $\otimes$ node. By hypothesis, the node $n$ is not a $\parr$ or $\ctpn$ node. In other words, we have that $n$ is either a $\cutpn$, $\otimes$ or $\drpn$ node, or a $\wn$ node with exactly one premise. In any of these cases, exactly one premise $a$ of $n$ is in $\undgs{R_1}$, and we~already~know~that~there~exists~$j \in \{1, \dots, h\}$~such~that~$a$~is in $\undgs{R_j'}$.
      
   We can now suppose, without loss of generality, that every premise of $n$ in $\undgs{R_1}$ is in $\undgs{R_1'}$. Consider $R' \in \psset$ obtained by replacing $R_1$ with $R_1'$ in $R$. Since $R_1', R_2, \dots, R_k$ are \seq, we obtain that $R'$ is $n$-\seq.  Finally, since $R', R_2', \dots, R_h'$ are \seq,~we~can conclude that $R$ is $m$-\seq (see \Cref{fig:sequential-proof}).
  \end{proof}
  
  \begin{example}
   Consider the $n$-\seq \pss in \Cref{fig:sequential-examples}. \Cref{lemma:sequential} can be applied to $R_\cutpn$. In this case, we have that $n$ is a $\cutpn$ node and $m$ is a $\bot$ node, and~\Cref{lemma:sequential} states that $R_\cutpn$ is $m$-\seq. On the other hand, the \pss $R_\rob$ and $R_\rtp$ are not $m$-\seq. These key examples (and obvious variants) show that \Cref{lemma:sequential} cannot be extended to the cases in which $n$ is a $\bot$ or $\wkpn$ node and $m$~is~an~$\axpn$,~$\onepn$~or~$\oc$~node,~or~$n$~is~a~$\parr$~or~$\ctpn$ node and $m$ is a $\cutpn$ or $\otimes$ node.
      
   \begin{figure}
    \centering
    \subcaptionbox{$R_\cutpn$}{
     \centering
     \input{Figures/sequential-uniqueness.tikz}
    }
    \hskip 0.5em
    \subcaptionbox{$R_\rob$}{
     \centering
     \input{Figures/eta-unit-untyped.tikz}
    }
    \hskip 0.5em
    \subcaptionbox{$R_\rtp$}{
     \centering
     \input{Figures/eta-multiplicative-untyped.tikz}
    }
    \caption{\label{fig:sequential-examples}Three $n$-\seq \pss. The only $m$-\seq \ps is $R_\cutpn$.}
   \end{figure}
  \end{example}
 
  The following result intuitively means that a $\bot$, $\parr$, \drsc or $\wn$ rule in a \seqcp $\pi$ can always be applied last if the formula it introduces appears in the conclusion~sequent~of~$\pi$.
 
  \begin{lemma}
   \label{lemma:sequential-last-rule}
   Let $R \in \psset$ and let $n$ be a terminal $\bot$, $\parr$, $\drpn$ or $\wn$ node of $\undgs{R}$. If $R$ is~\seq, then $R$ is $n$-\seq.
  \end{lemma}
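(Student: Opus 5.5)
We argue by induction on $\card{\ar(R)}$. Since $R$ is \seq, it is $m$-\seq for some node $m$ of $\undgs{R}$. If $m = n$ we are done, so assume $m \neq n$. The plan is to show that $n$ lives inside one of the sub-\pss obtained from the last rule $m$, apply the induction hypothesis there, and then permute $n$ below $m$ using \Cref{lemma:sequential} with the roles of the two nodes exchanged.

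First we analyse the label of $m$. If $m$ is an $\axpn$ or $\onepn$ node, then $\undgs{R}$ has no other non-$\bullet$ node, which contradicts the existence of $n$. Otherwise $R$ has one of the shapes of \Cref{fig:sequential} for $m$, with \seq $R_1,\dots,R_h$, where $h = 2$ if $m$ is a $\cutpn$ or $\otimes$ node and $h = 1$ otherwise. If $m$ is a $\oc$ node, its box is a separate \ps and $\undgs{R}$ contains only $m$ and $\bullet$ nodes, so again $n$ cannot exist. Hence $m$ is a $\cutpn$, $\bot$, $\otimes$, $\parr$, $\drpn$ or $\wn$ node. Since $n$ is terminal in $\undgs{R}$ and $n \neq m$, the node $n$ belongs to some $\undgs{R_i}$, and it is terminal there. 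Indeed, its conclusion is a conclusion of $R$, so it is not a premise of $m$, and in $R_i$ it is therefore a conclusion of $R_i$. Its label is also preserved.

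Since $\card{\ar(R_i)} < \card{\ar(R)}$ and $R_i$ is \seq, the induction hypothesis gives that $R_i$ is $n$-\seq. We then apply \Cref{lemma:sequential}, with its $n$ instantiated as our $m$ and its $m$ instantiated as our $n$. We must check its two side conditions. The first requires that either $m$ is not a $\bot$ or $\wkpn$ node, or $n$ is not an $\axpn$, $\onepn$ or $\oc$ node; this holds because $n$ is a $\bot$, $\parr$, $\drpn$ or $\wn$ node. The second requires that either $m$ is not a $\parr$ or $\ctpn$ node, or $n$ is not a $\cutpn$ or $\otimes$ node; this also holds for the same reason. The lemma then yields that $R$ is $n$-\seq.

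The step needing the most care is the bookkeeping in the middle paragraph: verifying that $n$ remains terminal with the same label inside $R_i$, and excluding the $\oc$ case for $m$. The exclusion of the $\oc$ case relies on box contents not being part of $\undgs{R}$. Once these points are settled, the argument is a direct induction on top of \Cref{lemma:sequential}, whose hypotheses are exactly tailored to the labels $\bot$, $\parr$, $\drpn$, $\wn$.
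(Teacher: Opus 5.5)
Your proof is correct and follows the paper's own argument: induction on the number of arcs, observing that the last rule $m$ cannot be an $\axpn$, $\onepn$ or $\oc$ node, locating $n$ inside some $R_i$, applying the induction hypothesis there, and concluding with \Cref{lemma:sequential} with the roles of $m$ and $n$ swapped. You also check explicitly two points the paper leaves implicit: that $n$ remains terminal in $R_i$, and that the two side conditions of \Cref{lemma:sequential} hold.
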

 
  \begin{proof}
   By induction on $\card{\ar(\undgs{R})}$. Since $R$ is \seq, there exists $m \in \nd(\undgs{R})$ such that $R$ is $m$-\seq. If $m = n$, we are done, so we suppose $m \neq n$. Since $m$ and $n$ are two~distinct nodes of $\undgs{R}$, we know that $m$ is not an $\axpn$, $\onepn$ or $\oc$ node. In other words, we have that $m$ is a $\cutpn$, $\bot$, $\otimes$, $\parr$, $\drpn$ or $\wn$ node. From the fact that $R$ is $m$-\seq we deduce that $R$ has one of the shapes in \Cref{fig:sequential} (in the picture, $m$ replaces $n$), depending on the label~of~$m$,~for~some \seq $R_1, R_2 \in \psset$. Since $m \neq n$, we can suppose without loss of generality that $n$ is a node of $\undgs{R_1}$. Since $\card{\ar(\undgs{R_1})} < \card{\ar(\undgs{R_1})} + 1 \leq \card{\ar(\undgs{R})}$, we can apply the induction hypothesis on $R_1$, from which we deduce that $R_1$ is $n$-\seq. By~\Cref{lemma:sequential}~(with~the~roles~of~$m$~and~$n$ swapped), we conclude that $R$ is $n$-\seq.
  \end{proof}
  
  We extend \Cref{not:io-order} straightforwardly to \pol \pss.
  
  
  \begin{notation}
   Let $R \in \ppset$. We use the notation ${\precio} \coloneq {\precg{\iograph{\undgs{R}}}}$ when there is no ambiguity.
  \end{notation}
  
  \begin{remark}
   \label{rmk:io-order-pss}
   If $R \in \ppset$ and $\undgs{R} \models \ACio$ (in particular, by \Cref{thm:poutter-accio-sequential}, if $R$ is \poutter and \seq), then $\precio$ is a strict order relation (by \Cref{rmk:io-order}), and thus it is well-founded.
  \end{remark}
  
  \begin{lemma}
   \label{lemma:bang-minimal}
   Let $R \in \psset$ be \poutter and \seq, let $k \geq 0$ and $n_0, \dots, n_k \in  \nd(\undgs{R})$. Let $\{\parr, \wn\} \subseteq L \subseteq \{\axpn, \bot, \parr, \oc, \drpn, \wn\}$, suppose that $\undgs{R}$ has no $\pout$ node and that:
   \begin{enumerate}[(i)]
    \item \label{itm:bang-minimal}
     The node $n_0$ is minimal in $\{n \colon \text{$n$ is a node of $\undgs{R}$ with label in $L$}\}$ with respect to~$\precio$;
    \item \label{itm:bang-minimal-input}
     For every $i \in \{1, \dots, k\}$, an input conclusion of $n_{i - 1}$ is a premise of $n_i$;
    \item \label{itm:bang-minimal-tensor}
     For every $i \in \{1, \dots, k - 1\}$, we have that $n_i$ is a $\tin$ or $\drpn$ node;
    \item \label{itm:bang-minimal-cut}
     Either $n_k$ is a $\cutpn$ node, or a conclusion of $n_k$ is a conclusion of $R$.
   \end{enumerate}
   Then there exist \seq $R_0, \dots, R_k \in \psset$ such that $\undgs{R_i}$ has no node with label in $L$ for every $i \in \{1, \dots, k\}$, and $R$ has one of the shapes in \Cref{fig:bang-minimal}~(in~the~figures,~none~among~$n_1, \dots, n_k$ is a $\drpn$ node for simplicity), depending on the label of $n_0$, with $R_0$ defined by erasing $n_1, \dots, n_k$ and $R_1, \dots, R_k$ from $R$, and by setting the conclusion of $n_0$ to be a premise of a fresh $\bullet$~node~if $k \geq 1$. Moreover, if $\drpn \in L$, then $n_i$ is a $\tin$ node for every $i \in \{1, \dots, k - 1\}$.
   
   \begin{figure}
    \centering
    \subcaptionbox{$l = \axpn$}{
    \centering
    \input{Figures/bang-minimal-ax.tikz}
    }
    \quad
    \subcaptionbox{\label{subfig:bang-minimal}$l \in \{\bot, \parr, \drpn, \wn\}$}{
     \centering
     \input{Figures/bang-minimal.tikz}
    }
    \vskip 1em
    \subcaptionbox{$l = \oc$}{
     \centering
     \input{Figures/bang-minimal-box.tikz}
    }
    \caption{\label{fig:bang-minimal}Decomposing a \poutter \seq \ps $R$ such that $\undgs{R}$ has no $\pout$ node, using a node $n_0$ with label $l$ and minimal with respect to $\precio$. In the pictures, we have $l' \in \{\cutpn, \otimes\}$. For simplicity, the output premise of every $\cutpn$ or $\otimes$ node is its left premise, and~$\drpn$~nodes~are~omitted.}
   \end{figure}
  \end{lemma}
  
  \begin{proof}
   By induction on $k$. The base case ($k = 0$) is obvious, so we assume that $k \geq 1$. For each $i \in \{1, \dots, k\}$, let $a_i$ be the input conclusion of $n_{i - 1}$ which is a premise of $n_i$. By \Cref{thm:poutter-accio-sequential}, we have $R \models \ACCio$. Since $\undgs{R} \models \ACio$, the nodes $n_0, \dots, n_k$ are pairwise distinct,~and we can consider the path $\delta \coloneq n_k a_k n_{k - 1} a_{k - 1} \dots n_1 a_1 n_0$,~that~is~a~directed~path~of~$\smash{\iograph{\undgs{R}}}$. By \Cref{itm:bang-minimal}, the node $n_k$ is not a $\parr$ or $\wn$ node. Since the premise $a_k$ of $n_k$ is input (and by \Cref{itm:bang-minimal-cut}), we know that $n_k$ is a $\cutpn$, $\tin$ or $\drpn$ node. Now let $R' \in \psset$ obtained from~$R$~by~setting~$a_k$~to~be~the~premise~of a fresh $\bullet$ node, and the input premise of $n_k$ to be the~conclusion of a fresh $\bot$ node $n_k'$. We~have $R' \models \ACCio$, and thus $R'$ is \seq by \Cref{thm:poutter-accio-sequential}. We observe that \Cref{itm:bang-minimal} holds for $R'$, because every (directed) path of $\smash{\iograph{\undgs{R'}}}$ between two nodes of $\undgs{R}$ is actually a (directed)~path~of $\iograph{\undgs{R}}$, and the only directed path of $\iograph{\undgs{R}}$ from the $\bot$ node $n_k'$ is the empty one. By applying the induction hypothesis on $R'$, we then obtain sequential $R_0', \dots, R_{k-1}' \in \psset$ such that $\undgs{R_i'}$ has no node with label in $L$ for all $i \in \{1, \dots, k - 1\}$, and $R'$ has one of~the~shapes~in~\Cref{fig:bang-minimal} (in the figures, $k - 1$ replaces $k$, $R_i'$ replaces $R_i$ for all $i \in \{1, \dots, k - 1\}$, and none among~$n_1, \dots, n_{k - 1}$ is a $\drpn$ node for simplicity), depending on the label of $n_0$, with $R_0'$ obtained from $R$ by erasing $n_1, \dots, n_{k - 1}$ and $R_1', \dots, R_{k - 1}'$, and by setting the conclusion of $n_0$ to be a premise of a fresh $\bullet$ node if $k \geq 2$. Since $n_k \in \nd(\undgs{R'})$, and since we have $n_k \neq n_i$ for each $i \in \{0, \dots, k - 1\}$,~there exists $h \in \{0, \dots, k - 1\}$ such that $n_k \in \nd(\undgs{R_h'})$, and either $h = 0$ or $n_h$ is a $\tin$ node. For~each $i \in \{0, \dots k - 1\}$ with $i \neq h$, let $R_i \coloneq R_i'$. If $n_k$ is a $\drpn$ node, we are done by taking $R_h \coloneq R_h'$ and $R_k \in \psset$ without nodes. From now on, we can then suppose that $n_k$ is a $\cutpn$ or $\otimes$ node. Let $a_k'$ be the output premise of $n_k$, and let $G_k'$ be the connected component of~$\undgs{R'}$~containing~$n_k$.
   
   We claim that $G_k'$ has no node with label in $L$, except possibly $n_k'$. To prove this,~we~assume that there exists a path $\gamma$ of $\undgs{R'}$ from a node $n$ with label in $L$ to $n_k$ and show that this~leads us to a contradiction. First of all, we observe that $n \neq n_k$, because $L \cap \{\cutpn, \otimes\} = \emptyset$,~and~thus $\gamma$ is non-empty. Moreover, by the assumption that $n \neq n_k'$, the last arc of $\gamma$ is $a_k'$. We~can~also assume, without loss of generality, that $n$ is the only node of $\gamma$ with label in $L$. Since $\undgs{R'}$~has no $\pout$ node, we obtain that $\gamma$ is a path of $\smash{\iograph{\undgs{R'}}}$. Moreover, since $\{\parr, \wn\} \subseteq L$, and~since~at~most~one premise $a$ of $n$ is in $\gamma$, we have that $\gamma$ is also a path of the graph~$G$~obtained~from~$\smash{\iograph{\undgs{R'}}}$~by~erasing every premise $a' \neq a$ of every $\pin$ or $\wn$ node of $\undgs{R'}$. By \Cref{itm:io-graph-outdegree} of \Cref{rmk:io-graph}, we know that $G$ is a \pfg. And since the last arc of $\gamma$ is $a_k'$, by \Cref{prop:pfg-directed-path}~we~obtain~that~$\gamma$~is a directed path of $\smash{\iograph{\undgs{R'}}}$. We now distinguish the two following cases:
   \begin{itemize}
    \item
     If there exists $i \in \{0, \dots, k - 1\}$ such that $n_i$ is a node of $\gamma$, we can consider the maximum $i$ having that property, thus $n_i \gamma \, n_k \delta \, n_i$ is a directed cycle of $\iograph{\undgs{R}}$, contradicting~$\undgs{R} \models \ACio$;
    \item
     If, for every $i \in \{0, \dots, k - 1\}$, we have that $n_i$ is not a node of $\gamma$, then $n \, \gamma \, n_k \delta$ is a~directed path of $\iograph{\undgs{R}}$, contradicting \Cref{itm:bang-minimal} for $R$.
   \end{itemize}
   
   We now observe that every non-empty path $\gamma$ of $G_k'$ from a node $n \neq n_k'$ to $n_k$ is a~directed path of $\smash{\iograph{\undgs{R'}}}$. Indeed, since $n \neq n_k'$, the last arc of $\gamma$ is $a_k'$. And since no node of $\gamma$ has label in $L$ and $\{\parr, \wn\} \subseteq L$, we deduce that $\gamma$ is a path of the graph $G$ we obtain from $\smash{\iograph{\undgs{R'}}}$ by erasing every premise of every $\pin$ or $\wn$ node of $\undgs{R'}$. By \Cref{prop:pfg-directed-path}~and~\Cref{itm:io-graph-outdegree}~of~\Cref{rmk:io-graph}~we~have~that~$\gamma$ is a directed path of $\smash{\iograph{\undgs{R'}}}$.
   
   We deduce that $n_h$ is not a node of $G_k'$: if there were a path $\gamma$ of $\undgs{R'}$ from $n_h$ to $n_k$, then $\gamma$ would be a directed path of $\smash{\iograph{\undgs{R'}}}$, hence $\gamma \, n_k \delta \, n_h$ would be a directed cycle~of~$\smash{\iograph{\undgs{R}}}$,~contradicting $\undgs{R} \models \ACio$. We also have that $n_k'$ is the only sink of $\smash{\iograph{\undgs{R'}}}$ in $G_k'$. Thus, by \Cref{itm:io-graph-sinks} of \Cref{rmk:io-graph}, we obtain that $\drcl(G_k') = \out(G_k') = \emptyset$.
   
   Let $G_h$ be the graph we obtain from $\undgs{R_h'}$ by erasing $G_k'$, let $G_k$ be the graph we obtain from $G_k'$ by erasing $n_k$ and $n_k'$, and by setting $a_k'$ to be a premise of a fresh $\bullet$ node~and,~for~$i \in \{h, k\}$, let $R_i \in \psset$ obtained by setting $\undgs{R_i} \coloneq G_i$ and by defining $\undbox{R_i}$ as the restriction of $\undbox{R_h'}$~to $\bset(G_i)$. Then $R$ has one of the shapes in \Cref{fig:bang-minimal} (in the pictures, none among $n_1, \dots, n_{k - 1}$ is a $\drpn$ node for simplicity), depending on the label of $n_0$, with $R_0$ defined by erasing~$n_1, \dots, n_k$ and $R_1, \dots, R_k$ from $R$, and by setting the conclusion of $n_0$ to be a premise of a fresh $\bullet$~node. Moreover, the graph $G_k$ has no node with label in $L$ and, if $h \neq 0$, neither does $G_h$. We~also observe that $\drcl(G_k) = \drcl(G_k') = \emptyset$~and~$\out(G_k) = \out(G_k') \cup \{a_k'\} = \{a_k'\}$,~thus~$G_k \models \Cio$. From this and from $R_h' \models \ACCio$ (which we have by \Cref{thm:poutter-accio-sequential}) we deduce that $R_k \models \ACCio$~and $R_h \models \ACCio$, hence $R_k$ and $R_h$ are \seq by \Cref{thm:poutter-accio-sequential}.
   
   The only thing left to prove is that, if $\drpn \in L$, then $n_i$ is a $\tin$ node for all $i \in \{1, \dots, k - 1\}$, but this is just a straightforward consequence of \Cref{itm:bang-minimal,itm:bang-minimal-tensor} for $R$.
  \end{proof}
  
  The following result is a simpler variant of \Cref{lemma:bang-minimal} which does not use the order $\precio$. The proof, omitted, is an easy induction on $k$ that uses \Cref{thm:poutter-accio-sequential}.
  
  \begin{lemma}
   \label{lemma:bang-initial}
   Let $R \in \ppset$ be \seq, let $k \geq 0$, and let $n_0, \dots, n_k \in \nd(\undgs{R})$. We~suppose that $\undgs{R}$ has no $\parr$ or $\ctpn$ node, and that:
   \begin{enumerate}[(i)]
    \item
     The node $n_0$ is an $\axpn$ or $\oc$ node having the output conclusion of $R$ among its conclusions;
    \item
     For every $i \in \{1, \dots, k\}$, an input conclusion of $n_{i - 1}$ is a premise of $n_i$;
    \item
     For every $i \in \{1, \dots, k - 1\}$, we have that $n_i$ is a $\tin$, $\drpn$ or $\wn$ node;
    \item
     Either $n_k$ is a $\cutpn$ node, or a conclusion of $n_k$ is a conclusion of $R$.
   \end{enumerate}
   Then there exist \seq $R_0, \dots, R_k \in \psset$ such that $R$ has one of the shapes in \Cref{fig:bang-initial} (in the pictures, none among $n_1, \dots, n_k$ is a $\drpn$ or $\wn$ node for simplicity), depending on the label of $n_0$, with $R_0$ defined by erasing $n_1, \dots, n_k$ and $R_1, \dots, R_k$ from $R$, and by setting~the~conclusion of $n_0$ to be a premise of a fresh $\bullet$ node if $k \geq 1$.
   
   \begin{figure}
    \centering
    \subcaptionbox{$l = \axpn$}{
     \centering
     \input{Figures/bang-initial.tikz}
    }
    \subcaptionbox{$l = \oc$}{
     \centering
     \input{Figures/bang-initial-box.tikz}
    }
    \caption{\label{fig:bang-initial}Decomposing a \seq \ps $R$ such that $\undgs{R}$ has no $\parr$ or $\ctpn$ node, using a node $n_0$ with label $l$ having the output conclusion of $R$ among its conclusions. In the pictures, we have $l' \in \{\cutpn, \otimes\}$, the output premise of any $\cutpn$ or $\otimes$ node is the left one, and~$\drpn$~and~$\wn$~nodes~are~omitted.}
   \end{figure}
  \end{lemma}
  
  We now consider the typed setting of $\bangMELL$.
  
  \begin{definition}
   Let $(\Gamma, t) \in \tbcset$. We define $\BL{(\Gamma, t)}$ as the \rnf{\rmuw} of $\WH{(\Gamma, t)}$.
  \end{definition}
  
%
  
  \begin{lemma}
   \label{lemma:why-not}
   Let $t$ be a bang term, let $R \in \bpnset$ with input conclusions $c_1, \dots, c_k, c_{0, 1}, \dots, c_{0, h}$ of types $\wn O_1^\bot, \dots, \wn O_k^\bot, \wn O^\bot, \dots, \wn O^\bot$ respectively. For every $i \in \{1, \dots, h\}$, let $x_i \coloneq \nmfun{R}(c_{0,i})$. Let $\Gamma \coloneq \oftype{\nmfun{R}(c_1)}{\oc O_1}, \dots, \oftype{\nmfun{R}(c_k)}{\oc O_k}$, $\Gamma_0 \coloneq \Gamma, \oftype{x_1}{\oc O}, \dots, \oftype{x_h}{\oc O}$, let $R' \in \bpnset$ defined from $R$ by adding a terminal $\wn$ node with premises $c_{0, 1}, \dots, c_{0, h}$ and conclusion $c$ of type $\wn O^\bot$, let $x \coloneq \nmfun{R'}(c)$, and $\Gamma' \coloneq \Gamma, \oftype{x}{\oc O}$. Then $R = \BL{(\Gamma_0, t)}$ iff $\norm{\wn}{R'} = \BL{(\Gamma', t\{x/x_1, \dots, x/x_h\})}$.
  \end{lemma}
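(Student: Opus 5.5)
Both directions reduce to comparing the two translations $\WH{(\cdot)}$ of $t$ and of $t\{x/x_1,\dots,x/x_h\}$, together with the behaviour of $\town$ on the extra terminal $\wn$ node. The central fact is a \emph{renaming lemma}, proved by induction on $t$ by inspecting the clauses in \Cref{fig:bang-pns}. For any typed \bterm $(\Gamma_0,t)$, let $S$ be obtained from $\WH{(\Gamma_0,t)}$ by adding a terminal $\wn$ node whose premises are the conclusions named $x_1,\dots,x_h$, and whose conclusion is named $x$. The lemma states
\[
\norm{\wn}{S}=\WH{(\Gamma',t\{x/x_1,\dots,x/x_h\})}.
\]
The variable case uses the \unpn erasure step when $h=1$, and gives a $\wkpn$ node otherwise. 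Abstraction and dereliction are immediate from the induction hypothesis. Application and $\bcbang{(\cdot)}$ are the nontrivial cases. There $\share{\cdot}$ and $\enbox{(\cdot)}$ already perform $\rwn$-normalization, and we use confluence and strong normalization of $\town$: contracting the $\wn$ nodes of $x_1,\dots,x_h$ before or after contracting them into $x$ yields the same canonical form, by flattening and pulling through box borders. We then extend the lemma to $\BL{(\cdot)}$. For this we show that adding a terminal $\wn$ node on input conclusions commutes with $\tor{\rmuw}$-reduction: a terminal $\wn$ node is never the premise side of an $\rax$ or $\rtp$ redex, so every $\rmu$ step in $\WH{(\Gamma_0,t)}$ is still available after adding the node, and vice versa. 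This uses the fact that $\tor{\rmuw}$ is confluent and normalizing. Hence the $\rmuw$-normal form of $S$ is the $\rwn$-normal form of (the $\rmuw$-normal form of $\WH{(\Gamma_0,t)}$ with the terminal $\wn$ node added), which is exactly $\norm{\wn}{R'}$ when $R=\BL{(\Gamma_0,t)}$.

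This gives the left-to-right direction directly: if $R=\BL{(\Gamma_0,t)}$, then $\norm{\wn}{R'}$ is the $\rmuw$-normal form of $S$. By the renaming lemma, that is the $\rmuw$-normal form of $\WH{(\Gamma',t\{x/x_1,\dots,x/x_h\})}$, namely $\BL{(\Gamma',t\{x/x_1,\dots,x/x_h\})}$.

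For the right-to-left direction, start from $\norm{\wn}{R'}=\BL{(\Gamma',t')}$ with $t'=t\{x/x_1,\dots,x/x_h\}$, and run the same chain backwards to get $\norm{\wn}{R'}=\norm{\wn}{R''}$, where $R''$ is $\BL{(\Gamma_0,t)}$ with the $\wn$ node added. It then remains to recover $R$ from $\norm{\wn}{R'}$, which requires a \emph{cancellation} property: if two \bpns $R_1,R_2$ are both $\rwn$-normal and differ only in their terminal $\wn$ node on conclusions $c_{0,1},\dots,c_{0,h}$, then $\norm{\wn}{R_1'}=\norm{\wn}{R_2'}$ implies $R_1=R_2$. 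I would prove cancellation by tracing the arcs named $x_1,\dots,x_h$ in the canonical form. Because the $\rwn$-normal form is canonical (flattened, no unary $\wn$, contractions pulled out of boxes), each original premise $c_{0,i}$ can be identified in $\norm{\wn}{R_1'}$ as a maximal sub-configuration, and this identification is possible precisely because $R$ is \injnam.

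The main obstacle is this cancellation step, together with the $\rwn$-normality of $R$ needed for it; $R\in\bpnset$ being $\rmuw$-normal is not stated explicitly. I would first try to prove that $\BL{(\Gamma_0,t)}$ is always $\rwn$-normal. Then, since $R$ is \gua and \injnam, I would show that the only $\rwn$ redexes of $R'$ involve the new node (flattening into it, or pulling through boxes), so the reduction $R'\town^*\norm{\wn}{R'}$ is reversible and determines $R$ uniquely.
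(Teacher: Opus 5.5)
Your left-to-right direction is sound. It is also a sensible way to flesh out the paper's one-line ``straightforward induction on $t$''. Since $\BL{(\cdot)}$ is not compositional, the natural route is exactly yours: prove the renaming property for $\WH{(\cdot)}$ by induction on $t$, then transport it along $\tor{\rmuw}$. What the transport really needs is that every $\rmuw$ step of $\WH{(\Gamma_0,t)}$ survives the addition of a terminal $\wn$ node, and that normalizing that node (flattening into it, or erasing it when unary) creates no cut redex.

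The gap is the converse, and it cannot be closed. Your cancellation property is false, and so is the right-to-left implication as literally stated, already for $t$ a variable, so no argument can establish it without restricting the statement. Take $k=0$, $h=2$, $O=X$, so that $\Gamma_0=\oftype{x_1}{\oc X},\oftype{x_2}{\oc X}$. Let $R\coloneqq\BL{(\Gamma_0,x_2)}$, which consists of an axiom, a $\drpn$ node on the conclusion named $x_2$, and a $\wkpn$ node on the conclusion named $x_1$; let $t\coloneqq x_1$. In $R'$ the $\wkpn$ node is flattened into the new node, which becomes unary and is erased. Hence $\norm{\wn}{R'}=\BL{(\oftype{x}{\oc X},x)}=\BL{(\Gamma',t\{x/x_1,x/x_2\})}$, whereas $R\neq\BL{(\Gamma_0,x_1)}$. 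In general, flattening forgets how the premises of the new node were distributed among $c_{0,1},\dots,c_{0,h}$. The names live on the $c_{0,i}$, which disappear, so injective naming recovers nothing. Independently, nothing in the hypotheses makes $R$ $\wn$-normal. With $h=0$, $\Gamma=\oftype{y}{\oc X}$, $t=y$, let $R$ be $\BL{(\Gamma,y)}$ with a unary $\wn$ node inserted below its $\drpn$ node. Then $R\in\bpnset$ and $\norm{\wn}{R'}=\BL{(\Gamma',y)}$, yet $R\neq\BL{(\Gamma,y)}$.

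Your proposed repair addresses neither problem. That $\BL{(\Gamma_0,t)}$ is $\wn$-normal holds by definition and is beside the point: what you need is normality of the arbitrary $R$, before you know $R=\BL{(\Gamma_0,t)}$. And the reduction $R'\town^*\norm{\wn}{R'}$ is not reversible precisely because flattening is not injective; pulling never involves the new node, which lies outside every box.

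What can be proved is the left-to-right direction in general. The converse holds only for $\wn$-normal $R$ with $h\le 1$, and with $(\Gamma_0,t)$ typed, i.e.\ $x$ not free in $t$ unless it is some $x_i$. In that case $\norm{\wn}{R'}$ is just $R$ with the name of $c_{0,1}$ changed to $x$ (resp.\ $R$ plus a $\wkpn$ node named $x$). By the first direction, $\BL{(\Gamma',t\{x/x_1,\dots,x/x_h\})}$ is $\BL{(\Gamma_0,t)}$ modified in the same way, so $R$ can be read off. This restricted form is all that the proof of \Cref{thm:characterization} uses. Its $\wn$-node case invokes only the left-to-right direction, and padding with $\wkpn$ nodes is the case $h=0$ for $\rmuw$-normal nets.
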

  
  \begin{proof}
   Straightforward induction on $t$.
  \end{proof}
  
  \begin{remark}
   \label{rmk:bang-poutter}
   If $\mathcal{G}$ is a \gst{\bangMELL}, the following holds (recall \Cref{def:polarized}):
   \begin{enumerate}[(i)]
    \item
     Every $\otimes$ node of $\mathcal{G}$ is a $\tin$ node, and its output premise is its left premise;
    \item \label{itm:bang-pout}
     Every $\parr$ node of $\mathcal{G}$ is a $\pout$ node;
    \item \label{itm:bang-poutter}
     If no premise of a $\cutpn$ node of $\mathcal{G}$ is the conclusion of a $\parr$ node of $\mathcal{G}$, then $\mathcal{G}$ is \poutter.
   \end{enumerate}
  \end{remark}
  
  \begin{proposition}
   \label{prop:bang-sequentialization}
   Let $R \in \psf{\bangMELL}$. If $R$ is \rn{\rmu}, then $R$ is a \pn if and only~if~we have $R \models \ACio$ and $\card{\out(\undgs{R})} = 1$.
  \end{proposition}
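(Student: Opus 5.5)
The plan is to reduce the statement to \Cref{thm:poutter-accio-sequential} (via \Cref{rmk:sequential}), since for $\rmu$-normal \psst{\bangMELL} the two conditions in the statement are just a repackaging of $\ACCio$.

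\textbf{Step 1: $R$ is \poutter.} Since $R$ is \rn{\rmu}, it has no axiom cut and no $\rtp$ cut. By \Cref{rmk:bang-poutter}.\ref{itm:bang-pout} every $\parr$ node is a $\pout$ node. A $\pout$ conclusion that is the premise of a $\cutpn$ node must be cut against something of dual type $\oc O\otimes I$ or $X^\bot$. Such a cut partner is either a $\otimes$ node, giving a $\rtp$ cut, or an $\axpn$ node, giving an axiom cut (the type rules out the other labels). Both are excluded, so \Cref{rmk:bang-poutter}.\ref{itm:bang-poutter} applies and $\undgs{R}$ is \poutter. The same reasoning applies to every box $R'\sqsubset R$, which is again $\rmu$-normal and typed in \bangMELL. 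Hence $R$ is \poutter in the sense of \Cref{def:outp-terminal}, and \Cref{thm:poutter-accio-sequential} together with \Cref{rmk:sequential} gives: $R$ is a \pn iff $R\models\ACCio$.

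\textbf{Step 2: $\Cio$ reduces to counting output conclusions.} In \bangMELL, the premise of a $\drpn$ node has type $I$ (dereliction builds $\wn I$), so it is input. Hence $\drcl(\mathcal{G})=\emptyset$ for every ground-structure of \bangMELL, and $\Cio$ reduces to $\card{\out(\mathcal{G})}=1$.

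For each box $R'=\undbox{R}(n)$, its conclusions have types $\wn A_1,\dots,\wn A_k,O$ with $\oc O$ the type of the \pport. In \bangMELL, $\oc O$ has $O$ output, and each $\wn A_i$ is input, since $\wn$-formulae are only of the form $\wn I$. So $\card{\out(\undgs{R'})}=1$ holds automatically at every depth $\geq 1$; this mirrors the argument of \Cref{rmk:imell-output-conclusion}, using $\bangMELL\subseteq\IMELL$.

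Consequently, $R\models\ACCio$ iff $R\models\ACio$ and $\card{\out(\undgs{R})}=1$. Here $R\models\ACio$ is read recursively on boxes, as in \Cref{def:outp-terminal}, and the condition on boxes holds automatically as just shown. This is exactly the statement.

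\textbf{Main obstacle.} The only delicate point is the exhaustive check in Step 1 that a cut with a $\parr$ premise in \bangMELL is necessarily an axiom or $\otimes/\parr$ cut. I would verify this by inspecting which node labels can produce a conclusion of type $\oc O\otimes I$ or $X^\bot$. I would also need to confirm that $\rmu$-normality and typing indeed pass to boxes, which is immediate since $\tor{\rmu}$ is closed under box contexts.
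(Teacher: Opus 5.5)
Your proof is correct and follows the paper's argument. The paper's proof just combines \Cref{thm:poutter-accio-sequential} and \Cref{rmk:sequential} with \Cref{rmk:bang-poutter}, which gives $\pout$-terminality from $\rmu$-normality, and \Cref{rmk:imell-output-conclusion}, which removes classical $\drpn$ nodes and gives one output conclusion per box, so $\Cio$ becomes $\card{\out(\undgs{R})}=1$. One harmless slip: the conclusion of a $\parr$ node has type $\wn I\parr O$, so its cut partner can only have type $\oc O\otimes I$, never $X^\bot$, and that extra case in your Step~1 is vacuous.
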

  
  \begin{proof}
   Straightforward consequence of \Cref{thm:poutter-accio-sequential,rmk:sequential,rmk:imell-output-conclusion,rmk:bang-poutter}.
  \end{proof}
  
  \begin{definition}
   \label{def:derrem}
   Let $R \in \psf{\bangMELL}$. We denote by $\derrem{R}$ the \ps which is obtained from $R$ by removing every terminal $\drpn$ node which does not have a $\oc$, $\drpn$ or $\wn$ node as~its~premise.
  \end{definition}
  
  \begin{remark}
   \label{rmk:derrem}
   Let $R$ be a \nam \pn. By \Cref{lemma:sequential-last-rule}, we know that $\derrem{R}$ is a \pn, and $\derrem{R}$ inherits from $R$ a structure of \nam \pn. Moreover, we have $R^{+-} = R^{-+} = R$.
  \end{remark}
  
  \characterization*
  
  \begin{proof}
   We prove, by induction on $\card{\ar(R)}$, that, if $R$ is \rn{\rmuw}, then there exists a typed \bterm $(\Gamma, t)$ such that $R = \BL{(\Gamma, t)}$. If $\undgs{R}$ has a terminal $\parr$ node $n$, then there~is~$R_1 \in \psf{\bangMELL}$ such that $R$ has the shape in \Cref{subfig:sequential-par}. By \Cref{lemma:sequential-last-rule} (and by \Cref{rmk:sequential}), we know that $R_1$ is a \pn. By definition of \gua \pn, the input premise of $n$ is the~conclusion of a $\oc$, $\drpn$ or $\wn$ node, thus $R_1$ is \gua. We observe that $R_1$ is \injnam if we~extend~$\nmfun{R}$ to the input premise $a$ of $n$ by taking $\nmfun{R_1}(a) \coloneq x$, where $x \in \vset$ is a fresh variable. We~deduce that $R_1 \in \bpnset$, and obviously $R_1$ is \rn{\rmuw}. By \Cref{rmk:bangMELL-output-conclusion}, the type of $a$ is $\wn O^\bot$ for some output \formulaWithName{$O$}{\bangMELL}. Let $\Gamma_1 \coloneq \Gamma, \oftype{x}{\oc O}$. Since $\card{\ar(R_1)} < \card{\ar(R_1)} + 1 = \card{\ar(R)}$,~we~can~apply the induction hypothesis on $R_1$, from which we deduce that there exists a \bt $t_1$~such~that $R_1 = \BL{(\Gamma_1, t_1)}$, and then we are done by taking $t \coloneq \bcabs{x}{\oc O}{t_1}$.
   
   We can now suppose that $\undgs{R}$ has no terminal $\parr$ node. By \Cref{lemma:poutter-terminal-pout}, \Cref{prop:bang-sequentialization}~and \Cref{itm:bang-pout,itm:bang-poutter} of \Cref{rmk:bang-poutter}, we know that $\undgs{R}$ has no $\parr$ node. We define $L \coloneq \{\drpn, \wn\}$. If $\undgs{\derrem{R}}$ has a node with label in $L$ then, since $\undgs{R}$ has no $\parr$ node, we can consider $n_0, \dots, n_k \in \nd(\undgs{\derrem{R}})$ satisfying the hypotheses of \Cref{lemma:bang-minimal}, and thus there exist \pnstWithName{$R_0, \dots, R_k$}{\bangMELL} such that $\derrem{R}$ has the shape in \Cref{subfig:bang-minimal}, where $R_0$ is obtained from $\derrem{R}$ by erasing~$n_1, \dots, n_k$ and $R_1, \dots, R_k$, and by setting the conclusion of the node $n_0$ to be a premise of a fresh~$\bullet$~node~if $k \geq 1$. By \Cref{rmk:derrem}, we know that $R$ has the same shape as $\derrem{R}$: we obtain $R$ from $\derrem{R}$~by replacing $R_0, \dots, R_k$ with $\deradd{R_0}, \dots, \deradd{R_k}$.
   
   Suppose $k = 0$. We have $\deradd{R_0} = R$, and we can suppose without loss of generality that the conclusion of $n_0$ is $c_1$. Let $c_{1, 1}', \dots, c_{1, h}'$ be the premises of $n_0$, and let $R' \in \bpnset$ obtained~from $R$ by erasing $n_0$, by setting $c_{1, i}'$ to be a premise of a fresh $\bullet$ node and $\nmfun{R'}(c_{1, i}) \coloneq x_i$,~where~$x_i \in \vset$ is a fresh variable, for all $i \in \{1, \dots, h\}$. Since $\card{\ar(R')} < \card{\ar(R')} + 1 = \card{\ar(R)}$, we can~apply~the induction hypothesis on $R'$. Let $x \coloneq \nmfun{R}(c_1)$. We consider the two following cases:
   \begin{itemize}
    \item
     \emph{$n_0$ is a $\drpn$ node.} Then $h = 1$. Since $n_0 \in \nd(\undgs{\derrem{R}})$, by definition of $\derrem{R}$ we know that $c_{1, 1}'$ is  a conclusion of a $\oc$, $\drpn$ or $\wn$ node. Moreover, since $R \in \psf{\bangMELL}$, we have that~$c_{1, 1}'$~is~input,~thus the type of $c_{1, 1}'$ has the shape $\wn O^\bot$ for some output \formulaWithName{$O$}{\bangMELL}. We now consider~the \bctx $\Gamma_0 \coloneq \oftype{x_1}{\oc O}, \oftype{\nmfun{R}(c_2)}{\oc O_2}, \dots, \oftype{\nmfun{R}(c_k)}{\oc O_k}$, and then we apply the induction hypothesis on $R'$ to obtain a \bt $t_0$ such that $R' = \BL{(\Gamma_0, t_0)}$. We are done by~defining $t \coloneq \bcapp{\bcabs{x_1}{\oc O}{t_0}}{x}$;
    \item
     \emph{$n_0$ is a $\wn$ node.} Let $\Gamma' \coloneq \oftype{\nmfun{R}(c_2)}{\oc O_2}, \dots, \oftype{\nmfun{R}(c_k)}{\oc O_k}$ and $\Gamma_0 \coloneq \Gamma', \oftype{x_1}{\oc O_1}, \dots, \oftype{x_h}{\oc O_1}$. Then $\Gamma = \Gamma', \oftype{x}{\oc O_1}$. By induction hypothesis on $R'$, there exists a \bt $t_0$ such~that $R' = \BL{(\Gamma_0, t_0)}$. By \Cref{lemma:why-not}, we are done by taking $t \coloneq t_0\{x/x_1, \dots, x/x_h\}$.
   \end{itemize}
   
   Now suppose $k \geq 1$. Observe that, for every $i \in \{1, \dots, k\}$, the output premise of $n_i$ is the output conclusion of $\deradd{R_i}$, and recall that $n_0$ was chosen as a $\drpn$ or $\wn$ node. For~all~$i \in \{0, \dots, k\}$, we thus obtain that $\deradd{R_i}$ is \gua. Let $c_{0, 0}$ be the conclusion of $n_0$. By \Cref{lemma:why-not}, for all $i \in \{0, \dots, k\}$, we may add terminal $\wkpn$ nodes to $\deradd{R_i}$ without loss of~generality,~in~such~a~way~that:
   \begin{itemize}
    \item
     We have $\deradd{R_0} \in \bpnset$ with input conclusions $c_{0, 0}, c_{0, 1}, \dots, c_{0, k}$;
    \item
     For every $i \in \{1, \dots, k\}$, we have $\deradd{R_i} \in \bpnset$ with input conclusions $c_{i, 1}, \dots, c_{i, k}$;
    \item
     For every $i \in \{0, \dots, k\}$, $j \in \{1, \dots, k\}$, the type of $c_{i, j}$ is $\wn O_j^\bot$ and $\smash{\nmfun{\deradd{R_i}}(c_{i, j}) = \nmfun{R}(c_j)}$;
    \item
     The type of $c_{0, 0}$ is $\wn O^\bot$ for some \formulaWithName{$O$}{\bangMELL}, and $\nmfun{\deradd{R_0}}(c_{0, 0}) = x$, where $x \in \vset$ is a fresh variable.
   \end{itemize}
   Let $\Gamma_0 \coloneq \Gamma, \oftype{x}{\oc O}$. For each $i \in \{0, \dots, k\}$, we have $\card{\ar(\deradd{R_i})} < \card{\ar(\deradd{R_i})} + 1 \leq \card{\ar(R)}$, and thus we can apply the induction hypothesis on $\deradd{R_i}$. We deduce that there exists a \bt $t_0$ such that $\deradd{R_0} = \BL{(\Gamma_0, t_0)}$ and, for each $i \in \{1, \dots, k\}$, there is a \bt $t_i$~such~that~$\deradd{R_i} = \BL{(\Gamma, t_i)}$. We distinguish two cases:
   \begin{itemize}
    \item
     \emph{$n_k$ is a $\cutpn$ node.} By the hypothesis that $R$ is \rn{\rmuw}, we obtain $k = 1$, and thus~we~are done by taking $t \coloneq \bcapp{\bcabs{x}{\oc O}{t_0}}{t_1}$;
    \item
     \emph{$n_k$ is a $\otimes$ node.} Let $c$ be the conclusion of the $\drpn$ node of $\undgs{R}$ having the conclusion of $n_k$ as its premise, and let $y \coloneq \nmfun{R}(c_1)$. Then we are done by defining $t \coloneq \bcapp{\bcabs{x}{\oc O}{t_0}}{\bcapp{\cdots \bcapp{y}{t_k} \cdots}{t_1}}$.
   \end{itemize}
   
   Finally, we suppose $\undgs{\derrem{R}}$ has no $\parr$, $\drpn$ or $\wn$ node. We can then consider $n_0, \dots, n_k \in \nd(\undgs{\derrem{R}})$ satisfying the hypotheses of \Cref{lemma:bang-initial} and such that $k$ is the maximum non-negative integer for which that choice of nodes is possible. Thus, there are \pnstWithName{$R_0, \dots, R_k$}{\bangMELL} such that $\derrem{R}$ has one of the shapes in \Cref{fig:bang-initial}, depending on the label of $n_0$, with $R_0$ defined by erasing $n_1, \dots, n_k$ and $R_1, \dots, R_k$~from~$\derrem{R}$,~and~by~setting~the~conclusion~of~$R_0$~to~be~a~premise of a fresh $\bullet$ node if $k \geq 1$. Observe that $R$ has the same shape as $\derrem{R}$ by \Cref{rmk:derrem}:~we~obtain $R$ from $\derrem{R}$ by replacing $R_0, \dots, R_k$ with $\deradd{R_0}, \dots, \deradd{R_k}$.
   
   Suppose that $k = 0$ and that $n_0$ is a $\oc$ node. We have $\deradd{R_0} = R$ and, by maximality of $k$, the node $n_0$ is terminal. Define $R' \coloneq \undbox{R}(n_0)$. Since $\card{\ar(R')} < \card{\ar(R')} + 1 \leq \card{\ar(R)}$,~we~can~apply the induction hypothesis on $R'$, from which it follows that there exists a \bt $t_0$ such~that $R' = \BL{(\Gamma, t_0)}$, and then we are done by taking $t \coloneq \bcbang{t_0}$.
   
   We now suppose either $n_0$ is an $\axpn$ node, or $k \geq 1$ and $n_0$ is a $\oc$ node. For all $i \in \{1, \dots, k\}$, since the output premise of $n_i$ is the output conclusion of $\deradd{R_i}$, we know that $\deradd{R_i}$ is \gua. By \Cref{lemma:why-not}, we may add terminal $\wkpn$ nodes to $\deradd{R_i}$ without loss of generality, in such~a~way~that:
   \begin{itemize}
    \item
     We have $\deradd{R_i} \in \bpnset$ with input conclusions $c_{i, 1}, \dots, c_{i, k}$;
    \item
     For every $j \in \{1, \dots, k\}$, the type of $c_{i, j}$ is $\wn O_j^\bot$ and $\nmfun{\deradd{R_i}}(c_{i, j}) = \nmfun{R}(c_j)$.
   \end{itemize}
   Since $\card{\ar(\deradd{R_i})} < \card{\ar(\deradd{R_i})} + 1 \leq \card{\ar(R)}$, we can apply the induction hypothesis on $\deradd{R_i}$ to obtain a \bt $t_i$ such that $\deradd{R_i} = \BL{(\Gamma, t_i)}$.
   
   If $n_0$ is an $\axpn$ node, then $n_k$ is not a $\cutpn$ node by the hypothesis that $R$ is \rn{\rmuw}. We can then consider the conclusion $c$ of the $\drpn$ node of $\undgs{R}$ having a conclusion of $n_k$~as~its~premise, and set $x \coloneq \nmfun{R}(c)$. We are done by taking $t \coloneq \bcapp{\cdots \bcapp{x}{t_k} \cdots}{t_1}$.
   
   We can now suppose that $k \geq 1$ and that $n_0$ is a $\oc$ node. Let $c_{0, 0}$ be the conclusion of $n_0$ that is a premise of $n_1$. We know that $\deradd{R_0}$ is \gua and, by \Cref{lemma:why-not},~we~may~add~terminal~$\wkpn$ nodes to $\deradd{R_0}$ without loss of generality, in such a way that:
   \begin{itemize}
    \item
     We have $\deradd{R_0} \in \bpnset$ with input conclusions $c_{0, 0}, c_{0, 1}, \dots, c_{0, k}$;
    \item
     For every $j \in \{1, \dots k\}$, the type of $c_{0, j}$ is $\wn O_j^\bot$ and we have $\smash{\nmfun{\deradd{R_0}}(c_{0, j}) = \nmfun{R}(c_j)}$;
    \item
     The type of $c_{0, 0}$ is $\wn O^\bot$ for some \formulaWithName{$O$}{\bangMELL}, and $\nmfun{\deradd{R_0}}(c_{0, 0}) = x$, where $x \in \vset$ is a fresh variable.
   \end{itemize}
   Let $\Gamma_0 \coloneq \Gamma, \oftype{x}{\oc O}$. Since $k \geq 1$, we have $\card{\ar(\deradd{R_0})} < \card{\ar(\deradd{R_0})} + 1 \leq \card{\ar(R)}$. Thus, we can apply the induction hypothesis on $\deradd{R_0}$ to obtain a \bt $t_0$ such that $\deradd{R_0} = \BL{(\Gamma_0, t_0)}$. As before, we conclude by separating the case in which $n_k$ is a $\cutpn$ node from the one in~which~$n_k$~is~a~$\otimes$ node.
  \end{proof}